\setlist[enumerate,1]{label={(\arabic*)}}
\setlist[itemize]{itemsep = 0.2em}
\title{Cartesian Coherent Differential Categories}
\author{Thomas Ehrhard\\
  \small{Université Paris Cité, CNRS, Inria, IRIF, F-75013, Paris, France}\\
  Aymeric Walch\\
  \small{Universié Paris Cité, CNRS, IRIF, F-75013, Paris, France}
}
\definecolor{blue(pigment)}{rgb}{0.2, 0.2, 0.6}
\definecolor{darkgreen}{rgb}{0.0, 0.5, 0.0}
\definecolor{darkred}{HTML}{9F000F}   
\begin{document}

\maketitle

\begin{abstract}
We extend to general cartesian categories the idea of Coherent Differentiation recently introduced by Ehrhard in the setting of categorical models of Linear Logic. The first ingredient is a summability structure which induces a partial left-additive structure on the category. Additional functoriality and naturality assumptions on this summability structure implement a differential calculus which can also be presented in a formalism close to Blute, Cockett and Seely's cartesian differential categories. We show that a simple term language equipped with a natural notion of differentiation can easily be interpreted in such a category.

\end{abstract}

\tableofcontents

\newcommand\Real{\mathbb{R}}
\newcommand\Realp{\Real_{\geq 0}}
\newcommand\COH{\mathbf{Coh}}
\newcommand\cC{\mathcal C}
\newcommand\Ob[1]{\mathsf{Ob}(#1)}
\newcommand\Comp{\mathrel\circ}

\renewcommand\sharp{\#}

\newcommand{\Iff}{\quad\hbox{iff}\quad}
\newcommand{\Ifsimple}{\quad\hbox{if}\quad}
\newcommand{\Implies}{\Rightarrow}
\newcommand{\Implsymbol}{\Rightarrow}
\newcommand\Equiv{\Leftrightarrow}
\newcommand{\St}{\mid}

\newcommand{\Ro}{\circ}
\newcommand{\Inf}{\bigwedge}
\newcommand{\Infi}{\wedge}
\newcommand{\Sup}{\bigvee}
\newcommand{\Supi}{\vee}

\renewcommand{\Bot}{{\mathord{\perp}}}
\newcommand{\Top}{\top}

\newcommand\Seqempty{\Tuple{}}

\newcommand\cA{\mathcal{A}}
\newcommand\cB{\mathcal{B}}
\newcommand\cD{\mathcal{D}}
\newcommand\cE{\mathcal{E}}
\newcommand\cF{\mathcal{F}}
\newcommand\cG{\mathcal{G}}
\newcommand\cH{\mathcal{H}}
\newcommand\cI{\mathcal{I}}
\newcommand\cJ{\mathcal{J}}
\newcommand\cK{\mathcal{K}}
\newcommand\cL{\mathcal{L}}
\newcommand\cM{\mathcal{M}}
\newcommand\cN{\mathcal{N}}
\newcommand\cO{\mathcal{O}}
\newcommand\cP{\mathcal{P}}
\newcommand\cQ{\mathcal{Q}}
\newcommand\cR{\mathcal{R}}
\newcommand\cS{\mathcal{S}}
\newcommand\cT{\mathcal{T}}
\newcommand\cU{\mathcal{U}}
\newcommand\cV{\mathcal{V}}
\newcommand\cW{\mathcal{W}}
\newcommand\cX{\mathcal{X}}
\newcommand\cY{\mathcal{Y}}
\newcommand\cZ{\mathcal{Z}}

\newcommand\Fini{{\mathrm{fin}}}

\def\frownsmile{%
\mathrel{\vbox{\hbox{${\frown}$}\vspace{-2ex}\hbox{${\smile}$}\vspace{-.5ex}}}}
\def\smilefrown{%
\mathrel{\vbox{\hbox{${\smile}$}\vspace{-2ex}\hbox{${\frown}$}\vspace{-.5ex}}}}

\newcommand\Part[1]{{\mathcal P}\left({#1}\right)}
\newcommand\Parti[1]{{\mathcal I}({#1})}

\newcommand\Union{\bigcup}

\newcommand{\Linarrow}{\multimap}

\def\frownsmile{%
\mathrel{\vbox{\hbox{${\frown}$}\vspace{-2ex}\hbox{${\smile}$}\vspace{-.5ex}}}}
\def\smilefrown{%
\mathrel{\vbox{\hbox{${\smile}$}\vspace{-2ex}\hbox{${\frown}$}\vspace{-.5ex}}}}

\newcommand\CScoh[3]{{#2}\mathrel{\frownsmile_{{#1}}}{#3}}
\newcommand\CScohs[3]{{#2}\mathrel{{\frown}_{#1}}{#3}}
\newcommand\CScohstr[3]{\CScohs{#1}{#2}{#3}}
\newcommand\CSincoh[3]{{#2}\mathrel{\smilefrown_{{#1}}}{#3}}
\newcommand\CSincohs[3]{{#2}\mathrel{{\smile}_{#1}}{#3}}
\newcommand\CSeq[3]{{#2}\mathrel{{=}_{#1}}{#3}}

\newcommand\Myleft{}
\newcommand\Myright{}

\newcommand\Web[1]{\Myleft|{#1}\Myright|}
\newcommand\Suppsymb{\operatorname{\mathsf{supp}}}
\newcommand\Supp[1]{\operatorname{\mathsf{supp}}({#1})}
\newcommand\Suppms[1]{\operatorname{\mathsf{set}}({#1})}

\newcommand\Emptymset{[\,]}
\newcommand\Mset[1]{[{#1}]}
\newcommand\Sset[1]{({#1})}

\newcommand\MonP{P}
\newcommand\MonPZ{\MonP_0}

\newcommand\Cl[1]{\mbox{\textrm{Cl}}({#1})}
\newcommand\ClP[1]{\mbox{\textrm{Cl}}_{\MonP}({#1})}

\newcommand\Star{\star}
\newcommand\CohName{\mathbf{Coh}}
\newcommand\NCohName{\mathbf{NCoh}}
\newcommand\NCOH{\NCohName}
\newcommand\COHR[2]{\CohName(#1,#2)}

\newcommand\Par[2]{{#1}\mathrel{\IPar}{#2}}
\newcommand\Parp[2]{\left({#1}\mathrel{\IPar}{#2}\right)}
\newcommand\ITens{\otimes}
\newcommand\Tens[2]{{#1}\ITens{#2}}
\newcommand\Tensp[2]{({#1}\ITens{#2})}
\newcommand\IWith{\mathrel{\&}}
\newcommand\With[2]{{#1}\IWith{#2}}
\newcommand\Withe[2]{{#1}\IWith^\oc{#2}}
\newcommand\Withp[2]{\left({#1}\IWith{#2}\right)}
\newcommand\IPlus{\oplus}
\newcommand\Plus[2]{{#1}\IPlus{#2}}
\newcommand\Orth[2][]{#2^{\Bot_{#1}}}
\newcommand\Orthp[2][]{(#2)^{\Bot_{#1}}}

\newcommand\Bwith{\mathop{\&}}
\newcommand\Bplus{\mathop\oplus}
\newcommand\Bunion{\mathop\cup}

\newcommand\Scal[2]{\langle{#1}\mid{#2}\rangle}

\newcommand\Pair[2]{\langle{#1},{#2}\rangle}

\newcommand\Inj[1]{\overline\pi_{#1}}

\newcommand\GlobalIndex{I}
\newcommand\Index{\GlobalIndex}
\newcommand\Relbot{\Bot_\Index}
\newcommand\MonPZI{{\MonPZ}^\Index}
\newcommand\ClPI[1]{{\ClP{{#1}}}^\Index}
\newcommand\WebI[1]{\Web{#1}^\Index}

\newcommand\Scalb[2]{\Scal{\Bar{#1}}{\Bar{#2}}}

\newcommand\Ortho[2]{{#1}\mathrel{\bot}{#2}}
\newcommand\Orthob[2]{\Bar{#1}\mathrel{\bot}\Bar{#2}}

\newcommand\Biorth[1]{#1^{\Bot\Bot}}
\newcommand\Biorthp[3]{{#1}^{\Bot_{#2}\Bot_{#3}}}
\newcommand\Triorth[1]{{#1}^{\Bot\Bot\Bot}}
\newcommand\Triorthp[4]{{#1}^{\Bot_{#2}\Bot_{#3}\Bot_{#4}}}

\newcommand\Relpretens[2]{\cR_{#1}\bar\ITens\cR_{#2}}
\newcommand\Relpreplus[2]{\cR_{#1}\bar\IPlus\cR_{#2}}

\newcommand\RelP[1]{\widetilde{#1}}

\newcommand\Eqw[2]{\delta({#1},{#2})}
\newcommand\Eqwb[2]{\Eqw{\Bar{#1}}{\Bar{#2}}}

\newcommand\PFacts[1]{\cF({#1})}

\newcommand\Facts{\cF(\MonPZI)}

\newcommand\RelL[1]{\overline{#1}}

\newcommand\PRel[1]{R_{#1}}

\newcommand\PFamb[2]{[\Bar{#1},\Bar{#2}]}

\newcommand\Fplus[2]{\Bar{#1}+\Bar{#2}}

\newcommand\Char[1]{\epsilon_{#1}}

\newcommand\Fproj[2]{\pi_{#1}(\Bar{#2})}

\newcommand\One{1}

\newcommand\Pbot[1]{\Bot_{#1}}
\newcommand\PBot[1]{\Bot_{#1}}
\newcommand\PRBot[1]{\Bot_{#1}}
\newcommand\PROne[1]{1_{#1}}

\newcommand\Pproj[1]{\pi_{#1}}

\newcommand\Zext[1]{\zeta_{\Bar{#1}}}
\newcommand\Aext[1]{\bar\zeta_{\Bar{#1}}}

\newcommand\Mall{\hbox{\textsf{MALL}}}
\newcommand\RMall{\Mall(\Index)}
\newcommand\RMallr[1]{\Mall({#1})}
\newcommand\FDom[1]{d({#1})}

\newcommand\RBot[1]{\Bot_{#1}}
\newcommand\ROne[1]{\One_{#1}}

\newcommand\Seq[1]{\vdash{#1}}

\newcommand\RSeq[2]{\vdash_{#1}{#2}}

\newcommand\Restr[2]{{#1}|_{#2}}
\newcommand\FRestr[2]{{#1}|_{#2}}

\newcommand\FSem[1]{{#1}^{*}}
\newcommand\PSem[1]{{#1}^{*}}

\newcommand\FFamb[2]{{#1}\langle\Bar{#2}\rangle}

\newcommand\Premskip{\hskip1cm}
\newcommand\Forg[1]{\underline{#1}}

\newcommand\Phase[1]{{#1}^\bullet}

\newcommand\Punit[1]{1^{#1}}

\newcommand\Reg[1]{R_{#1}}

\newcommand\Cont[1]{{#1}^\circ}

\newcommand\Neutral{e}
\newcommand\RNeutral[1]{\Neutral_{#1}}

\newcommand\POne{1}

\newcommand\relstack[2]{\underset{#2}{#1}}
\newcommand\RPlus[4]{{#3}\relstack\IPlus{{#1},{#2}}{#4}}
\newcommand\RWith[4]{{#3}\relstack\IWith{{#1},{#2}}{#4}}

\newcommand\PReq[1]{=_{#1}}

\newcommand\Fam[2]{\hbox{\textrm{Fam}}_{#1}(#2)}
\newcommand\Famfunc[1]{\hbox{\textrm{Fam}}_{#1}}

\newcommand\Pval[1]{\rho_{#1}}

\newcommand\Fcoh{\mathsf C}
\newcommand\Fincoh{\overline\Fcoh}
\newcommand\Feq{\mathsf E}

\newcommand\RZero{\Zero}
\newcommand\RTop{\Top}
\newcommand\EmptyFam{\emptyset}
\newcommand\Partial[1]{\Omega_{#1}}

\newcommand\PMall{\hbox{\textsf{MALL}}_\Omega}
\newcommand\PRMall{\Mall_\Omega(\Index)}
\newcommand\Homo[1]{H(#1)}
\newcommand\RProd[2]{\Lambda_{#1}{({#2})}}
\newcommand\FProd[1]{\widetilde{#1}}

\newcommand\CltoPCl[1]{{#1}^P}

\newcommand\Partfin[1]{{\cP_\Fini}({#1})}

\newcommand\LL{\hbox{\textsf{LL}}}

\newcommand\RLL{\LL(\Index)}
\newcommand\RLLP{\LL^+(\Index)}
\newcommand\RLLext{\LL^\mathrm{ext}(\Index)}

\newcommand\IExcl{{\mathord{!}}}
\newcommand\IInt{{\mathord{?}}}

\newcommand\RExcl[1]{\IExcl_{#1}}
\newcommand\RInt[1]{\IInt_{#1}}

\newcommand\Fempty[1]{0_{#1}}

\newcommand\FAct[1]{{#1}_*}

\newcommand\Noindex[1]{\langle{#1}\rangle}

\newcommand\Card[1]{\#{#1}}
\newcommand\Multi[1]{\mathsf m({#1})}

\newcommand\FamRestr[2]{{#1}|_{#2}}

\newcommand\Pact[1]{{#1}_*}

\newcommand\Pinj[1]{\zeta_{#1}}

\newcommand\Locun[1]{1^J}

\newcommand\Isom\simeq
\newcommand\Webisom{\simeq_{\mathsf w}}

\newcommand\FGraph[1]{\mathsf{g}({#1})}
\newcommand\GPFact[1]{\mathsf f_0({#1})}
\newcommand\GFact[1]{\mathsf f({#1})}

\newcommand\NUCS{\mathbf{nCoh}}

\newcommand\Funinv[1]{{#1}^{-1}}

\newcommand\Reindex[1]{{#1}^*}

\newcommand\Locbot[1]{\Bot({#1})}
\newcommand\Locone[1]{1({#1})}
\newcommand\LocBot{\Locbot}
\newcommand\LocOne{\Locone}

\newcommand\INJ[1]{\mathcal I({#1})}
\newcommand\COHP[1]{\mathcal C({#1})}
\newcommand\FuncFam[1]{\mathrm{Fam}_{#1}}
\newcommand\SET{\mathbf{Set}}
\newcommand\PSET{\mathbf{Set}_0}
\newcommand\Locmod[2]{{#1}({#2})}
\newcommand\FuncPhase[1]{\mathcal{F}_{#1}}
\newcommand\Trans[2][]{{\widehat{#2}}_{#1}}
\newcommand\Useful[1]{\Web{#1}_{\mathrm u}}

\newcommand\Limpl[2]{{#1}\Linarrow{#2}}
\newcommand\Limplp[2]{\left({#1}\Linarrow{#2}\right)}

\newcommand\ConstFam[2]{{#1}^{#2}}

\newcommand\Projtr[1]{{\mathsf{pr}}^{#1}}
\newcommand\Injtr[1]{{\mathsf{in}}^{#1}}

\newcommand\Center[1]{\mathcal C({#1})}

\newcommand\Derel[1]{\mathrm{d}_{#1}}

\newcommand\Mspace[2]{{#2}^*_{#1}}

\newcommand\CStoMS[1]{{#1}^+}
\newcommand\MStoCS[1]{{#1}^-}

\newcommand\PhaseCoh[1]{{\mathbf{Coh}}_{#1}}

\newcommand\Nat{{\mathbb{N}}}
\newcommand\Relint{{\mathbb{Z}}}
\newcommand\Natnz{{\Nat^+}}

\newcommand\Rien{\Omega}

\newcommand\DenEq[2]{{#1}\sim{#2}}

\newcommand\VMon[2]{\langle{#1},{#2}\rangle}
\newcommand\VMonc[3]{\langle{#1},{#2},{#3}\rangle}

\newcommand\Biind[2]{\genfrac{}{}{0pt}{1}{#1}{#2}}

\newcommand\Adapt[1]{\cA_{#1}}

\newcommand\Rllp[1]{\widetilde{#1}}

\newcommand\Emptyfun[1]{0_{#1}}

\newcommand\Multisetfin[1]{\mathcal{M}_{\mathrm{fin}}({#1})}

\newcommand\Myparag[1]{\noindent\textbf{#1.}\ }

\newcommand\Webinf[1]{\Web{#1}_\infty}
\newcommand\Fin[1]{\mathsf{F}({#1})}

\newcommand\Phasefin{\mathbb{F}}

\newcommand\Fspace[1]{\mathrm f{#1}}
\newcommand\Space[1]{\mathrm m{#1}}

\newcommand\Ssupp[1]{\Supp{#1}}

\newcommand\FIN{\mathbf{Fin}}
\newcommand\FINH[2]{\FIN({#1},{#2})}
\newcommand\FINF[1]{\FIN[{#1}]}
\newcommand\SETINJ{\mathbf{Inj}}
\newcommand\EMB{\FIN_{\mathrm e}}

\newcommand\SNat{\mathsf N}
\newcommand\Snat{\mathsf N}
\newcommand\Snatcoh{\mathsf N^\NUCS}

\newcommand\Iter[1]{\mathsf{It}_{#1}}
\newcommand\Piter[2]{\Iter{#1}^{(#2)}}
\newcommand\Case[1]{\mathsf{Case}_{#1}}
\newcommand\Pfix[2]{\mathsf{Y}_{#1}^{(#2)}}
\newcommand\Ifthenelse[3]{\mathtt{if}\,{#1}\,\mathtt{then}\,{#2}\,%
\mathtt{else}\,{#3}}

\newcommand\Ptrace[1]{\|{#1}\|}
\newcommand\Enat[1]{\overline{#1}}
\newcommand\FA[1]{\forall{#1}}
\newcommand\Trcl[2]{\langle {#1},{#2}\rangle}
\newcommand\Faproj[1]{\epsilon^{#1}}
\newcommand\Faintro[3]{\Lambda^{{#1},{#2}}({#3})}

\newcommand\Bool{\mathbf{Bool}}
\newcommand\True{\mathbf t}
\newcommand\False{\mathbf f}

\newcommand\Tarrow{\arrow}
\newcommand\Diffsymb{\mathrm D}
\newcommand\Diffapp[2]{\mathsf D{#1}\cdot{#2}}
\newcommand\Diff[3]{\mathrm D_{#1}{#2}\cdot{#3}}
\newcommand\Diffexp[4]{\mathrm D_{#1}^{#2}{#3}\cdot{#4}}
\newcommand\Diffvar[3]{\mathrm D_{#1}{#2}\cdot{#3}}
\newcommand\Diffterm[3]{\mathrm D_{#1}{#2}\cdot{#3}}
\newcommand\Zero{0}
\newcommand\App[2]{({#1}){#2}}
\newcommand\Apppref[1]{\left({#1}\right)}
\newcommand\Appp[3]{\left({#1}\right){#2}\,{#3}}
\newcommand\Abs[2]{\lambda{#1}\,{#2}}
\newcommand\Abst[3]{\lambda#1^{#2}\,{#3}}
\newcommand\Diffp[3]{\frac{\partial{#1}}{\partial{#2}}\cdot{#3}}
\newcommand\Diffpev[4]{\frac{\partial{#1}}{\partial{#2}}(#3)\cdot{#4}}
\newcommand\Diffr[3]{d{\partial{#1}}{d{#2}}\cdot{#3}}
\newcommand\Derp[3]{\frac{\partial{#1}}{\partial{#2}}\cdot{#3}}
\newcommand\Derd[3]{\frac{d{#1}}{d{#2}}\cdot{#3}}
\newcommand\Derdn[4]{\frac{d^{#1}{#2}}{d{#3}^{#1}}\cdot{#4}}
\newcommand\Derdev[4]{\frac{d{#1}}{d{#2}}(#3)\cdot{#4}}
\newcommand\Derplist[4]{%
\frac{\partial^{#4}{#1}}{\partial#2_1\cdots\partial#2_{#4}}%
\cdot\left(#3_1,\dots,#3_{#4}\right)}
\newcommand\Derplistexpl[6]{%
\frac{\partial^{#6}{#1}}{\partial{#2}\cdots\partial{#3}}%
\cdot\left({#4},\dots,{#5}\right)}
\newcommand\Derpmult[4]{%
\frac{\partial^{#4}{#1}}{\partial{#2}^{#4}}%
\cdot\left(#3_1,\dots,#3_{#4}\right)}
\newcommand\Derpm[4]{%
\frac{\partial^{#4}{#1}}{\partial{#2}^{#4}}%
\cdot\left({#3}\right)}
\newcommand\Derpmultgros[4]{%
\frac{\partial^{#4}}{\partial{#2}^{#4}}\left({#1}\right)%
\cdot\left(#3_1,\dots,#3_{#4}\right)}
\newcommand\Derpmultbis[4]{%
\frac{\partial^{#4}{#1}}{\partial{#2}^{#4}}%
\cdot{#3}}
\newcommand\Derpmultbisgros[4]{%
\frac{\partial^{#4}}{\partial{#2}^{#4}}\left({#1}\right)%
\cdot{#3}}
\newcommand\Derppar[3]{\left(\Derp{#1}{#2}{#3}\right)}
\newcommand\Derpgros[3]{\frac{\partial}{\partial{#2}}\Bigl({#1}\Bigr)%
\cdot{#3}}
\newcommand\Derpdeux[4]{\frac{\partial^2{#1}}{\partial{#2}\partial{#3}}%
\cdot{#4}}
\newcommand\Derpdeuxbis[5]{\frac{\partial^{#5}{#1}}{\partial{#2}\partial{#3}}%
\cdot{#4}}
\newcommand\Diffpv[3]{\frac{\partial}{\partial{#2}}{#1}\cdot{#3}}
\newcommand\Diffpd[5]{\frac{\partial^2{#1}}{\partial{#2}\partial{#3}}%
\cdot\left({#4},{#5}\right)}

\newcommand\Paragraph[1]{\smallbreak\noindent\textbf{#1}}
\newcommand\Parag[1]{\S\ref{#1}}

\newcommand\Redamone{\mathrel{\beta^{0,1}}}
\newcommand\Redonecan{\mathrel{{\bar\beta}^1_{\mathrm D}}}
\newcommand\Redpar{\mathrel{\rho}}
\newcommand\Redparcan{\mathrel{\bar\rho}}

\newcommand\Sat[1]{#1^*}

\newcommand\Can[1]{\left\langle{#1}\right\rangle}
\newcommand\Candiff[3]{\Delta_{#1}{#2}\cdot{#3}}

\newcommand\Eq{\simeq}

\newcommand\List[3]{#1_{#2},\dots,#1_{#3}}
\newcommand\Listbis[3]{#1_{#2}\dots #1_{#3}}
\newcommand\Listc[4]{{#1}_{#2},\dots,{#4},\dots,{#1}_{#3}}
\newcommand\Listbisc[4]{{#1}_{#2}\dots{#4}\dots{#1}_{#3}}
\newcommand\Absent[1]{\widehat{#1}}
\newcommand\Kronecker[2]{\delta_{{#1},{#2}}}

\newcommand\Eqindent{\quad}

\newcommand\Subst[3]{{#1}\left[{#2}/{#3}\right]}
\newcommand\Substv[3]{{#1}\left[{#2\,}/{#3}\right]}
\newcommand\Substz[2]{{#1}\left[0/{#2}\right]}
\newcommand\Substpar[3]{\Bigl({#1}\Bigr)\left[{#2}/{#3}\right]}
\newcommand\Substbis[2]{{#1}\left[{#2}\right]}
\newcommand\Substapp[2]{{#1}{#2}}
\newcommand\Substlinapp[2]{{#2}[{#1}]}

\newcommand\Span[1]{\overline{#1}}
\newcommand\SN{\mathcal{N}}
\newcommand\WN{\mathcal{N}}
\newcommand\Extred[1]{\mathop{\mathrm R}^{\textrm{ext}}{(#1)}}
\newcommand\Onered[1]{\mathop{\mathrm R}^1{(#1)}}
\newcommand\NO[1]{\mathop{\mathrm N}({#1})}
\newcommand\NOD[1]{\mathop{\mathrm N_0}({#1})}

\newcommand\Freemod[2]{{#1}\left\langle{#2}\right\rangle}
\newcommand\Mofl[1]{{#1}^\#}
\newcommand\Mlext[1]{\widetilde{#1}}
\newcommand\Difflamb{\Lambda_D}
\newcommand\Terms{\Lambda_{\mathrm D}}
\newcommand\Diffmod{\Freemod R\Difflamb}

\newcommand\Factor[1]{{#1}!}
\newcommand\Binom[2]{\genfrac{(}{)}{0pt}{}{#1}{#2}}

\newcommand\Multinom[2]{\mathsf{mn}(#2)}

\newcommand\Multinomb[2]{\genfrac{[}{]}{0pt}{}{#1}{#2}}
\newcommand\Suite[1]{\bar#1}
\newcommand\Head[1]{\mathrel{\tau^{#1}}}

\newcommand\Headlen[2]{{\mathrm{L}}({#1},{#2})}
\newcommand\Betaeq{\mathrel{\mathord\simeq_\beta}}
\newcommand\Betadeq{\mathrel{\mathord\simeq_{\beta_{\mathrm D}}}}

\newcommand\Vspace[1]{E_{#1}}
\newcommand\Realpto[1]{(\Realp)^{#1}}
\newcommand\Realto[1]{\Real^{#1}}
\newcommand\Realptow[1]{(\Realp)^{\Web{#1}}}
\newcommand\Realpc{\overline{\Realp}}
\newcommand\Realpcto[1]{\Realpc^{#1}}
\newcommand\Izu{[0,1]}
\newcommand\Intercc[2]{[#1,#2]}
\newcommand\Interoc[2]{(#1,#2]}
\newcommand\Interco[2]{[#1,#2)}
\newcommand\Interoo[2]{(#1,#2)}
\newcommand\Rational{\mathbb Q}

\newcommand\Ring{R}
\newcommand\Linapp[2]{\left\langle{#1}\right\rangle{#2}}
\newcommand\Linappbig[2]{\Bigl\langle{#1}\Bigr\rangle{#2}}
\newcommand\Fmod[2]{{#1}\langle{#2}\rangle}
\newcommand\Fmodr[1]{{\Ring}\langle{#1}\rangle}
\newcommand\Imod[2]{{#1}\langle{#2}\rangle_\infty}
\newcommand\Imodr[1]{{\Ring}\langle{#1}\rangle_\infty}
\newcommand\Res[2]{\langle{#1},{#2}\rangle}
\newcommand\Transp[1]{\Orth{#1}}
\newcommand\Idmat{\operatorname{I}}
\newcommand\FINMOD[1]{\operatorname{\mathbf{Fin}\,}({#1})}
\newcommand\Bcanon[1]{e_{#1}}

\newcommand\Mfin[1]{\mathcal M_\Fini({#1})}
\newcommand\Mfinc[2]{\mathcal M_{#1}({#2})}
\newcommand\Expt[2]{\operatorname{exp}_{#1}({#2})}
\newcommand\Dexpt[2]{\operatorname{exp}'_{#1}({#2})}

\newcommand\Semtype[1]{{#1}^*}
\newcommand\Semterm[2]{{#1}^*_{#2}}

\newcommand\Elem[1]{\operatorname{Elem}({#1})}
\newcommand\Fcard[1]{\operatorname{Card}({#1})}
\newcommand\Linhom[2]{\SCOTTLIN({#1},{#2})}
\newcommand\Hom[2]{\operatorname{Hom}(#1,#2)}
\newcommand\Compana{\circ_\cA}

\newcommand\IConv{\mathrel{*}}
\newcommand\Exclun[1]{\operatorname{u}^{#1}}
\newcommand\Derzero[1]{\partial^{#1}_0}
\newcommand\Dermorph[1]{\partial^{#1}}
\newcommand\Dirac[1]{\delta_{#1}}
\newcommand\Diracm{\delta}

\newcommand\Lintop[1]{\lambda({#1})}
\newcommand\Neigh[2]{\operatorname{\cV}_{#1}(#2)}
\newcommand\Matrix[1]{\mathsf{M}({#1})}

\newcommand\Ev{\operatorname{\mathsf{Ev}}}
\newcommand\Evlin{\operatorname{\mathsf{ev}}}
\newcommand\REL{\operatorname{\mathbf{Rel}}}
\newcommand\RELK{\operatorname{\mathbf{Rel}_\oc}}
\newcommand\RELPOC{\operatorname{\mathbf{RelC}}}

\newcommand\Diag[1]{\Delta^{#1}}
\newcommand\Codiag[1]{\operatorname{a}^{#1}}
\newcommand\Norm[1]{\|{#1}\|}
\newcommand\Normsp[2]{\|{#1}\|_{#2}}

\newcommand\Tpower[2]{{#1}^{\otimes{#2}}}

\newcommand\Termty[3]{{#1}\vdash{#2}:{#3}}
\newcommand\Polyty[3]{{#1}\vdash_!{#2}:{#3}}

\newcommand\Ruleskip{\quad\quad\quad\quad}

\newcommand\Sterms{\Delta}
\newcommand\Polysymb{!}
\newcommand\Pterms{\Delta^{\Polysymb}}
\newcommand\SPterms{\Delta^{(\Polysymb)}}
\newcommand\Nsterms{\Delta_0}
\newcommand\Npterms{\Delta_0^{\Polysymb}}
\newcommand\Nspterms{\Delta^{(\Polysymb)}_0}

\newcommand\Relspan[1]{\overline{#1}}
\newcommand\Rel[1]{\mathrel{#1}}

\newcommand\Redmult{\bar{\beta}^{\mathrm{b}}_\Delta}
\newcommand\Red{\beta_\Delta}
\newcommand\Redst[1]{\mathop{\mathsf{Red}}}

\newcommand\Redredex{\beta^1_\Delta}
\newcommand\Redredexm{\beta^{\mathrm{b}}_\Delta}

\newcommand\Multn{\mathsf{m}}
\newcommand\Shape[1]{\mathcal{T}(#1)}
\newcommand\Tay[1]{{#1}^*}

\newcommand\Deg[1]{\textsf{deg}_{#1}}
\newcommand\Linearize[2]{\mathcal L^{#1}_{#2}}
\newcommand\Fmodrel{\Fmod}

\newcommand\Redeq{=_\Delta}
\newcommand\Kriv{\mathsf K}
\newcommand\Dom[1]{\operatorname{\mathsf{D}}(#1)}
\newcommand\Cons{::}
\newcommand\Addtofun[3]{{#1}_{{#2}\mapsto{#3}}}
\newcommand\Tofclos{\mathsf T}
\newcommand\Tofstate{\mathsf T}
\newcommand\BT{\operatorname{\mathsf{BT}}}
\newcommand\NF{\operatorname{\mathsf{NF}}}

\newcommand\Msubst[3]{\partial_{#3}(#1,#2)}
\newcommand\Msubstbig[3]{\partial_{#3}\bigl(#1,#2\bigr)}
\newcommand\MsubstBig[3]{\partial_{#3}\Bigl(#1,#2\Bigr)}
\newcommand\Symgrp[1]{\mathfrak S_{#1}}

\newcommand\Tcoh{\mathrel{\frownsmile}}
\newcommand\Tcohs{\mathrel{\frown}}
\newcommand\Size{\mathop{\textsf{size}}}

\newcommand\Varset{\mathcal{V}}
\newcommand\NC[1]{\langle{#1}\rangle}
\newcommand\Tuple[1]{\langle{#1}\rangle}
\newcommand\Cotuple[1]{\left[{#1}\right]}
\newcommand\Treeterms{\mathsf{T}}
\newcommand\Tree[2]{\mathop{\mathrm{tree}_{#1}}({#2})}
\newcommand\Treeeq[1]{P_{#1}}
\newcommand\Fvar[1]{\mathop{\textsf{fv}}({#1})}
\newcommand\Avar[1]{\Fvar{#1}}

\newcommand\Msetofsubst[1]{\bar F}
\newcommand\Invgrp[2]{\mathsf G({#1},{#2})}
\newcommand\Invgrpsubst[3]{\mathsf G({#1},{#2},{#3})}
\newcommand\Eqofpart[1]{\mathrel{\mathord{\sim}_{#1}}}
\newcommand\Order[1]{\left|{#1}\right|}
\newcommand\Inv[1]{{#1}^{-1}}
\newcommand\Linext[1]{\widetilde{#1}}
\newcommand\Linauto{\operatorname{\mathsf{Iso}}}

\newcommand{\Ens}[1]{\{#1\}}

\newcommand\Dummy{\_}

\newcommand\Varrename[2]{{#2}^{#1}}

\newcommand\Pcoh[1]{\mathsf P{#1}}
\newcommand\Pcohp[1]{\Pcoh{(#1)}}
\newcommand\Pcohc[1]{\overline{\mathsf P}{#1}}
\newcommand\Pcohcp[1]{\Pcohc{(#1)}}
\newcommand\Pcohsig[1]{\mathsf B_1{#1}}
\newcommand\Base[1]{\overline{#1}}
\newcommand\Matapp[2]{{#1}\Compl{#2}}

\newcommand\PCOH{\mathbf{Pcoh}}
\newcommand\PCOHKL{\mathbf{Pcoh}_!}
\newcommand\PCOHEMB{\mathbf{Pcoh}_{\mathord\subseteq}}

\newcommand\Leftu{\lambda}
\newcommand\Rightu{\rho}
\newcommand\Assoc{\alpha}
\newcommand\Sym{\gamma}
\newcommand\Msetempty{\Mset{\,}}

\newcommand\Tenstg[3]{\Tens{(\Tens{#1}{#2})}{#3}}
\newcommand\Tenstd[3]{\Tens{#1}{(\Tens{#2}{#3})}}

\newcommand\Banach[1]{\mathsf e{#1}}
\newcommand\Absval[1]{\left|{#1}\right|}

\newcommand\Msetsum[1]{\Sigma{#1}}

\newcommand\Retri\zeta
\newcommand\Retrp\rho

\newcommand\Subpcs{\subseteq}
\newcommand\Cuppcs{\mathop{\cup}}

\newcommand\Impl[2]{{#1}\Rightarrow{#2}}
\newcommand\Implp[2]{({#1}\Rightarrow{#2})}
\newcommand\Basemax[1]{\mathsf c_{#1}}

\newcommand\Isosucc{\mathsf s}

\newcommand\Leqway{\ll}

\newcommand\Funofmat[1]{\mathsf{fun}(#1)}
\newcommand\Funkofmat[1]{\mathsf{Fun}(#1)}

\newcommand\Pcoin[3]{\mathsf{ran}({#1},{#2},{#3})}

\newcommand\Tsem[1]{\llbracket{#1}\rrbracket}
\newcommand\Tsemrel[1]{\llbracket{#1}\rrbracket^{\REL}}
\newcommand\Tsemcoh[1]{\llbracket{#1}\rrbracket^{\NUCS}}
\newcommand\Tseme[1]{\llbracket{#1}\rrbracket^{\mathord\oc}}
\newcommand\Psem[2]{\llbracket{#1}\rrbracket_{#2}}
\newcommand\Psemrel[2]{\llbracket{#1}\rrbracket^{\REL}_{#2}}
\newcommand\Psemcoh[2]{\llbracket{#1}\rrbracket^{\NUCS}_{#2}}

\newcommand\Tnat\iota
\newcommand\Fix[1]{\operatorname{\mathsf{fix}}(#1)}
\newcommand\If[3]{\operatorname{\mathsf{if}}(#1,#2,#3)}
\newcommand\Pred[1]{\operatorname{\mathsf{pred}}(#1)}
\newcommand\Succ[1]{\operatorname{\mathsf{succ}}(#1)}
\newcommand\Num[1]{\underline{#1}}
\newcommand\Loop\Omega
\newcommand\Loopt[1]{\Omega^{#1}}
\newcommand\Ran[1]{\mathsf{ran}(#1)}
\newcommand\Dice[1]{\operatorname{\mathsf{coin}}(#1)}
\newcommand\Tseq[3]{{#1}\vdash{#2}:{#3}}
\newcommand\Tseqst[2]{{#1}\vdash{#2}}
\newcommand\Eseq[5]{{#1}\vdash{#2}:{((#3\to #4)\to #5)}}
\newcommand\Timpl\Impl
\newcommand\Timplp\Implp
\newcommand\Simpl\Impl

\newcommand\Pcfred{\beta_{\mathrm{PCF}}}

\newcommand\PCFP{\mathsf{pPCF}}
\newcommand\PCF{\textsf{PCF}}
\newcommand\PCFPZ{\mathsf{pPCF}^-}

\newcommand\Curry[1]{\operatorname{\mathsf{Cur}}({#1})}
\newcommand\Fixpcoh[1]{\operatorname{\mathsf{Fix}}_{#1}}

\newcommand\Pnat{\mathbf N}

\newcommand\Evcontext[1]{\mathsf C_{#1}}
\newcommand\Evhead[1]{\mathsf H_{#1}}
\newcommand\Reduct[1]{\operatorname{\mathsf{red}}({#1})}
\newcommand\Ready[1]{\operatorname{\mathsf{ready}}({#1})}
\newcommand\Undefready[1]{\operatorname{\mathsf{ready}}({#1})\mathord\uparrow}
\newcommand\Defready[1]{\operatorname{\mathsf{ready}}({#1})\mathord\downarrow}

\newcommand\Contextempty{\emptyset}

\newcommand\States{\operatorname{\mathsf S}}

\newcommand\Stateloop[1]{\perp^{#1}}
\newcommand\Matred{\operatorname{\mathsf{Red}}}

\newcommand\Vecrow[1]{\operatorname{\mathsf r}_{#1}}
\newcommand\Veccol[1]{\operatorname{\mathsf c}_{#1}}

\newcommand\Plotkin[1]{\cR^{#1}}

\newcommand\Redone[1]{\stackrel{#1}\rightarrow}
\newcommand\Redonetr[1]{\stackrel{#1}{\rightarrow^*}}
\newcommand\Redoned{\rightarrow_{\mathsf d}}

\newcommand\Transition{\leadsto}
\newcommand\Probatransp{\operatorname{\mathsf p}}
\newcommand\Lg{\operatorname{\mathsf{lg}}}
\newcommand\Emptyseq{1}

\newcommand\Cbsdirect[1]{{#1}^+}
\newcommand\Cbsdual[1]{{#1}^-}
\newcommand\Cbsres[3]{\left\langle#2,#3\right\rangle_{#1}}
\newcommand\Cbsdirectn[2]{\left\|{#2}\right\|_{#1}^+}
\newcommand\Cbsdualn[2]{\left\|{#2}\right\|_{#1}^-}
\newcommand\Cbsdirectp[1]{\mathsf C_{#1}^+}
\newcommand\Cbsdualp[1]{\mathsf C_{#1}^-}

\newcommand\Cbsofpcs[1]{\mathsf{cbs}({#1})}

\newcommand\Init[2][]{\mathop{\mathord\downarrow}_{#1}{#2}}
\newcommand\Final[2][]{\mathop{\mathord\uparrow}_{#1}{#2}}
\newcommand\Poop[1]{{#1}^{\mathrm{op}}}
\newcommand\Dual[1][]{\mathrel{\mathord\perp}_{#1}}

\newcommand\PP{PP}

\newcommand\Scottsymb{\mathsf S}

\newcommand\Weak[1]{\operatorname{\mathsf{w}}_{#1}}
\newcommand\Contr[1]{\operatorname{\mathsf{contr}}_{#1}}
\newcommand\WeakS[1]{\Weak{#1}^\Scottsymb}
\newcommand\ContrS[1]{\Contr{#1}^\Scottsymb}

\newcommand\Contrc[1]{\operatorname{\mathsf{contr}}_{#1}}

\newcommand\Der[1]{\operatorname{\mathsf{der}}_{#1}}
\newcommand\Coder[1]{\operatorname{\mathord\partial}_{#1}}
\newcommand\Digg[1]{\operatorname{\mathsf{dig}}_{#1}}
\newcommand\DerS[1]{\Der{#1}^{\Scottsymb}}
\newcommand\DiggS[1]{\Digg{#1}^{\Scottsymb}}

\newcommand\Lfun{\operatorname{\mathsf{fun}}}
\newcommand\Fun[1]{\widehat{#1}}

\newcommand\LfunS{\Lfun^\Scottsymb}
\newcommand\FunS{\Fun^\Scottsymb}

\newcommand\Ltrace{\mathsf{tr}}
\newcommand\Trace{\mathsf{Tr}}
\newcommand\LtraceS{\mathsf{tr}^\Scottsymb}
\newcommand\TraceS{\mathsf{Tr}^\Scottsymb}

\newcommand\Id{\operatorname{\mathsf{Id}}}
\newcommand\IdS{\Id^\Scottsymb}

\newcommand\Proj[1]{\mathsf{pr}_{#1}}
\newcommand\ProjS[1]{\Proj{#1}^\Scottsymb}
\newcommand\EvS{\Ev^\Scottsymb}
\newcommand\EvlinS{\Evlin^\Scottsymb}

\newcommand\Excl[1]{\oc{#1}}
\newcommand\Excll[1]{\oc\oc{#1}}
\newcommand\Exclp[1]{\oc({#1})}
\newcommand\Int[1]{\wn{#1}}
\newcommand\Intp[1]{\wn({#1})}
\newcommand\ExclS{\Excl}
\newcommand\IntS{\Int}

\newcommand\Prom[1]{#1^!}
\newcommand\Promm[1]{{#1}^{!!}}
\newcommand\PromS{\Prom}
\newcommand\PrommS{\Promm}

\newcommand\SCOTTLIN{\operatorname{\mathbf{ScottL}}}
\newcommand\SCOTTPOC{\operatorname{\mathbf{ScottC}}}
\newcommand\SCOTTLINKL{\operatorname{\mathbf{ScottL}}_\oc}
\newcommand\SCOTTLINK{\operatorname{\mathbf{ScottL}}_\oc}
\newcommand\PPLIN{\operatorname{\mathbf{PpL}}}
\newcommand\PPLINK{\operatorname{\mathbf{PpL}}_\oc}
\newcommand\PPPOC{\operatorname{\mathbf{PpC}}}
\newcommand\RELLIN{\operatorname{\mathbf{RelL}}}
\newcommand\PERLIN{\operatorname{\mathbf{PerL}}}
\newcommand\PERPOC{\operatorname{\mathbf{PerC}}}
\newcommand\PERLINK{\operatorname{\mathbf{PerL}_\oc}}

\newcommand\Mix{\operatorname{\mathsf{mix}}}

\newcommand\Realize[1]{\Vdash_{#1}}
\newcommand\Collapseeq[1]{\sim_{#1}}

\newcommand\Relsem[1]{\left[{#1}\right]}
\newcommand\Perofpp[1]{\epsilon_0({#1})}
\newcommand\Perofppf{\epsilon_0}
\newcommand\Scottofpp[1]{\sigma_0({#1})}
\newcommand\Scottofppf{\sigma_0}

\newcommand\Perofpprel[1]{\mathrel B_{#1}}
\newcommand\Perofppr[1]{\Collapseeq{\Perofpp{#1}}}
\newcommand\Perofppro[1]{\Collapseeq{\Orth{\Perofpp{#1}}}}

\newcommand\Kapp[2]{#1(#2)}

\newcommand\Left{\mathsf l}
\newcommand\Up{\mathsf u}
\newcommand\Upleft{\mathsf{ul}}
\newcommand\Upright{\mathsf{ur}}

\newcommand\Collapsecat{\operatorname{\mathsf e}}
\newcommand\Hetercat{\operatorname{\mathsf e}}
\newcommand\Hetercatmod{\operatorname{\mathsf e_{\mathsf{mod}}}}
\newcommand\Heterobjnext[1]{\ulcorner{#1}\urcorner}
\newcommand\Heterobjext[1]{\llcorner{#1}\lrcorner}
\newcommand\Heternext[1]{\ulcorner{#1}\urcorner}
\newcommand\Heterext[1]{\llcorner{#1}\lrcorner}
\newcommand\Collapseobj[1]{\ulcorner{#1}\urcorner}
\newcommand\Collapsenext{\rho}

\newcommand\EXT{\sigma}
\newcommand\COLL{\epsilon}
\newcommand\COLLL{\epsilon_0}

\newcommand\Subper{\sqsubseteq}
\newcommand\Unionper{\bigsqcup}
\newcommand\Purerel{\cD_{\mathsf r}}
\newcommand\Pureper{\cD_{\mathsf e}}
\newcommand\Pureperstep{\Phi_{\mathsf e}}

\newcommand\Subpo{\sqsubseteq}
\newcommand\Unionpo{\bigsqcup}
\newcommand\Purepo{\cD_{\mathsf s}}
\newcommand\Purepostep{\Phi_{\mathsf s}}

\newcommand\Subpp{\sqsubseteq}
\newcommand\Unionpp{\bigsqcup}
\newcommand\Purepp{\cD_{\mathsf{pp}}}
\newcommand\Pureheter{\cD_{\mathsf{h}}}
\newcommand\Pureppstep{\Phi_{\mathsf{pp}}}

\newcommand\Appmod{\operatorname{\mathsf{app}}}
\newcommand\Lammod{\operatorname{\mathsf{lam}}}

\newcommand\Kleisli[1]{{#1}_{\Excl{}}}

\newcommand\Initw[2]{\Init[\Web{#1}]{#2}}
\newcommand\Finalw[2]{\Final[\Web{#1}]{#2}}

\newcommand\Heterofpp{\mathsf h}
\newcommand\Collofper{\mathsf q}

\newcommand\Hetermod{Z}

\newcommand\Relincl\eta
\newcommand\Relrestr\rho

\newcommand\Seely{\mathsf m}
\newcommand\Seelyz{\Seely^0}
\newcommand\Seelyt{\Seely^2}
\newcommand\Monoidal{\mu}

\newcommand\Compl{\,}
\newcommand\Curlin{\operatorname{\mathsf{cur}}}
\newcommand\Cur{\mathsf{Cur}}

\newcommand\Op[1]{{#1}^{\mathsf{op}}}
\newcommand\Kl[1]{{#1}_\oc}
\newcommand\Em[1]{{#1}^\oc}

\newcommand\NUM[1]{\underline{#1}}
\newcommand\SUCC{\mathsf{succ}}
\newcommand\PRED{\mathsf{pred}}
\newcommand\IF{\mathsf{if}}
\newcommand\IFV[3]{\IF(#1,(#2)\,#3)}
\newcommand\Ifv[4]{\IF(#1,#2,#3\cdot #4)}
\newcommand\FIX{\mathsf{fix}}
\newcommand\FIXV[2]{\mathsf{fix}\,#1\cdot#2}
\newcommand\FIXVT[3]{\mathsf{fix}\,#1^{#2}\,#3}
\newcommand\Fixt[3]{\mathsf{fix}\,#1^{#2}\,#3}
\newcommand\APP[2]{(#1)#2}
\newcommand\ABST[3]{\lambda #1^{#2}\, #3}
\newcommand\MUT[3]{\mu #1^{#2}\, #3}
\newcommand\MU[2]{\mu #1\, #2}
\newcommand\NAMED[2]{[#1]#2}

\newcommand\Abstpref[2]{\lambda #1^{#2}\,}

\newcommand\NAT{\iota}
\newcommand\IMPL[2]{#1\Rightarrow #2}

\newcommand\SEQ[4]{#1\vdash #2:#3\mid #4}
\newcommand\SEQI[3]{#1\vdash #2:#3}
\newcommand\SEQN[3]{#1\vdash #2\mid #3}
\newcommand\SEQP[3]{#1\vdash #2\mid #3}
\newcommand\SEQS[4]{#1\mid #2:#3\vdash #4}

\newcommand\HOLE{[\ ]}

\newcommand\Redpcf{\mathord{\to}}
\newcommand\Redpcfi{\mathord{\to_{\mathsf{PCF}}}}
\newcommand\Redclean{\relstack\to{\mathsf{cl}}}

\newcommand\Freenames{\mathsf{FN}}

\newcommand\STARG[1]{\mathsf{arg}(#1)}
\newcommand\STIF[2]{\mathsf{ifargs(#1,#2)}}
\newcommand\STSUCC{\mathsf{succ}}
\newcommand\STPRED{\mathsf{pred}}
\newcommand\EMPTY{\epsilon}
\newcommand\CONS[2]{#1\cdot #2}
\newcommand\PUSH[2]{#1\cdot #2}
\newcommand\STATE[3]{#1\star #2\mid #3}
\newcommand\TAPP[2]{#1\star#2}
\newcommand\PROC[2]{#1\star#2}

\newcommand\Div[1]{#1\uparrow}
\newcommand\Conv[2]{#1\downarrow#2}
\newcommand\Convproba[1]{\mathrel{\downarrow^{#1}}}

\newcommand\Pcftrad[1]{{#1}^\bullet}
\newcommand\Transcl[1]{{#1}^*}

\newcommand\Conn[1]{\gamma_{#1}}
\newcommand\Sing[1]{\mathsf C(#1)}
\newcommand\RELW{\operatorname{\mathbf{RelW}}}

\newcommand\Natrelw{\mathsf{N}}

\newcommand\Least{\bot}

\newcommand\Posnat{\overline{\mathsf N}}
\newcommand\Botnat{\Bot}
\newcommand\Coalg[1]{h_{#1}}

\newcommand\Ifrel{\mathsf{if}}

\newcommand\Eval[2]{\langle#1,#2\rangle}

\newcommand\Let[3]{\mathsf{let}(#1,#2,#3)}

\newcommand\Add{\mathsf{add}}
\newcommand\Exp{\mathsf{exp}_2}
\newcommand\Cmp{\mathsf{cmp}}

\newcommand\Unift{\mathsf{unif}_2}
\newcommand\Unif{\mathsf{unif}}

\newcommand\Closed[1]{\Lambda^{#1}_0}
\newcommand\Open[2]{\Lambda^{#2}_{#1}}

\newcommand\Redmat[1]{\mathsf{Red}(#1)}
\newcommand\Redmats{\mathsf{Red}}
\newcommand\Redmato[2]{\mathsf{Red}(#1,#2)}

\newcommand\Mexpset[2]{\mathsf L(#1,#2)}

\newcommand\Expmonisoz{\Seely^0}
\newcommand\Expmonisob[2]{\Seely^2_{#1,#2}}
\newcommand\Expmonisobn{\Seely^2}

\newcommand\Injms[2]{#1\cdot#2}

\newcommand\Natobj{\mathsf{N}}
\newcommand\Natalg{h_\Natobj}

\newcommand\Snum[1]{\overline{#1}}
\newcommand\Sif{\overline{\mathsf{if}}}
\newcommand\Ssuc{\overline{\mathsf{suc}}}
\newcommand\Spred{\overline{\mathsf{pred}}}

\newcommand\Sfix{\mathsf{Y}}

\newcommand\Vect[1]{\vec{#1}}

\newcommand\Rts[1]{\cR^{#1}}

\newcommand\Probw[1]{\mathsf p(#1)}
\newcommand\Spath[2]{\mathsf{R}(#1,#2)}

\newcommand\Obseq{\sim}
\newcommand\Obsleq{\lesssim}

\newcommand\Ptest[1]{{#1}^+}
\newcommand\Ntest[1]{{#1}^-}
\newcommand\Plen[1]{|#1|^+}
\newcommand\Nlen[1]{|#1|^-}

\newcommand\Shift{\mathsf{shift}}
\newcommand\Shvar[2]{\App{\Shift_{#1}}{#2}}

\newcommand\Probe{\mathsf{prob}}
\newcommand\Pprod{\mathsf{prod}}
\newcommand\Pchoose{\mathsf{choose}}

\newcommand\Argsep{\,}

\newcommand\Simplex[1]{\Delta_{#1}}

\newcommand\Embi{\eta^+}
\newcommand\Embp{\eta^-}

\newcommand\Shvec[2]{{#1}\left\{{#2}\right\}}
\newcommand\ShvecBig[2]{{#1}\Big\{{#2}\Big\}}

\newcommand\Msetu[1]{\mathsf o(#1)}
\newcommand\Msetb[2]{\mathsf o(#1,#2)}

\newcommand\Mfinex[2]{\cM^-(#1,#2)}
\newcommand\Mfinr[2]{\Mfin{#1,#2}}

\newcommand\Canb[1]{\mathsf e_{#1}}

\newcommand\Prem[2]{\pi(#1,#2)}
\newcommand\Nrem[2]{\mu(#1,#2)}

\newcommand\Restrms[3]{\mathsf W(#1,#2,#3)}
\newcommand\Shleft[2]{\mathsf S(#1,#2)}

\newcommand\Listarg[3]{\,#1_{#2}\cdots#1_{#3}}

\newcommand\Rseg[2]{[#1,#2]}

\newcommand\Iftrans[2]{#1^\bullet_{#2}}

\newcommand\Bnfeq{\mathrel{\mathord:\mathord=}}
\newcommand\Bnfor{\,\,\mathord|\,\,}

\newcommand\PPCF{\mathsf{pPCF}}

\newcommand\Hempty{\ \ }
\newcommand\Hole[2]{[\Hempty]^{#1\vdash #2}}
\newcommand\Thole[3]{#1^{#2\vdash #3}}
\newcommand\Thsubst[2]{#1[#2]}

\newcommand\Tseqh[5]{\Tseq{#1}{\Thole{#2}{#3}{#4}}{#5}}

\newcommand\Locpcs[2]{{#1}_{#2}}

\newcommand\Deriv[1]{{#1}'}

\newcommand\Klfun[1]{\Fun{#1}}

\newcommand\Distsp[3]{\mathsf d_{#1}(#2,#3)}
\newcommand\Distspsymb[1]{\mathsf d_{#1}}
\newcommand\Distobs[2]{\mathsf d_{\mathsf{obs}}(#1,#2)}
\newcommand\Distobspsymb[1]{\mathsf d_{\mathsf{obs},#1}}
\newcommand\Distobsp[3]{\Distobspsymb{#1}(#2,#3)}

\newcommand\Fibloc[1]{\mathsf T#1}

\newcommand\Cuball[1]{\mathcal B#1}
\newcommand\Cuballp[1]{\mathcal B(#1)}
\newcommand\Cuballr[2]{\Cuball{\Crel{#1}{#2}}}

\newcommand\Cloc[2]{{#1}_{#2}}

\newcommand\Rem[1]{\widetilde{#1}}

\newcommand\Probared[2]{\mathbb{P}(#1\downarrow #2)}

\newcommand\Tamed[2]{#1\cdot #2}

\newcommand\Mark[2]{{#1}^{#2}}

\newcommand\Starg[1]{\mathsf{arg}(#1)}
\newcommand\Stsucc{\mathsf{succ}}
\newcommand\Stpred{\mathsf{pred}}
\newcommand\Stif[2]{\mathsf{if}(#1,#2)}
\newcommand\Stoneif[1]{\mathsf{seq}(#1)}
\newcommand\Stlet[2]{\mathsf{let}(#1,#2)}
\newcommand\State[2]{\langle#1,#2\rangle}
\newcommand\Stinit[2]{\mathsf I\langle#1,#2\rangle}
\newcommand\Stfin[2]{\mathsf F\langle#1,#2\rangle}
\newcommand\Stcons{\cdot}
\newcommand\Stempty{\epsilon}
\newcommand\Oempty{\Seqempty}

\newcommand\Stred[1]{\to^{#1}}
\newcommand\Stredd{\to}

\newcommand\Tone{\One}
\newcommand\Labels{\cL}

\newcommand\Onec{()}
\newcommand\Oneif[2]{#1\cdot#2}

\newcommand\Finseq[1]{{#1}^{<\omega}}

\newcommand\Ocons[2]{\Tuple{#1} #2}
\newcommand\Osingle[1]{\langle#1\rangle}
\newcommand\Ofinal[1]{\mathsf{O}(#1)}

\newcommand\Reltr[1]{\mathrel{\mathord{#1}^*}}
\newcommand\Kminus{\overset{.} -}

\newcommand\Lpcfst{\mathsf S_\Labels}
\newcommand\Evallst{\mathsf{Ev}_\Labels}

\newcommand\Pweight[2]{\mathsf w(#1,#2)}
\newcommand\Pcompl[2]{\mathsf c(#1,#2)}
\newcommand\Inistate[1]{\State{#1}{\Stempty}}

\newcommand\Nhead[2]{\mathsf H(#1,#2)}

\newcommand\Cantbopen[1]{\mathord\uparrow#1}
\newcommand\Len[1]{\mathsf{len}(#1)}

\newcommand\Pcfst{\mathsf S}
\newcommand\Evalst{\mathsf{Ev}}
\newcommand\Evdom[1]{\cD(#1)}

\newcommand{\Labext}{\mathsf{lab}}
\newcommand\PPCFlab{\PPCF_\Labext}
\newcommand\PCFlab{\PCF_\Labext}
\newcommand\Clabext{\mathsf{lc}}
\newcommand\PPCFlc{\PPCF_\Clabext}

\newcommand\Cantor{\cC}
\newcommand\Cantorfin{\Cantor_0}
\newcommand\Ospace{\Cantorfin}
\newcommand\Evalstlab{\Evalst_\Labext}
\newcommand\Evalstlc{\Evalst_\Clabext}
\newcommand\Evalstlcsh{\Evalst_\Clabext^\eta}
\newcommand\Evalstlcshinv{\Evalst_\Clabext^{-\eta}}
\newcommand\Pcfstlab{\Pcfst_\Labext}
\newcommand\Pcfstlc{\Pcfst_\Clabext}
\newcommand\Striplab[1]{\underline{#1}}
\newcommand\Striplc[1]{\underline{#1}}
\newcommand\Evdomlab[1]{\cD_\Labext(#1)}
\newcommand\Evdomlc[1]{\cD_\Clabext(#1)}
\newcommand\Complstlab{\mathsf c_\Labext}

\newcommand\Dicelab[2]{\operatorname{\mathsf{lcoin}}(#1,#2)}

\newcommand\Labof[1]{\mathsf{lab}(#1)}
\newcommand\Lcof[1]{\mathsf{lc}_{#1}}

\newcommand\Pcfspy[2]{\mathsf{sp}_{#1}(#2)}

\newcommand\Evalstf[1]{\mathsf{Ev}(#1)}
\newcommand\Evalstlabf[1]{\Evalstlab(#1)}
\newcommand\Evalstlabfp[2]{\Evalstlab(#1)_{#2}}

\newcommand\Eventconvz[1]{{#1}\downarrow\Num 0}

\newcommand\Expect[1]{\mathbb E(#1)}

\newcommand\Spy[1]{\mathsf{sp}_{#1}}

\newcommand\Tamedc[2]{{#1}^{\langle#2\rangle}}
\newcommand\Tdistobs[3]{\mathsf d^{\langle#1\rangle}_{\mathsf{obs}}(#2,#3)}
\newcommand\Tdistobsf[1]{\mathsf d^{\langle#1\rangle}}

\newcommand\Ldistobs[2]{\mathsf d_{\mathsf{lin}}(#1,#2)}

\newcommand\Undef{\mathord\uparrow}
\newcommand\Eset[1]{\{#1\}}

\newcommand\Pneg[2]{\nu_{#1}(#2)}

\newcommand\Zeroc[1]{0_{#1}}
\newcommand\Underc[1]{\underline{#1}}
\newcommand\Sseq[4]{#1\vdash #2:#3:#4}

\newcommand\Sfun{\mathbf S}
\newcommand\Scfun{\mathbf S_{\Into}}
\newcommand\Scflip{\widetilde{\mathsf c}}
\newcommand\Stuple[1]{\Tuple{#1}_{\Sfun}}
\newcommand\Sctuple[1]{\Tuple{#1}_{\Scfun}}
\newcommand\Smont{\mathsf L}
\newcommand\Scmont{\widetilde{\Smont}}
\newcommand\Sstr{\phi}
\newcommand\Sstrs{\Sstr^\Sym}
\newcommand\Sstrc{\Sstr^{\multimap}}
\newcommand\Sdiff{\partial}

\newcommand\Sdfun{\mathsf D}

\newcommand\Coh[3]{#2\coh_{#1}#3}
\newcommand\Scoh[3]{#2\scoh_{#1}#3}
\newcommand\Incoh[3]{#2\incoh_{#1}#3}
\newcommand\Sincoh[3]{#2\sincoh_{#1}#3}
\newcommand\Neu[3]{#2\equiv_{#1}#3}
\newcommand\Matappa[2]{{#1}\cdot{#2}}
\newcommand\Matappap[2]{\left({#1}\cdot{#2}\right)}

\newcommand\Saxcom{(\textbf{S-com})}
\newcommand\Saxzero{(\textbf{S-zero})}
\newcommand\Saxass{(\textbf{S-assoc})}
\newcommand\Saxwit{(\textbf{S-witness})}
\newcommand\Saxdist{(\textbf{S$\ITens$-dist})}
\newcommand\Saxfun{(\textbf{S$\ITens$-fun})}
\newcommand\Saxprod{(\textbf{S$\IWith$-pres})}

\newcommand\Csaxepi{(\textbf{CS-epi})}
\newcommand\Csaxsum{(\textbf{CS-sum})}
\newcommand\Ccsaxepi{(\textbf{CCS-epi})}
\newcommand\Ccsaxsum{(\textbf{CCS-sum})}
\newcommand\Csaxflip{(\textbf{CS-flip})}

\newcommand\Sprodisoz{\mathsf p^0}
\newcommand\Sprodisot{\mathsf p^2}

\newcommand\Sone{\One}
\newcommand\Sonelem{\ast}

\newcommand\Win[1]{\Inj{#1}^{\with}}
\newcommand\Wdiag{\Delta^\with}
\newcommand\Into{\mathsf I}

\newcommand\Diffst{\mathsf d}
\newcommand\Obj[1]{\mathsf{Obj(#1)}}

\newcommand\Fdom{\mathsf{dom}}

\newcommand\Diffofst[1]{{#1}^{+}}
\newcommand\Diffofste[1]{{\overline{#1}}^{+}}
\newcommand\Stofdiff[1]{{#1}^{-}}

\newcommand\Sdiffst{\widetilde\partial}

\newcommand\Diffrac[2]{\frac{d #1}{d #2}}
\newcommand\Intcc[2]{[#1,#2]}

\newcommand\Sfunadd{\sigma}

\newcommand\Symbdiff[1]{\mathsf d_{#1}}

\newcommand\Forec[2]{\mathsf{rec}(#1,#2)}
\newcommand\Trec[3]{\mathsf{rec}\,(#1:#2)\,#3}

\newcommand\Tinto{\boldsymbol\Delta_\Into}

\newcommand\Adj{\mathrel{\dashv}}

\newcommand\Ftunit{\eta}
\newcommand\Ftcounit{\epsilon}

\newcommand\Daxchain{($\partial$\textbf{-chain})}
\newcommand\Daxlocal{($\partial$\textbf{-local})}
\newcommand\Daxclocal{(\textbf{C}$\partial$\textbf{-local})}
\newcommand\Daxclin{(\textbf{C}$\partial$\textbf{-lin})}
\newcommand\Daxlin{($\partial$\textbf{-lin})}
\newcommand\Daxwith{($\partial$\textbf{-}$\mathord{\IWith}$)}
\newcommand\Daxcwith{(\textbf{C}$\partial$\textbf{-}$\mathord{\IWith}$)}
\newcommand\Daxschwarz{($\partial$\textbf{-Schwarz})}
\newcommand\Daxcschwarz{(\textbf{C}$\partial$\textbf{-Schwarz})}
\newcommand\Daxcchain{(\textbf{C}$\partial$\textbf{-chain})}
\newcommand\Tysub[2]{#1 - #2}
\newcommand\Tykind[1]{\overline{#1}}
\newcommand\Typrod[3]{\Pi\,(#1:#2)#3}

\newcommand\Cseq[1]{#1\vdash}
\newcommand\Tyseq[2]{#1\vdash #2}
\newcommand\Kindofty[1]{\underline{#1}}

\newcommand\Emptycont{(\,)}
\newcommand\Treesep{\quad\quad}

\newcommand\Sdfunit{\zeta}
\newcommand\Sdfmult{\tau}
\newcommand\Sdfstr{\psi}

\newcommand\Kllin{\mathsf{Lin}_{\mathord\oc}}

\newcommand\Tdiff[1]{\mathsf D{#1}}
\newcommand\Tdiffm[2]{\mathsf D^{#1}{#2}}
\newcommand\Tdnat[1]{\mathsf D^{#1}\Tnat}
\newcommand\Lproj[2]{\Sproj{#1}(#2)}
\newcommand\Lprojs[1]{\Sproj{#1}}
\newcommand\Linj[2]{\Sin{#1}(#2)}
\newcommand\Lsum[1]{\tau(#1)}
\newcommand\Lsums{\tau}
\newcommand\Ldlet[3]{\partial(#1,#2,#3)}
\newcommand\Lzero{0}
\newcommand\Lplus[2]{#1+#2}
\newcommand\Ldiff[1]{\Tdiff{#1}}
\newcommand\Ldiffp[1]{\Tdiff({#1})}
\newcommand\Lfix[1]{\mathsf Y#1}

\newcommand\Wseq[3]{#1\vdash_{w}#2:#3}

\renewcommand\Red{\to}

\newcommand\Tbundle[1]{\mathsf T#1}

\newcommand\Tonat{\Rightarrow}

\newcommand\ACOH{\mathbf{APcoh}}

\section*{Introduction}
\addcontentsline{toc}{section}{\protect\numberline{}Introduction}%


This article is a long version of a paper, with the same title and
by the same authors, accepted at the \emph{ACM/IEEE Symposium on Logic
  in Computer Science} 2023. In particular, all the proofs which are
missing in the conference version are provided in the present article.

\medbreak

Linear Logic (LL) and its models~\cite{Girard87} strongly suggest that
differentiation of proofs should be a natural operation
extracting their best ``local'' linear approximation.
Remember that for any \(E,F\) Banach spaces, \(f:E\to F\)
is differentiable at \(x \in E \) if there is a
neighborhood \(U\) of \(0\) in \(E\) and a linear and continuous
function \(\phi:E\to F\) such that, for all \(u\in U\)
\begin{align}
  \label{eq:derivative-def}
  f(x+u)=f(x)+\phi(u)+o(\|u\|)\,.
\end{align}
When \(\phi\) exists, it is unique and is denoted as
\(f'(x)\).
When \(f'(x)\) exists for all \(x\in E\),
the function \(f':E\to\mathcal L(E,F)\), where
\(\mathcal L(E,F)\) is the Banach space of linear and continuous
functions \(E\to F\), is called the \emph{differential} of \(f\).
This function can itself admit a differential and so on. When all
these iterated differentials exist one says that \(f\) is
\emph{smooth} and the \(n\)th derivative of \(f\) is a function
\(f^{(n)}:E\to\mathcal L_n(E,F)\) where \(\mathcal L_n(E,F)\) is the
space of \(n\)-linear symmetric functions \(E^n\to F\).
It can even happen that \(f\) is locally (or even globally) expressed
using its iterated derivatives by means of the \emph{Taylor Formula}
\(f(x+u)=\sum_{n=0}^\infty\frac1{n!}f^{(n)}(x)(u,\dots,u)\); when this
holds locally at any point \(x\), \(f\) is said to be \emph{analytic}.

Based on categorical models of LL where morphisms are 
analytic functions, the differential \(\lambda\)-calculus and
differential LL provide a logical and syntactical account of
differentiation. A program of type \(A\Rightarrow B\) can be turned
into a program of type \(A\Rightarrow(A\multimap B)\). This provides 
a new approach of finite approximations of functions by a
syntactical version of the Taylor Formula which has shown relevance in
the study of the \(\lambda\)-calculus and of LL.
Differentiation is deeply connected with \emph{addition}, as it can
already be seen in its definition~\cref{eq:derivative-def}.
This connection also appears when writing the differential
of \(f:\Real^n\to\Real\) as a sum of \emph{partial derivatives}:
  \[ f'(x_1,\dots,x_n)\cdot(u_1,\dots,u_n)=
  \sum_{i=1}^n\frac{\partial f(x_1,\dots,x_n)}{\partial x_i}u_i \]
and, of course, in the Taylor formula itself.
For this reason, until recently, all categorical models of the
differential \(\lambda\)-calculus and of 
differential LL~\cite{Blute06, Blute09} were using
categories where hom-sets have a structure of commutative monoid
and both formalisms feature a
formal and unrestricted addition operation on terms or proofs of the
same type.
The only available operational interpretation of such a sum being
erratic choice, these formalisms are inherently non-deterministic.

Recently, \Ehrhard{} observed~\cite{Ehrhard22a} that, in a setting where all coefficients
are non-negative, differentiation survives to strong restrictions on
the use of addition.
Consider for instance a function \([0,1]\to[0,1]\) which is smooth on
\([0,1)\) and all of whose iterated derivatives are everywhere \(\geq 0\)%
\footnote{This actually implies that \(f\) is analytic.}.
If \(x,u\in[0,1]\) are such that \(x+u\in[0,1]\) then
\(f(x)+f'(x)u\leq f(x+u)\in[0,1]\) (this makes sense even if
\(f'(1)=\infty\), which can happen: take \(f(x)=1-\sqrt{1-x}\)).
So if \(S\) is the set of all such pairs \((x,u)\) that we call
\emph{summable}, we can consider the function
\(\D(f):(x,u)\mapsto(f(x),f'(x)u)\) as a map \(S\to S\).
This basic observation is generalized in~\cite{Ehrhard21} to a wide
range of categorical models \(\cL\) of LL including coherence spaces,
probabilistic coherence spaces \emph{etc.}~where hom-sets have only a
\emph{partially defined} addition.
%
%
In these \emph{summable categories}, \(S\) becomes an endofunctor
\(\cL\to\cL\) equipped with an additional structure which allows to
define summability and (partial) sums in a very general way and turns
out to induce a monad.
Differentiation is then axiomatized as a distributive law between this
monad (similar to the tangent bundle monad of a tangent
category~\cite{Rosicky84}) and the resource comonad \(\oc\_\) of the
LL structure of the category%
\footnote{Which by the way needs not be a fully-fledged LL model.}
\(\cL\).
Indeed, this distributive law allows one to extend \(S\) to $\kleisliExp$ the Kleisli
category of \(\oc\_\) and this extension \(\D:\Kl\cL\to\Kl\cL\) turns
out to be a monad which has all the required properties of
differentiation.

It is well known that $\kleisliExp$ is a cartesian closed category,
and it can be interesting to drift away
from the LL structure of $\categoryLL$
by only looking at the structure of its Kleisli 
category. 
This is what happened with differentiation.
It was first axiomatized in a typical LL setting with additive
categories and differential categories~\cite{Blute06}.
It was then carried to the setting of cartesian categories with
left-additive categories and cartesian differential categories
(CDC)~\cite{Blute09}.
Unsurprisingly, the Kleisli categories of the former provide instances
of the latter, but cartesian differential categories cover a wider
range of models.
As mentionned in~\cite{Ehrhard21}, differential categories can be seen as 
a special instance of summable categories equipped with differentiation
(we will call those \emph{coherent differential categories}) in which 
addition is unrestricted.
Naturally, we can wonder if there is a notion of \emph{cartesian
  coherent differential categories}, that arise as the Kleisli
categories of coherent differential categories, and that generalize
CDC to a partial setting.

We provide a positive answer to this question.
We define coherent differentiation in an \emph{arbitrary category}, whose
morphisms are intuitively considered as smooth.
So we start from a category \(\cC\) equipped with a map%
\footnote{It will become a functor and even a monad later.} %
\(\D:\Ob\cC\to\Ob\cC\) given together with morphisms %
\(\Dproj_{0,X},\Dproj_{1,X},\Dsum_X\in\cC(\D(X),X)\) (for each
\(X\in\Obj\cC\)).
The intuition is that \(\D(X)\) is the object of summable pairs of
elements of \(X\), that \(\Dproj_i\) are the obvious projections and
that \(\Dsum\) computes the sums. We assume \(\Dproj_0,\Dproj_1\) to
be jointly monic and this is sufficient to say when
\(f_0,f_1\in\cC(X,Y)\) are summable: this is when there is a
necessarily unique \(h\in\cC(X,\D(Y))\) such that
\(\Dproj_i\circ h=f_i\) and when this holds we set
\(f_0+f_1=\Dsum\Comp h\).
Under suitable assumptions this very light structure suffices to equip
hom-sets of \(\cC\) with a structure of partial commutative monoid
which is compatible with composition on the left%
\footnote{And not on the right in general since, intuitively, the
  morphisms of \(\cC\) are not assumed to be linear.}. %

This structure is a convenient setting for differentiation: it
suffices to furthermore equip \(\D\) with a functorial action on
morphisms wrt.~which some morphisms (definable in terms of
\(\Dproj_0,\Dproj_1,\Dsum\)) are natural.
This is the notion of \emph{coherent differential category} whose
axioms are in one-to-one correspondence with those of a CDC.
Just as in tangent categories~\cite{Rosicky84,Cockett14}, our functor
\(\D\) can be equipped with a monad structure.
Contrarily to the additive framework of CDC
our differentiation functor \(\D\) is not defined in terms of the
cartesian product so it is important to understand how it interacts
with the cartesian product when available: this is formalized by the
concept of \emph{cartesian coherent differential category} (CCDC).
This compatibility can be expressed in terms of a strength with which
\(\D\) can be equipped, turning it into a commutative monad.
This induces a satisfactory theory of partial derivatives.
We provide a concrete example of such a category based on
probabilistic coherence spaces and illustrate our formalism by
interpreting a simple term language equipped with a notion of
differentiation in a CCDC.


\section{Left summability structure}

We introduce in this section the notion of \emph{left summability
  structure} in order to generalize the notion of summability
structure introduced in \cite{Ehrhard21} to a setting where morphisms
are not necessarily additive.

\subsection{Left pre-summability structures}

Let $\category$ be a category with objects $\objects$ and hom-set 
$\category(X, Y)$ for any $X, Y \in \objects$. We assume that any
hom-set $\category(X, Y)$ contains a distinguished morphism $0^{X,Y}$
(usually $X$ and $Y$ are kept implicit) such that for any
$f \in \category(Z, X)$, $0^{X,Y} \comp f = 0^{Z, Y}$.

\begin{definition} \label{def:pre-presummability-structure} A
  \emph{summable pairing structure} on a category \(\cC\) is a tuple
  $(\D, \Dproj_0, \Dproj_1, \Dsum)$ where: \begin{itemize}
  \item $\D: \objects \arrow \objects$ is a map (a functional class) on objects;
  \item
    $(\Dproj_{0, X})_{X \in \objects}, (\Dproj_{1, X})_{X \in
      \objects}$ and $(\Dsum_X)_{X \in \objects}$ are families of
    morphisms in $\category(\D X, X)$. 
    The object $X$ will usually be kept implicit;
  \item $\Dproj_0$ and $\Dproj_1$ are jointly monic: for any
    $f, g \in \category(Y, \D X)$, if
    $\Dproj_0 \comp f = \Dproj_0 \comp g$ and
    $\Dproj_1 \comp f = \Dproj_1 \comp g$ then $f = g$.
\end{itemize}
\end{definition}

We assume in what follows that $\category$ is equipped with a
summable pairing structure $(\D, \Dproj_0, \Dproj_1, \Dsum)$.


\begin{definition}
  Two morphisms $f_0, f_1 \in \category(X,
  Y)$ are said to be \emph{summable} if there exists $h \in
  \category(X, \D Y)$ such that $\Dproj_i \comp h = f_i$.
  The joint monicity of the $\Dproj_i$'s ensures that when
  $h$ exists, it is unique.
  We set $\Dpair{f_0}{f_1} \defEq
  h$, and we call it the \emph{witness} of the sum.
  By definition, $\Dproj_i \comp \Dpair{f_0}{f_1} = f_i$.
  Then we set $f_0 + f_1 \defEq \Dsum \comp \Dpair{f_1}{f_2}$.
\end{definition}

\begin{remark}
  A more standard approach to notations would be to write $\Dproj_1$ and 
  $\Dproj_2$ instead of $\Dproj_0$ and $\Dproj_1$.
  The reason we proceed that way is that \Cref{eq:derivative-def} will be 
  formalized in our setting with the use of a pair 
  $\Dpair{f(x)}{f'(x) \cdot u}$.
  That is, the left element of this pair is of order $0$, and 
  the right element is of order $1$.
\end{remark}

\begin{notation}\label{notation:sum-defined}
  We write $f_0 \summable f_1$ for the property that $f_0$ and $f_1$
  are summable.
  We say that an algebraic expression containing binary sums is
  \emph{well defined} if each pair of morphisms involved in these sums
  are summable.
  For example, $(f_0 + f_1) + f_2$ is well defined if
  $f_0 \summable f_1$ and $(f_0 + f_1) \summable f_2$.
\end{notation}

\begin{proposition}
  The morphism \label{prop:proj-sum} $\Dproj_0$ and $\Dproj_1$ are summable
  with witness $\Dpair{\Dproj_0}{\Dproj_1} = \id$ and sum
  $\Dproj_0 + \Dproj_1 = \Dsum$.
\end{proposition}

\begin{proof}
  $\Dproj_i \comp \id = \Dproj_i$ so by definition,
  $\Dproj_0 \summable \Dproj_1$ with witness $\id$ and sum
  $\Dsum \comp \id = \Dsum$.
\end{proof}

\begin{proposition}[Left compatibility of sum] %
  \label{prop:sum-left-compatible} %
  For any
  $f_0, f_1 \in \category(Y, Z)$ and $g \in \category(X, Y)$, if
  $f_0 \summable f_1$, then $(f_0 \comp g) \summable (f_1 \comp g)$ with
  witness
  $\Dpair{f_0 \comp g}{f_1 \comp g} =\Dpair{f_0}{f_1} \comp
  g$.
  Moreover, $(f_0 \comp g)+(f_1 \comp g) = (f_0 + f_1) \comp g$.
\end{proposition}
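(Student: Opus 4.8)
The plan is to exhibit a witness for the summability of \(f_0 \comp g\) and \(f_1 \comp g\) directly, by reusing the witness for \(f_0 \summable f_1\) and precomposing it with \(g\). Concretely, I would set \(h \defEq \Dpair{f_0}{f_1} \in \category(Y, \D Z)\), which by definition of the witness satisfies \(\Dproj_i \comp h = f_i\) for \(i \in \{0, 1\}\). Then I would consider the morphism \(h \comp g \in \category(X, \D Z)\) and compute, using only associativity of composition, \(\Dproj_i \comp (h \comp g) = (\Dproj_i \comp h) \comp g = f_i \comp g\). This is precisely the defining property of a witness for the pair \((f_0 \comp g, f_1 \comp g)\), so these two morphisms are summable. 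Joint monicity of \(\Dproj_0\) and \(\Dproj_1\) — equivalently, the uniqueness of witnesses noted right after the definition of summability — then yields \(\Dpair{f_0 \comp g}{f_1 \comp g} = h \comp g = \Dpair{f_0}{f_1} \comp g\).

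For the second claim I would simply unfold the definition of the sum, substitute the witness just computed, and apply associativity once more: \((f_0 \comp g) + (f_1 \comp g) = \Dsum \comp \Dpair{f_0 \comp g}{f_1 \comp g} = \Dsum \comp (\Dpair{f_0}{f_1} \comp g) = (\Dsum \comp \Dpair{f_0}{f_1}) \comp g = (f_0 + f_1) \comp g\). In the writeup I would present exactly these two short chains of equalities and nothing more.

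There is no real obstacle here: the whole statement is a formal consequence of associativity of composition together with uniqueness of witnesses. The one point that deserves care in the exposition is the order of the argument — since \enquote{summable} is a property asserted via \emph{existence} of a witness, one must first establish that \(f_0 \comp g\) and \(f_1 \comp g\) are summable (which the computation \(\Dproj_i \comp (h\comp g) = f_i \comp g\) does) before one is entitled to write \(\Dpair{f_0 \comp g}{f_1 \comp g}\); the identification of this witness with \(\Dpair{f_0}{f_1} \comp g\) is then a consequence of joint monicity, not an assumption. This is also the natural place to remark that no analogous right-compatibility holds in general, since the morphisms of \(\category\) are not assumed linear.
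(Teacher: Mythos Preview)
Your proposal is correct and matches the paper's proof essentially verbatim: both set the candidate witness to \(\Dpair{f_0}{f_1}\comp g\), check \(\Dproj_i\) against it, and then compose with \(\Dsum\) to obtain the sum equation. Your additional remarks on the order of the argument and on right-compatibility are fine but not needed for the proof itself.
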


\begin{proof}
  Let $\witness = \Dpair{f_0}{f_1} \comp g$.
  Then $\Dproj_i \comp \witness = f_i \comp g$ so $\witness$ is a
  witness for the summability of $f_0 \comp g$ and $f_1 \comp
  g$.
  And
  $f_0 \comp g + f_1 \comp g \defEq \Dsum \comp \witness = (f_0 + f_1)
  \comp g$.
\end{proof}

An important class of morphisms is that of additive morphisms, for
which addition is compatible with composition on the right.

\begin{definition} \label{def:additive} %
  A morphism $h \in \category(Y, Z)$ is \emph{additive} if
  $h \comp 0 = 0$ and if for any $f_0, f_1 \in \category(X, Y)$, if
  $f_0 \summable f_1$ then $h \comp f_0 \summable h \comp f_1$ and
  $h \comp (f_0 + f_1) = h \comp f_0 + h \comp f_1$.
  Note that \(\id\) is additive and that the composition of two
  additive morphisms is an additive morphism.
\end{definition}

\begin{proposition} \label{prop:additive} %
  A morphism $h$ such that $h \comp 0 = 0$ is additive if and only if 
  $h \comp \Dproj_0 \summable h \comp \Dproj_1$ with sum
  $h \comp \Dsum$.
\end{proposition}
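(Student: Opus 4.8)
The plan is to prove the two implications separately; both are short consequences of the definitions, of \Cref{prop:proj-sum}, and of the left-compatibility of composition with witnesses.

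For the forward implication, I would assume $h$ is additive. By \Cref{prop:proj-sum} we have $\Dproj_0 \summable \Dproj_1$ with sum $\Dproj_0 + \Dproj_1 = \Dsum$. Applying the definition of additivity to this particular summable pair immediately gives $h \comp \Dproj_0 \summable h \comp \Dproj_1$ together with $h \comp \Dproj_0 + h \comp \Dproj_1 = h \comp (\Dproj_0 + \Dproj_1) = h \comp \Dsum$, which is exactly the asserted property.

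For the converse, I would assume $h \comp 0 = 0$ and that $h \comp \Dproj_0 \summable h \comp \Dproj_1$ with sum $h \comp \Dsum$. Write $\ell \defEq \Dpair{h \comp \Dproj_0}{h \comp \Dproj_1} \in \category(\D Y, \D Z)$ for the corresponding witness, so that $\Dproj_i \comp \ell = h \comp \Dproj_i$ and, unpacking the hypothesis on the sum, $\Dsum \comp \ell = h \comp \Dsum$. Now take any summable $f_0, f_1 \in \category(X, Y)$ with witness $w \defEq \Dpair{f_0}{f_1}$, and set $k \defEq \ell \comp w$. Then $\Dproj_i \comp k = h \comp \Dproj_i \comp w = h \comp f_i$, so $k$ witnesses the summability of $h \comp f_0$ and $h \comp f_1$; and $h \comp f_0 + h \comp f_1 = \Dsum \comp k = \Dsum \comp \ell \comp w = h \comp \Dsum \comp w = h \comp (f_0 + f_1)$, using $\Dsum \comp w = f_0 + f_1$ in the last step. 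Combined with the hypothesis $h \comp 0 = 0$, this shows $h$ is additive.

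There is no genuine obstacle here; the one point to keep in mind is that the witness $\ell$ provided by the hypothesis plays two roles at once — postcomposing it with the $\Dproj_i$ recovers $h \comp \Dproj_i$, which is what makes $\ell \comp w$ the right witness, while postcomposing it with $\Dsum$ is exactly controlled by the ``sum $h \comp \Dsum$'' clause of the hypothesis. Everything else is routine manipulation using joint monicity of the $\Dproj_i$ and the definition of $+$.
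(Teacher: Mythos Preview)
Your proof is correct and follows essentially the same approach as the paper: the forward direction is identical, and for the converse the paper invokes \Cref{prop:sum-left-compatible} to pass from $h\comp\Dproj_0\summable h\comp\Dproj_1$ to $h\comp f_0\summable h\comp f_1$ by precomposing with $\Dpair{f_0}{f_1}$, which is exactly what you do by hand with your morphism $\ell\comp w$.
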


\begin{proof}
  For the forward implication, recall that
  $\Dproj_0 \summable \Dproj_1$ with sum $\Dsum$. Thus by additivity of $h$,
  $h \comp \Dproj_0 \summable h \comp \Dproj_1$ with sum
  $h \comp \Dsum$.
  For the reverse implication, assume that $f_0 \summable f_1$.
  Since $h \comp \Dproj_0 \summable h \comp \Dproj_1$,
  \Cref{prop:sum-left-compatible} ensures that
  $h \comp f_0 = h \comp \Dproj_0 \comp \Dpair{f_0}{f_1}$ and
  $h \comp f_1 = h \comp \Dproj_1 \comp \Dpair{f_0}{f_1}$ are
  summable, with sum
  $(h \comp \Dproj_0 + h \comp \Dproj_1) \comp \Dpair{f_0}{f_1} = h
  \comp \Dsum \comp \Dpair{f_0}{f_1} = h \comp (f_0 + f_1)$.
\end{proof}

\begin{definition} \label{def:presummability-structure} %
  The summable pairing structure $(\D, \Dproj_0, \Dproj_1, \Dsum)$ is
  a \emph{left pre-summability structure} if $\Dproj_0, \Dproj_1$ and
  $\Dsum$ are additive.
\end{definition}

The additivity of the projections implies that the sum behaves
well with respect to the operation
$\Dpair{\_}{\_}$ itself.

\begin{proposition} \label{prop:pair-sum} %
  Assume that $\Dproj_0$ and $\Dproj_1$ are additive. Then 
  for any $f_0, f_1, g_0, g_1 \in \category(X, Y)$,
  if $f_0 \summable f_1$, $g_0 \summable g_1$ and
  $\Dpair{f_0}{f_1} \summable \Dpair{g_0}{g_1}$, then
  $f_0 \summable g_0$, $f_1 \summable g_1$,
  $(f_0 + g_0) \summable (f_1 + g_1)$ and
  $\Dpair{f_0}{f_1} + \Dpair{g_0}{g_1} = \Dpair{f_0 + g_0}{f_1 +
    g_1}$.
\end{proposition}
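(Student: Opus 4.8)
The plan is to derive all four conclusions from a single candidate witness, obtained by pushing the hypothesis $\Dpair{f_0}{f_1} \summable \Dpair{g_0}{g_1}$ through the additive projections. I would set $k \defEq \Dpair{f_0}{f_1} + \Dpair{g_0}{g_1}$, which is a well-defined morphism of $\category(X, \D Y)$ precisely by the third hypothesis, and then show $k = \Dpair{f_0 + g_0}{f_1 + g_1}$, which packs up everything to be proved.

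The concrete steps are as follows. First, apply additivity of $\Dproj_0$ (\Cref{def:additive}) to the summable pair of witnesses $\Dpair{f_0}{f_1}, \Dpair{g_0}{g_1}$. Since $\Dproj_0 \comp \Dpair{a_0}{a_1} = a_0$, this immediately gives $f_0 \summable g_0$, and the distributivity clause of additivity gives $\Dproj_0 \comp k = \Dproj_0 \comp \Dpair{f_0}{f_1} + \Dproj_0 \comp \Dpair{g_0}{g_1} = f_0 + g_0$. Running the same argument with $\Dproj_1$ yields $f_1 \summable g_1$ and $\Dproj_1 \comp k = f_1 + g_1$. Now the two identities $\Dproj_0 \comp k = f_0 + g_0$ and $\Dproj_1 \comp k = f_1 + g_1$ say exactly that $k$ is a witness for the summability of $f_0 + g_0$ and $f_1 + g_1$; hence $(f_0 + g_0) \summable (f_1 + g_1)$, and by joint monicity of the projections $k = \Dpair{f_0 + g_0}{f_1 + g_1}$, i.e.\ $\Dpair{f_0}{f_1} + \Dpair{g_0}{g_1} = \Dpair{f_0 + g_0}{f_1 + g_1}$.

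There is no genuine difficulty here; the only point requiring care is the index bookkeeping — one must keep straight that additivity is invoked for the pair of \emph{witnesses} $\Dpair{f_0}{f_1}, \Dpair{g_0}{g_1}$ rather than for $f_0, f_1$ — together with the observation that \Cref{def:additive} simultaneously supplies both ingredients we need: the summability of the composites $\Dproj_i \comp \Dpair{f_0}{f_1}$ and $\Dproj_i \comp \Dpair{g_0}{g_1}$, and the distribution of composition on the left by $\Dproj_i$ over the sum.
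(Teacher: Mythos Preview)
Your proof is correct and follows essentially the same approach as the paper: apply additivity of each $\Dproj_i$ to the summable pair of witnesses to obtain $f_i \summable g_i$ with $\Dproj_i \comp k = f_i + g_i$, then recognise $k = \Dpair{f_0}{f_1} + \Dpair{g_0}{g_1}$ as the required witness. The paper phrases the last step via $\Dproj_0 \summable \Dproj_1$ and \Cref{prop:sum-left-compatible}, whereas you invoke the definition of summability directly, but these are the same observation.
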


\begin{proof}
  By additivity of $\Dproj_i$, $\Dproj_i \comp \Dpair{f_0}{f_1} = f_i$
  and $\Dproj_i \comp \Dpair{g_0}{g_1} = g_i$ are summable with sum
  $f_i + g_i = \Dproj_i \comp (\Dpair{f_0}{f_1} + \Dpair{g_0}{g_1})$.
  Since \(\Dproj_0\summable\Dproj_1\) this entails by \Cref{prop:sum-left-compatible}
  that $f_0 + g_0$,
  $f_1 + g_1$ are summable with witness
  $\Dpair{f_0}{f_1} + \Dpair{g_0}{g_1}$.
\end{proof}

The additivity of $\Dsum$ implies that whenever
$\Dpair{f_0}{f_1} \summable \Dpair{g_0}{f_1}$, one has
$\Dsum \comp \Dpair{f_0}{f_1} \summable \Dsum \comp \Dpair{g_0}{g_1}$
and
\[\Dsum \comp (\Dpair{f_0}{f_1} + \Dpair{g_0}{g_1}) = (\Dsum \comp
\Dpair{f_0}{f_1}) + (\Dsum \comp \Dpair{g_0}{g_1})\]
Assuming the additivity of the projections, the additivity of
\(\Dsum\) implies that whenever
$\Dpair{\Dpair{f_0}{f_1}}{\Dpair{g_0}{g_1}}$ exists, the two sums
below are well defined (see \Cref{notation:sum-defined}) and
\begin{equation} \label{eq:medial}
(f_0 + g_0) + (f_1 + g_1) = (f_0 + f_1) + (g_0 + g_1)\,.
\end{equation}


\begin{proposition} \label{prop:zero-additive} %
  The morphisms $0$ and $0$ are summable with witness $0$ and sum $0$.
  In particular, $0$ is additive.
\end{proposition}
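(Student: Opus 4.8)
The plan is to show directly that $0^{X,\D Y}$ serves as the witness $\Dpair{0}{0}$. Since $\Dproj_0$, $\Dproj_1$ and $\Dsum$ are additive (this is the defining requirement of a left pre-summability structure, \Cref{def:presummability-structure}), each of them in particular satisfies the clause $h \comp 0 = 0$ of \Cref{def:additive}; hence $\Dproj_i \comp 0^{X,\D Y} = 0^{X,Y}$ for $i = 0,1$. This step does genuinely use additivity: the standing hypothesis on zero morphisms from the beginning of the section only asserts $0 \comp f = 0$, that is, absorption under post-composition, not the $g \comp 0 = 0$ that we need here.

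From $\Dproj_i \comp 0^{X,\D Y} = 0^{X,Y}$ we conclude that $0^{X,\D Y}$ is a witness for the summability of $0^{X,Y}$ with itself, so $0 \summable 0$; by joint monicity of the $\Dproj_i$ the witness is unique, whence $\Dpair{0}{0} = 0^{X,\D Y}$. Then, using once more that $\Dsum$ is additive, so $\Dsum \comp 0 = 0$, the definition of the sum gives $0 + 0 = \Dsum \comp \Dpair{0}{0} = \Dsum \comp 0^{X,\D Y} = 0^{X,Y}$.

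Finally, for the claim that $0$ is additive I would check the three requirements of \Cref{def:additive} for $h = 0$. First, $0 \comp 0 = 0$ is an instance of the standing property of zero morphisms. Next, if $f_0 \summable f_1$ then $0 \comp f_0 = 0 = 0 \comp f_1$ by post-composition absorption, and these are summable by the first part of the proposition; and $0 \comp (f_0 + f_1) = 0 = 0 + 0 = 0 \comp f_0 + 0 \comp f_1$, again by the first part. I do not expect any real obstacle here: the only subtlety is recognizing that it is precisely the additivity of $\Dproj_0$, $\Dproj_1$ and $\Dsum$ --- and not the weaker standing assumption on $0$ --- that makes $\Dproj_i \comp 0 = 0$ and $\Dsum \comp 0 = 0$ available, after which everything is a direct unfolding of the definitions of witness and of sum.
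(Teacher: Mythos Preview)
Your proof is correct and follows essentially the same approach as the paper: both use the additivity of $\Dproj_0,\Dproj_1,\Dsum$ to get $\Dproj_i\comp 0=0$ and $\Dsum\comp 0=0$, yielding witness $0$ and sum $0$. The only cosmetic difference is that for the additivity of $0$ the paper invokes the characterization of \Cref{prop:additive} (checking $0\comp\Dproj_0\summable 0\comp\Dproj_1$ with sum $0\comp\Dsum$), whereas you unfold \Cref{def:additive} directly; both amount to the same verification.
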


\begin{proof}
  On the one hand, $\Dproj_i \comp 0 = 0$ by additivity of $\Dproj_i$,
  so $0 \summable 0$ with witness $0$.
  On the other hand, $\Dsum \comp 0 = 0$ by additivity of $\Dsum$ so
  $0+0 = 0$.
  In particular, $0$ is additive thanks to \Cref{prop:additive}
  because $0 \comp \Dproj_0 = 0$ and $0 \comp \Dproj_1 = 0$ are
  summable with witness $0$ and sum $0 = 0 \comp \Dsum$.
\end{proof}

\subsection{Left summability structures}

We consider a category $\category$ equipped with a left
pre-summability structure $(\D, \Dproj_0, \Dproj_1, \Dsum)$.
The goal of this section is to make $(\category(X, Y), +, 0)$
a \emph{partial} commutative monoid.
Similar structures appear in~\cite{Arbib80} or more recently
in~\cite{Hines13}, in a setting where sums can be infinitary.
Our partial monoids have only finite sums%
\footnote{Although the extension of the finite sum to an infinitary
  operations will have to be considered when dealing with
  fixpoints.}.
More crucially, the categorical notion of summability defined above is
essential for us whereas it is not categorically formalized in these
works.

\begin{definition} \labeltext{($\D$-com)}{ax:D-com} %
  The left pre-summability structure is \emph{commutative} if for any object
  $X$, $\Dproj_1, \Dproj_0 \in \category(\D X, X)$ are summable with sum
  $\Dsum$.
  Then we set
  $\Dsym=\Dpair{\Dproj_1}{\Dproj_0} \in \category(\D X, \D X)$ so that
  \(\Dproj_i\comp\Dsym=\Dproj_{1-i}\).
  This property is called \ref{ax:D-com}.
\end{definition}

\begin{proposition}[Commutativity] %
  The left pre-summability structure is commutative if and only if for
  any $f_0, f_1 \in \category(X, Y)$, if $f_0 \summable f_1$ then
  $f_1 \summable f_0$ and $f_0 + f_1 = f_1 + f_0$.
\end{proposition}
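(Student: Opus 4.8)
The plan is to prove the two implications separately; both follow quickly from the structure already established. The nontrivial ingredient in the forward direction is the symmetry morphism $\Dsym$ provided by \ref{ax:D-com}, together with the left compatibility of the sum (\Cref{prop:sum-left-compatible}); the backward direction is essentially just an instantiation of the hypothesis at the canonical summable pair $(\Dproj_0,\Dproj_1)$.

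For the forward direction I would assume \ref{ax:D-com}, so that $\Dsym = \Dpair{\Dproj_1}{\Dproj_0} \in \category(\D X, \D X)$ is available with $\Dproj_i \comp \Dsym = \Dproj_{1-i}$, and moreover $\Dsum \comp \Dsym = \Dproj_1 + \Dproj_0 = \Dsum$, the last equality being exactly the content of \ref{ax:D-com} once one unfolds that $\Dpair{\Dproj_1}{\Dproj_0} = \Dsym$. Now given $f_0 \summable f_1$, set $h = \Dsym \comp \Dpair{f_0}{f_1}$ and compute $\Dproj_i \comp h = \Dproj_{1-i} \comp \Dpair{f_0}{f_1} = f_{1-i}$; hence $h$ witnesses $f_1 \summable f_0$, i.e.\ $\Dpair{f_1}{f_0} = \Dsym \comp \Dpair{f_0}{f_1}$. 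Then $f_1 + f_0 = \Dsum \comp \Dpair{f_1}{f_0} = \Dsum \comp \Dsym \comp \Dpair{f_0}{f_1} = \Dsum \comp \Dpair{f_0}{f_1} = f_0 + f_1$. (One can equivalently obtain the equality of sums by applying \Cref{prop:sum-left-compatible} to $\Dproj_1 \summable \Dproj_0$ with the morphism $g = \Dpair{f_0}{f_1}$, after noting $\Dproj_1 + \Dproj_0 = \Dsum$.)

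For the backward direction I would instantiate the hypothesis at the pair $(\Dproj_0, \Dproj_1)$. By \Cref{prop:proj-sum}, $\Dproj_0 \summable \Dproj_1$ with sum $\Dproj_0 + \Dproj_1 = \Dsum$, so the hypothesis gives $\Dproj_1 \summable \Dproj_0$ and $\Dproj_1 + \Dproj_0 = \Dproj_0 + \Dproj_1 = \Dsum$, which is precisely \ref{ax:D-com}. Here it matters only that the hypothesis is quantified over all hom-sets, so that we may take $X = \D X'$ and $Y = X'$.

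I do not anticipate a genuine obstacle: the single point worth writing out carefully is the identity $\Dsum \comp \Dsym = \Dsum$ used in the forward direction, which is just the definitional unfolding of ``$\Dproj_1$ and $\Dproj_0$ are summable with sum $\Dsum$'' composed with $\Dpair{\Dproj_1}{\Dproj_0} = \Dsym$; everything else is routine diagram chasing with jointly monic projections.
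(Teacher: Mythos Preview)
Your proof is correct and follows essentially the same approach as the paper: in the forward direction you use $\Dsym$ to transport the witness and the identity $\Dsum\comp\Dsym=\Dsum$ to get equality of sums, and in the backward direction you instantiate at $(\Dproj_0,\Dproj_1)$ via \Cref{prop:proj-sum}. The only difference is that you spell out $\Dsum\comp\Dsym=\Dsum$ explicitly, whereas the paper uses it tacitly.
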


\begin{proof}
  For the direct implication, assume that $f_0 \summable f_1$.
  Then
  $\Dproj_i \comp \Dsym \comp \Dpair{f_0}{f_1} = \Dproj_{1-i} \comp
  \Dpair{f_0}{f_1} = f_{1-i}$ so $f_1 \summable f_0$ with witness
  $\Dsym \comp \Dpair{f_0}{f_1}$.
  Furthermore,
  $f_1 + f_0 = \Dsum \comp \Dsym \comp \Dpair{f_0}{f_1} = \Dsum \comp
  \Dpair{f_0}{f_1} = f_0 + f_1$.
  Conversely, $\Dproj_0 \summable \Dproj_1$ so by commutativity
  $\Dproj_1 \summable \Dproj_0$ and
  $\Dproj_1 + \Dproj_0 = \Dproj_0 + \Dproj_1 = \Dsum$.
\end{proof}

\begin{definition} \label{def:injections}
  \labeltext{($\D$-zero)}{ax:D-zero} %
  The left pre-summability structure \emph{has $0$ as a neutral
  element} if for any object $X$, $\id_X \summable 0$ and
  $0 \summable \id_X$ with sums equal to $\id_X$.
  We call this property \ref{ax:D-zero}.
  We define $\Dinj_0, \Dinj_1 \in \category(X, \D X)$ as
  $\Dinj_0 \defEq \Dpair{\id_X}{0}$ and
  $\Dinj_1 \defEq \Dpair{0}{\id_X}$.
\end{definition}

\begin{proposition}[Neutrality of \(0\)] %
  The left pre-summability structure has $0$ as a neutral element if and
  only if for any morphism $f \in \category(X, Y)$, $0 \summable f$,
  $f \summable 0$ and $f + 0 = 0 + f = f$.
\end{proposition}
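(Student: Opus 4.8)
The plan is to follow the template of the commutativity proposition proved just above: the reverse implication is an immediate instantiation of the hypothesis at the identity, and the forward implication transports the summability of a canonical pair along precomposition, using \Cref{prop:sum-left-compatible}.

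For the reverse implication I would simply assume that $0 \summable f$, $f \summable 0$ and $f + 0 = 0 + f = f$ hold for every morphism $f$, and instantiate at $f = \id_X \in \category(X,X)$. This yields at once $\id_X \summable 0$, $0 \summable \id_X$ and $\id_X + 0 = 0 + \id_X = \id_X$, which is exactly \ref{ax:D-zero}; this direction costs nothing.

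For the forward implication I would assume \ref{ax:D-zero} and fix $f \in \category(X, Y)$. Applying \ref{ax:D-zero} at the object $Y$ gives $0 \summable \id_Y$ and $\id_Y \summable 0$ in $\category(Y,Y)$, with witnesses $\Dinj_1$ and $\Dinj_0$ and both sums equal to $\id_Y$. Precomposing with $f$ and invoking \Cref{prop:sum-left-compatible}, one gets $0 \comp f \summable \id_Y \comp f$ with sum $(0 + \id_Y) \comp f$, and symmetrically $\id_Y \comp f \summable 0 \comp f$ with sum $(\id_Y + 0) \comp f$. It then remains to rewrite these using $\id_Y \comp f = f$, the absorption identity $0 \comp f = 0$ built into the ambient category, and $0 + \id_Y = \id_Y + 0 = \id_Y$ coming from \ref{ax:D-zero}, which delivers precisely $0 \summable f$, $f \summable 0$ and $0 + f = f + 0 = f$.

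The only point that demands a little care — and the one I would double-check — is the bookkeeping of the object indices on the distinguished zero morphisms: the ``$0$'' supplied by \ref{ax:D-zero} is the one in $\category(Y,Y)$, whereas the ``$0$'' occurring in the statement is the one in $\category(X,Y)$, and the passage between them is exactly where the standing hypothesis $0^{X,Y} \comp f = 0^{Z,Y}$ on zero morphisms is used. Everything else is the routine left-compatibility computation already exploited in \Cref{prop:proj-sum} and \Cref{prop:sum-left-compatible}.
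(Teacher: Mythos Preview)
Your proof is correct and follows essentially the same approach as the paper. The only cosmetic difference is that the paper, instead of invoking \Cref{prop:sum-left-compatible}, unfolds that computation inline: it writes down the witness $\Dinj_0 \comp f$ explicitly and checks $\Dproj_i \comp \Dinj_0 \comp f$ by hand, which amounts to the same thing.
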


\begin{proof}
  By definition of $\Dinj_0$,
  $\Dproj_0 \comp \Dinj_0 \comp f = \id \comp f = f$ and
  $\Dproj_1 \comp \Dinj_0 \comp f = 0 \comp f = 0$.
  So $f \summable 0$ with witness $\Dinj_0 \comp f$ and
  $f + 0 = \Dsum \comp \Dinj_0 \comp f = \id \comp f = f$.
  We do the same for $0 + f$ with $\Dinj_1$.
  Conversely, we apply the neutrality of 0 on $\id$ to get that
  $\id \summable 0$ and $0 \summable \id$, with sum $\id$.
\end{proof}

Associativity is not that straightforward, as there are two possible
notions.
The situation is similar in the infinitary setting of~\cite{Hines13}
with the distinction between Weak Partition Associativity and
Partition Associativity.

\begin{definition}[Weak Associativity] %
  The operation $+$ is called \emph{weakly associative} if whenever
  $(f_0 + f_1) + f_2$ and $f_0 + (f_1 + f_2)$ are well defined (recall
  \Cref{notation:sum-defined}), we have
  $(f_0 + f_1) + f_2 = f_0 + (f_1 + f_2)$.
\end{definition}

\begin{definition}[Associativity] \label{def:associative} %
  The operation $+$ is called \emph{associative} if whenever
  $(f_0 + f_1) + f_2$ or $f_0 + (f_1 + f_2)$ is well defined, the
  other expression is also well defined and
  $(f_0 + f_1) + f_2 = f_0 + (f_1 + f_2)$.
\end{definition}

We need to work in a partial setting in which addition is associative:
this is required for instance in \Cref{sec:differential} to define
$\DmonadSum = \Dpair{\Dproj_0 \comp \Dproj_0}{\Dproj_0 \comp \Dproj_1
  + \Dproj_1 \comp \Dproj_0}$.
This associativity seems related to a kind of positivity of morphisms.

\begin{example} %
  Let $x, y \in\Intcc{-1}{1}$ be summable when $|x|+|y| \leq 1$, with
  \(x+y\) as sum. %
  Then $+$ is weakly associative, but is not associative.
  Indeed, take $x_0 = -\frac{1}{2}, x_1 = \frac{1}{2}, y_1 = 1$.  Then
  $(x_0 + x_1) + y_1$ is defined, but $x_0 + (x_1 + y_1)$ is not since
  $|x_1| + |y_1| = \frac{3}{2} > 1$.
  However, the same definition on $[0, 1]$ yields an associative
  operation.
\end{example}

Recall from~\Cref{eq:medial} that whenever
$\Dpair{\Dpair{f_0}{f_1}}{\Dpair{g_0}{g_1}}$ exists, the expressions
\((f_0 + g_0) + (f_1 + g_1)\) and \((f_0 + f_1) + (g_0 + g_1)\) are
well defined and equal.  Taking $g_0 = 0$ and assuming
\ref{ax:D-zero}, this means that
whenever $\Dpair{\Dpair{f_0}{f_1}}{\Dpair{0}{g_1}}$ exists,
$(f_0 + f_1) + g_1$ and $f_0 + (f_1 + g_1)$ are well defined and
equal.
Taking $f_1 = 0$ and assuming \ref{ax:D-zero}, this means that 
whenever $\Dpair{\Dpair{f_0}{0}}{\Dpair{g_0}{g_1}}$ exist,
$ f_0 + (g_0 + g_1)$ and $(f_0 + g_0) + g_1$ are well defined and
equal.
Thus associativity holds if \ref{ax:D-zero} holds and if whenever
$(f_0 + f_1) + g_1$ is defined (respectively $f_0 + (g_0 + g_1)$ is
defined), then $\Dpair{\Dpair{f_0}{f_1}}{\Dpair{0}{g_1}}$ exists
(respectively $\Dpair{\Dpair{f_0}{0}}{\Dpair{g_0}{g_1}}$ exists).
This shows that associativity
follows from the following axiom.

\begin{definition} \labeltext{($\D$-witness)}{ax:D-witness} %
  The left pre-summability structure \emph{admits witnesses} if for
  any $f, g \in \category(Y, \D X)$, if
  $\Dsum \comp f \summable \Dsum \comp g$ then $f \summable g$.
  We call this property \ref{ax:D-witness}.
\end{definition}

\begin{theorem}
  The properties \ref{ax:D-zero}, \ref{ax:D-com} and
  \ref{ax:D-witness} give to $\category(X, Y)$ the structure of a
  \emph{partial commutative monoid} for any objects $X,Y$.
  That is, for any $f, f_0, f_1, f_2 \in \category(X, Y)$:
  \begin{itemize}
  \item $f \summable 0$, $0 \summable f$ and $0 + f = f + 0 = f$;
  \item If $f_0 \summable f_1$ then $f_1 \summable f_0$ and
    $f_0 + f_1 = f_1 + f_0$;
  \item If $(f_0 + f_1) + f_2$ or $f_0 + (f_1 + f_2)$ is defined, then
    both are defined and $(f_0 + f_1) + f_2 = f_0 + (f_1 + f_2)$.
  \end{itemize}
\end{theorem}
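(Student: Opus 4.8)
The plan is to assemble the theorem from the characterisations already on the table, handling the three bullets in turn. The first bullet is verbatim the proposition on neutrality of $0$, which we showed to be equivalent to \ref{ax:D-zero}, so it is immediate; likewise the second bullet is the Commutativity proposition, equivalent to \ref{ax:D-com}. Hence all the real content of the theorem sits in the third bullet, associativity, and this is exactly where \ref{ax:D-witness} (together with \ref{ax:D-zero}) comes into play.

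For associativity I would follow the discussion preceding the definition of \ref{ax:D-witness}. Start from \Cref{eq:medial}: whenever $\Dpair{\Dpair{f_0}{f_1}}{\Dpair{g_0}{g_1}}$ exists, both $(f_0+g_0)+(f_1+g_1)$ and $(f_0+f_1)+(g_0+g_1)$ are well defined and equal. Instantiating $g_0=0$ and using \ref{ax:D-zero} to rewrite $f_0+0=f_0$ gives: if $\Dpair{\Dpair{f_0}{f_1}}{\Dpair{0}{g_1}}$ exists then $(f_0+f_1)+g_1$ and $f_0+(f_1+g_1)$ are well defined and equal; instantiating instead $f_1=0$ gives: if $\Dpair{\Dpair{f_0}{0}}{\Dpair{g_0}{g_1}}$ exists then $(f_0+g_0)+g_1$ and $f_0+(g_0+g_1)$ are well defined and equal. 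So it suffices to prove two existence facts: (i) if $(f_0+f_1)+f_2$ is defined then $\Dpair{\Dpair{f_0}{f_1}}{\Dpair{0}{f_2}}$ exists; (ii) if $f_0+(f_1+f_2)$ is defined then $\Dpair{\Dpair{f_0}{0}}{\Dpair{f_1}{f_2}}$ exists. Fact (i) applied to a defined $(f_0+f_1)+f_2$ then yields, via the first instantiation, that $f_0+(f_1+f_2)$ is defined and equal to it; fact (ii) gives the converse direction.

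To prove (i): if $(f_0+f_1)+f_2$ is defined then $f_0\summable f_1$, so $\Dpair{f_0}{f_1}$ exists with $\Dsum\comp\Dpair{f_0}{f_1}=f_0+f_1$; moreover $\Dpair{0}{f_2}=\Dinj_1\comp f_2$ always exists and $\Dsum\comp\Dpair{0}{f_2}=0+f_2=f_2$ by \ref{ax:D-zero}. Summability of $(f_0+f_1)$ and $f_2$ thus says $\Dsum\comp\Dpair{f_0}{f_1}\summable\Dsum\comp\Dpair{0}{f_2}$, and \ref{ax:D-witness} immediately yields $\Dpair{f_0}{f_1}\summable\Dpair{0}{f_2}$, i.e.\ $\Dpair{\Dpair{f_0}{f_1}}{\Dpair{0}{f_2}}$ exists. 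Fact (ii) is handled symmetrically, writing $\Dpair{f_0}{0}=\Dinj_0\comp f_0$ with $\Dsum\comp\Dinj_0\comp f_0=f_0$. The main obstacle is purely the bookkeeping of associativity — keeping track of which pairings must exist and checking that the instantiations $g_0=0$ and $f_1=0$ in \Cref{eq:medial} produce precisely the shapes that \ref{ax:D-witness} can supply — since once (i) and (ii) are in hand the conclusion is a direct rewriting using \ref{ax:D-zero}.
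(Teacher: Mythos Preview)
Your proposal is correct and follows essentially the same route as the paper: the first two bullets are reduced to the characterisation propositions for \ref{ax:D-zero} and \ref{ax:D-com}, and associativity is obtained from \Cref{eq:medial} via the instantiations $g_0=0$ and $f_1=0$, with \ref{ax:D-witness} supplying the existence of the required nested witnesses. You have in fact spelled out the application of \ref{ax:D-witness} more explicitly than the paper does, which leaves that final step to the reader.
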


One can define inductively from this binary sum a notion of arbitrary
finite sum.
The empty family is always summable with sum $0$. The
family $(f_i)_{i \in I}$ for $I \neq \emptyset$ is summable if
$\exists i_0 \in I$ such that $(f_i)_{i \in I/\{i_0\}}$ is summable
and if $(\sum_{i \in I/\{i_0\}} f_i) \summable f_{i_0}$.
Then we set
$\sum_{i \in I} f_i \defEq \sum_{i \in I/\{i_0\}} f_i + f_{i_0}$.
\Cref{prop:arbitrary-sum} shown in \cite{Ehrhard21} ensures that 
the choice of order for the sum is
irrelevant.

\begin{theorem} \label{prop:arbitrary-sum}
  A family $(f_i)_{i \in I}$ is summable if and only if for all
  partition%
  \footnote{Where we admit that some \(I_j\)s can be empty.} %
  $I_1, \ldots, I_n$ of $I$, we have that for all
  $j \in \interval{1}{n} := \{1, \ldots, n \}$, $(f_i)_{i \in I_j}$ is
  summable and $(\sum_{i \in I_j} f_i)_{j \in \interval{1}{n}}$ is
  summable.
  Moreover,
  $\sum_{i \in I} f_i = \sum_{j \in \interval{1}{n}} \sum_{i \in I_j}
  f_i$.
\end{theorem}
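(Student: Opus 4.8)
The plan is to reduce the statement about arbitrary finite partitions to repeated use of the already-established binary structure — commutativity, associativity, neutrality of $0$ — together with a careful induction. The key observation is that \Cref{prop:arbitrary-sum} is really two claims bundled together: (i) a \emph{definedness} equivalence (the family $(f_i)_{i\in I}$ is summable iff, for every partition, each block is summable and the family of block-sums is summable), and (ii) an \emph{equational} claim (when defined, the iterated sum equals the value computed blockwise). I would prove these by induction on $|I|$, with the binary partial commutative monoid theorem of the previous display as the base machinery. Since this result is attributed to \cite{Ehrhard21}, I would in fact cite it directly; but if a proof is wanted, here is how I would organize it.

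First I would record the two ``medial''-type consequences already spelled out in the text just before the statement: from \cref{eq:medial} and \ref{ax:D-zero} one gets the two \emph{shuffle lemmas} — whenever the relevant nested witness exists, $(f_0+f_1)+g_1$ rearranges to $f_0+(f_1+g_1)$, and symmetrically. Combined with commutativity and \ref{ax:D-witness} (which is precisely what makes associativity hold in the strong sense of \Cref{def:associative}), these give the general principle: \emph{any reordering or reassociation of a well-defined finite sum is well-defined and yields the same value}. I would isolate this as a lemma by induction on the number of summands: a sum $f_1 + (f_2 + (\cdots + f_n))$ is insensitive to the bracketing and to permutations, because each transposition of adjacent elements and each reassociation step is covered by commutativity and associativity, and associativity in the strong sense propagates definedness in both directions.

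With that lemma in hand, the forward direction of \Cref{prop:arbitrary-sum} — if $(f_i)_{i\in I}$ is summable then every block $(f_i)_{i\in I_j}$ is summable, the block-sums are summable, and the total agrees — follows by induction on $|I|$: pick the index $i_0$ witnessing summability of $(f_i)_{i\in I}$, locate which block $I_{j_0}$ it lies in, apply the induction hypothesis to $I\setminus\{i_0\}$ with the induced partition, and then peel $f_{i_0}$ back in using the shuffle lemmas to move it next to $\sum_{i\in I_{j_0}\setminus\{i_0\}} f_i$. For the converse, given that each block is summable and the family of block-sums is summable, one reassembles $\sum_{i\in I} f_i$ by concatenating the blocks in some order; strong associativity (from \ref{ax:D-witness}) guarantees that the resulting iterated binary sum is well-defined, and the reordering lemma guarantees independence of the order of blocks and of the elements within each block. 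The equational part falls out of the same bookkeeping.

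The main obstacle is the definedness propagation in the converse direction: a priori, knowing that each $\sum_{i\in I_j} f_i$ exists and that these partial sums are jointly summable does \emph{not} obviously give a single witness $\Dpair{\cdots}{\cdots}$ for the fully expanded sum $\sum_{i\in I} f_i$ — one must build it by iterated nesting of the $\Dpair{\_}{\_}$ operation and check at each stage that the required pair is summable. This is exactly where \ref{ax:D-witness} is indispensable: it is the axiom that lets one pass from summability of $\Dsum\comp f \summable \Dsum\comp g$ back to $f\summable g$, i.e.\ from a summability fact about values to one about witnesses, which is what each un-bracketing step needs. So the proof is a careful but routine induction whose only subtlety is threading \ref{ax:D-witness} through every reassociation; the bulk of the argument is the reordering/reassociation lemma, which in turn is just \Cref{def:associative} and commutativity applied inductively.
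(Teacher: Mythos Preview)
The paper does not give its own proof of this theorem; it explicitly attributes it to \cite{Ehrhard21}. Your sketch is the standard argument and is sound in its essentials: the reordering/reassociation lemma built from commutativity plus strong associativity (itself powered by \ref{ax:D-witness}) is exactly the right tool, and the induction on $|I|$ is the natural way to organize both directions.

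One framing point worth correcting. As literally stated, the backward implication of the biconditional has ``for all partitions'' as its hypothesis, which makes it trivial: apply the hypothesis to the one-block partition $I_1=I$ and you are done. What you describe as ``the converse'' and identify as the main obstacle is actually the stronger statement that from the data of a \emph{single} partition (each block summable, the family of block-sums summable) one can recover summability of the whole family. This stronger form is not the literal converse of the stated theorem, but it is precisely what the paper invokes later in the proof of \Cref{prop:semantic-invariance-lift}, and it is indeed where \ref{ax:D-witness} does real work via strong associativity. So your instinct about where the difficulty lies is correct; it is just mislabelled. The genuinely laborious direction of the theorem as stated is the forward one (summable $\Rightarrow$ every partition works, with agreement of values), and your inductive sketch for that is fine.
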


\begin{definition} 
  \label{def:left-summability-struct}
  A \emph{left summability structure} is a left pre-summability
  structure $(\D, \Dproj_0, \Dproj_1, \Dsum)$ such that 
  \ref{ax:D-zero}, \ref{ax:D-com} and \ref{ax:D-witness} hold.
\end{definition}

\subsection{Comparison with summability structures}
\label{sec:comparison-ss}
In the $\LL{}$ setting of~\cite{Ehrhard21}, \Ehrhard{} introduced a
notion of pre-summability structure $(\S, \Sproj_0, \Sproj_1, \Ssum)$
as a summable pairing structure (recall
\Cref{def:pre-presummability-structure}) where $\S$ is a functor for
which $\Sproj_0, \Sproj_1, \Ssum$ are natural transformations.

\begin{theorem} \label{prop:summability-struct-equivalence} %
  The following are equivalent
  \begin{itemize}
  \item $(\S, \Sproj_0, \Sproj_1, \Ssum)$ is a left pre-summability
    structure and every morphism is additive;
  \item $(\S, \Sproj_0, \Sproj_1, \Ssum)$ is a pre-summability
    structure \cite{Ehrhard21}.
  \end{itemize}
\end{theorem}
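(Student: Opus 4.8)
The starting data for the two items is literally the same summable pairing structure $(\S,\Sproj_0,\Sproj_1,\Ssum)$ (\Cref{def:pre-presummability-structure}); what has to be matched up is, on one side, additivity of every morphism (which in particular makes $\Sproj_0,\Sproj_1,\Ssum$ additive, so the structure is already a left pre-summability structure), and on the other side, a functorial action of $\S$ for which $\Sproj_0,\Sproj_1,\Ssum$ are natural transformations. The plan is to show that each of these two conditions forces the other, with joint monicity of $\Sproj_0,\Sproj_1$ doing essentially all the work.

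\textbf{Left pre-summability with all morphisms additive $\Rightarrow$ pre-summability structure.} First I would define $\S$ on a morphism $h\in\category(X,Y)$. Since $\Sproj_0\summable\Sproj_1$ with sum $\Ssum$ (\Cref{prop:proj-sum}) and $h$ is additive, $h\comp\Sproj_0\summable h\comp\Sproj_1$, so I set $\S h\defEq\Dpair{h\comp\Sproj_0}{h\comp\Sproj_1}\in\category(\S X,\S Y)$ (the only possible choice if $\Sproj_0,\Sproj_1$ are to become natural). By construction $\Sproj_i\comp\S h=h\comp\Sproj_i$, i.e.\ $\Sproj_0,\Sproj_1$ are natural; functoriality then comes for free from joint monicity, since $\S\id=\Dpair{\Sproj_0}{\Sproj_1}=\id$ (\Cref{prop:proj-sum}) and $\Sproj_i\comp(\S g\comp\S f)=g\comp\Sproj_i\comp\S f=g\comp f\comp\Sproj_i=\Sproj_i\comp\S(g\comp f)$ for composable $f,g$. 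Finally, naturality of $\Ssum$ is the computation $\Ssum\comp\S h=(h\comp\Sproj_0)+(h\comp\Sproj_1)=h\comp(\Sproj_0+\Sproj_1)=h\comp\Ssum$, using additivity of $h$ together with $\Sproj_0+\Sproj_1=\Ssum$ (\Cref{prop:proj-sum}).

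\textbf{Pre-summability structure $\Rightarrow$ left pre-summability with all morphisms additive.} For this direction I would lean on \Cref{prop:additive}, applied to the summable pairing structure $(\S,\Sproj_0,\Sproj_1,\Ssum)$: a morphism $h$ with $h\comp 0=0$ is additive iff $h\comp\Sproj_0\summable h\comp\Sproj_1$ with sum $h\comp\Ssum$. Given $h\in\category(X,Y)$, naturality of $\Sproj_0,\Sproj_1$ gives $\Sproj_i\comp\S h=h\comp\Sproj_i$, so $\S h$ witnesses $h\comp\Sproj_0\summable h\comp\Sproj_1$, and naturality of $\Ssum$ gives $(h\comp\Sproj_0)+(h\comp\Sproj_1)=\Ssum\comp\S h=h\comp\Ssum$. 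Hence $h$ is additive; taking $h$ to be $\Sproj_0,\Sproj_1,\Ssum$ shows in particular that the structure is a left pre-summability structure (\Cref{def:presummability-structure}), and taking $h$ arbitrary shows that every morphism is additive.

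Essentially everything here is routine diagram-chasing once the formula $\S h=\Dpair{h\comp\Sproj_0}{h\comp\Sproj_1}$ is on the table. The one point I expect to need an external ingredient rather than a diagram is the clause $h\comp 0=0$ of additivity in the second direction: it does not follow from functoriality and naturality alone, but it holds in the $\LL$ setting of~\cite{Ehrhard21} because the ambient category there has genuine two-sided zero morphisms, and I would simply invoke that.
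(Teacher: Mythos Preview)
Your proposal is correct and follows essentially the same route as the paper's own proof: both directions define $\S h=\Dpair{h\comp\Sproj_0}{h\comp\Sproj_1}$, obtain functoriality from joint monicity, read naturality of $\Sproj_0,\Sproj_1,\Ssum$ as the equations $\Sproj_i\comp\S h=h\comp\Sproj_i$ and $\Ssum\comp\S h=h\comp\Ssum$, and invoke \Cref{prop:additive} in the converse direction. Your observation that $h\comp 0=0$ must be imported from the pointed-set enrichment assumed in~\cite{Ehrhard21} matches exactly what the paper does (it records this assumption just before the proof).
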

Remember that in~\cite{Ehrhard21}, the underlying category
\(\category\) is assumed to be enriched over the monoidal category of
pointed sets, the distinguished element of \(\category (X,Y)\) being
denoted \(0\).
In particular \(f\Comp 0=0\) always holds.

\begin{proof}
  Let $(\S, \Sproj_0, \Sproj_1, \Ssum)$ be a left pre-summability
  structure in which every morphism is additive.
  By \Cref{prop:additive}, for any $f \in \category(X, Y)$ we can define
  $\S f \defEq \Spair{f \comp \Sproj_0}{f \comp \Sproj_1}$ and the following
  equations hold:
  $\Dproj_i \circ \S f = f \circ \Dproj_i$, $\Dsum \circ \S f
  = f \circ \Dsum$.
  Furthermore, $\S$ is a functor:
  $\Sproj_i \comp \S \id = \id \comp \Sproj_i = \Sproj_i \comp \id$
  and
  $\Sproj_i \comp \S f \comp \S g = f \comp \Sproj_i \comp \S g = f
  \comp g \comp \Sproj_i = \Sproj_i \comp \S (f \comp g)$.
  Thus, by joint monicity of the $\Dproj_i$, $\S \id = \id$ and
  $\S (f \comp g) = \S f \comp \S g$.  
  Then the equations $\Sproj_i \comp \S f = f \comp \Sproj_i$ and
  $\Ssum \comp \S f = f \comp \Ssum$ introduced above correspond
  to the naturality of $\Dproj_0, \Dproj_1$ and $\Dsum$.

  Conversely, let $(\S, \Sproj_0, \Sproj_1, \Ssum)$ be a
  pre-summability structure in the sense of~\cite{Ehrhard21}.
  The naturality of $\Sproj_0$ and $\Sproj_1$ ensures that for any
  $f$, $f \comp \Sproj_0 \summable f \comp \Sproj_1$ with witness
  $\S f$.
  The naturality of $\Ssum$ ensures that the sum of those two
  morphisms is $\Ssum \comp \S f = f \comp \Ssum$.  Finally,
  $f \comp 0 = 0$ by assumption. So every morphism is additive by
  \Cref{prop:additive}.
  In particular, $\Sproj_0, \Sproj_1$ and $\Ssum$ are additive, so
  $(\S, \Sproj_0, \Sproj_1, \Ssum)$ is a left pre-summability
  structure.
\end{proof}

\begin{cor}
  The summability structures of~\cite{Ehrhard21} are the left summability
  structures where all morphisms are additive.
\end{cor}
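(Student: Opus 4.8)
The plan is to deduce the corollary directly from \Cref{prop:summability-struct-equivalence}, once one observes that the three axioms \ref{ax:D-zero}, \ref{ax:D-com} and \ref{ax:D-witness} are formulated purely in terms of the summable pairing structure $(\S, \Sproj_0, \Sproj_1, \Ssum)$ and therefore mean the same thing whether we regard that tuple as a left pre-summability structure or as a pre-summability structure in the sense of~\cite{Ehrhard21}.

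First I would recall that, in~\cite{Ehrhard21}, a \emph{summability structure} is by definition a pre-summability structure satisfying (the analogues of) \ref{ax:D-zero}, \ref{ax:D-com} and \ref{ax:D-witness}, and that these three properties are stated there exactly as here, i.e.\ solely via the derived notions of summability of two morphisms, of the witness $\Dpair{f_0}{f_1}$, and of the partial sum $f_0 + f_1 = \Ssum \comp \Dpair{f_0}{f_1}$. The crucial remark is that these derived notions depend on nothing beyond the summable pairing structure: the functoriality of $\S$ and the naturality of $\Sproj_0, \Sproj_1, \Ssum$ play no role in defining them. Consequently the assertion ``\ref{ax:D-zero}, \ref{ax:D-com} and \ref{ax:D-witness} hold'' carries the same content on both sides of the equivalence of \Cref{prop:summability-struct-equivalence}.

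Then the proof is a short chain of equivalences. A tuple $(\S, \Sproj_0, \Sproj_1, \Ssum)$ is a summability structure in the sense of~\cite{Ehrhard21} iff it is a pre-summability structure satisfying \ref{ax:D-zero}, \ref{ax:D-com} and \ref{ax:D-witness}; by \Cref{prop:summability-struct-equivalence} the clause ``pre-summability structure'' is equivalent to ``left pre-summability structure in which every morphism is additive''; combining these, the tuple is a summability structure iff it is a left pre-summability structure in which every morphism is additive and which moreover satisfies \ref{ax:D-zero}, \ref{ax:D-com} and \ref{ax:D-witness}; and by \Cref{def:left-summability-struct} the latter is precisely a left summability structure in which every morphism is additive.

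The only point demanding care — and it is a mild one, which I would flag as the main obstacle — is the bookkeeping of the previous paragraph: one must genuinely check that the formulation of the three axioms in~\cite{Ehrhard21} coincides with the one adopted here. This reduces to unwinding definitions, the key observation being (as in the second paragraph) that summability and the partial sum are insensitive to the extra functorial and natural structure, so no substantive verification beyond \Cref{prop:summability-struct-equivalence} is needed.
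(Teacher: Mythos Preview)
Your proposal is correct and matches the paper's approach: the corollary is stated there without proof, as an immediate consequence of \Cref{prop:summability-struct-equivalence} together with the observation that the axioms \ref{ax:D-zero}, \ref{ax:D-com}, \ref{ax:D-witness} are phrased purely in terms of the summable pairing structure and hence transfer verbatim between the two notions of pre-summability structure. You have simply spelled out what the paper leaves implicit.
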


\section{Differential}

\subsection{Differential Structure}
\label{sec:diff-structure}

\label{sec:differential}

Recall from~\Cref{eq:derivative-def} the main idea of the differential
calculus.
We generalize it to a partial additive setting: $f$ is differentiable
at $x$ if for any $u$, if $x \summable u$ then $f'(x) \cdot u$ is
defined, $f(x) \summable f'(x) \cdot u$ and, intuitively,
$f(x + u) \simeq f(x) + f'(x) \cdot u$.
Hence the differential of $f$ can be seen as a function $\D f$ that
maps a pair of two summable elements $\Dpair{x}{u}$ to a pair of two
summable elements $\D f(x, u) = \Dpair{f(x)}{f'(x) \cdot u}$.

\begin{definition} \label{def:differential-struct} %
  A \emph{pre-differential structure} is a left summability structure
  $(\D, \Dproj_0, \Dproj_1, \Dsum)$ together with, for each
  \(X,Y\in\Obj\category\), an operator
  \(\category(X,Y)\to\category(\D X,\D Y)\), also denoted as \(\D\),
  and such that $\Dproj_0 \comp \D f = f \comp \Dproj_0$.
  We define the \emph{differential} of $f$ as
  $\dcoh f \defEq \Dproj_1 \comp \D f \in \category(\D X, Y)$.
  By our assumptions $\D f = \Dpair{f \comp \Dproj_0}{\dcoh f}$.
\end{definition}
At this point we do not assume $\D$ to be a functor, this will be
the Chain Rule. Then the equation
$\Dproj_0 \comp \D f = f \comp \Dproj_0$ will be the naturality of
$\Dproj_0$.
We can also introduce three families of morphisms $\DmonadSum$,
$\Dlift$ and $\Dswap$ whose naturality will correspond to some axioms
of differentiation.
This is very similar to what happens in tangent categories
\cite{Cockett14}, the difference being the structure of the functor
$\D$ itself%
\footnote{There might be a way to combine tangent categories and
  coherent differentiation in one notion allowing to axiomatize
  objects similar to manifolds where the tangent spaces have an
  addition of vectors which is only partially defined.
  The first step should be to develop convincing concrete
  examples of such objects, which might be related to the semantics of
  Type Theory.}.

The additivity of $\Dsum$ ensures that
  $\Dsum \comp \Dproj_0 \summable \Dsum \comp \Dproj_1$.  That is,
  $(\Dproj_0 \comp\Dproj_0 + \Dproj_1 \comp\Dproj_0) \summable
  (\Dproj_0\comp \Dproj_1 + \Dproj_1\comp \Dproj_1)$.
  By associativity, this implies that
  $((\Dproj_0\comp \Dproj_0 + \Dproj_1\comp \Dproj_0) + \Dproj_0\comp
  \Dproj_1) + \Dproj_1\comp \Dproj_1$ is well defined, so
  $(\Dproj_0\comp \Dproj_0 + \Dproj_1\comp \Dproj_0) + \Dproj_0\comp
  \Dproj_1$ is well defined.
  By associativity again,
  $\Dproj_0\comp \Dproj_0 + (\Dproj_1\comp \Dproj_0 + \Dproj_0\comp
  \Dproj_1)$ is well defined, so~\Cref{def:DmonadSum} below makes sense.

\begin{definition} \label{def:DmonadSum} %
  For any object $X$, define $\DmonadSum \in \category(\D^2 X, \D X)$
  as $\DmonadSum \defEq \Dpair{\Dproj_0 \comp \Dproj_0}{\Dproj_1 \comp
    \Dproj_0 + \Dproj_0 \comp \Dproj_1}$.
\end{definition}

  By \ref{ax:D-zero},
  $(\Dproj_0 + 0) \summable (0 + \Dproj_1)$ so by \ref{ax:D-witness}
  $\Dpair{\Dproj_0}{0} \summable \Dpair{0}{\Dproj_1}$. 
  So \Cref{def:Dlift} below makes sense.

\begin{definition} \label{def:Dlift}
 For any object $X$, define
$\Dlift \in \category(\D X, \D^2 X)$ as
$\Dlift \defEq \Dpair{\Dpair{\Dproj_0}{0}}{\Dpair{0}{\Dproj_1}}$.
\end{definition}

By \Cref{prop:sum-left-compatible} (left compatibility)
  $\Dproj_0 \comp (\Dproj_0 + \Dproj_1) \summable \Dproj_1 \comp
  (\Dproj_0 + \Dproj_1)$. By additivity of $\Dproj_0$ and
  $\Dproj_1$, it means that
  $(\Dproj_0 \comp \Dproj_0 + \Dproj_0 \comp \Dproj_1) \summable
  (\Dproj_1 \comp \Dproj_0 + \Dproj_1 \comp \Dproj_1)$.
  So by \ref{ax:D-witness}, 
  $\Dpair{\Dproj_0 \comp \Dproj_0}{\Dproj_0 \comp \Dproj_1} \summable 
  \Dpair{\Dproj_1 \comp \Dproj_0}{\Dproj_1 \comp \Dproj_1}$ and
  \Cref{def:Dswap} below makes sense.

\begin{definition} \label{def:Dswap}
For any object $X$, we can define $\Dswap \in \category(\D^2 X, \D^2 X)$ as
$\Dswap \defEq\Dpair{\Dpair{\Dproj_0 \comp \Dproj_0}{\Dproj_0 \comp \Dproj_1}}
      {\Dpair{\Dproj_1 \comp \Dproj_0}{\Dproj_1 \comp \Dproj_1}}$.
\end{definition}

It is probably easier to understand those morphisms by how they
operate on witnesses. 
This corresponds to \Cref{prop:family-on-pairs} below. 
The proof is a straightforward computation using the joint monicity of $\Dproj_0$ and
$\Dproj_1$.

\begin{proposition} \label{prop:family-on-pairs}
For any $x, u, v, w \in \category(U, X)$ such that
$\Dpair{\Dpair{x}{u}}{\Dpair{v}{w}}$ is defined, 
\begin{align*}
  \DmonadSum \comp \Dpair{\Dpair{x}{u}}{\Dpair{v}{w}} 
  &= \Dpair{x}{u+v} \\
  \Dswap \comp \Dpair{\Dpair{x}{u}}{\Dpair{v}{w}} 
  &= \Dpair{\Dpair{x}{v}}{\Dpair{u}{w}} \\
  \Dlift \comp \Dpair{x}{u} 
  &= \Dpair{\Dpair{x}{0}}{\Dpair{0}{u}}
\end{align*}
\end{proposition}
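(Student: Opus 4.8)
The plan is to verify each of the three identities by post-composing with $\Dproj_0$ and $\Dproj_1$ and using joint monicity, exactly as the statement hints. Before doing so I would record a preliminary observation: under the hypothesis that $\Dpair{\Dpair{x}{u}}{\Dpair{v}{w}}$ is defined, all four scalars $x,u,v,w$ are pairwise controlled by the summability relations extracted by applying $\Dproj_i$ to that nested witness, together with \Cref{prop:pair-sum}; in particular $u \summable v$, so $\Dpair{x}{u+v}$, $\Dpair{x}{v}$ and $\Dpair{\Dpair{x}{v}}{\Dpair{u}{w}}$ are all well defined, so that the right-hand sides of the claimed equalities make sense. This sanity check mirrors the paragraphs preceding Definitions \ref{def:DmonadSum}, \ref{def:Dlift}, \ref{def:Dswap} and is needed for the statement to even be meaningful.

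For $\DmonadSum$: by \Cref{def:DmonadSum}, $\Dproj_0 \comp \DmonadSum = \Dproj_0 \comp \Dproj_0$ and $\Dproj_1 \comp \DmonadSum = \Dproj_1 \comp \Dproj_0 + \Dproj_0 \comp \Dproj_1$. Post-composing with the nested witness $h = \Dpair{\Dpair{x}{u}}{\Dpair{v}{w}}$ and using $\Dproj_i \comp h = \Dpair{\,\cdot\,}{\,\cdot\,}$ componentwise, together with \Cref{prop:sum-left-compatible} to push composition through the sum, gives $\Dproj_0 \comp \DmonadSum \comp h = x$ and $\Dproj_1 \comp \DmonadSum \comp h = \Dproj_1 \comp \Dpair{x}{u} + \Dproj_0 \comp \Dpair{v}{w} = u + v$. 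By joint monicity $\DmonadSum \comp h = \Dpair{x}{u+v}$. For $\Dswap$: from \Cref{def:Dswap}, composing $\Dproj_i \comp \Dproj_j \comp \Dswap$ with $h$ and simplifying $\Dproj_i \comp \Dproj_j \comp h$ to the appropriate one of $x,u,v,w$ yields $\Dproj_0 \comp \Dproj_0 \comp \Dswap \comp h = x$, $\Dproj_0 \comp \Dproj_1 \comp \Dswap \comp h = v$, $\Dproj_1 \comp \Dproj_0 \comp \Dswap \comp h = u$, $\Dproj_1 \comp \Dproj_1 \comp \Dswap \comp h = w$; applying joint monicity twice (once at level $\D X$, once at level $\D^2 X$) gives $\Dswap \comp h = \Dpair{\Dpair{x}{v}}{\Dpair{u}{w}}$. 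The $\Dlift$ case is identical but simpler, using $\Dproj_i \comp k = \cdot$ for $k = \Dpair{x}{u}$ and the definitions $\Dpair{\Dproj_0}{0}$, $\Dpair{0}{\Dproj_1}$ together with $0 \comp k = 0$.

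There is no real obstacle here; the only point requiring minor care is the repeated appeal to \Cref{prop:sum-left-compatible} when composition has to be moved inside a sum (as in the $\DmonadSum$ computation), and the justification that all the intermediate pairs one writes down are actually defined — which is precisely what the preliminary observation above secures. Everything else is a direct unfolding of Definitions \ref{def:DmonadSum}, \ref{def:Dlift}, \ref{def:Dswap}, the defining equation $\Dproj_i \comp \Dpair{f_0}{f_1} = f_i$, and two nested applications of joint monicity of $\Dproj_0,\Dproj_1$.
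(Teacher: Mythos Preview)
Your approach is exactly the one the paper indicates: unfold the definitions of $\DmonadSum$, $\Dswap$, $\Dlift$, post-compose with the projections, and conclude by joint monicity of $\Dproj_0,\Dproj_1$ (the paper itself just says ``straightforward computation using the joint monicity of $\Dproj_0$ and $\Dproj_1$''). One small slip: in your $\Dswap$ computation the two middle values are interchanged --- the correct identities are $\Dproj_1 \comp \Dproj_0 \comp \Dswap \comp h = v$ and $\Dproj_0 \comp \Dproj_1 \comp \Dswap \comp h = u$ --- but this does not affect the conclusion or the argument.
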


\begin{definition} %
  \label{def:CDC}
  \labeltext{(Dproj-lin)}{ax:Dproj-lin} %
  \labeltext{(D-chain)}{ax:D-chain} %
  \labeltext{(Dsum-lin)}{ax:Dsum-lin} %
  \labeltext{(D-add)}{ax:D-add} %
  \labeltext{(D-Schwarz)}{ax:D-schwarz} %
  \labeltext{(D-lin)}{ax:D-lin} %
  \labeltext{(D-struct)}{ax:D-struct} %
  A \emph{differential structure} is a pre-differential structure
  $(\D, \Dproj_0, \Dproj_1, \Dsum)$ where the following axioms
  hold, using the associated notation \(\dcoh f\) introduced
  in~\Cref{def:differential-struct}:
  \begin{enumerate}
  \item \ref{ax:Dproj-lin}
    $\dcoh \Dproj_0 = \Dproj_0 \comp \Dproj_1$,
    $\dcoh \Dproj_1  = \Dproj_1 \comp \Dproj_1$;
  \item \ref{ax:Dsum-lin} $\dcoh \Dsum  = \Dsum \comp \Dproj_1$, 
  $\dcoh 0 = 0$;
  \item \ref{ax:D-chain} $\D$ is a functor (Chain Rule);
  \item \ref{ax:D-add} $\Dinj_0, \DmonadSum$ are natural
    transformations 
    (additivity of the derivative);
  \item \ref{ax:D-lin} $\Dlift$ is a natural transformation (linearity
    of the derivatives);
  \item \ref{ax:D-schwarz} $\Dswap$ is a natural transformation
    (Schwarz Rule).
  \end{enumerate}
  A \emph{coherent differential category} is a category $\category$
  equipped with a differential structure.
\end{definition}

The axiom \ref{ax:Dproj-lin} corresponds to an important structural
property of $\D$ with regard to $\Dpair{\_}{\_}$.
The axiom \ref{ax:Dsum-lin} corresponds to the additivity of the
derivative operator, that is, $(f+g)' = f' + g'$.
The axiom \ref{ax:D-chain} corresponds to the Chain Rule of
the differential calculus.
The axiom \ref{ax:D-add} says that $u \mapsto f'(x) \cdot u$ is
additive.
The axiom \ref{ax:D-lin} says that
$u \mapsto f'(x) \cdot u$ is not only additive, but also equal to its
own derivative in 0. It is shown in Prop.~4.2 of~\cite{Cockett14} 
(in the left-additive setting of cartesian differential categories) 
that it implies that $u \mapsto f'(x) \cdot u$ is equal to its 
own derivative in any points.
The same reasoning can be generalized to our setting, 
but it would require too much technical development to be developed in this paper.
Finally, the axiom \ref{ax:D-schwarz} corresponds to the Schwarz Rule, 
that is, the second derivative $f''(x)$
(a bilinear map) is symmetric.
An account of these axioms as properties of $\dcoh$ can be found
in \Cref{sec:equivalence-lemmas} and might help the reader understand 
the ideas mentioned above.

\subsection{Linearity}

For the rest of this section, $\category$ is only assumed to be
equipped with a pre-differential structure.
Any use of an axiom of coherent differential categories will be made
explicit.

\begin{definition}[$\D$-linearity] \label{def:linear} %
  A morphism $f \in \category(X,Y)$ is \emph{$\D$-linear} if the
  following diagrams commute.
  \begin{center}
    \begin{tikzcd}
      \D X \arrow[r, "\D f"] \arrow[d, "\Dproj_1"'] & \D Y \arrow[d, "\Dproj_1"] \\
      X \arrow[r, "f"'] & Y
    \end{tikzcd}\quad
    \begin{tikzcd}
      \D X \arrow[r, "\D f"] \arrow[d, "\Dsum"'] & \D Y \arrow[d, "\Dsum"] \\
      X \arrow[r, "f"'] & Y
    \end{tikzcd}\quad
    \begin{tikzcd}
      X \arrow[rd, "0"] \arrow[d, "0"'] &  \\
      X \arrow[r, "f"'] & Y
    \end{tikzcd}
  \end{center}
\end{definition}

\begin{remark} \label{rem:linear} %
  The first diagram can also be written as
  $\dcoh(f) = f \comp \Dproj_1$ and means that
  $\D f = \Dpair{f \comp \Dproj_0}{f \comp \Dproj_1}$.
\end{remark}

\begin{proposition} \label{prop:linear-and-additive}
A morphism $f$ is $\D$-linear if and only if it is
additive and $\dcoh f = f \comp \Dproj_1$ 
(that is, $\D f = \Dpair{f \comp \Dproj_0}{f \comp \Dproj_1}$).
\end{proposition}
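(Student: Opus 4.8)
The plan is to reduce both directions to \Cref{prop:additive}, which says that a morphism $h$ with $h \comp 0 = 0$ is additive exactly when $h \comp \Dproj_0 \summable h \comp \Dproj_1$ with sum $h \comp \Dsum$. The key observation is that in a pre-differential structure one always has $\Dproj_0 \comp \D f = f \comp \Dproj_0$, hence $\D f = \Dpair{f \comp \Dproj_0}{\dcoh f}$; so the content of the first diagram of \Cref{def:linear} (equivalently $\dcoh f = f \comp \Dproj_1$, cf.\ \Cref{rem:linear}) is precisely that $\D f$ equals the witness $\Dpair{f \comp \Dproj_0}{f \comp \Dproj_1}$, which in particular makes $f \comp \Dproj_0$ and $f \comp \Dproj_1$ summable with that witness.

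For the forward implication, I would assume $f$ is $\D$-linear. The third diagram gives $f \comp 0 = 0$, and the first gives $\dcoh f = f \comp \Dproj_1$, hence $\D f = \Dpair{f \comp \Dproj_0}{f \comp \Dproj_1}$, so $f \comp \Dproj_0 \summable f \comp \Dproj_1$ with sum $\Dsum \comp \D f$. The second diagram rewrites this sum as $f \comp \Dsum$, so \Cref{prop:additive} applies and $f$ is additive; the equation $\dcoh f = f \comp \Dproj_1$ is already in hand.

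For the converse, I would assume $f$ is additive with $\dcoh f = f \comp \Dproj_1$. Additivity gives $f \comp 0 = 0$, which is the third diagram. From $\dcoh f = f \comp \Dproj_1$ we get $\D f = \Dpair{f \comp \Dproj_0}{f \comp \Dproj_1}$, so the first diagram holds and, reading off $\Dproj_i \comp \D f$, the morphisms $f \comp \Dproj_0, f \comp \Dproj_1$ are summable with witness $\D f$; thus $\Dsum \comp \D f = f \comp \Dproj_0 + f \comp \Dproj_1$. Since $f$ is additive and $f \comp 0 = 0$, \Cref{prop:additive} identifies this sum with $f \comp \Dsum$, which is exactly the second diagram. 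Hence $f$ is $\D$-linear.

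There is essentially no hard step here: once one notes that the first diagram says ``$\D f$ is the canonical witness'', the rest is bookkeeping through \Cref{prop:additive}. The only point requiring a little care is to secure the summability of $f \comp \Dproj_0$ and $f \comp \Dproj_1$ — needed even to speak of their sum and to invoke \Cref{prop:additive} — before appealing to the second diagram, which is why the argument uses the first diagram first.
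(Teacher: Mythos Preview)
Your proof is correct and matches the paper's approach essentially step for step: both directions hinge on \Cref{prop:additive}, using the first diagram to identify $\D f$ as the witness $\Dpair{f\comp\Dproj_0}{f\comp\Dproj_1}$ and the second diagram (resp.\ additivity) to equate $\Dsum\comp\D f$ with $f\comp\Dsum$. The only cosmetic difference is that in the converse the paper expands $\Dsum\comp\D f$ via $\Dsum=\Dproj_0+\Dproj_1$ and \Cref{prop:sum-left-compatible}, whereas you go straight through \Cref{prop:additive}; both are equivalent.
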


\begin{proof}
  Assume that $f$ is $\D$-linear. Then $f \comp 0 = 0$ and, 
  by \Cref{rem:linear}, $f \comp \Dproj_0 \summable f \comp \Dproj_1$ of
  witness $\D f$.
  Thus
  $f \comp \Dproj_0 + f \comp \Dproj_1 \defEq \Dsum \comp \D f = f
  \comp \Dsum$ by assumption.
  So $f$ is additive by~\Cref{prop:additive}, and
  $\dcoh f = f \circ \Dproj_1$ by assumption.
  Conversely, only the second diagram is not part of the assumptions.
  \begin{align*}
    \Dsum \comp \D f 
    &= (\Dproj_0 + \Dproj_1) \comp \D f \\
    &= \Dproj_0 \comp \D f + \Dproj_1 \comp \D f \quad \text{by \Cref{prop:sum-left-compatible}} \\
    &= f \comp \Dproj_0 + f \comp \Dproj_1 \quad \text{ by assumption}  \\
    &= f \comp (\Dproj_0 + \Dproj_1) = f \comp \Dsum \quad \text{ by additivity of $f$}
  \end{align*}
  Thus $f$ is $\D$-linear.
\end{proof}
Thus $\D$-linear morphisms are in particular additive.
As we will see, our notion of additive and $\D$-linear morphisms
ultimately coincides with that of~\cite{Blute09}, so this
distinction between additivity and linearity is as relevant as it is
in their setting.
%

\begin{cor} \label{prop:constructors-linear} %
  \ref{ax:Dproj-lin} is equivalent to the linearity of
  $\Dproj_0$ and $\Dproj_1$.
  \ref{ax:Dsum-lin} is equivalent to the linearity of $\Dsum$ and $0$.
\end{cor}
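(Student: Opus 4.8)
The plan is to derive both equivalences directly from \Cref{prop:linear-and-additive}, which characterizes $\D$-linearity of a morphism $f$ as the conjunction of two conditions: $f$ is additive, and $\dcoh f = f \comp \Dproj_1$. The crucial point is that all four morphisms named in the statement are already known to be additive. Indeed, $\Dproj_0$, $\Dproj_1$ and $\Dsum$ are additive by the very definition of a left pre-summability structure (\Cref{def:presummability-structure}), and $0$ is additive by \Cref{prop:zero-additive}. Consequently, for each of these morphisms $\D$-linearity is equivalent to the single equation $\dcoh f = f \comp \Dproj_1$, and the whole argument amounts to instantiating this equation.

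For the first equivalence I would take $f = \Dproj_0$ and $f = \Dproj_1$ in turn: by the above, $\Dproj_i$ is $\D$-linear if and only if $\dcoh \Dproj_i = \Dproj_i \comp \Dproj_1$. Conjoining the cases $i = 0$ and $i = 1$ gives exactly the pair of equations defining \ref{ax:Dproj-lin}, so $\Dproj_0$ and $\Dproj_1$ are $\D$-linear precisely when \ref{ax:Dproj-lin} holds. For the second equivalence I would take $f = \Dsum$, so that $\Dsum$ is $\D$-linear if and only if $\dcoh \Dsum = \Dsum \comp \Dproj_1$, and $f = 0$, so that $0$ is $\D$-linear if and only if $\dcoh 0 = 0 \comp \Dproj_1$. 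Using the standing assumption that a zero morphism composed with any morphism is again a zero morphism, the right-hand side of the latter equation is just $0$, so the condition reads $\dcoh 0 = 0$. The conjunction of the two conditions is exactly \ref{ax:Dsum-lin}.

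I do not expect a genuine obstacle here: the substantive work has been absorbed into \Cref{prop:linear-and-additive} and into the additivity facts recalled above, so the argument is a bookkeeping exercise matching the pairs of equations in \ref{ax:Dproj-lin} and \ref{ax:Dsum-lin} to the $\D$-linearity conditions for the corresponding morphisms. The only mild subtlety worth spelling out is the identification $0 \comp \Dproj_1 = 0$, which is immediate from the hypothesis on zero morphisms imposed at the start of the section.
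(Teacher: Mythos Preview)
Your proof is correct and follows exactly the approach the paper intends: the corollary is placed immediately after \Cref{prop:linear-and-additive} precisely because it is the instantiation of that proposition at the four morphisms in question, using their already-established additivity. The only point worth noting is that the identity $0 \comp \Dproj_1 = 0$ is the standing assumption on zero morphisms (left-absorption under precomposition), which you correctly invoke.
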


Thus $\D$-linear morphisms are special instances of additive
ones.
Our notion of additive and $\D$-linear morphisms
ultimately coincides with the one of~\cite{Blute09} thanks to
\Cref{prop:linear-equation} below, so this
distinction between additivity and linearity is as relevant as it is
in their setting.
\begin{proposition} \label{prop:linear-equation}
  Assuming \ref{ax:Dproj-lin}, \ref{ax:D-chain} and \ref{ax:D-add},
  any morphism $h \in \category(X, Y)$ such that
  $\dcoh h = h \comp \Dproj_1$ is additive, hence $\D$-linear.
\end{proposition}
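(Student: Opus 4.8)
The plan is to derive the $\D$-linearity of $h$ from the additivity of $h$ through \Cref{prop:linear-and-additive}: since the hypothesis $\dcoh h = h \comp \Dproj_1$ is exactly the second condition appearing there, it is enough to prove that $h$ is additive in the sense of \Cref{def:additive}. By \Cref{rem:linear} the hypothesis rewrites as $\D h = \Dpair{h \comp \Dproj_0}{h \comp \Dproj_1}$, so whenever $f_0 \summable f_1$ the morphism $\D h \comp \Dpair{f_0}{f_1}$ witnesses $h \comp f_0 \summable h \comp f_1$ — its components through $\Dproj_0$ and $\Dproj_1$ are $h \comp f_0$ and $\dcoh h \comp \Dpair{f_0}{f_1} = h \comp \Dproj_1 \comp \Dpair{f_0}{f_1} = h \comp f_1$. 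Hence the summability clause of additivity is automatic, and only $h \comp 0 = 0$ and $h \comp (f_0 + f_1) = h \comp f_0 + h \comp f_1$ remain to be proved.

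For $h \comp 0 = 0$ I use the naturality of $\Dinj_0$ granted by \ref{ax:D-add}: composing the naturality equation $\Dinj_0 \comp h = \D h \comp \Dinj_0$ on the left with $\Dproj_1$ and using $\Dproj_1 \comp \Dinj_0 = 0$ together with $\dcoh h = h \comp \Dproj_1$, it collapses to $0 = h \comp 0$.

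For the additive law, fix $f_0 \summable f_1$ and set $\theta \defEq \Dpair{\Dpair{0}{f_0}}{\Dpair{f_1}{0}}$. This is well defined: the inner pairs exist by \ref{ax:D-zero}, and since $\Dsum \comp \Dpair{0}{f_0} = f_0 \summable f_1 = \Dsum \comp \Dpair{f_1}{0}$ the outer pair exists by \ref{ax:D-witness}. Applying the naturality of $\DmonadSum$ from \ref{ax:D-add} at $h$ and precomposing with $\theta$ yields
\[ \DmonadSum \comp \D(\D h) \comp \theta = \D h \comp \DmonadSum \comp \theta . \]
By \Cref{prop:family-on-pairs} the right-hand side is $\D h \comp \Dpair{0}{f_0 + f_1}$, which, using $\D h = \Dpair{h \comp \Dproj_0}{h \comp \Dproj_1}$ and $h \comp 0 = 0$, equals $\Dpair{0}{h \comp (f_0 + f_1)}$. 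On the left-hand side I would compute the components through $\Dproj_0$ and $\Dproj_1$ and conclude by joint monicity. From $\Dproj_0 \comp \DmonadSum = \Dproj_0 \comp \Dproj_0$, the naturality of $\Dproj_0$ built into the pre-differential structure, and $h \comp 0 = 0$, the $\Dproj_0$-component is $0$. From $\Dproj_1 \comp \DmonadSum = \Dproj_1 \comp \Dproj_0 + \Dproj_0 \comp \Dproj_1$ and left compatibility (\Cref{prop:sum-left-compatible}), the $\Dproj_1$-component splits as $\bigl(\Dproj_1 \comp \Dproj_0 \comp \D(\D h) \comp \theta\bigr) + \bigl(\Dproj_0 \comp \Dproj_1 \comp \D(\D h) \comp \theta\bigr)$; the first summand equals $\dcoh h \comp \Dpair{0}{f_0} = h \comp \Dproj_1 \comp \Dpair{0}{f_0} = h \comp f_0$ (by naturality of $\Dproj_0$ and the hypothesis), and the second one will be shown to equal $h \comp f_1$. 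Matching components then gives $h \comp f_0 + h \comp f_1 = h \comp (f_0 + f_1)$, so $h$ is additive, hence $\D$-linear by \Cref{prop:linear-and-additive}.

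The heart of the argument — and the step I expect to be the main obstacle — is the identity $\Dproj_0 \comp \Dproj_1 \comp \D(\D h) \comp \theta = h \comp f_1$. The plan is to rewrite $\Dproj_0 \comp \Dproj_1 = \Dproj_1 \comp \D\Dproj_0$ via \ref{ax:Dproj-lin}, then use functoriality of $\D$ (\ref{ax:D-chain}) to commute $\D$ past $\Dproj_0 \comp \D h = h \comp \Dproj_0$ (twice), reaching $\Dproj_1 \comp \D h \comp \D\Dproj_0 \comp \theta = \dcoh h \comp \D\Dproj_0 \comp \theta = h \comp \Dproj_1 \comp \D\Dproj_0 \comp \theta$ by the hypothesis, then apply \ref{ax:Dproj-lin} again to turn $\Dproj_1 \comp \D\Dproj_0$ back into $\Dproj_0 \comp \Dproj_1$, and finish with $\Dproj_1 \comp \theta = \Dpair{f_1}{0}$ and $\Dproj_0 \comp \Dpair{f_1}{0} = f_1$. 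The only real difficulty is bookkeeping: each occurrence of $\Dproj_i$ must be tracked over $X$, $Y$, $\D X$ or $\D Y$, and the two uses of \ref{ax:Dproj-lin} take place at different objects; the well-definedness of $\theta$ is the one other point, and it is precisely where \ref{ax:D-witness} is used.
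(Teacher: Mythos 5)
Your proof is correct and follows essentially the same route as the paper's: the paper observes that the hypothesis gives $h \comp g = h \comp \Dproj_1 \comp \Dpair{0}{g} = \dcoh h \comp \Dpair{0}{g}$ and then concludes in two lines by invoking \Cref{prop:derivative-additive-zero} and \Cref{prop:derivative-additive-sum} (their statements (3)), which are precisely the consequences of the naturality of $\Dinj_0$ and $\DmonadSum$ that you re-derive by hand. Your witness $\theta$ and the chase for $\Dproj_0 \comp \Dproj_1 \comp \D(\D h) \comp \theta$ faithfully reproduce the $(1)\Leftrightarrow(2)$ and $(2)\Leftrightarrow(3)$ steps in the proof of \Cref{prop:derivative-additive-sum}, so nothing is missing; the argument is simply longer than necessary given what \Cref{sec:equivalence-lemmas} already provides.
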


\begin{proof} 
  The proof relies on 
  \Cref{prop:derivative-additive-zero,prop:derivative-additive-sum}
  of \Cref{sec:equivalence-lemmas}.
  If $h = \dcoh h \comp \Dproj_1$,
  then for any $g \in \category(Z, X)$, $h \comp g 
  = h \comp \Dproj_1 \comp \Dpair{0}{g} 
  = \dcoh h \comp \Dpair{0}{g}$.
  Thus, $h \comp 0 = \dcoh h \comp \Dpair{0}{0} = 0$ 
  by \Cref{prop:derivative-additive-zero}, and
$h \comp (f_0 + f_1) = 
\dcoh h \comp \Dpair{0}{f_0 + f_1} = \dcoh h \comp
\Dpair{0}{f_0} + \dcoh h \comp \Dpair{0}{f_1} 
= h \comp f_0 + h \comp f_1$ by
\Cref{prop:derivative-additive-sum} again. Thus, $h$ is additive, so
$h$ is $\D$-linear by \Cref{prop:linear-and-additive}.
\end{proof}

\label{sec:Dstruct-lin}

Thanks to \ref{ax:D-chain}, \ref{ax:Dproj-lin} and \ref{ax:Dsum-lin},
we can show that linear morphisms are closed under composition,
witnesses and sum.

\begin{proposition} \label{prop:composition-linear} %
  Assuming \ref{ax:D-chain}, $\D$-linear morphisms are closed under composition and 
  inverses.
\end{proposition}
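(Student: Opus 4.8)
The plan is to verify directly the three commuting diagrams defining $\D$-linearity (\Cref{def:linear}) for a composite and for an inverse, using as the only real input that $\D$ is a functor, i.e.\ axiom \ref{ax:D-chain}, together with the equations $\D(g\comp f)=\D g\comp\D f$ and $\D\,\id=\id$. It is convenient to phrase the goal through the reformulation of \Cref{prop:linear-and-additive}: a morphism $h$ is $\D$-linear exactly when $h\comp 0=0$, $\Dproj_1\comp\D h=h\comp\Dproj_1$ and $\Dsum\comp\D h=h\comp\Dsum$. So in each case I just have to check these three equations.

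First, closure under composition. Let $f\in\category(X,Y)$ and $g\in\category(Y,Z)$ be $\D$-linear. Using \ref{ax:D-chain} we have $\D(g\comp f)=\D g\comp\D f$, and then $\Dproj_1\comp\D(g\comp f)=\Dproj_1\comp\D g\comp\D f=g\comp\Dproj_1\comp\D f=g\comp f\comp\Dproj_1$, where the second equality is $\D$-linearity of $g$ and the third is $\D$-linearity of $f$. The same chain with $\Dsum$ in place of $\Dproj_1$ gives $\Dsum\comp\D(g\comp f)=g\comp f\comp\Dsum$, and $(g\comp f)\comp 0=g\comp(f\comp 0)=g\comp 0=0$. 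Hence $g\comp f$ is $\D$-linear. (Equivalently one may invoke that additive morphisms are closed under composition, \Cref{def:additive}, together with the displayed computation of $\dcoh(g\comp f)=g\comp f\comp\Dproj_1$.)

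Second, closure under inverses: if $f\in\category(X,Y)$ is $\D$-linear and an isomorphism, then $f^{-1}$ is $\D$-linear. Since $\D$ is a functor it preserves isomorphisms, so $\D(f^{-1})$ is two-sided inverse to $\D f$: $\D f\comp\D(f^{-1})=\D\,\id=\id$ and likewise on the other side. Post-composing the equation $\Dproj_1\comp\D f=f\comp\Dproj_1$ with $\D(f^{-1})$ gives $\Dproj_1=f\comp\Dproj_1\comp\D(f^{-1})$, and then pre-composing with $f^{-1}$ yields $f^{-1}\comp\Dproj_1=\Dproj_1\comp\D(f^{-1})$; the identical manipulation applied to $\Dsum\comp\D f=f\comp\Dsum$ gives $f^{-1}\comp\Dsum=\Dsum\comp\D(f^{-1})$. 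For the $0$ clause, $f$ is additive by \Cref{prop:linear-and-additive}, so for any object $Z$ we have $0^{Z,Y}=f\comp 0^{Z,X}$, whence $f^{-1}\comp 0^{Z,Y}=f^{-1}\comp f\comp 0^{Z,X}=0^{Z,X}$, i.e.\ $f^{-1}\comp 0=0$. Thus $f^{-1}$ is $\D$-linear.

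The whole argument is a short diagram chase, so there is no serious obstacle; the one spot that is not purely mechanical is the $0$-preservation clause for the inverse, where $f^{-1}\comp 0=0$ cannot be read off directly and must be obtained by first rewriting $0$ as $f\comp 0$ using additivity of $f$. Everything else follows from functoriality of $\D$ and cancellation of $f$ with $f^{-1}$.
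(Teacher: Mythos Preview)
Your proof is correct and follows essentially the same approach as the paper, which simply records this as an ``easy verification using the functoriality of $\D$'' and (in a suppressed detailed version) carries out exactly the computations you give. One minor remark: the three equations you list are already the content of \Cref{def:linear} itself, so you do not really need to invoke \Cref{prop:linear-and-additive} as a reformulation.
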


\begin{proof} Easy verification using the functoriality of $\D$.
\end{proof}

\begin{proposition}[$\D$-linearity and pairing]
  \label{prop:pairing-linear}
  Assume \ref{ax:D-chain} and \ref{ax:Dproj-lin}.
  Assume that $h_0, h_1 \in \category(X, Y)$ are summable and both
  $\D$-linear.
  Then $\Dpair{h_0}{h_1}$ is $\D$-linear.
\end{proposition}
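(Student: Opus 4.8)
The plan is to verify directly the three commuting squares defining $\D$-linearity (\Cref{def:linear}) for the morphism $k \defEq \Dpair{h_0}{h_1} \in \category(X,\D Y)$. I would deliberately avoid the route through \Cref{prop:linear-and-additive} (first proving $k$ additive, then deducing linearity): with only \ref{ax:D-chain} and \ref{ax:Dproj-lin} at our disposal — no \ref{ax:D-add}, no \ref{ax:Dsum-lin} — every attempt I can see to establish right-compatibility of $k$ directly ends up presupposing it, and I expect that to be the main trap. The resolution will be to prove the squares in the right order, so that the $\Dproj_1$-square unlocks the $\Dsum$-square.

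First I would record the ingredients: $\Dproj_i\comp k = h_i$; left compatibility of the sum (\Cref{prop:sum-left-compatible}), which gives $k\comp g = \Dpair{h_0\comp g}{h_1\comp g}$ for composable $g$; that each $h_i$ is additive with $\dcoh h_i = h_i\comp\Dproj_1$ (\Cref{prop:linear-and-additive} and \Cref{rem:linear}); and joint monicity of $\Dproj_0,\Dproj_1$. The zero square is then immediate: $\Dproj_i\comp(k\comp 0) = h_i\comp 0 = 0 = \Dproj_i\comp 0$ (last equality by additivity of $\Dproj_i$), so $k\comp 0 = 0$ by joint monicity. For the square $\Dproj_1\comp\D k = k\comp\Dproj_1$ I would compare both sides after post-composition with $\Dproj_i$: on the right this is $h_i\comp\Dproj_1$; on the left, \ref{ax:Dproj-lin} together with the definition of $\dcoh$ gives $\Dproj_i\comp\Dproj_1 = \dcoh\Dproj_i = \Dproj_1\comp\D\Dproj_i$, so the left side equals $\Dproj_1\comp\D\Dproj_i\comp\D k = \Dproj_1\comp\D(\Dproj_i\comp k)$ by \ref{ax:D-chain}, i.e. $\Dproj_1\comp\D h_i = \dcoh h_i = h_i\comp\Dproj_1$ by $\D$-linearity of $h_i$. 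Joint monicity then closes this square, and crucially it also tells us $\D k = \Dpair{k\comp\Dproj_0}{k\comp\Dproj_1}$.

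This last identity is the linchpin: it exhibits $\D k$ as the witness of $(k\comp\Dproj_0)\summable(k\comp\Dproj_1)$, so $\Dsum\comp\D k = k\comp\Dproj_0 + k\comp\Dproj_1$. To finish the $\Dsum$-square I would then apply \Cref{prop:pair-sum} to the data $k\comp\Dproj_j = \Dpair{h_0\comp\Dproj_j}{h_1\comp\Dproj_j}$: its coordinates are summable by \Cref{prop:sum-left-compatible}, and the two pairs $k\comp\Dproj_0$ and $k\comp\Dproj_1$ are summable with each other by the step just carried out, so \Cref{prop:pair-sum} yields $k\comp\Dproj_0 + k\comp\Dproj_1 = \Dpair{h_0\comp\Dproj_0+h_0\comp\Dproj_1}{h_1\comp\Dproj_0+h_1\comp\Dproj_1}$. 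Additivity of each $h_i$ and $\Dproj_0+\Dproj_1 = \Dsum$ (\Cref{prop:proj-sum}) then rewrite the right-hand side as $\Dpair{h_0\comp\Dsum}{h_1\comp\Dsum} = k\comp\Dsum$ (left compatibility again). Chaining these equalities gives $\Dsum\comp\D k = k\comp\Dsum$, the third square, hence $k$ is $\D$-linear. Apart from the circularity trap flagged above, the remaining steps are routine; the only mildly delicate point is keeping the two nested copies of $\D$ on $Y$ straight when instantiating \ref{ax:Dproj-lin} at $\Dproj_{i,Y}$.
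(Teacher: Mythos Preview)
Your proof is correct. For the zero square and the $\Dproj_1$ square you do exactly what the paper does: post-compose with $\Dproj_i$, invoke \ref{ax:Dproj-lin} and \ref{ax:D-chain}, and reduce to the $\D$-linearity of $h_i$. Where you diverge is the $\Dsum$ square: you bootstrap it from the already-established $\Dproj_1$ square via \Cref{prop:pair-sum} and additivity of the $h_i$. The paper instead treats all three squares \emph{uniformly}: it observes that \ref{ax:Dproj-lin} makes $\Dproj_i$ fully $\D$-linear (via \Cref{prop:constructors-linear}, since $\Dproj_i$ is already additive), in particular $\Dproj_i\comp\Dsum = \Dsum\comp\D\Dproj_i$, and then the $\Dsum$ square for $k$ follows from the \emph{same} chase as the $\Dproj_1$ square --- just with $\Dsum$ in place of $\Dproj_1$ throughout. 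So the ordering dependency you build in (``$\Dproj_1$-square unlocks the $\Dsum$-square'') is unnecessary; the paper's argument is shorter and makes clear that the three squares are instances of one pattern. Your concern about needing \ref{ax:Dsum-lin} is unfounded here: the $\Dsum$-naturality square for $\Dproj_i$ comes from $\Dproj_i$'s own additivity, not from linearity of $\Dsum$.
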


\begin{proof}
  Let us do the diagram involving $\Dsum$, the other two being very
  similar. By joint monicity of the $\Dproj_i$'s, it suffices to solve
  the diagram chase below for $i=0,1$.
  \begin{center}
    \begin{tikzcd}[column sep = large, row sep = 2em]
      \D X \arrow[r, "\D \Dpair{h_0}{h_1}"] \arrow[dd, "\Dsum"'] \arrow[rd, "\D h_i"'] \arrow[rdd, "(c)", phantom] \arrow[rd, "(a)" description, phantom, bend left = 25] & \D^2 Y \arrow[r, "\Dsum"] \arrow[d, "\D \Dproj_i"'] \arrow[rd, "(b)", phantom] & \D Y \arrow[d, "\Dproj_i"]  \\
      & \D Y \arrow[r, "\Dsum"]                                                        & Y                           \\
      X \arrow[rr, "\Dpair{h_0}{h_1}"'] \arrow[rru, "h_i"] & {} & \D Y
      \arrow[u, "\Dproj_i"']
    \end{tikzcd}
\end{center}
(a)~is a consequence of \ref{ax:D-chain}, (b)~is a consequence of
\ref{ax:Dproj-lin} and (c)~is the \(\D\)-linearity of $h_i$.
\end{proof}

\begin{proposition}
  Assuming \ref{ax:D-chain} and \ref{ax:Dproj-lin}, $\Dsum$ is
  $\D$-linear if and only if for all $h_0, h_1 \in \category(X, Y)$ summable
  and both $\D$-linear, $h_0 + h_1$ is $\D$-linear.
\end{proposition}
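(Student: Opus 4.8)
The plan is to prove both implications by reducing to results already established about $\D$-linear morphisms, pairing, and composition; no genuine computation is required, since the statement is essentially a repackaging of \Cref{prop:pairing-linear}, \Cref{prop:composition-linear} and \Cref{prop:constructors-linear}.

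For the forward implication I would assume $\Dsum$ is $\D$-linear and take $h_0, h_1 \in \category(X, Y)$ summable and both $\D$-linear. Recall that $h_0 + h_1 = \Dsum \comp \Dpair{h_0}{h_1}$ by definition of the sum. Since \ref{ax:D-chain} and \ref{ax:Dproj-lin} are in force, \Cref{prop:pairing-linear} gives that $\Dpair{h_0}{h_1}$ is $\D$-linear, and \Cref{prop:composition-linear} (which needs only \ref{ax:D-chain}) then gives that the composite $\Dsum \comp \Dpair{h_0}{h_1}$ of the two $\D$-linear morphisms $\Dsum$ and $\Dpair{h_0}{h_1}$ is $\D$-linear. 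Hence $h_0 + h_1$ is $\D$-linear.

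For the backward implication I would invoke \Cref{prop:proj-sum}: $\Dproj_0$ and $\Dproj_1$ are summable, with $\Dproj_0 + \Dproj_1 = \Dsum$. By \ref{ax:Dproj-lin} together with \Cref{prop:constructors-linear}, both $\Dproj_0$ and $\Dproj_1$ are $\D$-linear. Applying the hypothesis to this particular summable pair of $\D$-linear morphisms yields that $\Dproj_0 + \Dproj_1 = \Dsum$ is $\D$-linear, as desired.

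The only point requiring care is the bookkeeping of which axioms each cited result consumes: \Cref{prop:pairing-linear} uses \ref{ax:D-chain} and \ref{ax:Dproj-lin}, \Cref{prop:composition-linear} uses only \ref{ax:D-chain}, and \Cref{prop:constructors-linear} relies on \ref{ax:Dproj-lin} — all available under the hypotheses of the statement. There is no real obstacle here; the argument is two lines in each direction.
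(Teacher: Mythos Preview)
Your proof is correct and follows essentially the same approach as the paper: forward by \Cref{prop:pairing-linear} plus closure of $\D$-linearity under composition, backward by instantiating the hypothesis at $\Dproj_0,\Dproj_1$. Your accounting of which axioms each cited result consumes is more explicit than the paper's, but the argument is the same.
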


\begin{proof}
  Assume that $h_0, h_1$ are \(\D\)-linear.
  By \Cref{prop:pairing-linear}, $\Dpair{h_0}{h_1}$ is \(\D\)-linear
  so $h_0 + h_1 = \Dsum \comp \Dpair{h_0}{h_1}$ is \(\D\)-linear
  (\(\D\)-linearity is closed under composition).
  Conversely, $\Dsum = \Dproj_0 + \Dproj_1$ and $\Dproj_0$, $\Dproj_1$
  are \(\D\)-linear so $\Dsum$ is \(\D\)-linear.
\end{proof}

\begin{cor} \label{prop:everyone-linear} %
  Assuming \ref{ax:Dproj-lin}, \ref{ax:Dsum-lin} and \ref{ax:D-chain},
  $\Dinj_0, \Dinj_1, \Dswap, \Dlift, \DmonadSum$ are all $\D$-linear.
\end{cor}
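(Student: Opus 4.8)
The plan is to observe that all five morphisms $\Dinj_0,\Dinj_1,\Dswap,\Dlift,\DmonadSum$ are built from the $\D$-linear ``atoms'' $\id$, $0$, $\Dproj_0$, $\Dproj_1$ by the three operations that preserve $\D$-linearity under the hypotheses \ref{ax:Dproj-lin}, \ref{ax:Dsum-lin}, \ref{ax:D-chain}: composition (\Cref{prop:composition-linear}, which uses \ref{ax:D-chain}); forming the witness $\Dpair{h_0}{h_1}$ of two summable $\D$-linear morphisms (\Cref{prop:pairing-linear}, which uses \ref{ax:D-chain} and \ref{ax:Dproj-lin}); and forming the sum $h_0+h_1$ of two summable $\D$-linear morphisms, which is $\D$-linear because $\Dsum$ is $\D$-linear by \ref{ax:Dsum-lin} and \Cref{prop:constructors-linear}, so that $h_0+h_1 = \Dsum\comp\Dpair{h_0}{h_1}$ is $\D$-linear by \Cref{prop:pairing-linear} followed by \Cref{prop:composition-linear}. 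The atoms are $\D$-linear: $\Dproj_0$ and $\Dproj_1$ by \ref{ax:Dproj-lin} through \Cref{prop:constructors-linear}, the morphism $0$ by \ref{ax:Dsum-lin} likewise, and $\id$ because $\D\id = \id = \Dpair{\Dproj_0}{\Dproj_1}$ by functoriality of $\D$ (\ref{ax:D-chain}) together with \Cref{prop:proj-sum}, whence $\dcoh\id = \Dproj_1 = \id\comp\Dproj_1$ and $\id\comp 0 = 0$, so $\id$ is $\D$-linear by \Cref{prop:linear-and-additive}.

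Granting this toolkit, the five verifications are mechanical, and every summability hypothesis needed by \Cref{prop:pairing-linear} has already been checked in the discussion that precedes the corresponding definition. Since $\Dinj_0 = \Dpair{\id}{0}$ and $\id\summable 0$ by \ref{ax:D-zero}, \Cref{prop:pairing-linear} gives that $\Dinj_0$ is $\D$-linear, and symmetrically for $\Dinj_1 = \Dpair{0}{\id}$. For $\DmonadSum = \Dpair{\Dproj_0\comp\Dproj_0}{\Dproj_1\comp\Dproj_0 + \Dproj_0\comp\Dproj_1}$: each composite $\Dproj_i\comp\Dproj_j$ occurring is $\D$-linear by \Cref{prop:composition-linear}, hence so is the sum $\Dproj_1\comp\Dproj_0 + \Dproj_0\comp\Dproj_1$ (summable by the computation preceding \Cref{def:DmonadSum}), and the outer witness exists for the same reason, so \Cref{prop:pairing-linear} applies. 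For $\Dlift = \Dpair{\Dpair{\Dproj_0}{0}}{\Dpair{0}{\Dproj_1}}$: the inner witnesses $\Dpair{\Dproj_0}{0}$ and $\Dpair{0}{\Dproj_1}$ are $\D$-linear (directly by \Cref{prop:pairing-linear}, or because they equal $\Dinj_0\comp\Dproj_0$ and $\Dinj_1\comp\Dproj_1$) and are summable by the discussion before \Cref{def:Dlift}, so the outer witness is $\D$-linear. For $\Dswap = \Dpair{\Dpair{\Dproj_0\comp\Dproj_0}{\Dproj_0\comp\Dproj_1}}{\Dpair{\Dproj_1\comp\Dproj_0}{\Dproj_1\comp\Dproj_1}}$: the two inner witnesses are $\D$-linear since all four entries are composites of $\D$-linear maps, they are summable by the discussion before \Cref{def:Dswap}, and a final application of \Cref{prop:pairing-linear} concludes.

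There is no genuine obstacle here: the entire argument is bookkeeping of summability facts, all of which were already established when $\DmonadSum$, $\Dlift$ and $\Dswap$ were introduced, so no new summability computation is required. The only point I would state explicitly rather than treat as routine is that $\id$ is $\D$-linear, since — unlike its additivity — this uses functoriality of $\D$, i.e. \ref{ax:D-chain}.
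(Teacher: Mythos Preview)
Your proposal is correct and follows exactly the approach sketched in the paper's one-line proof (``All these morphisms are obtained through pairing, sums and composition of $\D$-linear maps''), simply unpacked in full detail. Your explicit remark that the $\D$-linearity of $\id$ genuinely requires \ref{ax:D-chain} (via $\D\id=\id$) is a point the paper leaves implicit.
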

\begin{proof}
  All these morphisms are obtained through pairing, sums and composition
  of $\D$-linear maps.
\end{proof}
On a side note, by \Cref{rem:linear} the $\D$-linearity of $\Dproj_i$
means that
$\D \Dproj_i = \Dpair{\Dproj_i \comp \Dproj_0}{\Dproj_i \comp
  \Dproj_1}$.
In particular, it implies that
$\Dswap = \Dpair{\D \Dproj_0}{\D \Dproj_1}$.
This is very useful because the differential of a pair can then be
obtained from the pair of the differentials.

\begin{proposition} \label{prop:pair-derivative} %
  Assume \ref{ax:Dproj-lin}, \ref{ax:D-chain}.
  Let $f_0, f_1 \in \category(X, Y)$ such that $f_0 \summable
  f_1$.
  Then $\D f_0 \summable \D f_1$ and
  $\Dpair{\D f_0}{\D f_1} = \Dswap \comp \D \Dpair{f_0}{f_1}$.
\end{proposition}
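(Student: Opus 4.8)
The plan is to exhibit $\Dswap \comp \D\Dpair{f_0}{f_1}$ as a witness for the summability of $\D f_0$ and $\D f_1$. Since a witness, when it exists, is unique (joint monicity of the $\Dproj_i$), this will simultaneously establish that $\D f_0 \summable \D f_1$ and that $\Dpair{\D f_0}{\D f_1} = \Dswap \comp \D\Dpair{f_0}{f_1}$. By the very definition of the witness, it therefore suffices to check the two equalities $\Dproj_i \comp \Dswap \comp \D\Dpair{f_0}{f_1} = \D f_i$ for $i = 0,1$.

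The key ingredient is the identity $\Dswap = \Dpair{\D\Dproj_0}{\D\Dproj_1}$, i.e.\ $\Dproj_i \comp \Dswap = \D\Dproj_i$ (where $\Dproj_i$ here denotes the projection $\D^2 Y \to \D Y$). This is exactly the ``side note'' recorded just before the statement, but I would spell out that it rests only on \ref{ax:Dproj-lin}: that axiom says $\dcoh\Dproj_i = \Dproj_i \comp \Dproj_1$, and the $\Dproj_i$ are additive since they belong to a left pre-summability structure, so by \Cref{prop:linear-and-additive} each $\Dproj_i$ is $\D$-linear; \Cref{rem:linear} then gives $\D\Dproj_i = \Dpair{\Dproj_i \comp \Dproj_0}{\Dproj_i \comp \Dproj_1}$, and comparing with \Cref{def:Dswap} yields $\Dswap = \Dpair{\D\Dproj_0}{\D\Dproj_1}$. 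In particular the Chain Rule is not needed for this step, so the hypotheses of the proposition are respected.

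It then remains to compose, using this identity together with \ref{ax:D-chain} (functoriality of $\D$) and the defining property $\Dproj_i \comp \Dpair{f_0}{f_1} = f_i$:
\[
\Dproj_i \comp \Dswap \comp \D\Dpair{f_0}{f_1} = \D\Dproj_i \comp \D\Dpair{f_0}{f_1} = \D\bigl(\Dproj_i \comp \Dpair{f_0}{f_1}\bigr) = \D f_i .
\]
This is precisely what was required, and we conclude as described above. I do not expect any real obstacle here: the only thing to notice is that the statement reduces entirely to the identity $\Dswap = \Dpair{\D\Dproj_0}{\D\Dproj_1}$ plus functoriality of $\D$, with a small amount of care about which axioms are invoked.
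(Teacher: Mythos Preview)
Your proof is correct and follows exactly the same approach as the paper: the paper's one-line proof is $\Dproj_i \comp \Dswap \comp \D \Dpair{f_0}{f_1} = \D \Dproj_i \comp \D \Dpair{f_0}{f_1} = \D f_i$, using precisely the identity $\Dswap = \Dpair{\D\Dproj_0}{\D\Dproj_1}$ (recorded just before the proposition) together with functoriality of $\D$. Your version simply makes the witness argument and the provenance of that identity more explicit.
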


\begin{proof} 
  $\Dproj_i \comp \Dswap \comp \D \Dpair{f_0}{f_1} = \D \Dproj_i \comp
  \D \Dpair{f_0}{f_1} = \D f_i$.
\end{proof}

\subsection{The Differentiation Monad}

\begin{proposition}
  Assuming \ref{ax:D-chain}, \ref{ax:Dproj-lin} and \ref{ax:Dsum-lin},
  the following diagrams commute.
  \begin{center}
    \begin{tikzcd}
      \D X \arrow[r, "\D \Dinj_0"] \arrow[rd, "\id_{\D X}"'] & \D^2 X \arrow[d, "\DmonadSum" description] & \D X \arrow[l, "\Dinj_0"'] \arrow[ld, "\id_{\D X}"] \\
      & \D X &
    \end{tikzcd}\quad
    \begin{tikzcd}
      \D^3 X \arrow[r, "\D \DmonadSum_{X}"] \arrow[d, "\DmonadSum_{\D X}"'] & \D^2 X \arrow[d, "\DmonadSum_X"] \\
      \D^2 X \arrow[r, "\DmonadSum_X"'] & \D X
    \end{tikzcd}
  \end{center}
\end{proposition}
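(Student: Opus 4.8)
These two diagrams are exactly the unit and associativity laws for the prospective monad $(\D,\Dinj_0,\DmonadSum)$, and I would prove both by postcomposing with the jointly monic pair $\Dproj_0,\Dproj_1$ and computing on witnesses. The ingredients are \Cref{prop:family-on-pairs} (the action of $\DmonadSum$ on nested witnesses), the fact from \Cref{prop:everyone-linear} that $\Dinj_0$ and $\DmonadSum$ are $\D$-linear — available since we assume \ref{ax:D-chain}, \ref{ax:Dproj-lin}, \ref{ax:Dsum-lin} — which via \Cref{rem:linear} lets me compute $\D\Dinj_0$ and $\D\DmonadSum$ on witnesses as $\D\phi=\Dpair{\phi\comp\Dproj_0}{\phi\comp\Dproj_1}$, together with left compatibility (\Cref{prop:sum-left-compatible}) and the behaviour of $+$ on witnesses (\Cref{prop:pair-sum}).

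\textbf{The two triangles.} Using $\id_{\D X}=\Dpair{\Dproj_0}{\Dproj_1}$ (\Cref{prop:proj-sum}) and $0=\Dpair{0}{0}$ (\Cref{prop:zero-additive}), the morphism $\Dinj_0$ at object $\D X$ is $\Dpair{\Dpair{\Dproj_0}{\Dproj_1}}{\Dpair{0}{0}}$, so \Cref{prop:family-on-pairs} gives $\DmonadSum\comp\Dinj_0=\Dpair{\Dproj_0}{\Dproj_1+0}$, which is $\id_{\D X}$ by neutrality of $0$ (\ref{ax:D-zero}). For the other branch, $\D$-linearity of $\Dinj_0$ and \Cref{prop:sum-left-compatible} give $\D\Dinj_0=\Dpair{\Dinj_0\comp\Dproj_0}{\Dinj_0\comp\Dproj_1}=\Dpair{\Dpair{\Dproj_0}{0}}{\Dpair{\Dproj_1}{0}}$, and then \Cref{prop:family-on-pairs} gives $\DmonadSum\comp\D\Dinj_0=\Dpair{\Dproj_0}{0+\Dproj_1}=\id_{\D X}$, again by \ref{ax:D-zero}.

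\textbf{The square.} I would set $z:=\id_{\D^3X}$ and name its eight iterated projections $a,\dots,h\in\category(\D^3X,X)$, so that $z=\Dpair{\Dpair{\Dpair{a}{b}}{\Dpair{c}{d}}}{\Dpair{\Dpair{e}{f}}{\Dpair{g}{h}}}$, and push $z$ along both sides. On the left, $\D$-linearity of $\DmonadSum_X$ gives $\D\DmonadSum_X\comp z=\Dpair{\DmonadSum_X\comp\Dpair{\Dpair{a}{b}}{\Dpair{c}{d}}}{\DmonadSum_X\comp\Dpair{\Dpair{e}{f}}{\Dpair{g}{h}}}=\Dpair{\Dpair{a}{b+c}}{\Dpair{e}{f+g}}$ by \Cref{prop:family-on-pairs}, and one more application gives $\DmonadSum_X\comp\D\DmonadSum_X\comp z=\Dpair{a}{(b+c)+e}$. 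On the right, reading \Cref{prop:family-on-pairs} at the object $\D X$ gives $\DmonadSum_{\D X}\comp z=\Dpair{\Dpair{a}{b}}{\Dpair{c}{d}+\Dpair{e}{f}}=\Dpair{\Dpair{a}{b}}{\Dpair{c+e}{d+f}}$, where the last step collapses the sum of witnesses using additivity of the projections (\Cref{prop:pair-sum}); a final application of \Cref{prop:family-on-pairs} gives $\DmonadSum_X\comp\DmonadSum_{\D X}\comp z=\Dpair{a}{b+(c+e)}$. Hence the two composites coincide iff $(b+c)+e=b+(c+e)$.

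\textbf{Main point.} The only nontrivial fact used is that this last equality is precisely associativity of $+$ in the partial commutative monoid $\category(\D^3X,X)$, which holds because a differential structure is in particular a left summability structure. All summability side conditions (for instance $b\summable c$, $(b+c)\summable e$, $b\summable(c+e)$, $\Dpair{c}{d}\summable\Dpair{e}{f}$) come for free, since \Cref{prop:family-on-pairs} and \Cref{prop:pair-sum} assert that the witnesses they return are genuinely defined, so the summability needed at each step is produced by the previous one. The real work is therefore just the bookkeeping of the eight components through the nested pairings; once both sides are reduced to $\Dpair{a}{(b+c)+e}$ and $\Dpair{a}{b+(c+e)}$, the proof closes at once by joint monicity of $\Dproj_0,\Dproj_1$.
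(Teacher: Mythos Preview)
Your proof is correct and follows essentially the same route as the paper: the triangles are handled identically, and for the square both arguments use the $\D$-linearity of $\DmonadSum$ (from \Cref{prop:everyone-linear}) and reduce the equality to associativity of the partial sum, the only cosmetic difference being that you package the iterated projections as named elements $a,\ldots,h$ and invoke \Cref{prop:family-on-pairs}, whereas the paper unfolds the defining formula $\DmonadSum=\Dpair{\Dproj_0\Dproj_0}{\Dproj_1\Dproj_0+\Dproj_0\Dproj_1}$ directly.
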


\begin{proof}
  By \Cref{prop:everyone-linear}, $\Dinj_0$ is $\D$-linear.
  Thus by \Cref{rem:linear},
  $\D \Dinj_0 = \Dpair{\Dinj_0 \comp \Dproj_0}{\Dinj_0 \comp \Dproj_1}
  = \Dpair{\Dpair{\Dproj_0}{0}}{\Dpair{\Dproj_1}{0}}$.
  Hence
  $\DmonadSum \comp \D \Dinj_0 = \Dpair{\Dproj_0}{0 + \Dproj_1} =
  \Dpair{\Dproj_0}{\Dproj_1} = \id_{\D X}$ by
  \Cref{prop:family-on-pairs}.
  Next
  $\Dinj_0^{\D X} = \Dpair{\Dpair{\Dproj_0}{\Dproj_1}}{\Dpair{0}{0}}$
  since $\Dpair{\Dproj_0}{\Dproj_1} = \id$ and
  $\Dpair{0^{X,X}}{0^{X,X}} = 0^{\D X, \D X}$.
  By \Cref{prop:family-on-pairs} again,
  $\DmonadSum \comp \Dinj_0 = \Dpair{\Dproj_0}{\Dproj_1 + 0} =
  \Dpair{\Dproj_0}{\Dproj_1} = \id_{\D X}$ so the triangles commute.

  The square is a direct computation.
  We use simple juxtaposition for the composition of projections for the sake 
  of readability.
  The bottom path can be reduced using left compatibility of
  addition (\Cref{prop:sum-left-compatible}) and additivity of the
  projections:
  \begin{align*}
    \DmonadSum \comp \DmonadSum
    &= \Dpair{\Dproj_0 \comp \Dproj_0 \comp \DmonadSum}{ \Dproj_1 \comp \Dproj_0 \comp \DmonadSum + \Dproj_0 \comp \Dproj_1 \comp \DmonadSum} \\
    &= \Dpair{\Dproj_0\comp \Dproj_0\comp \Dproj_0}{\Dproj_1\comp \Dproj_0\comp \Dproj_0 + \Dproj_0 \comp (\Dproj_1\comp  \Dproj_0 + \Dproj_0\comp \Dproj_1)} \\
    &=  \Dpair{\Dproj_0\comp \Dproj_0\comp \Dproj_0}{\Dproj_1\comp \Dproj_0\comp  \Dproj_0 + (\Dproj_0\comp \Dproj_1\comp \Dproj_0 + \Dproj_0\comp \Dproj_0\comp \Dproj_1)}  \,.
  \end{align*}
  The upper path can be reduced by $\D$-linearity of $\DmonadSum$ and
  left compatibility of sum (\Cref{prop:sum-left-compatible}):
  \begin{align*}
    \DmonadSum \comp \D \DmonadSum 
    &= \Dpair{\Dproj_0\comp \Dproj_0 \comp \D \DmonadSum}{ \Dproj_1\comp \Dproj_0 \comp \D \DmonadSum + \Dproj_0 \Dproj_1 \comp \D \DmonadSum}\\
    &= \Dpair{\Dproj_0 \comp \DmonadSum \comp \Dproj_0}{ \Dproj_1 \comp \DmonadSum \comp \Dproj_0 + \Dproj_0 \comp  \DmonadSum \comp \Dproj_1} \\
    &= \Dpair{\Dproj_0\comp \Dproj_0\comp \Dproj_0}{(\Dproj_1\comp \Dproj_0 + \Dproj_0\comp \Dproj_1) \comp \Dproj_0 + \Dproj_0\comp \Dproj_0\comp \Dproj_1)} \\
    &=  \Dpair{\Dproj_0\comp \Dproj_0\comp \Dproj_0}{(\Dproj_1\comp \Dproj_0\comp \Dproj_0 + \Dproj_0\comp \Dproj_1\comp \Dproj_0) + \Dproj_0\comp \Dproj_0\comp \Dproj_1)} \,.
  \end{align*}
  We conclude that those two morphisms are equal, using the
  associativity of the partial sum.
\end{proof}

\begin{cor}
  \ref{ax:Dproj-lin}, \ref{ax:Dsum-lin}, \ref{ax:D-chain} and
  \ref{ax:D-add} imply that $(\D,\Dinj_0,\DmonadSum)$ is a monad.
\end{cor}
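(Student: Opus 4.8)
The plan is to unwind the definition of a monad and observe that each required ingredient and each required equation has already been put in place. Recall that a monad on $\category$ consists of an endofunctor, a unit natural transformation $\Id \Rightarrow \D$, and a multiplication natural transformation $\D^2 \Rightarrow \D$, subject to the two unit triangles and the associativity square.

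First I would note that $\D$ is an endofunctor of $\category$: this is exactly \ref{ax:D-chain}. Since $\D$ is a functor, so is $\D^2$, which is precisely what makes it meaningful to require $\DmonadSum$ to be natural as a transformation $\D^2 \Rightarrow \D$. The families $\Dinj_0$ and $\DmonadSum$ were introduced in \Cref{def:injections} and \Cref{def:DmonadSum}, and \ref{ax:D-add} asserts exactly that both are natural transformations. So the unit $\Dinj_0 : \Id \Rightarrow \D$ and the multiplication $\DmonadSum : \D^2 \Rightarrow \D$ are available together with their naturality.

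It then remains to check the three coherence laws, and for this I would simply invoke the proposition immediately preceding this corollary. Under hypotheses \ref{ax:D-chain}, \ref{ax:Dproj-lin} and \ref{ax:Dsum-lin}, that proposition establishes the commutation of the two triangles, i.e.\ $\DmonadSum \comp \D \Dinj_0 = \id_{\D X} = \DmonadSum \comp \Dinj_{0,\D X}$, and of the square, i.e.\ $\DmonadSum_X \comp \D \DmonadSum_X = \DmonadSum_X \comp \DmonadSum_{\D X}$. These are exactly the unit and associativity axioms of a monad, so no further computation is needed.

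I do not expect a genuine obstacle: the substance has been front-loaded into \ref{ax:D-add} and into the preceding proposition, whose proofs rest on left compatibility of the partial sum (\Cref{prop:sum-left-compatible}), additivity of the projections, and associativity of the partial sum. The only subtlety worth flagging is bookkeeping: naturality of $\DmonadSum$ as a transformation $\D^2 \Rightarrow \D$ presupposes that $\D^2$ is a functor, which is why \ref{ax:D-chain} must appear among the hypotheses even though \ref{ax:D-add} is what literally names the naturality.
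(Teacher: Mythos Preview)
Your proposal is correct and matches the paper's intent: the corollary is stated without proof precisely because it follows immediately from the preceding proposition (which gives the unit triangles and associativity square under \ref{ax:Dproj-lin}, \ref{ax:Dsum-lin}, \ref{ax:D-chain}) together with \ref{ax:D-add} (naturality of $\Dinj_0$ and $\DmonadSum$) and \ref{ax:D-chain} (functoriality of $\D$). Your write-up simply spells out this assembly, including the small bookkeeping point about $\D^2$ being a functor, which is exactly the intended reading.
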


\section{Interpreting the axioms as properties of the derivative}

\label{sec:equivalence-lemmas}

In this section, $\category$ is only assumed to be a category equipped
with a pre-differential structure~(\Cref{def:differential-struct}).
We show that the various axioms of a coherent
differential category correspond to standard rules of the differential
calculus, written as properties about $\dcoh(f)$.
The results of this section are only necessary for \Cref{sec:CDC} but
they also provide some intuitions on the axioms of coherent
differentiation.

All the proofs are similar, and consist in using the joint monicity
of $\Dproj_0$ and $\Dproj_1$ to reduce the axioms to a set of
equations, then show that only one of those equations is non trivial.
In what follows, ``linear'' always means \(\D\)-linear.

\begin{proposition} \label{prop:D-chain} %
  $\D$ is a functor if and only if $\dcoh (\id) = \Dproj_1$ and
  $\dcoh (g \comp f) = \dcoh(g) \comp \Dpair{f \comp
    \Dproj_0}{\dcoh(f)}$.
\end{proposition}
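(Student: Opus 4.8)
The plan is to use the joint monicity of $\Dproj_0$ and $\Dproj_1$ to reduce each of the two functoriality equations $\D\id = \id$ and $\D(g\comp f) = \D g\comp\D f$ to a pair of equations obtained by post-composing with $\Dproj_0$ and with $\Dproj_1$, and then to observe that in each pair the $\Dproj_0$-component holds automatically thanks to the defining equation $\Dproj_0\comp\D f = f\comp\Dproj_0$ of a pre-differential structure, so that the content is entirely carried by the $\Dproj_1$-components, which are exactly the two stated identities on $\dcoh$. Throughout I would use that $\D f = \Dpair{f\comp\Dproj_0}{\dcoh f}$, as recorded in \Cref{def:differential-struct}.

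For the unit law: $\D\id_X = \id_{\D X}$ holds if and only if $\Dproj_0\comp\D\id = \Dproj_0$ and $\Dproj_1\comp\D\id = \Dproj_1$. The first equation reads $\id\comp\Dproj_0 = \Dproj_0$, which always holds; the second is, by definition of $\dcoh$, the equation $\dcoh\id = \Dproj_1$. Hence the unit law is equivalent to $\dcoh\id = \Dproj_1$.

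For the composition law: $\D(g\comp f) = \D g\comp\D f$ holds if and only if $\Dproj_i\comp\D(g\comp f) = \Dproj_i\comp\D g\comp\D f$ for $i=0,1$. For $i=0$ the left-hand side is $(g\comp f)\comp\Dproj_0$ while the right-hand side is $g\comp\Dproj_0\comp\D f = g\comp f\comp\Dproj_0$, so this always holds. For $i=1$ the left-hand side is $\dcoh(g\comp f)$ and the right-hand side is $\dcoh g\comp\D f = \dcoh g\comp\Dpair{f\comp\Dproj_0}{\dcoh f}$. So the composition law is equivalent to $\dcoh(g\comp f) = \dcoh g\comp\Dpair{f\comp\Dproj_0}{\dcoh f}$. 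Combining the two equivalences yields the proposition.

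I expect no genuine obstacle here; the computation is routine. The only points requiring a little care are invoking joint monicity in the correct direction (to deduce equality of two morphisms with codomain $\D(-)$ from equality of their $\Dproj_i$-post-compositions) and keeping track of the fact that $\D f$ is literally $\Dpair{f\comp\Dproj_0}{\dcoh f}$, which is what makes the composite $\dcoh g\comp\Dpair{f\comp\Dproj_0}{\dcoh f}$ appearing on the right-hand side well typed.
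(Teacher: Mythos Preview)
Your proposal is correct and follows essentially the same approach as the paper's own proof: reduce each functoriality equation via the joint monicity of $\Dproj_0,\Dproj_1$, note that the $\Dproj_0$-component is automatic from the pre-differential structure axiom $\Dproj_0\comp\D f = f\comp\Dproj_0$, and identify the $\Dproj_1$-component with the stated $\dcoh$ equation.
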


\begin{proof}
  $\D$ is a functor if and only if $\D \id_X = \id_{\D X}$ and for any $g, f$,
  $\D (g \comp f) = \D g \comp \D f$.
  By joint monicity of the $\Dproj_i$, $\D \id = \id$ if and only if
  $\Dproj_i \circ \D \id = \Dproj_i \circ \id = \Dproj_i$. 
  But $\Dproj_0 \circ \D \id = \id \circ \Dproj_0 = \Dproj_0$ by assumptions
  on Pre-Differential Structures. 
  So $\D \id = \id$ if and only if $\Dproj_1 \circ \D \id = \Dproj_1$, that is,
  if and only if $\dcoh(\id) = \Dproj_1$.
  
  Similarly,
  $\Dproj_0 \comp \D g \comp \D f = g \comp \Dproj_0 \comp \D f = g
  \comp f \comp \Dproj_0 = \Dproj_0 \comp \D (g \comp f)$ by assumption on
  pre-differential-structures. So by joint
  monicity of the $\Dproj_i$, $\D (g \comp f) = \D g \comp \D f$ if
  and only if
  $\Dproj_1 \comp \D (g \comp f) = \Dproj_1 \comp \D g \comp \D f$.
  By definition of $\dcoh$, this
  corresponds exactly to the equation
  $\dcoh (g \comp f) = \dcoh(g) \comp \D f = \dcoh(g) \comp \Dpair{f
    \comp \Dproj_0}{\dcoh(f)}$
\end{proof}

\begin{proposition} \label{prop:D-sum-com}
  Assuming \ref{ax:Dproj-lin}, $\Dsum$ is linear if and only if
  $\D \Dsum = \D \Dproj_0 + \D \Dproj_1$.
  Assuming \ref{ax:Dproj-lin} and \ref{ax:D-chain}, $\Dsum$ is linear
  if and only if for any $f_0, f_1$ that are summable,
  $\D(f_0+f_1) = \D f_0 + \D f_1$ (recall that $\D f_0 \summable \D f_1$ by
  \Cref{prop:pair-derivative}).
\end{proposition}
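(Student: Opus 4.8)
The plan is to reduce both equivalences to the characterization of $\D$-linearity given by \Cref{prop:linear-and-additive} and \Cref{rem:linear}. Since $\Dsum$ is additive in any left pre-summability structure, and since $\Dproj_0 \summable \Dproj_1$ forces $\Dsum\comp\Dproj_0 \summable \Dsum\comp\Dproj_1$ by additivity, $\Dsum$ is $\D$-linear if and only if $\D\Dsum = \Dpair{\Dsum\comp\Dproj_0}{\Dsum\comp\Dproj_1}$. So the whole statement follows once I establish the key identity $\D\Dproj_0 + \D\Dproj_1 = \Dpair{\Dsum\comp\Dproj_0}{\Dsum\comp\Dproj_1}$, assuming only \ref{ax:Dproj-lin}.

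To prove this identity I would first invoke \ref{ax:Dproj-lin} (via \Cref{prop:constructors-linear} and \Cref{rem:linear}) to write $\D\Dproj_i = \Dpair{\Dproj_i\comp\Dproj_0}{\Dproj_i\comp\Dproj_1}$; the pair $\Dpair{\D\Dproj_0}{\D\Dproj_1}$ is then well defined — this is precisely the summability fact recorded just before \Cref{def:Dswap} — and equals $\Dswap$ by \Cref{def:Dswap}. Hence $\D\Dproj_0 + \D\Dproj_1 = \Dsum\comp\Dpair{\D\Dproj_0}{\D\Dproj_1} = \Dsum\comp\Dswap$. It then remains a short computation, using additivity of the $\Dproj_i$ and left compatibility of the sum (\Cref{prop:sum-left-compatible}) together with $\Dproj_0\comp\Dswap = \D\Dproj_0$ and $\Dproj_1\comp\Dswap = \D\Dproj_1$, to check that $\Dproj_i\comp\Dsum\comp\Dswap = \Dsum\comp\Dproj_i$ for $i = 0,1$; by joint monicity of $\Dproj_0,\Dproj_1$ this gives $\Dsum\comp\Dswap = \Dpair{\Dsum\comp\Dproj_0}{\Dsum\comp\Dproj_1}$, which is the desired identity. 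The first equivalence of the proposition is then immediate.

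For the second equivalence I would additionally assume \ref{ax:D-chain}. For the forward direction, suppose $\Dsum$ is $\D$-linear and $f_0 \summable f_1$; then $f_0 + f_1 = \Dsum\comp\Dpair{f_0}{f_1}$, so functoriality of $\D$ gives $\D(f_0 + f_1) = \D\Dsum\comp\D\Dpair{f_0}{f_1} = \Dsum\comp\Dswap\comp\D\Dpair{f_0}{f_1}$ by the first part, and \Cref{prop:pair-derivative} rewrites $\Dswap\comp\D\Dpair{f_0}{f_1}$ as $\Dpair{\D f_0}{\D f_1}$ (with $\D f_0 \summable \D f_1$), whence $\D(f_0 + f_1) = \Dsum\comp\Dpair{\D f_0}{\D f_1} = \D f_0 + \D f_1$. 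For the converse, it suffices to instantiate the hypothesis at $f_0 = \Dproj_0$, $f_1 = \Dproj_1$ (summable with sum $\Dsum$ by \Cref{prop:proj-sum}) to obtain $\D\Dsum = \D\Dproj_0 + \D\Dproj_1$, and conclude by the first equivalence.

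The only delicate point is the partial-sum bookkeeping: at each rewriting step one must verify that the pairs being summed are actually summable. But every such side condition is supplied by the ambient structure — additivity of the projections and of $\Dsum$, \Cref{prop:sum-left-compatible}, and the summability facts established when $\Dswap$ was introduced — so I expect no genuine obstacle here, only the need to track these conditions carefully throughout the computation.
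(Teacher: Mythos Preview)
Your proposal is correct and follows the same overall strategy as the paper: reduce the first equivalence to the key identity $\D\Dproj_0 + \D\Dproj_1 = \Dpair{\Dsum\comp\Dproj_0}{\Dsum\comp\Dproj_1}$, and derive the second equivalence from the first via functoriality and \Cref{prop:pair-derivative} (with the converse obtained by instantiating at $\Dproj_0,\Dproj_1$). The only minor difference is in how you establish the key identity: the paper computes $\D\Dproj_0 + \D\Dproj_1$ directly via \Cref{prop:pair-sum} (componentwise addition of the two pairs $\Dpair{\Dproj_i\Dproj_0}{\Dproj_i\Dproj_1}$, then left compatibility to factor out $\Dsum$), whereas you route through $\Dsum\comp\Dswap$ and check the result coordinatewise using $\Dproj_j\Dproj_i\comp\Dswap = \Dproj_i\Dproj_j$ --- both arguments are short and rely on the same ingredients.
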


\begin{proof}
  By linearity of $\Dproj_i$,
  $\D \Dproj_i = \Dpair{\Dproj_i \comp \Dproj_0}{\Dproj_i \comp
    \Dproj_1}$ so by \Cref{prop:pair-sum},
  $\D \Dproj_0 + \D \Dproj_1 = \Dpair{\Dproj_0 \comp \Dproj_0 +
    \Dproj_1 \comp \Dproj_0} {\Dproj_0 \comp \Dproj_1 + \Dproj_1 \comp
    \Dproj_1} = \Dpair{(\Dproj_0 + \Dproj_1) \comp \Dproj_0}{(\Dproj_0
    + \Dproj_1) \comp \Dproj_1} = \Dpair{\Dsum \comp \Dproj_0}{\Dsum
    \comp \Dproj_1}$.
  But $\Dsum$ is linear if and only if
  $\D \Dsum = \Dpair{\Dsum \comp \Dproj_0}{\Dsum \comp \Dproj_1}$ by
  \Cref{prop:linear-and-additive}, that is, if and only if
  $\D \Dsum = \D \Dproj_0 + \D \Dproj_1$.

  For the second part of the lemma, notice that the right statement
  for $f_0 = \Dproj_0$ and $f_1 = \Dproj_1$ is exactly
  $\D \Dsum = \D \Dproj_0 + \D \Dproj_1$, so the converse direction
  holds. For the forward direction, notice that
  \begin{align*}
    \D (f_0 + f_1)
    &= \D (\Dsum \comp \Dpair{f_0}{f_1}) \\
    &= \D \Dsum \comp \D \Dpair{f_0}{f_1}  \text{\quad by \ref{ax:D-chain}} \\
    &= (\D \Dproj_0 + \D \Dproj_1) \comp \D \Dpair{f_0}{f_1}  \text{\quad by assumptions}  \\
    &= \D \Dproj_0  \comp \D \Dpair{f_0}{f_1}
      + \D \Dproj_1  \comp \D \Dpair{f_0}{f_1}    \\
    &= \D f_0 + \D f_1  \text{\quad by \ref{ax:D-chain}}
\end{align*}
\end{proof}

\begin{cor} \label{prop:Dsum-lin} %
  Assuming \ref{ax:Dproj-lin} and \ref{ax:D-chain}, $\Dsum$ is linear
  if and only if for any $f_0, f_1$ that are summable,
  $\dcoh (f_0+f_1) = \dcoh (f_0) + \dcoh (f_1)$
\end{cor}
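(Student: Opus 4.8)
The plan is to obtain this corollary directly from \Cref{prop:D-sum-com}, which already characterizes the linearity of $\Dsum$ (under \ref{ax:Dproj-lin} and \ref{ax:D-chain}) by the identity $\D(f_0 + f_1) = \D f_0 + \D f_1$ for all summable $f_0, f_1$. It therefore suffices to show that, under the same hypotheses, the equation $\D(f_0 + f_1) = \D f_0 + \D f_1$ is equivalent to $\dcoh(f_0 + f_1) = \dcoh(f_0) + \dcoh(f_1)$, and then to quote \Cref{prop:D-sum-com}.

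First I would fix summable $f_0, f_1 \in \category(X, Y)$ and note, by \Cref{prop:pair-derivative}, that $\D f_0 \summable \D f_1$ (so the left-hand sides make sense), and that $\dcoh f_0 \summable \dcoh f_1$ follows by additivity of $\Dproj_1$. Then, writing $\D f_i = \Dpair{f_i \comp \Dproj_0}{\dcoh f_i}$ and using \Cref{prop:pair-sum} (applicable since the projections are additive by \ref{ax:Dproj-lin}) together with left compatibility of the sum (\Cref{prop:sum-left-compatible}), I would compute
\[
\D f_0 + \D f_1 = \Dpair{(f_0 + f_1) \comp \Dproj_0}{\dcoh f_0 + \dcoh f_1}\,.
\]
Since also $\D(f_0 + f_1) = \Dpair{(f_0 + f_1) \comp \Dproj_0}{\dcoh(f_0 + f_1)}$ by the naturality of $\Dproj_0$ built into a pre-differential structure, both of these morphisms $\D X \to \D Y$ have the same composite with $\Dproj_0$; hence, by joint monicity of $\Dproj_0$ and $\Dproj_1$, they are equal if and only if their composites with $\Dproj_1$ agree, that is, if and only if $\dcoh(f_0 + f_1) = \dcoh f_0 + \dcoh f_1$. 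Combining this equivalence with \Cref{prop:D-sum-com} gives the statement.

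I do not expect a genuine obstacle here: the argument is a routine repackaging of \Cref{prop:D-sum-com} via the identity $\D f = \Dpair{f \comp \Dproj_0}{\dcoh f}$. The only point that needs care is the bookkeeping of which partial sums are well defined before the equations can even be written down, but every sum appearing above has already been justified by \Cref{prop:pair-derivative,prop:pair-sum,prop:sum-left-compatible}.
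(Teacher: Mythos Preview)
Your proposal is correct and follows essentially the same route as the paper: both reduce to \Cref{prop:D-sum-com}, expand $\D f_i = \Dpair{f_i \comp \Dproj_0}{\dcoh f_i}$, use \Cref{prop:pair-sum} to compute $\D f_0 + \D f_1$ componentwise, and invoke left compatibility (\Cref{prop:sum-left-compatible}) to see that the first coordinates always agree. Your version is slightly more explicit about the summability bookkeeping (citing \Cref{prop:pair-derivative} for $\D f_0 \summable \D f_1$), but the argument is the same.
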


\begin{proof}
  The linearity of $\Dsum$ is equivalent to
  $\D (f_0 + f_1) = \D f_0 + \D f_1$ for any $f_0, f_1$ summable.
  By \Cref{prop:pair-sum}, this is equivalent to
  $\Dpair{(f_0 + f_1) \comp \Dproj_0}{\dcoh (f_0 + f_1)} = \Dpair{f_0
    \comp \Dproj_0 + f_1 \comp \Dproj_0}{\dcoh (f_0) + \dcoh (f_1)}$.
  The left compatibility of addition (\Cref{prop:sum-left-compatible})
  ensures that the first coordinates are always equal.
  So $\Dsum$ is linear if and only if for all $f_0 \summable f_1$,
  $\dcoh (f_0+f_1) = \dcoh (f_0) + \dcoh (f_1)$.
\end{proof}

\begin{proposition} \label{prop:derivative-additive-zero} %
  The following assertions are equivalent:
  \begin{itemize}
  \item[(1)] $\Dinj_0$ is natural;
  \item[(2)] For any $f \in \category (X, Y)$,
    $ \dcoh f \comp \Dinj_0 = 0$;
  \item[(3)] For any $f \in \category (X, Y)$, any object $U$ and
    $x \in \category(U, X)$, $\dcoh f \comp \Dpair{x}{0} = 0$.
  \end{itemize}
\end{proposition}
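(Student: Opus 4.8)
The plan is to prove the two equivalences $(1)\Leftrightarrow(2)$ and $(2)\Leftrightarrow(3)$ separately. For $(1)\Leftrightarrow(2)$, I would spell out the naturality square of $\Dinj_0$, namely $\D f\comp\Dinj_0^X=\Dinj_0^Y\comp f$ for $f\in\category(X,Y)$, and use the joint monicity of $\Dproj_0,\Dproj_1$ to split it into its $\Dproj_0$- and $\Dproj_1$-components. The $\Dproj_0$-component is automatically satisfied: postcomposing the left-hand side with $\Dproj_0$ gives $\Dproj_0\comp\D f\comp\Dinj_0^X=f\comp\Dproj_0\comp\Dinj_0^X=f$, using the structural equation $\Dproj_0\comp\D f=f\comp\Dproj_0$ of a pre-differential structure (\Cref{def:differential-struct}) together with $\Dproj_0\comp\Dinj_0=\id$, which is immediate from $\Dinj_0=\Dpair{\id}{0}$; postcomposing the right-hand side with $\Dproj_0$ gives $\Dproj_0\comp\Dinj_0^Y\comp f=f$ in the same way. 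The $\Dproj_1$-component reads $\Dproj_1\comp\D f\comp\Dinj_0^X=\dcoh f\comp\Dinj_0$ on the left by definition of $\dcoh$, and $\Dproj_1\comp\Dinj_0^Y\comp f=0\comp f=0$ on the right, using $\Dproj_1\comp\Dinj_0=0$ and the ambient identity $0\comp f=0$. Hence the square commutes precisely when $\dcoh f\comp\Dinj_0=0$, which is $(2)$.

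For $(2)\Leftrightarrow(3)$, the implication $(3)\Rightarrow(2)$ is the instance $U=X$, $x=\id_X$, since $\Dpair{\id_X}{0}=\Dinj_0$. For the converse, the key observation is that $\Dinj_0\comp x$ is a witness for the summability of $x$ and $0$: indeed $\Dproj_0\comp\Dinj_0\comp x=x$ and $\Dproj_1\comp\Dinj_0\comp x=0$. By joint monicity this simultaneously shows that $\Dpair{x}{0}$ is defined (which statement $(3)$ presupposes, and which could alternatively be obtained from \ref{ax:D-zero} and left compatibility, \Cref{prop:sum-left-compatible}) and that $\Dpair{x}{0}=\Dinj_0\comp x$. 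Then $\dcoh f\comp\Dpair{x}{0}=\dcoh f\comp\Dinj_0\comp x=0\comp x=0$ by $(2)$.

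The argument is entirely formal, with no real obstacle. The only steps worth a moment's care are the reduction of a naturality square to its coordinates via joint monicity, the observation that the $\Dproj_0$-coordinate is forced by the single structural axiom of a pre-differential structure so that all content lies in the $\Dproj_1$-coordinate, and the identification $\Dpair{x}{0}=\Dinj_0\comp x$, which also quietly handles the well-definedness implicit in $(3)$.
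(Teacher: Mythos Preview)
Your proof is correct and follows essentially the same approach as the paper: reduce the naturality square to its two coordinates via joint monicity, observe that the $\Dproj_0$-coordinate is automatic from the pre-differential structure, and identify the $\Dproj_1$-coordinate with condition~(2); then pass between~(2) and~(3) via the identity $\Dpair{x}{0}=\Dinj_0\comp x$. The only cosmetic difference is that you spell out this last identity explicitly, whereas the paper leaves it implicit when it says ``composing by $x$ on the right''.
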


\begin{proof}
  (1) $\Equiv$ (2).
  By joint monicity of the $\Dproj_i$, for any
  $f \in \category(X, Y)$, $\D f \comp \Dinj_0 = \Dinj_0 \comp f$ if
  and only if
  $\Dproj_0 \comp \D f \comp \Dinj_0 = \Dproj_0 \comp \Dinj_0 \comp f
  = f$ and
  $\Dproj_1 \comp \D f \comp \Dinj_0 = \Dproj_1 \comp \Dinj_0 \comp f
  = 0$.
  The first condition always hold by naturality of $\Dproj_0$ and
  definition of $\Dinj_0$.
  So $\Dinj_0$ is natural if and only if the second identity holds.
  This equation is precisely~(2).

  (2) $\Equiv$ (3).
  The forward direction is directly obtained by composing the identity
  of $(2)$ by $x$ on the right.
  The reverse is directly obtained by applying the equation of~(3) to
  $x = \id_X$.
\end{proof}

\begin{proposition} \label{prop:derivative-additive-sum} %
  Assuming \ref{ax:Dproj-lin} and \ref{ax:D-chain}, the following
  assertions are equivalent:
  \begin{itemize}
  \item[(1)] $\DmonadSum$ is natural;
  \item[(2)] for any $f \in \category (X, Y)$,
    $\dcoh f \comp \D \Dproj_0 \summable \dcoh f \comp \Dproj_0 $ and
    $ \dcoh f \comp \DmonadSum = \dcoh f \comp \D \Dproj_0 + \dcoh f
    \comp \Dproj_0 $;
  \item[(3)] for any $f \in \category (X, Y)$, any object $U$ and any
    $x, u, v \in \category(U, X)$ that are summable,
    $\dcoh f \comp \Dpair{x}{u} \summable \dcoh f \comp \Dpair{x}{v}$ and
    \[
      \dcoh f \comp \Dpair{x}{u+v}
      = \dcoh f \comp \Dpair{x}{u} + \dcoh f \comp \Dpair{x}{v}\,.
    \]
  \end{itemize}
\end{proposition}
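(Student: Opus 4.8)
I would follow the template of \Cref{prop:derivative-additive-zero}. First establish $(1)\Leftrightarrow(2)$ by unfolding the naturality square for $\DmonadSum$ through joint monicity of the $\Dproj_i$: naturality of $\DmonadSum$ at $f\in\category(X,Y)$ is the equation $\DmonadSum_Y\comp\D^2 f=\D f\comp\DmonadSum_X$ in $\category(\D^2 X,\D Y)$, and by joint monicity it is equivalent to its post-compositions with $\Dproj_0$ and with $\Dproj_1$. The $\Dproj_0$-component is automatic: using $\Dproj_0\comp\DmonadSum=\Dproj_0\comp\Dproj_0$ and the naturality of $\Dproj_0$ built into pre-differential structures, both sides collapse to $f\comp\Dproj_0\comp\Dproj_0$. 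The $\Dproj_1$-component carries all the content: post-compose with $\Dproj_1$, use $\Dproj_1\comp\DmonadSum=\Dproj_1\comp\Dproj_0+\Dproj_0\comp\Dproj_1$ and left compatibility of $+$ (\Cref{prop:sum-left-compatible}), the naturality of $\Dproj_0$ once more, and then \ref{ax:Dproj-lin} to rewrite $\Dproj_0\comp\Dproj_1$ as $\dcoh\Dproj_0=\Dproj_1\comp\D\Dproj_0$ together with functoriality \ref{ax:D-chain} to slide $\D\Dproj_0$ past $\D^2 f$ as $\D f\comp\D\Dproj_0$. The left-hand side then becomes the necessarily well-defined sum $\dcoh f\comp\Dproj_0+\dcoh f\comp\D\Dproj_0$ and the right-hand side $\dcoh f\comp\DmonadSum$, so up to \ref{ax:D-com} this is precisely (2) together with the summability it presupposes.

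For $(2)\Rightarrow(3)$: given $x,u,v\in\category(U,X)$ with $(x,u,v)$ summable, I would precompose the identity of (2) by the witness $k\defEq\Dpair{\Dpair{x}{u}}{\Dpair{v}{0}}\in\category(U,\D^2 X)$. Since \ref{ax:Dproj-lin} makes $\Dproj_0$ $\D$-linear (\Cref{prop:constructors-linear}), \Cref{rem:linear} gives $\D\Dproj_0=\Dpair{\Dproj_0\comp\Dproj_0}{\Dproj_0\comp\Dproj_1}$; combined with \Cref{prop:family-on-pairs} this yields $\Dproj_0\comp k=\Dpair{x}{u}$, $\D\Dproj_0\comp k=\Dpair{x}{v}$ and $\DmonadSum\comp k=\Dpair{x}{u+v}$. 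Distributing the precomposition over $+$ with \Cref{prop:sum-left-compatible} and reordering with \ref{ax:D-com} then produces exactly the summability and the equation of (3). The one nonroutine point, and the main obstacle, is that $k$ must exist, i.e.\ $\Dpair{x}{u}\summable\Dpair{v}{0}$. I would obtain this from \ref{ax:D-witness}: it suffices that $\Dsum\comp\Dpair{x}{u}\summable\Dsum\comp\Dpair{v}{0}$, that is $(x+u)\summable v$, which holds by \Cref{prop:arbitrary-sum} applied to the summable family $(x,u,v)$ and the partition $\{x,u\},\{v\}$.

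For $(3)\Rightarrow(2)$: instantiate (3) at $U=\D^2 X$ with $x=\Dproj_0\comp\Dproj_0$, $u=\Dproj_1\comp\Dproj_0$ and $v=\Dproj_0\comp\Dproj_1$, which form a summable family by the computation preceding \Cref{def:DmonadSum}. By joint monicity $\Dpair{x}{u}=\Dproj_0$; by \ref{ax:Dproj-lin} and \Cref{rem:linear} $\Dpair{x}{v}=\D\Dproj_0$; and by construction $\Dpair{x}{u+v}=\DmonadSum$. Hence the two clauses of (3) become, up to \ref{ax:D-com}, exactly the two clauses of (2). Throughout, the only genuinely delicate step is the use of \ref{ax:D-witness} to realise the bundled witness $k$; everything else is bookkeeping with joint monicity, left compatibility and \Cref{prop:family-on-pairs}.
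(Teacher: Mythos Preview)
Your proposal is correct and follows essentially the same route as the paper: reduce naturality of $\DmonadSum$ to its $\Dproj_0$- and $\Dproj_1$-components via joint monicity, show the $\Dproj_0$-component is automatic, and rewrite the $\Dproj_1$-component into (2) using \ref{ax:Dproj-lin}, \ref{ax:D-chain} and left compatibility; then pass between (2) and (3) by precomposing with an appropriate $\D^2$-witness or by instantiating at $x=\Dproj_0\Dproj_0$, $u=\Dproj_1\Dproj_0$, $v=\Dproj_0\Dproj_1$. The only notable difference is that you are more explicit than the paper about why the witness $k=\Dpair{\Dpair{x}{u}}{\Dpair{v}{0}}$ exists (via \ref{ax:D-witness} and \Cref{prop:arbitrary-sum}), a point the paper leaves implicit; your choice of $k$ also swaps the roles of $u$ and $v$ relative to the paper's $\Dpair{\Dpair{x}{v}}{\Dpair{u}{0}}$, which is why you need \ref{ax:D-com} in a couple of places, but this is inessential.
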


\begin{proof}
  (1) $\Equiv$ (2).
  By joint monicity of the $\Dproj_i$, for any
  $f \in \category(X, Y)$,
  $\D f \comp \DmonadSum = \DmonadSum \comp \D^2 f$ if and only if
  $\Dproj_0 \comp \D f \comp \DmonadSum = \Dproj_0 \comp \DmonadSum
  \comp \D^2 f$ and
  $\Dproj_1 \comp \D f \comp \DmonadSum = \Dproj_1 \comp \DmonadSum
  \comp \D^2 f$.
  The equation  $\Dproj_0 \comp \D f \comp \DmonadSum = 
  \Dproj_0 \comp \DmonadSum \comp \D^2 f$ always holds. Indeed
  \begin{align*}
    \Dproj_0 \comp \D f \comp \DmonadSum
    &= f \comp \Dproj_0 \comp \DmonadSum
       \text{\quad by naturality of $\Dproj_0$}   \\
    &= f \comp \Dproj_0 \comp \Dproj_0
       \text{\quad by definition of $\DmonadSum$}  \\
    \Dproj_0 \comp \DmonadSum \comp \D^2 f
    &= \Dproj_0 \comp \Dproj_0 \comp \D^2 f
        \text{\quad by definition of $\DmonadSum$}  \\
    &= f \comp \Dproj_0 \comp \Dproj_0
       \text{\quad by naturality of $\Dproj_0$} \, .
 \end{align*}
 The left hand side of the equation 
 $\Dproj_1 \comp \D f \comp \DmonadSum = 
 \Dproj_1 \comp \DmonadSum \comp \D^2 f$ is
 $\dcoh(f) \circ \DmonadSum$ by definition.
 The right hand side rewrites as follows.
 \begin{align*} 
    \Dproj_1 \comp \DmonadSum \comp \D^2 f
   &= (\Dproj_0 \comp \Dproj_1 + \Dproj_1 \comp \Dproj_0) \comp \D^2 f
   &  \\
   &= \Dproj_0 \comp \Dproj_1 \comp \D^2 f
     + \Dproj_1 \comp \Dproj_0 \comp \D^2 f
    \text{\quad by \Cref{prop:sum-left-compatible}}  \\
   &= \Dproj_1 \comp \D \Dproj_0 \comp \D^2 f
     + \Dproj_1 \comp \Dproj_0 \comp \D^2 f
    \text{\quad by $\D$-linearity of $\Dproj_0$} \\
   &= \Dproj_1 \comp \D (\Dproj_0 \comp \D f)
     + \Dproj_1 \comp \Dproj_0 \comp \D^2 f
    \text{\quad by \ref{ax:D-chain}}  \\
   &= \Dproj_1 \comp \D (f \comp \Dproj_0)
     + \Dproj_1 \comp \D f \comp \Dproj_0
     \text{\quad by naturality of $\Dproj_0$}  \\
   &= \Dproj_1 \comp \D f \comp \D \Dproj_0
     + \Dproj_1 \comp \D f \comp \Dproj_0
    \text{\quad by \ref{ax:D-chain}} \\
   &= \dcoh f \comp \D \Dproj_0 + \dcoh f \circ \Dproj_0
    \text{\quad by definition}
 \end{align*}
 So this second equation under consideration is equivalent to the
 equation of~(2).

 (2) $\Equiv$ (3).
 Recall that $\D \Dproj_0 = \Dpair{\Dproj_0 \comp \Dproj_0}{\Dproj_0 \comp \Dproj_1}$
 by linearity of $\Dproj_0$.
 Then the forward direction is directly obtained by composing the equation
 of~(2) with $\Dpair{\Dpair{x}{v}}{\Dpair{u}{0}}$ on the right.
 The converse is directly obtained by applying the equation of~(3) to
 $x = \Dproj_0 \comp \Dproj_0$, $u = \Dproj_1 \comp \Dproj_0$ and
 $v = \Dproj_0 \comp \Dproj_1$.
 %
\end{proof}

\begin{remark} \label{rem:dcoh-dcoh} %
  Notice that
  $\dcoh(\dcoh(f)) = \Dproj_1 \comp \D (\Dproj_1 \comp \D f) =
  \Dproj_1 \comp \D \Dproj_1 \comp \D^2 f = \Dproj_1 \comp \Dproj_1
  \comp \D^2 f$ assuming \ref{ax:D-chain} and
  \ref{ax:Dproj-lin}.
  Thus, $\dcoh(\dcoh(f))$ is nothing more than the rightmost coordinate of
  $\D^2 f$. This will be useful for what follows in this part.
\end{remark}

\begin{proposition} \label{prop:D-lin} %
  Assuming \ref{ax:Dproj-lin}, \ref{ax:D-chain} and the naturality of
  $\Dinj_0$, the following assertions are
  equivalent:
  \begin{enumerate}
  \item $\Dlift$ is natural;
  \item for all morphism $f \in \category(X, Y)$,
    $\dcoh(\dcoh (f)) \comp \Dlift = \dcoh (f)$;
  \item for all morphism $f \in \category(X, Y)$, for all morphisms
    $x, u \in \category(U, X)$ summable,
    \[
      \dcoh(\dcoh(f)) \comp \Dpair{\Dpair{x}{0}}{\Dpair{0}{u}} =
      \dcoh(f) \comp \Dpair{x}{u}\,.
    \]
  \end{enumerate}
\end{proposition}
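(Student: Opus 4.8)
The plan is to follow the template used repeatedly in this section: unfold the naturality of $\Dlift$ into a system of equations obtained by post-composing with jointly monic projections, observe that all but one of them hold automatically under the standing hypotheses, and identify the remaining one with assertion~(2); then pass from~(2) to~(3) by the usual trick of composing with a witness on the right and specializing.

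For (1)~$\Equiv$~(2), first note that $\Dlift$ is natural iff for every $f \in \category(X,Y)$ we have $\D^2 f \comp \Dlift = \Dlift \comp \D f$, and that since $\Dproj_0, \Dproj_1$ are jointly monic the four composites $\Dproj_i \comp \Dproj_j$ (for $i,j \in \{0,1\}$), viewed as morphisms out of $\D^2 Y$, are jointly monic as well; so it suffices to verify the equation after post-composing with each of them. From the definition of $\Dlift$ one reads off $\Dproj_0 \comp \Dlift = \Dinj_0 \comp \Dproj_0$ and $\Dproj_1 \comp \Dlift = \Dinj_1 \comp \Dproj_1$, hence $\Dproj_i \comp \Dproj_j \comp \Dlift$ is $\Dproj_0$, $0$, $0$, $\Dproj_1$ for $(i,j)=(0,0),(1,0),(0,1),(1,1)$ respectively. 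On the left-hand side I would simplify $\Dproj_i \comp \Dproj_j \comp \D^2 f \comp \Dlift$ using naturality of $\Dproj_0$ (part of the pre-differential structure), the $\D$-linearity of $\Dproj_0$ (equivalent to \ref{ax:Dproj-lin} by \Cref{prop:constructors-linear}), \ref{ax:D-chain}, and the naturality of $\Dinj_0$ in the form $\dcoh g \comp \Dinj_0 = 0$ coming from \Cref{prop:derivative-additive-zero}; for the $(1,1)$ slot I would additionally use \Cref{rem:dcoh-dcoh} to recognize $\Dproj_1 \comp \Dproj_1 \comp \D^2 f = \dcoh(\dcoh(f))$. The three equations with $(i,j) \neq (1,1)$ should collapse to trivial identities ($f \comp \Dproj_0 = f \comp \Dproj_0$ and $0 = 0$), while the $(1,1)$ equation becomes $\dcoh(\dcoh(f)) \comp \Dlift = \dcoh(f)$, which is precisely~(2).

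For (2)~$\Equiv$~(3), the forward direction is obtained by post-composing the equation of~(2) with $\Dpair{x}{u}$ and using $\Dlift \comp \Dpair{x}{u} = \Dpair{\Dpair{x}{0}}{\Dpair{0}{u}}$ from \Cref{prop:family-on-pairs}. The reverse direction follows by instantiating~(3) at $U = \D X$, $x = \Dproj_0$, $u = \Dproj_1$, which are summable with witness $\id$ by \Cref{prop:proj-sum}: then $\Dpair{\Dpair{\Dproj_0}{0}}{\Dpair{0}{\Dproj_1}} = \Dlift$ and $\Dpair{\Dproj_0}{\Dproj_1} = \id$, so~(3) specializes to~(2).

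I expect the main obstacle to be the bookkeeping in the $(0,1)$ slot: there one must first rewrite $\Dproj_0 \comp \Dproj_1 = \dcoh \Dproj_0$ by \ref{ax:Dproj-lin}, push $\D^2 f$ through using \ref{ax:D-chain} and naturality of $\Dproj_0$ to reach $\dcoh f \comp \D \Dproj_0$, then compute $\D \Dproj_0 \comp \Dlift = \Dinj_0 \comp \Dproj_0$ (using $\D \Dproj_0 = \Dpair{\Dproj_0 \comp \Dproj_0}{\Dproj_0 \comp \Dproj_1}$) and finish with the naturality of $\Dinj_0$; the $(1,0)$ slot is analogous but shorter. Everything else is a routine diagram chase of the kind already carried out several times in this section.
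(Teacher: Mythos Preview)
Your proposal is correct and follows essentially the same approach as the paper: reduce naturality of $\Dlift$ to the four equations $\Dproj_i\comp\Dproj_j\comp\D^2 f\comp\Dlift=\Dproj_i\comp\Dproj_j\comp\Dlift\comp\D f$ via joint monicity, check that the three with $(i,j)\neq(1,1)$ hold automatically from \ref{ax:Dproj-lin}, \ref{ax:D-chain} and the naturality of $\Dinj_0$, and identify the remaining one with~(2) via \Cref{rem:dcoh-dcoh}; the passage (2)~$\Leftrightarrow$~(3) is also identical to the paper's. Your upfront observation $\Dproj_j\comp\Dlift=\Dinj_j\comp\Dproj_j$ and your anticipated handling of the $(0,1)$ slot match the paper's computations line for line.
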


\begin{proof}
  By joint monicity of the $\Dproj_i$, $\Dlift$ is natural if and only
  if for all $f$ and for all
  $i, j \in \{0, 1\}, \Dproj_i \comp \Dproj_j \comp \D^2 f \comp
  \Dlift = \Dproj_i \comp \Dproj_j \comp \Dlift \comp \D f$.
  By \Cref{rem:dcoh-dcoh} (and because
  $\Dproj_1 \comp \Dproj_1 \comp \Dlift = \Dproj_1$), the equation for
  $i = j = 1$ corresponds exactly to the equation
  $\dcoh(\dcoh (f)) \comp \Dlift = \dcoh (f)$.
  Thus, it suffices to show that
  $\Dproj_i \comp \Dproj_j \comp \D^2 f \comp \Dlift = \Dproj_i \comp
  \Dproj_j \comp \Dlift \comp \D f$ always holds when
  $(i, j) \neq (1, 1)$ to conclude that $(1)$ is equivalent to $(2)$.
  \begin{itemize}
  \item Case $i = 0, j = 0$:
    $\Dproj_0 \comp \Dproj_0 \comp \Dlift \comp \D f = \Dproj_0 \comp
    \D f = f \comp \Dproj_0$ and
    $\Dproj_0 \comp \Dproj_0 \comp \D^2 f \comp \Dlift = f \comp
    \Dproj_0 \comp \Dproj_0 \comp \Dlift = f \comp \Dproj_0$;
  \item Case $i = 1, j = 0$:
    $\Dproj_0 \comp \Dproj_1 \comp \Dlift \comp \D f = 0 \comp \D f =
    0$ and
    $\Dproj_1 \comp \Dproj_0 \comp \D^2 f \comp \Dlift = \Dproj_1
    \comp \D f \comp \Dproj_0 \comp \Dlift = \Dproj_1 \comp \D f \comp
    \Dinj_0 \comp \Dproj_0 = \Dproj_1 \comp \Dinj_0 \comp f \comp
    \Dproj_0 = 0$ thanks to the naturality of $\Dinj_0$;
  \item Case $i = 0, j = 1$:
    $\Dproj_0 \comp \Dproj_1 \comp \Dlift \comp \D f = 0 \comp \D f =
    0$ and
    $\Dproj_0 \comp \Dproj_1 \comp \D^2 f \comp \Dlift = \Dproj_1
    \comp \D \Dproj_0 \comp \D^2 f \comp \Dlift = \Dproj_1 \comp \D f
    \comp \D \Dproj_0 \comp \Dlift = \Dproj_1 \comp \D f \comp \Dinj_0
    \comp \Dproj_0 = \Dproj_1 \comp \Dinj_0 \comp f \comp \Dproj_0 =
    0$ thanks to the naturality of $\Dinj_0$.
\end{itemize}

Next~(2) is a particular case of~(3) for $x = \Dproj_0 $ and
$u = \Dproj_1$.  
Conversely, assuming~(2) we have that
$\dcoh(\dcoh(f)) \comp \Dpair{\Dpair{x}{0}}{\Dpair{0}{u}} =
\dcoh(\dcoh(f)) \comp \Dlift \comp \Dpair{x}{u} = \dcoh(f) \comp
\Dpair{x}{u}$. 
\end{proof}

\begin{proposition} \label{prop:D-schwarz} %
  Assuming \ref{ax:Dproj-lin} and \ref{ax:D-chain}, the following
  assertions are equivalent: 
  \begin{enumerate}
  \item $\Dswap$ is natural;
  \item for all morphism $f \in \category(X, Y)$,
    $\dcoh(\dcoh (f)) \comp \Dswap = \dcoh(\dcoh (f))$;
  \item for all morphism $f \in \category(X, Y)$ and
    $x, u, v, w \in \category(U, X)$ that are summable,
    \begin{align*}
       \dcoh(\dcoh (f)) \comp \Dpair{\Dpair{x}{u}}{\Dpair{v}{w}} 
      = 	\dcoh(\dcoh (f)) \comp \Dpair{\Dpair{x}{v}}{\Dpair{u}{w}}
    \end{align*}
  \end{enumerate}
\end{proposition}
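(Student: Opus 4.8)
The plan is to mimic the proofs of the companion propositions \Cref{prop:derivative-additive-sum} and \Cref{prop:D-lin}: establish $(1)\Equiv(2)$ by a diagram chase that joint monicity reduces to a handful of coordinate equations, and then establish $(2)\Equiv(3)$ by instantiation and precomposition with witnesses.

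For $(1)\Equiv(2)$ I would first apply the joint monicity of $\Dproj_0,\Dproj_1$ twice, which turns the naturality square $\D^2 f\comp\Dswap=\Dswap\comp\D^2 f$ into the four equations $\Dproj_i\comp\Dproj_j\comp\D^2 f\comp\Dswap=\Dproj_i\comp\Dproj_j\comp\Dswap\comp\D^2 f$ for $i,j\in\{0,1\}$. From \Cref{def:Dswap} one reads off that precomposing by $\Dswap$ keeps the coordinates $\Dproj_0\comp\Dproj_0$ and $\Dproj_1\comp\Dproj_1$ and swaps $\Dproj_0\comp\Dproj_1$ with $\Dproj_1\comp\Dproj_0$; next I would record the auxiliary identities $\Dproj_0\comp\Dswap=\D\Dproj_0$ and $\D\Dproj_0\comp\Dswap=\Dproj_0$ (both consequences of \ref{ax:Dproj-lin}, using that $\D\Dproj_0=\Dpair{\Dproj_0\comp\Dproj_0}{\Dproj_0\comp\Dproj_1}$), together with $\Dproj_1\comp\Dproj_0\comp\D^2 f=\dcoh f\comp\Dproj_0$ and $\Dproj_0\comp\Dproj_1\comp\D^2 f=\dcoh f\comp\D\Dproj_0$, which follow from naturality of $\Dproj_0$, \ref{ax:Dproj-lin} and \ref{ax:D-chain} exactly as in the proof of \Cref{prop:derivative-additive-sum}. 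Substituting these in, the three equations for $(i,j)\in\{(0,0),(1,0),(0,1)\}$ become trivial, and the one for $(i,j)=(1,1)$ becomes, via \Cref{rem:dcoh-dcoh} (which identifies $\Dproj_1\comp\Dproj_1\comp\D^2 f$ with $\dcoh(\dcoh f)$) and $\Dproj_1\comp\Dproj_1\comp\Dswap=\Dproj_1\comp\Dproj_1$, precisely the equation $\dcoh(\dcoh f)\comp\Dswap=\dcoh(\dcoh f)$ of $(2)$.

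For $(2)\Equiv(3)$ the argument is short. For $(3)\Rightarrow(2)$ I would instantiate $U=\D^2 X$ and $x=\Dproj_0\comp\Dproj_0$, $u=\Dproj_1\comp\Dproj_0$, $v=\Dproj_0\comp\Dproj_1$, $w=\Dproj_1\comp\Dproj_1$: joint monicity gives $\Dpair{\Dpair{x}{u}}{\Dpair{v}{w}}=\id_{\D^2 X}$ and \Cref{def:Dswap} gives $\Dpair{\Dpair{x}{v}}{\Dpair{u}{w}}=\Dswap$, so $(3)$ collapses to $(2)$. For $(2)\Rightarrow(3)$, given $x,u,v,w$ with $\Dpair{\Dpair{x}{u}}{\Dpair{v}{w}}$ defined, \Cref{prop:family-on-pairs} gives $\Dswap\comp\Dpair{\Dpair{x}{u}}{\Dpair{v}{w}}=\Dpair{\Dpair{x}{v}}{\Dpair{u}{w}}$, so precomposing the equation of $(2)$ with $\Dpair{\Dpair{x}{u}}{\Dpair{v}{w}}$ yields $(3)$.

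The only real work is the bookkeeping in $(1)\Equiv(2)$: one must discharge the three off-diagonal coordinate equations with the correct combination of naturality of $\Dproj_0$, \ref{ax:Dproj-lin} and \ref{ax:D-chain}, the fiddliest being the case $(i,j)=(0,1)$ (equivalently $(1,0)$), which hinges on the identity $\D\Dproj_0\comp\Dswap=\Dproj_0$. I do not expect to need \ref{ax:D-add}, \ref{ax:D-lin}, or the naturality of $\Dinj_0$ anywhere in the argument.
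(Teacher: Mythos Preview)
Your proposal is correct and follows essentially the same route as the paper's proof: reduce naturality of $\Dswap$ to four coordinate equations via joint monicity, observe that the $(1,1)$ equation is exactly (2) by \Cref{rem:dcoh-dcoh}, and discharge the remaining three using naturality of $\Dproj_0$, \ref{ax:Dproj-lin} and \ref{ax:D-chain}; then handle $(2)\Equiv(3)$ by instantiation and precomposition. The only cosmetic differences are that the paper dispatches the $(0,1)$ case by invoking the involutivity of $\Dswap$ to reduce it to the $(1,0)$ case (rather than via your auxiliary identity $\D\Dproj_0\comp\Dswap=\Dproj_0$), and that the paper's instantiation for $(3)\Rightarrow(2)$ swaps your $u$ and $v$ so that the left-hand side becomes $\Dswap$ and the right-hand side becomes $\id$ rather than the other way around.
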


\begin{proof}
  By joint monicity of the $\Dproj_i$, $\Dswap$ is natural if and only
  if for all $f$ and for all
  $i, j \in \{0, 1\}, \Dproj_i \comp \Dproj_j \comp \D^2 f \comp
  \Dswap = \Dproj_i \comp \Dproj_j \comp \Dswap \comp \D^2 f$.
  But
  $\Dproj_i \comp \Dproj_j \comp \Dswap \comp \D^2 f = \Dproj_j \comp
  \Dproj_i \comp \D^2 f$.
  Then, by \Cref{rem:dcoh-dcoh}, the equation for $i = j = 1$
  corresponds exactly to the equation
  $\dcoh(\dcoh (f)) \comp \Dswap = \dcoh(\dcoh(f))$.
  Thus, it suffices to show that
  $\Dproj_i \comp \Dproj_j \comp \D^2 f \comp \Dswap = \Dproj_j \comp
  \Dproj_i \comp \D^2 f$ when $(i, j) \neq (1, 1)$ to
  conclude that~(1) is equivalent to~(2).
  \begin{itemize}
  \item $i = 0, j = 0$: The equation holds by reflexivity of equality.
  \item $i=1, j=0$:
    $\Dproj_0 \comp \Dproj_1 \comp \D^2 f = \Dproj_1 \comp \D \Dproj_0 \comp
    \D^2 f = \Dproj_1 \comp \D f \comp \D \Dproj_0$ and
    $\Dproj_1 \comp \Dproj_0 \comp \D^2 f \comp \Dswap = \Dproj_1
    \comp \D f \comp \Dproj_0 \comp \Dswap = \Dproj_1 \comp \D f \comp
    \D \Dproj_0$ so both sides are equal.
  \item $i=0, j=1$: $\Dproj_0 \comp \Dproj_1 \comp \D^2 f \comp \Dswap
  = \Dproj_1 \comp \Dproj_0 \comp \D^2 f$ if and only if
  $\Dproj_0 \comp \Dproj_1 \comp \D^2 f 
  = \Dproj_1 \comp \Dproj_0 \comp \D^2 f \comp \Dswap$ because $\Dswap$ is 
  involutive. But this equation holds, as seen above.
  \end{itemize}

  Next, (2) is a particular case of~(3) for
  $x = \Dproj_0 \comp \Dproj_0$, $u = \Dproj_0 \comp \Dproj_1$,
  $v = \Dproj_1 \comp \Dproj_0$ and $w = \Dproj_1 \comp
  \Dproj_1$.
  Conversely, if (3) holds then 
  $\dcoh(\dcoh (f)) \comp \Dpair{\Dpair{x}{u}}{\Dpair{v}{w}} =
  \dcoh(\dcoh(f)) \comp \Dswap \comp
  \Dpair{\Dpair{x}{v}}{\Dpair{u}{w}} = \dcoh(\dcoh(f)) \comp
  \Dpair{\Dpair{x}{v}}{\Dpair{u}{w}}$. 
\end{proof}

\section{Compatibility with the cartesian product}

We assume in this section that $\category$ is cartesian and is
equipped with a left summability structure
$(\D, \Dproj_0, \Dproj_1, \Dsum)$.

\begin{notation} 
  \label{notation:product}
  We use $\with$ for the cartesian product, following the notations of
  $\LL{}$.
  For any objects $Y_0, Y_1$, the projection will be written as
  $\prodProj_i \in \category(Y_0 \with Y_1, Y_i)$ and the pairing of
  $f_0 \in \category(X, Y_0)$ and $f_1 \in \category(X, Y_1)$ as
  $\prodPair{f_0}{f_1}$.
  Finally, the terminal object will be written $\top$.
  Note that the uniqueness of the pairing in the universal property of
  the cartesian product can be understood as the joint monicity of the
  $\prodProj_i$.
\end{notation}

\subsection{Cartesian product and summability structure}
\label{sec:cartesian-summability}

\begin{definition} \label{def:prod-compatible} The summability
  structure $(\D, \Dproj_0, \Dproj_1, \Dsum)$ is \emph{compatible with
    the cartesian product} if $\prodPair{0}{0} = 0$ and, for all
  $f_0, g_0 \in \category(X, Y_0)$ and
  $f_1, g_1 \in \category(X, Y_1)$:
  \begin{itemize}
  \item $\prodPair{f_0}{f_1} \summable \prodPair{g_0}{g_1}$ if and
    only if $f_0 \summable g_0$ and $f_1 \summable g_1$
  \item and then
  \(
    \prodPair{f_0}{f_1} + \prodPair{g_0}{g_1} =
    \prodPair{f_0+f_1}{g_0+g_1}
  \).
  \end{itemize}

\end{definition}

That is, sums are computed componentwise.
Let us break down this definition in more details.

\begin{proposition} \label{prop:projections-additive}
The following are equivalent
\begin{itemize}
\item $\prodProj_0, \prodProj_1$ are additive;
\item $\prodPair{0}{0} = 0$ and for all
  $f_0, g_0 \in \category(X, Y_0)$ and
  $f_1, g_1 \in \category(X, Y_1)$, if
  $\prodPair{f_0}{f_1} \summable \prodPair{g_0}{g_1}$ then
  $f_0 \summable g_0$, $f_1 \summable g_1$ and
  $\prodPair{f_0}{f_1} + \prodPair{g_0}{g_1} =
  \prodPair{f_0+f_1}{g_0+g_1}$.
\end{itemize}
\end{proposition}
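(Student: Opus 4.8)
The plan is to prove the two implications separately, in each direction reducing everything to \Cref{prop:additive} (a morphism $h$ with $h \comp 0 = 0$ is additive iff $h \comp \Dproj_0 \summable h \comp \Dproj_1$ with sum $h \comp \Dsum$) together with \Cref{prop:proj-sum} ($\Dproj_0 \summable \Dproj_1$ with witness $\id$ and sum $\Dsum$), and exploiting the universal property of $\with$ (i.e.\ joint monicity of the $\prodProj_i$) to move freely between a morphism into $Y_0 \with Y_1$ and the pair of its two components.

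For the forward implication, assume $\prodProj_0,\prodProj_1$ are additive. First I would derive $\prodPair{0}{0} = 0$: additivity includes $\prodProj_i \comp 0 = 0$, so both $0$ and $\prodPair{0}{0}$ compose with each $\prodProj_i$ to $0$, hence they agree by uniqueness of the pairing. Then, given $\prodPair{f_0}{f_1} \summable \prodPair{g_0}{g_1}$, applying additivity of $\prodProj_i$ to this summable pair yields $f_i = \prodProj_i \comp \prodPair{f_0}{f_1} \summable \prodProj_i \comp \prodPair{g_0}{g_1} = g_i$ and $\prodProj_i \comp (\prodPair{f_0}{f_1} + \prodPair{g_0}{g_1}) = f_i + g_i$, so that the universal property of the product identifies $\prodPair{f_0}{f_1} + \prodPair{g_0}{g_1}$ with $\prodPair{f_0 + g_0}{f_1 + g_1}$.

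For the reverse implication, assume $\prodPair{0}{0} = 0$ and the componentwise condition; I would prove $\prodProj_0$ additive (the case of $\prodProj_1$ being symmetric) by checking the criterion of \Cref{prop:additive}. From $\prodPair{0}{0} = 0$ and the defining equation of the pairing one gets $\prodProj_0 \comp 0 = 0$. The key move is to write, by the universal property, $\Dproj_0 = \prodPair{\prodProj_0 \comp \Dproj_0}{\prodProj_1 \comp \Dproj_0}$ and $\Dproj_1 = \prodPair{\prodProj_0 \comp \Dproj_1}{\prodProj_1 \comp \Dproj_1}$, so that the summability $\Dproj_0 \summable \Dproj_1$ coming from \Cref{prop:proj-sum} becomes a summability of two pairs. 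The componentwise hypothesis then gives $\prodProj_0 \comp \Dproj_0 \summable \prodProj_0 \comp \Dproj_1$ and presents $\Dproj_0 + \Dproj_1$ as $\prodPair{\prodProj_0 \comp \Dproj_0 + \prodProj_0 \comp \Dproj_1}{\prodProj_1 \comp \Dproj_0 + \prodProj_1 \comp \Dproj_1}$. Since $\Dproj_0 + \Dproj_1 = \Dsum$ by \Cref{prop:proj-sum} and $\Dsum = \prodPair{\prodProj_0 \comp \Dsum}{\prodProj_1 \comp \Dsum}$, uniqueness of the pairing forces $\prodProj_0 \comp \Dsum = \prodProj_0 \comp \Dproj_0 + \prodProj_0 \comp \Dproj_1$; thus $\prodProj_0$ meets the criterion of \Cref{prop:additive} and is additive.

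I do not expect a genuine obstacle here: the argument is a sequence of joint-monicity arguments. The only point needing a little care is the bookkeeping of the various distinguished morphisms $0$ (which live in different hom-sets), and using the product decomposition of $\Dproj_0$, $\Dproj_1$ and $\Dsum$ consistently so that each equality of pairs is justified by uniqueness of the pairing.
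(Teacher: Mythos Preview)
Your proof is correct and follows essentially the same approach as the paper: both directions rely on decomposing morphisms into $Y_0\with Y_1$ as pairs via the universal property and using joint monicity of the $\prodProj_i$. The only minor difference is that for the reverse implication the paper verifies additivity directly from \Cref{def:additive} (taking arbitrary summable $f,g\in\category(X,Y_0\with Y_1)$, writing them as pairs, and reading off $\prodProj_i\comp(f+g)=\prodProj_i\comp f+\prodProj_i\comp g$), whereas you instantiate this same decomposition at $f=\Dproj_0$, $g=\Dproj_1$ and conclude via the criterion of \Cref{prop:additive}; these are equivalent routes through the same idea.
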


\begin{proof}
  Assume that $\prodProj_0, \prodProj_1$ are additive. Then
  $\prodProj_i \comp 0 = 0 = \prodProj_i \comp \prodPair{0}{0}$. Thus
  by joint monicity, $0 = \prodPair{0}{0}$.
  Furthermore, assume that
  $\prodPair{f_0}{f_1} \summable \prodPair{g_0}{g_1}$.
  Then by additivity of $\prodProj_i$,
  $\prodProj_i \comp \prodPair{f_0}{f_1} = f_i$ and
  $\prodProj_i \comp \prodPair{g_0}{g_1} = g_i$ are summable and
  $f_i + g_i = \prodProj_i \comp (\prodPair{f_0}{f_1} +
  \prodPair{g_0}{g_1})$.
  So the joint monicity of the $\prodProj_i$ implies that
  $\prodPair{f_0}{f_1} + \prodPair{g_0}{g_1} =
  \prodPair{f_0+f_1}{g_0+g_1}$.

  Conversely, since $\prodPair{0}{0} = 0$ we have
  $\prodProj_i \comp 0 = \prodProj_i \comp \prodPair{0}{0} = 0$.
  Let $f, g \in \category(X, Y_0 \with Y_1)$ be summable.
  One can write
  $f = \prodPair{\prodProj_0 \comp f}{\prodProj_1 \comp f}$ and
  $g = \prodPair{\prodProj_0 \comp g}{\prodProj_1 \comp g}$.
  Since $f \summable g$ we have
  $\prodProj_i \comp f \summable \prodProj_i \comp g$ and
  $f + g = \prodPair{\prodProj_0 \comp f + \prodProj_0 \comp
    g}{\prodProj_1 \comp f + \prodProj_1 \comp g}$.
  Applying $\prodProj_i$ on this equation yields that
  $\prodProj_i \comp (f+g) = \prodProj_i \comp f + \prodProj_i \comp
  g$ so $\prodProj_i$ is additive.
\end{proof}

\begin{cor} \label{prop:with-sum} %
  If $\prodProj_0$ and $\prodProj_1$ are additive, then $0 \with 0 = 0$ and
  for all $f_0, g_0 \in \category(X_0, Y_0)$ and
  $f_1, g_1 \in \category(X_1, Y_1)$, if
  $f_0 \with f_1 \summable g_0 \with g_1$ then $f_0 \summable g_0$,
  $f_1 \summable g_1$ and
  $f_0 \with f_1 + g_0 \with g_1 = (f_0 + g_0) \with (f_1 + g_1)$.
\end{cor}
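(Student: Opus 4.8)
The plan is to reduce the statement to \Cref{prop:projections-additive} through the defining equation $f_0 \with f_1 = \prodPair{f_0 \comp \prodProj_0}{f_1 \comp \prodProj_1}$ (and likewise for $g_0 \with g_1$). First, $0 \with 0 = \prodPair{0 \comp \prodProj_0}{0 \comp \prodProj_1} = \prodPair{0}{0} = 0$, the last equality being part of \Cref{prop:projections-additive} since $\prodProj_0$ and $\prodProj_1$ are additive.

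For the summability claim, rewrite the hypothesis $f_0 \with f_1 \summable g_0 \with g_1$ as $\prodPair{f_0 \comp \prodProj_0}{f_1 \comp \prodProj_1} \summable \prodPair{g_0 \comp \prodProj_0}{g_1 \comp \prodProj_1}$, a summability of two pairs sharing the source $X_0 \with X_1$. Feeding this into \Cref{prop:projections-additive} yields $f_0 \comp \prodProj_0 \summable g_0 \comp \prodProj_0$ in $\category(X_0 \with X_1, Y_0)$ and $f_1 \comp \prodProj_1 \summable g_1 \comp \prodProj_1$ in $\category(X_0 \with X_1, Y_1)$. To descend to $f_i \summable g_i$, observe that $\prodProj_0$ is a split epimorphism with section $\prodPair{\id}{0} \in \category(X_0, X_0 \with X_1)$ and $\prodProj_1$ is one with section $\prodPair{0}{\id}$; precomposing the two summabilities above with the respective sections and invoking left compatibility of the sum (\Cref{prop:sum-left-compatible}) gives $f_0 \summable g_0$ and $f_1 \summable g_1$.

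For the sum equation, apply $\prodProj_i$ to both sides of $f_0 \with f_1 + g_0 \with g_1 = (f_0 + g_0) \with (f_1 + g_1)$: the left side becomes $\prodProj_i \comp (f_0 \with f_1) + \prodProj_i \comp (g_0 \with g_1) = f_i \comp \prodProj_i + g_i \comp \prodProj_i$ by additivity of $\prodProj_i$, and the right side becomes $(f_i + g_i) \comp \prodProj_i = f_i \comp \prodProj_i + g_i \comp \prodProj_i$ by \Cref{prop:sum-left-compatible}; joint monicity of the $\prodProj_i$ then concludes. Equivalently, the sum equation is exactly the second bullet of \Cref{prop:projections-additive} as applied above, after one rewrite via \Cref{prop:sum-left-compatible}. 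The only step that is not pure bookkeeping is the descent just described: \Cref{prop:projections-additive} only supplies summability of the $f_i$ precomposed with a projection, so one must use that the cartesian projections split — through the canonical sections built from $\id$ and $0$ — to transfer summability back to the $f_i$ themselves, which is where I expect the one genuine (if small) subtlety to lie.
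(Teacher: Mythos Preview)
Your proof is correct and follows the same route as the paper's, which is a one-line sketch invoking exactly the ingredients you use: the identity $f_0 \with f_1 = \prodPair{f_0 \comp \prodProj_0}{f_1 \comp \prodProj_1}$, \Cref{prop:projections-additive}, and left compatibility (\Cref{prop:sum-left-compatible}). The descent step you flag as the one subtlety --- precomposing with the sections $\prodPair{\id}{0}$ and $\prodPair{0}{\id}$ to pass from $f_i \comp \prodProj_i \summable g_i \comp \prodProj_i$ back to $f_i \summable g_i$ --- is left implicit in the paper but is indeed the content behind its appeal to left compatibility.
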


\begin{proof}
  We simply use the fact that
  $f \with g = \prodPair{f \comp \prodProj_0}{g \comp \prodProj_1}$
  and~\Cref{prop:projections-additive} together with the left
  compatibility of sum with regard to composition
  (\Cref{prop:sum-left-compatible}).
\end{proof}
We now assume that the projections $\prodProj_0$ and $\prodProj_1$ are additive.
This allows us to define a morphism
$\prodSwap \in \category(\D (X_0 \with X_1), \D X_0 \with \D X_1)$ for
any objects $X_0, X_1$ as
$\prodSwap \defEq \prodPair{\Dpair{\prodProj_0 \comp
    \Dproj_0}{\prodProj_0 \comp \Dproj_1}} {\Dpair{\prodProj_1 \comp
    \Dproj_0}{\prodProj_1 \comp \Dproj_1}}$.
In other words,
$\Dproj_i \comp \prodProj_j \comp \prodSwap = \prodProj_j \comp
\Dproj_i$%
, that is
\[
  \prodSwap \comp \Dpair{\prodPair{f_0}{f_1} }{\prodPair{g_0}{g_1}} =
  \prodPair{\Dpair{f_0}{g_0}}{\Dpair{f_1}{g_1}}\,.
\]
This is very reminiscent of the flip $\Dswap$ (it swaps the two middle
coordinates), except that there are no summability conditions
associated with the $\prodPair{\_}{\_}$ pairing.

\begin{theorem} \label{prop:prodSwap-inverse} %
  The following assertions are equivalent
  \begin{enumerate}
  \item $\prodSwap$ is an isomorphism;
  \item $\Dproj_0 \with \Dproj_0 \summable \Dproj_1 \with \Dproj_1$;
  \item for any $f_0, g_0 \in \category(X, Y_0)$,
    $ f_1, g_1 \in \category(X, Y_1)$, if $f_0 \summable g_0$ and
    $f_1 \summable g_1$ then $f_0 \with f_1 \summable g_0 \with g_1$;
  \item for any $f_0, g_0 \in \category(X, Y_0)$,
    $ f_1, g_1 \in \category(X, Y_1)$, if $f_0 \summable g_0$ and
    $f_1 \summable g_1$ then
    $\prodPair{f_0}{f_1} \summable \prodPair{g_0}{g_1}$
  \end{enumerate}
  and then
  \(\Dpair{\Dproj_0 \with \Dproj_0}{\Dproj_1 \with
    \Dproj_1}=\prodSwap^{-1}\).
\end{theorem}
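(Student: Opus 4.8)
The plan is to prove the cycle $(2)\Rightarrow(1)\Rightarrow(4)\Rightarrow(2)$ — so that $(1)$, $(2)$, $(4)$ are all equivalent — and then to attach $(3)$ via $(4)\Leftrightarrow(3)$; the formula for $\prodSwap^{-1}$ will fall out of the first implication. The workhorse is the computational description of $\prodSwap$ recorded just above the statement, namely $\Dproj_i \comp \prodProj_j \comp \prodSwap = \prodProj_j \comp \Dproj_i$, together with two joint-monicity facts: $\Dproj_0,\Dproj_1$ are jointly monic, and hence so are the four composites $\Dproj_i \comp \prodProj_j$ (for $i,j\in\{0,1\}$) on $\D X_0 \with \D X_1$, obtained by combining the joint monicity of the $\Dproj$'s with that of the $\prodProj$'s.

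For $(2)\Rightarrow(1)$: under $(2)$ the morphism $\theta \defEq \Dpair{\Dproj_0 \with \Dproj_0}{\Dproj_1 \with \Dproj_1}$ is well defined in $\category(\D X_0 \with \D X_1, \D(X_0 \with X_1))$, with $\Dproj_i \comp \theta = \Dproj_i \with \Dproj_i = \prodPair{\Dproj_i \comp \prodProj_0}{\Dproj_i \comp \prodProj_1}$. I would then check that $\theta$ is a two-sided inverse of $\prodSwap$. Postcomposing $\prodSwap\comp\theta$ with $\Dproj_i\comp\prodProj_j$ gives $\Dproj_i\comp\prodProj_j\comp\prodSwap\comp\theta = \prodProj_j \comp \Dproj_i \comp \theta = \prodProj_j \comp (\Dproj_i \with \Dproj_i) = \Dproj_i \comp \prodProj_j$, so $\prodSwap\comp\theta=\id$ by joint monicity of the $\Dproj_i\comp\prodProj_j$. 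Postcomposing $\theta\comp\prodSwap$ with $\Dproj_i$ gives $\prodPair{\Dproj_i\comp\prodProj_0\comp\prodSwap}{\Dproj_i\comp\prodProj_1\comp\prodSwap} = \prodPair{\prodProj_0\comp\Dproj_i}{\prodProj_1\comp\Dproj_i} = \Dproj_i$, so $\theta\comp\prodSwap=\id$ by joint monicity of $\Dproj_0,\Dproj_1$. This yields $(1)$ and simultaneously the final formula $\prodSwap^{-1}=\Dpair{\Dproj_0 \with \Dproj_0}{\Dproj_1 \with \Dproj_1}$.

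For $(1)\Rightarrow(4)$: given $f_0 \summable g_0$ and $f_1 \summable g_1$, set $w \defEq \prodSwap^{-1}\comp\prodPair{\Dpair{f_0}{g_0}}{\Dpair{f_1}{g_1}}$. From $\prodSwap\comp w = \prodPair{\Dpair{f_0}{g_0}}{\Dpair{f_1}{g_1}}$ and the description of $\prodSwap$ one reads off $\prodProj_j\comp\Dproj_0\comp w = f_j$ and $\prodProj_j\comp\Dproj_1\comp w = g_j$, hence $\Dproj_0\comp w = \prodPair{f_0}{f_1}$ and $\Dproj_1\comp w = \prodPair{g_0}{g_1}$; so $w$ witnesses $\prodPair{f_0}{f_1}\summable\prodPair{g_0}{g_1}$. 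For $(4)\Rightarrow(2)$: since $\Dproj_0 \with \Dproj_0 = \prodPair{\Dproj_0\comp\prodProj_0}{\Dproj_0\comp\prodProj_1}$ and left compatibility of the sum (\Cref{prop:sum-left-compatible}) applied to $\Dproj_0\summable\Dproj_1$ (\Cref{prop:proj-sum}) gives $\Dproj_0\comp\prodProj_k \summable \Dproj_1\comp\prodProj_k$ for $k=0,1$, $(4)$ yields $\Dproj_0 \with \Dproj_0 \summable \Dproj_1 \with \Dproj_1$. Finally $(4)\Leftrightarrow(3)$ is just left compatibility applied to the identities $f_0\with f_1 = \prodPair{f_0\comp\prodProj_0}{f_1\comp\prodProj_1}$ (for $(4)\Rightarrow(3)$) and $(f_0\with f_1)\comp\prodPair{\id}{\id} = \prodPair{f_0}{f_1}$ (for $(3)\Rightarrow(4)$).

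I expect the only genuine computation to be the pair of diagram chases in $(2)\Rightarrow(1)$, where the delicate point is keeping track of which joint-monicity principle is in force on which side; all the remaining implications are short bookkeeping with left compatibility and the defining equations of $\prodSwap$. Note that $(2)$ is precisely what makes $\theta$ well defined, which is the reason $\prodSwap$ need not be invertible in general.
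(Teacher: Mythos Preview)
Your proof is correct and follows essentially the same approach as the paper. The paper organizes the implications as $(1)\Leftrightarrow(2)$, $(2)\Leftrightarrow(3)$, $(3)\Leftrightarrow(4)$, whereas you use the cycle $(2)\Rightarrow(1)\Rightarrow(4)\Rightarrow(2)$ together with $(3)\Leftrightarrow(4)$; the core computations (the two-sided inverse check for $\theta$, the construction of witnesses via $\prodSwap^{-1}$, and the use of left compatibility for the product/pairing conversions) are identical in substance, so this is a cosmetic reorganization rather than a different argument.
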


\begin{proof}
  $(1) \Rightarrow (2)$: Assume that $\prodSwap$ is an isomorphism with
  inverse $\witness$.
  Then
  $\Dproj_i \comp \prodProj_j = \Dproj_i \comp \prodProj_j \comp
  \prodSwap \comp \witness = \prodProj_j \comp \Dproj_i \comp
  \witness$.
  But
  $\Dproj_i \comp \prodProj_j = 
  \prodProj_j \comp (\Dproj_i \with \Dproj_i)$ by naturality of $\prodProj_j$ so
  $\prodProj_j \comp \Dproj_i \comp \witness = \prodProj_j \comp
  (\Dproj_i \with \Dproj_i)$.
  By joint monicity of the $\prodProj_j$'s we have
  $\Dproj_i \comp \witness = (\Dproj_i \with \Dproj_i)$.
  That is
  $\witness = \Dpair{\Dproj_0 \with \Dproj_0}{\Dproj_1 \with
    \Dproj_1}$.

  $(2) \Rightarrow (1)$: Assume that
  $\Dproj_0 \with \Dproj_0 \summable \Dproj_1 \with \Dproj_1$, of
  witness $\witness$. Then,
  $\prodProj_j \comp \Dproj_i \comp \witness = \prodProj_j \comp
  (\Dproj_i \with \Dproj_i) = \Dproj_i \comp \prodProj_j$.
  Hence
  \begin{align*}
    \prodProj_j \comp \Dproj_i \comp \witness \comp \prodSwap = 
    \Dproj_i \comp \prodProj_j \comp \prodSwap = \prodProj_j \comp \Dproj_i \\
    \Dproj_i \comp \prodProj_j \comp \prodSwap \comp \witness
    = \prodProj_j \comp \Dproj_i \comp \witness = \Dproj_i \comp \prodProj_j
  \end{align*}
  By joint monicity of the $\prodProj_j$'s and of the $\Dproj_i$'s we
  get $\witness \comp \prodSwap = \id_{\D(X_0 \with X_1)}$ and
  $\prodSwap \comp \witness = \id_{\D X_0 \with \D X_1}$.

  $(2) \Rightarrow (3)$: We have
  $\Dpair{f_0}{g_0}\in \category(X, \D Y_0)$ and
  $\Dpair{f_1}{g_1} \in \category(X, \D Y_1)$.
  Let
  \(
    \witness = \Dpair{\Dproj_0 \with \Dproj_0}{\Dproj_1 \with
      \Dproj_1} \comp (\Dpair{f_0}{g_0} \with \Dpair{f_1}{g_1})
  \).
  We have $\Dproj_0 \comp\witness = f_0 \with f_1$ and
  $\Dproj_1\comp \witness = g_0 \with g_1$ so that
  $f_0 \with f_1 \summable g_0 \with g_1$.
 
  $(3) \Rightarrow (2)$: $(2)$ is a particular case of case $(3)$.

  $(3) \Rightarrow (4)$: Assume that $f_0 \summable g_0$ and
  $f_1 \summable g_1$.
  Then by assumption, $f_0 \with f_1 \summable g_0 \with g_1$. Let
  $\witness = \Dpair{f_0 \with f_1}{g_0 \with g_1} \comp
  \prodPair{\id}{\id}$.
  Then $\Dproj_0\comp \witness = \prodPair{f_0}{f_1}$
  and $\Dproj_1\comp \witness = \prodPair{g_0}{g_1}$ so that
  $\prodPair{f_0}{f_1} \summable \prodPair{g_0}{g_1} $.

  $(4) \Rightarrow (3)$: Assume that $f_0 \summable g_0$ and
  $f_1 \summable g_1$.
  Then $f_0 \comp \prodProj_0 \summable g_0 \comp \prodProj_0$ and
  $f_1 \comp \prodProj_1 \summable g_1\comp \prodProj_1$ by left
  compatibility wrt.~composition (\Cref{prop:sum-left-compatible}).
  Hence, by assumption,
  $\prodPair{f_0\comp \prodProj_0}{ f_1\comp \prodProj_1} \summable
  \prodPair{g_0\comp \prodProj_0}{g_1\comp \prodProj_1}$.
  That is $f_0 \with f_1 \summable g_0 \with g_1$.
\end{proof}

\begin{cor}
  \label{cor:summability-cartesian-compat}
  A summability structure is compatible with the cartesian product if
  and only if $\prodProj_0, \prodProj_1$ are additive and $\prodSwap$
  is an isomorphism.
\end{cor}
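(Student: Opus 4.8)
The plan is to recognise \Cref{def:prod-compatible} as nothing more than the conjunction of the characterisations already established in \Cref{prop:projections-additive} and \Cref{prop:prodSwap-inverse}, so that the corollary becomes a purely logical rearrangement with no new computation. First I would unfold \Cref{def:prod-compatible}: compatibility with the cartesian product asserts three things, namely (i) $\prodPair{0}{0}=0$; (ii) that $\prodPair{f_0}{f_1}\summable\prodPair{g_0}{g_1}$ implies $f_0\summable g_0$ and $f_1\summable g_1$, together with the componentwise sum formula; and (iii) the converse implication, that $f_0\summable g_0$ and $f_1\summable g_1$ imply $\prodPair{f_0}{f_1}\summable\prodPair{g_0}{g_1}$. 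The crucial observation is that the conjunction of (i) and (ii) is exactly the second bullet of \Cref{prop:projections-additive}, hence equivalent to the additivity of $\prodProj_0$ and $\prodProj_1$, while (iii) is verbatim item~(4) of \Cref{prop:prodSwap-inverse}.

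For the forward direction I would assume compatibility, read off (i) and (ii), and invoke \Cref{prop:projections-additive} to conclude that $\prodProj_0$ and $\prodProj_1$ are additive. This in turn licenses the use of \Cref{prop:prodSwap-inverse}, whose standing hypothesis is precisely this additivity, and item~(4) there — which is our (iii) — yields that $\prodSwap$ is an isomorphism. For the converse I would assume the projections additive and $\prodSwap$ an isomorphism: \Cref{prop:projections-additive} immediately delivers (i) and (ii), and since the projections are additive \Cref{prop:prodSwap-inverse} is applicable, so the hypothesis that $\prodSwap$ is an isomorphism gives item~(4), that is (iii). Assembling (i), (ii) and (iii) reconstitutes the two clauses of \Cref{def:prod-compatible}, namely $\prodPair{0}{0}=0$, the ``if and only if'' for summability of pairs, and the componentwise formula for the sum.

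I do not expect a genuine obstacle here: the mathematical content has been front-loaded into the two preceding results, and what remains is bookkeeping. The single point that requires attention is the order in which the two lemmas are chained in each direction. \Cref{prop:prodSwap-inverse} is formulated under the standing assumption that $\prodProj_0$ and $\prodProj_1$ are additive, so one must first establish (forward direction) or assume (converse direction) that additivity through \Cref{prop:projections-additive} before appealing to the $\prodSwap$-characterisation; carrying it out in the reverse order would be circular even though the conclusion is unaffected.
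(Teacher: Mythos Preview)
Your proposal is correct and matches the paper's approach: the paper states the corollary without an explicit proof, intending it to follow by combining \Cref{prop:projections-additive} and \Cref{prop:prodSwap-inverse} exactly as you describe. Your care about the order of application (establishing additivity of the projections before invoking \Cref{prop:prodSwap-inverse}, whose hypothesis requires it) is apt and is the only subtlety worth flagging.
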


\subsection{Cartesian product and differential structure}

We now assume that $\category$ is a cartesian category with a
pre-differential structure $(\D, \Dproj_0, \Dproj_1, \Dsum)$.

\begin{definition}
  The (pre-)differential structure $(\D, \Dproj_0, \Dproj_1, \Dsum)$
  is \emph{compatible with the cartesian product} if the underlying
  summability structure is compatible with the cartesian product, and
  if $\prodProj_0, \prodProj_1$ are $\D$-linear.
  A \emph{cartesian coherent differential category} (CCDC) is a
  coherent differential category whose cartesian product is
  compatible with the differential structure.
\end{definition}

We assume that $\category$ is a CCDC.
By $\D$-linearity of $\prodProj_0$ and $\prodProj_1$, all
constructions involving only the cartesian product are $\D$-linear.

\begin{proposition} \label{prop:prod-pairing-linear} %
  If $h_0 \in \category(X, Y_0)$ and $h_1 \in \category(X, Y_1)$ are
  $\D$-linear, then $\prodPair{h_0}{h_1}$ is $\D$-linear.
  If $f_0 \in \category(X_0, Y_0)$ and $f_1 \in \category(X_1, Y_1)$
  are $\D$-linear, then $f_0 \with f_1$ is $\D$-linear.
\end{proposition}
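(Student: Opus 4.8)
The plan is to prove the pairing statement first and then read off the product statement from it. For the pairing, I unfold \Cref{def:linear}: I must check that the three diagrams for $\prodPair{h_0}{h_1}$ — the ones involving $\Dproj_1$, $\Dsum$, and $0$ — commute. The $0$-diagram is immediate: $\prodPair{h_0}{h_1} \comp 0 = \prodPair{h_0 \comp 0}{h_1 \comp 0} = \prodPair{0}{0} = 0$, using $h_i \comp 0 = 0$ (the $\D$-linearity of $h_i$) and $\prodPair{0}{0} = 0$ (which holds since the structure is compatible with the cartesian product, equivalently by \Cref{prop:projections-additive}).

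For the other two diagrams, write $m$ for either $\Dproj_1$ or $\Dsum$; the goal is $m \comp \D\prodPair{h_0}{h_1} = \prodPair{h_0}{h_1} \comp m$. By joint monicity of $\prodProj_0$ and $\prodProj_1$ it suffices to verify this after postcomposing with each $\prodProj_j$. The right-hand side then becomes $h_j \comp m$. For the left-hand side I use that $\prodProj_j$ is $\D$-linear (a CCDC hypothesis): the relevant diagram of \Cref{def:linear} for $\prodProj_j$ reads $\prodProj_j \comp m = m \comp \D\prodProj_j$, so $\prodProj_j \comp m \comp \D\prodPair{h_0}{h_1} = m \comp \D\prodProj_j \comp \D\prodPair{h_0}{h_1}$, which by the Chain Rule \ref{ax:D-chain} equals $m \comp \D(\prodProj_j \comp \prodPair{h_0}{h_1}) = m \comp \D h_j$, and this equals $h_j \comp m$ by the $\D$-linearity of $h_j$. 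The two sides agree, so the diagram commutes and $\prodPair{h_0}{h_1}$ is $\D$-linear. The product statement is then a corollary: $f_0 \with f_1 = \prodPair{f_0 \comp \prodProj_0}{f_1 \comp \prodProj_1}$, each $f_i \comp \prodProj_i$ is $\D$-linear because $\prodProj_i$ is $\D$-linear and $\D$-linear morphisms compose (\Cref{prop:composition-linear}), and the pairing statement just proved applies.

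I do not expect a genuine obstacle: this is a diagram chase powered by \ref{ax:D-chain} together with the $\D$-linearity of the cartesian projections. The only mild care needed is bookkeeping about which instance of $\Dproj_1$ or $\Dsum$ sits over which object when invoking the $\D$-linearity of $\prodProj_j$, and making sure $\prodPair{0}{0} = 0$ is available — both handled above. An alternative would be to reduce $\D$-linearity of $\prodPair{h_0}{h_1}$ via \Cref{prop:linear-and-additive} to additivity plus $\dcoh\prodPair{h_0}{h_1} = \prodPair{h_0}{h_1} \comp \Dproj_1$, deriving additivity from \Cref{prop:prodSwap-inverse} and componentwise sums, but the direct chase above is shorter.
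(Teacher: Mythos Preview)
Your proof is correct and follows essentially the same approach as the paper: the paper refers back to the diagram chase of \Cref{prop:pairing-linear} with the summable-pair projections $\Dproj_i$ replaced by the cartesian projections $\prodProj_j$, which is exactly the argument you spell out (joint monicity of $\prodProj_j$, $\D$-linearity of $\prodProj_j$, the chain rule, and $\D$-linearity of $h_j$). Your derivation of the second statement from the first via $f_0 \with f_1 = \prodPair{f_0 \comp \prodProj_0}{f_1 \comp \prodProj_1}$ and closure of $\D$-linearity under composition also matches the paper verbatim.
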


\begin{proof}
  For the first statement we proceed as for
  \Cref{prop:pairing-linear} except that the paring as a summable
  pair is replaced by the pairing of the cartesian product.
  The second statement follows from the first one, because
  $f_0 \with f_1 = \prodPair{f_0 \comp \prodProj_0}{f_1 \comp
    \prodProj_1}$, the projections are $\D$-linear, and
  \(\D\)-linearity is closed under composition.
\end{proof}

 
For any objects $X_0, X_1$, there is a natural transformation
$\prodPair{\D \prodProj_0}{\D \prodProj_1} \in \category(\D(X_0\with X_1),
\D X_0 \with \D X_1)$.
By \(\D\)-linearity of $\prodProj_0$ and $\prodProj_1$ this
natural transformation is equal to
$\prodPair{\Dpair{\prodProj_0 \comp \Dproj_0}{\prodProj_0 \comp
    \Dproj_1}} {\Dpair{\prodProj_1 \comp \Dproj_0}{\prodProj_1 \comp
    \Dproj_1}} = \prodSwap$.
Whence a result similar to~\Cref{prop:pair-derivative}.

\begin{proposition} \label{prop:prodpair-derivative} %
  For any $f_0 \in \category(X, Y_0)$ and $f_1 \in \category(X, Y_1)$,
  $\prodPair{\D f_0}{\D f_1} = \prodSwap \comp \D \prodPair{f_0}{f_1}$
\end{proposition}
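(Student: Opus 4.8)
The statement to prove is Proposition~\ref{prop:prodpair-derivative}: for any $f_0 \in \category(X, Y_0)$ and $f_1 \in \category(X, Y_1)$, $\prodPair{\D f_0}{\D f_1} = \prodSwap \comp \D \prodPair{f_0}{f_1}$.

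The plan is to exploit the joint monicity of the cartesian projections $\prodProj_j$ (recall \Cref{notation:product}), reducing the desired identity to showing $\prodProj_j \comp \prodPair{\D f_0}{\D f_1} = \prodProj_j \comp \prodSwap \comp \D \prodPair{f_0}{f_1}$ for $j = 0, 1$. The left-hand side is simply $\D f_j$ by the universal property of the product. For the right-hand side, I would use the remark made just before the statement: by $\D$-linearity of $\prodProj_0$ and $\prodProj_1$ (part of the CCDC hypothesis), the natural transformation $\prodPair{\D \prodProj_0}{\D \prodProj_1}$ equals $\prodSwap$, so $\prodProj_j \comp \prodSwap = \D \prodProj_j$. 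Hence $\prodProj_j \comp \prodSwap \comp \D \prodPair{f_0}{f_1} = \D \prodProj_j \comp \D \prodPair{f_0}{f_1}$, and by \ref{ax:D-chain} (functoriality of $\D$) this is $\D(\prodProj_j \comp \prodPair{f_0}{f_1}) = \D f_j$. So both sides agree after composing with $\prodProj_j$, and joint monicity closes the argument.

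This mirrors exactly the proof of \Cref{prop:pair-derivative}, with the summable-pair flip $\Dswap$ replaced by the cartesian flip $\prodSwap$ and the summable pairing $\Dpair{\_}{\_}$ replaced by the cartesian pairing $\prodPair{\_}{\_}$. There is essentially no obstacle here: the only subtlety is making sure the identification $\prodProj_j \comp \prodSwap = \D \prodProj_j$ is justified, but this is precisely the content of the paragraph preceding the proposition (it follows by writing out $\D \prodProj_j = \prodPair{\prodProj_j \comp \Dproj_0}{\prodProj_j \comp \Dproj_1}$ via \Cref{rem:linear}, and comparing with the definition of $\prodSwap$). I would therefore keep the proof to a two-line computation:
\[
\prodProj_j \comp \prodSwap \comp \D \prodPair{f_0}{f_1} = \D \prodProj_j \comp \D \prodPair{f_0}{f_1} = \D f_j = \prodProj_j \comp \prodPair{\D f_0}{\D f_1},
\]
and conclude by joint monicity of the $\prodProj_j$.
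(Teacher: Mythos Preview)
Your proof is correct and is essentially identical to the paper's: both compute $\prodProj_j \comp \prodSwap \comp \D \prodPair{f_0}{f_1} = \D \prodProj_j \comp \D \prodPair{f_0}{f_1} = \D f_j$ and conclude by joint monicity of the $\prodProj_j$.
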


\begin{proof}
  $\prodProj_i \comp \prodSwap \comp \D \prodPair{f_0}{f_1} = \D
  \prodProj_i \comp \D \prodPair{f_0}{f_1} = \D f_i$.
\end{proof}

\subsection{Partial derivatives}

Using $\prodSwap^{-1}$, we define two natural transformations
\begin{align*}
  \strengthL &= (\prodSwap)^{-1} \comp (\id_{\D X_0} \with \Dinj_0) 
               \in \category (\D X_0 \with X_1, \D (X_0 \with X_1))\\
  \strengthR &= (\prodSwap)^{-1} \comp (\Dinj_0 \with \id_{\D X_1})
               \in \category (X_0 \with \D X_1, \D (X_0 \with X_1))
\end{align*}
Note that $\prodSwap$, $(\prodSwap)^{-1}$, $\strengthL$ and
$\strengthR$ are all $\D$-linear, thanks to
\Cref{prop:prod-pairing-linear,prop:everyone-linear,prop:composition-linear}.

\begin{proposition} \label{prop:strength} %
  $\strength{0} = \Dpair{\Dproj_0 \with \id_{X_1}}{\Dproj_1 \with 0}$
  and
  $\strength{1} = \Dpair{\id_{X_0} \with \Dproj_0}{0 \with \Dproj_1}$
\end{proposition}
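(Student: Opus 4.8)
The plan is to compute $\strengthL$ directly from its definition. Recall $\strengthL = (\prodSwap)^{-1} \comp (\id_{\D X_0} \with \Dinj_0)$, and that by \Cref{prop:prodSwap-inverse} we have the explicit formula $(\prodSwap)^{-1} = \Dpair{\Dproj_0 \with \Dproj_0}{\Dproj_1 \with \Dproj_1}$, which is available here because in a CCDC $\prodSwap$ is an isomorphism (in particular $\Dproj_0 \with \Dproj_0 \summable \Dproj_1 \with \Dproj_1$, by \Cref{cor:summability-cartesian-compat}). So the first step is to write
\[
  \strengthL = \Dpair{\Dproj_0 \with \Dproj_0}{\Dproj_1 \with \Dproj_1} \comp (\id_{\D X_0} \with \Dinj_0)\,,
\]
and then to push the composition with $\id_{\D X_0} \with \Dinj_0$ inside the witness pairing, using \Cref{prop:sum-left-compatible}, which states precisely that $\Dpair{f_0}{f_1} \comp g = \Dpair{f_0 \comp g}{f_1 \comp g}$ when $f_0 \summable f_1$.

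It then remains to evaluate the two components. By bifunctoriality of the cartesian product, $(\Dproj_i \with \Dproj_i) \comp (\id_{\D X_0} \with \Dinj_0) = \Dproj_i \with (\Dproj_i \comp \Dinj_0)$, and by the defining equations of $\Dinj_0 = \Dpair{\id}{0}$ (as a morphism $X_1 \to \D X_1$) one has $\Dproj_0 \comp \Dinj_0 = \id_{X_1}$ and $\Dproj_1 \comp \Dinj_0 = 0$. Hence the first component is $\Dproj_0 \with \id_{X_1}$ and the second is $\Dproj_1 \with 0$, giving $\strengthL = \Dpair{\Dproj_0 \with \id_{X_1}}{\Dproj_1 \with 0}$. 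The computation for $\strengthR$ is entirely symmetric, replacing $\id_{\D X_0} \with \Dinj_0$ by $\Dinj_0 \with \id_{\D X_1}$ and using $\Dproj_0 \comp \Dinj_0 = \id_{X_0}$, $\Dproj_1 \comp \Dinj_0 = 0$, which yields $\strengthR = \Dpair{\id_{X_0} \with \Dproj_0}{0 \with \Dproj_1}$.

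There is no genuine obstacle here: the argument is a short diagram chase. The only two points worth stating explicitly are (i) that the replacement of $(\prodSwap)^{-1}$ by $\Dpair{\Dproj_0 \with \Dproj_0}{\Dproj_1 \with \Dproj_1}$ is justified only because $\prodSwap$ is an isomorphism, which is part of being a CCDC, and (ii) that composition distributes over the witness pairing on the right, which is \Cref{prop:sum-left-compatible} read with the witness $\Dpair{-}{-}$ in place of a plain sum. Everything else is bifunctoriality of $\with$ together with the defining equations of $\Dinj_0$.
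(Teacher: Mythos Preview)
Your proof is correct and follows exactly the paper's approach: invoke \Cref{prop:prodSwap-inverse} to write $(\prodSwap)^{-1} = \Dpair{\Dproj_0 \with \Dproj_0}{\Dproj_1 \with \Dproj_1}$, then carry out the straightforward computation using left compatibility, bifunctoriality of $\with$, and the defining equations of $\Dinj_0$. The paper merely says ``the result follows by a straightforward computation,'' and you have spelled out precisely that computation.
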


\begin{proof}
  By \Cref{prop:prodSwap-inverse},
  $(\prodSwap)^{-1} = \Dpair{\Dproj_0 \with \Dproj_0}{\Dproj_1 \with
    \Dproj_1}$
  and the result follows by a straightforward computation.
\end{proof}

\begin{definition}[Partial derivative] %
  If $f \in \category(X_0 \with X_1, Y)$ one can define
  $\D_0 f \defEq \D f \comp \strengthL \in \category(\D X_0 \with X_1,
  \D Y)$ and
  $\D_1 f \defEq \D f \comp \strengthR \in \category(X_0 \with \D X_1,
  \D Y)$, the \emph{partial derivatives} of $f$.
\end{definition}

\begin{proposition} \label{prop:partial-derivative-Dproj0} %
  For any $f \in \category(X_0\with X_1,Y)$,
  $\Dproj_0 \comp \D_0 f = f \comp (\Dproj_0 \with \id)$ and
  $\Dproj_0 \comp \D_1 f = f \comp (\id \with \Dproj_0)$.
\end{proposition}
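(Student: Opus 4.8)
The plan is to unfold the definition of the partial derivatives and reduce everything to the defining equation $\Dproj_0 \comp \D f = f \comp \Dproj_0$ of a pre-differential structure (\Cref{def:differential-struct}), together with the explicit description of $\strengthL$ and $\strengthR$ provided by \Cref{prop:strength}.

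First I would treat $\D_0 f$. By definition $\D_0 f = \D f \comp \strengthL$, so
\[
  \Dproj_0 \comp \D_0 f = \Dproj_0 \comp \D f \comp \strengthL = f \comp \Dproj_0 \comp \strengthL\,,
\]
using $\Dproj_0 \comp \D f = f \comp \Dproj_0$. It then remains to show $\Dproj_0 \comp \strengthL = \Dproj_0 \with \id_{X_1}$. This is immediate from \Cref{prop:strength}, which gives $\strengthL = \strength{0} = \Dpair{\Dproj_0 \with \id_{X_1}}{\Dproj_1 \with 0}$: by the very definition of the witness $\Dpair{\_}{\_}$ one has $\Dproj_0 \comp \Dpair{g_0}{g_1} = g_0$, so $\Dproj_0 \comp \strengthL = \Dproj_0 \with \id_{X_1}$, whence $\Dproj_0 \comp \D_0 f = f \comp (\Dproj_0 \with \id)$.

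The argument for $\D_1 f$ is symmetric: $\Dproj_0 \comp \D_1 f = f \comp \Dproj_0 \comp \strengthR$, and \Cref{prop:strength} gives $\strengthR = \Dpair{\id_{X_0} \with \Dproj_0}{0 \with \Dproj_1}$, so $\Dproj_0 \comp \strengthR = \id_{X_0} \with \Dproj_0$ and the result follows.

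There is no real obstacle here: the substance of the statement has already been absorbed into \Cref{prop:strength} (whose proof in turn relies on \Cref{prop:prodSwap-inverse} to compute $(\prodSwap)^{-1}$), so the present proposition is just a one-line composition check once those are in hand. The only thing to be careful about is bookkeeping the object indices on $\Dproj_0$, $\id$ and $\with$ so that the composites typecheck, but this is routine.
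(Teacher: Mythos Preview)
Your proof is correct and matches the paper's own argument essentially line for line: unfold $\D_i f$, use $\Dproj_0 \comp \D f = f \comp \Dproj_0$, and read off $\Dproj_0 \comp \strength{i}$ from \Cref{prop:strength}.
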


\begin{proof}
  $\Dproj_0 \comp \D_0 f = \Dproj_0 \comp \D f \comp \strengthL = f
  \comp \Dproj_0 \comp \strengthL = f \comp (\Dproj_0 \with \id)$ by
  \Cref{prop:strength}.
  The proof for $\strengthR$ is similar.
\end{proof}

\begin{proposition} \label{prop:commutative-monad-more} %
  The following diagram commutes.
  \begin{center}
    \begin{tikzcd}
      \D (X_0 \with \D X_1)  \arrow[d, "\D \strengthR"'] & \D X_0 \with \D X_1 \arrow[l, "\strengthL"'] \arrow[r, "\strengthR"] & \D(\D X_0 \with X_1) \arrow[d, "\D \strengthL"] \\
      \D^2(X_0 \with X_1) \arrow[rr,swap, "\Dswap"'] & & \D^2(X_0 \with
      X_1)
    \end{tikzcd}
  \end{center}
\end{proposition}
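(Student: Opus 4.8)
The statement says the square

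\[
\Dswap \comp \strengthL = \D \strengthR \comp \strengthR
\qquad\text{and}\qquad
\Dswap \comp \strengthR' = \D \strengthL \comp \strengthL'
\]

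(with the appropriate source/target objects) commutes; equivalently, going from $\D X_0 \with \D X_1$ either way into $\D^2(X_0 \with X_1)$ and post-composing with $\Dswap$ agrees. The plan is to reduce the diagram to a pointwise identity on witnesses, exactly in the spirit of \Cref{prop:family-on-pairs}, \Cref{prop:strength} and \Cref{prop:D-schwarz}. The main tool is joint monicity of $\Dproj_i$ (applied twice, since the target is $\D^2(X_0\with X_1)$), together with joint monicity of $\prodProj_j$: it suffices to check equality after composing with $\Dproj_i \comp \Dproj_j \comp \prodProj_k$ for all $i,j\in\{0,1\}$ and $k\in\{0,1\}$. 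Actually a cleaner route is to verify the identity after composing with $\prodProj_k \comp (\cdot)$ is not available since the target is not a product of $\D$'s; instead I will simply compute both composites as nested witnesses and compare.

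First I would record what all the ingredients do on witnesses. Let $x_0\in\category(U,X_0)$, $u_0\in\category(U,X_0)$ with $x_0\summable u_0$, and similarly $x_1\summable u_1$; these data are exactly a morphism $\prodPair{\Dpair{x_0}{u_0}}{\Dpair{x_1}{u_1}}\in\category(U,\D X_0\with\D X_1)$. By \Cref{prop:strength} and \Cref{prop:family-on-pairs}, $\strengthL$ sends such a point to $\Dpair{\prodPair{x_0}{x_1}}{\prodPair{u_0}{0}}$ and $\strengthR$ sends it to $\Dpair{\prodPair{x_0}{x_1}}{\prodPair{0}{u_1}}$, both now morphisms $U\to\D(X_0\with X_1)$. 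Then using \Cref{prop:prodpair-derivative} together with \ref{ax:Dproj-lin}, \ref{ax:D-chain} (which give $\D$ of a cartesian pair and $\D$ of $\Dinj_0$ via $\D$-linearity, cf.\ \Cref{rem:linear}) I would compute $\D\strengthR$ and $\D\strengthL$ on the resulting points. Concretely, $\D\strengthR$ applied to $\Dpair{\prodPair{x_0}{x_1}}{\prodPair{u_0}{0}}$ should give the nested witness $\Dpair{\Dpair{\prodPair{x_0}{x_1}}{\prodPair{0}{u_1}}}{\Dpair{\prodPair{u_0}{0}}{\prodPair{0}{0}}}$ — i.e.\ $\strengthR$ is $\D$-linear (noted just before the proposition), so $\D\strengthR=\Dpair{\strengthR\comp\Dproj_0}{\strengthR\comp\Dproj_1}$, and one applies $\strengthR$ coordinatewise. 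The symmetric computation handles $\D\strengthL\comp\strengthL$.

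At that point both composites are explicit nested witnesses in $\category(U,\D^2(X_0\with X_1))$, and by joint monicity (twice) of $\Dproj_i$ it remains to compare their four ``coordinates'' $\Dproj_i\Dproj_j\comp(-)$. Each of these is a morphism into $X_0\with X_1$, so by joint monicity of $\prodProj_k$ one further splits into eight scalar equations; every one of them is one of $x_0,x_1,u_0,u_1,0$ on the nose, and the two sides match because $\Dswap$ (by \Cref{prop:family-on-pairs}) precisely swaps the two middle coordinates: the point $\Dpair{\Dpair{\prodPair{x_0}{x_1}}{\prodPair{0}{u_1}}}{\Dpair{\prodPair{u_0}{0}}{0}}$ produced through $\strengthL$ then $\D\strengthR$ is sent by $\Dswap$ to $\Dpair{\Dpair{\prodPair{x_0}{x_1}}{\prodPair{u_0}{0}}}{\Dpair{\prodPair{0}{u_1}}{0}}$, which is exactly what the $\strengthR$-then-$\D\strengthL$ path yields. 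The only genuine subtlety — the place I would be most careful — is bookkeeping the summability side conditions so that every witness $\Dpair{-}{-}$ and every intermediate sum I write down is actually defined: these follow from \ref{ax:D-zero}, \ref{ax:D-witness} and compatibility with the product (\Cref{prop:prodSwap-inverse}, \Cref{cor:summability-cartesian-compat}), in the same pattern used to justify \Cref{def:DmonadSum}, \Cref{def:Dlift} and \Cref{def:Dswap}. No axiom beyond those used to make $\Dswap$ and the strengths well defined and $\D$-linear is needed; in particular \ref{ax:D-schwarz} itself is \emph{not} invoked here (this is a naturality-of-strength statement, not the Schwarz rule), though the computation has the same flavour as \Cref{prop:D-schwarz}.
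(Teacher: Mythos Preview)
Your overall strategy is exactly the paper's: compute both composites $\D\strengthR\comp\strengthL$ and $\D\strengthL\comp\strengthR$ as explicit nested witnesses using $\D$-linearity of the strengths (\Cref{rem:linear}) and \Cref{prop:strength}, then observe that $\Dswap$ swaps the two middle coordinates via \Cref{prop:family-on-pairs}. The paper simply does this at the level of morphisms (with $\Dproj_i,\prodProj_j$ as the universal inputs) rather than with generalized elements $x_i,u_i$, and arrives at
\[
\D\strengthL\comp\strengthR=\Dpair{\Dpair{\Dproj_0\with\Dproj_0}{\Dproj_1\with 0}}{\Dpair{0\with\Dproj_1}{0\with 0}},\qquad
\D\strengthR\comp\strengthL=\Dpair{\Dpair{\Dproj_0\with\Dproj_0}{0\with\Dproj_1}}{\Dpair{\Dproj_1\with 0}{0\with 0}},
\]
which is precisely your final nested witness specialized to the identity.

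There is, however, a genuine slip in your intermediate bookkeeping. In the diagram the arrow labelled $\strengthL$ has type $\D X_0\with\D X_1\to\D(X_0\with\D X_1)$ (the second factor is $\D X_1$, not $X_1$), so it sends $\prodPair{\Dpair{x_0}{u_0}}{\Dpair{x_1}{u_1}}$ to $\Dpair{\prodPair{x_0}{\Dpair{x_1}{u_1}}}{\prodPair{u_0}{0_{\D X_1}}}$, not to $\Dpair{\prodPair{x_0}{x_1}}{\prodPair{u_0}{0}}\in\category(U,\D(X_0\with X_1))$ as you wrote; symmetrically for $\strengthR$. Your next line then applies $\D\strengthR$ to an element of the wrong type and produces a $u_1$ that is not present in its purported input. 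The final nested witness you land on is nevertheless correct, because the two errors cancel: once you feed the right intermediate point into $\D\strengthR=\Dpair{\strengthR\comp\Dproj_0}{\strengthR\comp\Dproj_1}$ (with $\strengthR:X_0\with\D X_1\to\D(X_0\with X_1)$) you do obtain $\Dpair{\Dpair{\prodPair{x_0}{x_1}}{\prodPair{0}{u_1}}}{\Dpair{\prodPair{u_0}{0}}{\prodPair{0}{0}}}$. Fix those two lines and the argument is complete and identical to the paper's.
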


\begin{proof} 
  We use \Cref{prop:strength} to compute
  $\D \strengthR \comp \strengthL$ and
  $\D \strengthL \comp \strengthR$.
  Since $\strengthL$ is $\D$-linear,
  $\D \strengthL = \Dpair{\strengthL \comp \Dproj_0}{\strengthL \comp
    \Dproj_1}$ by \Cref{rem:linear}.
  Thus
  \begin{align*}
    \D \strengthL \comp \strengthR &= 
                                     \D \strengthL \comp \Dpair{\id_{X_0} \with \Dproj_0}{0 \with \Dproj_1} \\
                                   &= \Dpair{\strengthL \comp (\id_{X_0} \with \Dproj_0)}{\strengthL \comp (0 \with \Dproj_1)} \\
                                   &= \Dpair{\Dpair{\Dproj_0 \with \Dproj_0}{\Dproj_1 \with 0}}
                                     {\Dpair{0 \with \Dproj_1}{0 \with 0}} 
  \end{align*}
  Similarly,
  $\D \strengthR \comp \strengthL = \Dpair{\Dpair{\Dproj_0 \with
      \Dproj_0}{0 \with \Dproj_1}}{\Dpair{\Dproj_1 \with 0}{0 \with
      0}}$.
  The commutation results from~\Cref{prop:family-on-pairs}.
\end{proof}

\begin{proposition} \label{prop:commutative-monad} %
  The following diagram commutes
  \begin{center}
  \begin{tikzcd}
    \D (X_0 \with \D X_1) \arrow[d, "\D \strengthR"']
    & \D X_0 \with \D X_1 \arrow[l, "\strengthL"']
    \arrow[r, "\strengthR"] \arrow[d, "\prodSwap^{-1}"]
    & \D(\D X_0 \with X_1) \arrow[d, "\D \strengthL"] \\
    \D^2(X_0 \with X_1) \arrow[r, "\DmonadSum"]
    & \D (X_0 \with X_1)
    & \D^2(X_0 \with X_1) \arrow[l, "\DmonadSum"']
  \end{tikzcd}
  \end{center}
\end{proposition}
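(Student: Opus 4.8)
The plan is to reduce the diagram to an identity between morphisms into $\D(X_0 \with X_1)$, and then to check it by computing the relevant composites as witnesses of summable pairs, exactly as was done in \Cref{prop:commutative-monad-more}. Recall that by \Cref{prop:prodSwap-inverse} we have $\prodSwap^{-1} = \Dpair{\Dproj_0 \with \Dproj_0}{\Dproj_1 \with \Dproj_1}$, and by \Cref{prop:strength} we have explicit formulas $\strength{0} = \Dpair{\Dproj_0 \with \id}{\Dproj_1 \with 0}$ and $\strength{1} = \Dpair{\id \with \Dproj_0}{0 \with \Dproj_1}$. Since $\strengthL$ and $\strengthR$ are $\D$-linear (as noted just before \Cref{prop:strength}), \Cref{rem:linear} gives $\D\strengthL = \Dpair{\strengthL \comp \Dproj_0}{\strengthL \comp \Dproj_1}$ and similarly for $\strengthR$. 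These are precisely the ingredients already used in the proof of \Cref{prop:commutative-monad-more}.

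First I would treat the left-hand square. Starting from the computation already carried out in \Cref{prop:commutative-monad-more}, we have
\[
  \D\strengthR \comp \strengthL = \Dpair{\Dpair{\Dproj_0 \with \Dproj_0}{0 \with \Dproj_1}}{\Dpair{\Dproj_1 \with 0}{0 \with 0}}\,.
\]
Applying $\DmonadSum$ to this and using \Cref{prop:family-on-pairs} (which says $\DmonadSum \comp \Dpair{\Dpair{x}{u}}{\Dpair{v}{w}} = \Dpair{x}{u+v}$), we obtain $\DmonadSum \comp \D\strengthR \comp \strengthL = \Dpair{\Dproj_0 \with \Dproj_0}{(0 \with \Dproj_1) + (\Dproj_1 \with 0)}$. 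On the other branch, $\DmonadSum \comp \prodSwap^{-1} \comp \strengthL$: here I would first observe that $\prodSwap^{-1} \comp \strengthL = \prodSwap^{-1} \comp \prodSwap^{-1} \comp (\id \with \Dinj_0)$, or more directly just compute $\prodSwap^{-1} \comp \strengthL$ via joint monicity of the $\Dproj_i$, getting a pair in $\D(\D(X_0 \with X_1))$. Then apply $\DmonadSum$. The expected outcome on both sides is $\Dpair{\Dproj_0 \with \Dproj_0}{(\Dproj_1 \with 0) + (0 \with \Dproj_1)}$; here one uses \Cref{prop:with-sum} to recognise $(\Dproj_1 \with 0) + (0 \with \Dproj_1) = \Dproj_1 \with \Dproj_1$, and commutativity of the partial sum (the monoid axioms from the Theorem following \ref{ax:D-witness}) to match the two orderings. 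The right-hand square is entirely symmetric, swapping the roles of the two factors, so I would simply say "the right square is proved by the same computation with the indices $0,1$ interchanged."

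The main obstacle is bookkeeping rather than conceptual: one must carry four-deep nested $\Dpair{\_}{\_}$ expressions through $\DmonadSum$ and $\prodSwap^{-1}$, and be careful that all the partial sums appearing (e.g. $(0 \with \Dproj_1) + (\Dproj_1 \with 0)$) are genuinely well defined before invoking commutativity/associativity — but this is guaranteed because everything is obtained by applying $\DmonadSum$ and $\prodSwap^{-1}$ to pairs that are summable by construction (the summability of $\Dproj_0 \with \Dproj_0$ and $\Dproj_1 \with \Dproj_1$ comes from \Cref{prop:prodSwap-inverse}, and the rest from \Cref{prop:with-sum} and \Cref{prop:sum-left-compatible}). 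Once the two composites are written as witnesses, equality follows from joint monicity of the $\Dproj_i$ together with the partial commutative monoid structure on hom-sets. I would present only the left square in full and dispatch the right one by symmetry.
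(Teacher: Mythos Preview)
Your overall approach matches the paper's: reuse the explicit computation of $\D\strengthL\comp\strengthR$ (and symmetrically $\D\strengthR\comp\strengthL$) from \Cref{prop:commutative-monad-more}, apply $\DmonadSum$ via \Cref{prop:family-on-pairs}, and then simplify using \Cref{prop:with-sum} and \Cref{prop:prodSwap-inverse}. That is exactly what the paper does, and your computation of $\DmonadSum\comp\D\strengthR\comp\strengthL = \Dpair{\Dproj_0\with\Dproj_0}{(0\with\Dproj_1)+(\Dproj_1\with 0)}$ is correct.

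However, you misread the diagram when describing ``the other branch''. The left square asserts simply
\[
  \DmonadSum\comp\D\strengthR\comp\strengthL \;=\; \prodSwap^{-1},
\]
so the other side is the single arrow $\prodSwap^{-1}$, not a composite $\DmonadSum\comp\prodSwap^{-1}\comp\strengthL$ (indeed that composite does not even type-check: $\strengthL$ lands in $\D(X_0\with\D X_1)$, which is not the domain of $\prodSwap^{-1}$). There is therefore nothing to compute on the other side beyond recalling $\prodSwap^{-1}=\Dpair{\Dproj_0\with\Dproj_0}{\Dproj_1\with\Dproj_1}$ from \Cref{prop:prodSwap-inverse}. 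Once you drop that spurious detour, your argument is exactly the paper's: after applying \Cref{prop:with-sum} to rewrite $(0\with\Dproj_1)+(\Dproj_1\with 0)=\Dproj_1\with\Dproj_1$ (commutativity of the partial sum handles the ordering), both sides agree and you are done; the right square is symmetric.
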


\begin{proof}
  Thanks to the computation of $\D \strengthL \comp \strengthR$ in the
  proof of \Cref{prop:commutative-monad-more}, we know that
  $\DmonadSum \comp \D \strengthL \comp \strengthR = \Dpair{\Dproj_0
    \with \Dproj_0} {\Dproj_1 \with 0 + 0 \with \Dproj_1} =
  \Dpair{\Dproj_0 \with \Dproj_0}{\Dproj_1 \with \Dproj_1}$ by
  \Cref{prop:with-sum}.
  So
  $\DmonadSum \comp \D \strengthL \comp \strengthR = (\prodSwap)^{-1}$
  by \Cref{prop:prodSwap-inverse}.
  A similar computation yields the result for
  $\DmonadSum \comp \D \strengthR \comp \strengthL$.
\end{proof}


\begin{remark}
  We can check that the natural morphisms $\strengthL, \strengthR$ are 
  \emph{strenghts}~\cite{Kock72, Moggi91} 
  for the monad
  $(\D, \Dinj_0, \DmonadSum)$.
  Then the diagram of \Cref{prop:commutative-monad} means that this
  strong monad is a \emph{commutative monad}.
  The diagrams can be checked by hand, but are also a consequence of
  very generic properties about strong monads on cartesian categories.

  As mentioned in~\cite{Aguiar18} in paragraph 2.3, any monad
  $(\monad, \monadUnit, \monadSum)$ on a cartesian category can be
  endowed with the structure of \emph{a colax symmetric monoidal
    monad}%
  \footnote{Also called oplax symmetric monoidal monad, or symmetric
    comonoidal monad, or Hopf monad, see~\cite{Moerdijk02}} %
  taking
  \begin{itemize}
  \item
    $\smfProdOne$ is the unique element of $\category(\monad
    \top, \top)$
  \item
    $\smfProdTwo_{X_1, X_2} \defEq \prodPair{\monad
      \prodProj_1}{\monad \prodProj_2} \in \category(\monad (X_1 \with
    X_2), \monad X_1 \with \monad X_2)$
  \end{itemize}
  %
  If $\smfProdTwo$ and $\smfProdOne$ are isos, $\monad$ becomes a
  \emph{(strong) symmetric monoidal monad}.
  This is what happens here for
  $\monad = \D$, because $\smfProdTwo=\prodSwap$ and
  we can show that $\smfProdOne$ is an isomorphism with inverse
  $\Dinj_0$ using the join monicity of the $\Dproj_i$.
  %
  %
  But symmetric monoidal monad are the same as commutative monads as
  shown in \cite{Kock70, Kock72}, and it turns out that the strengths
  induced from the symmetric monoidal structure are exactly
  $\strengthL$ and $\strengthR$.
\end{remark}

The axioms \ref{ax:D-schwarz} and \ref{ax:D-add} carry to the
setting of partial derivatives very naturally thanks to
\Cref{prop:commutative-monad-more,prop:commutative-monad}
respectively, giving the full fledged Schwarz and Leibniz rules.
The fact that the Leibniz rule is a consequence of the additivity of
the derivative is not surprising, as it is also the case in the usual
differential calculus:
$f'(x, y) \cdot (u, v) = f'(x, y) \cdot (u, 0) + f'(x, y) \cdot (0, v)
= \frac{\partial f}{\partial x}(x, y) \cdot u + \frac{\partial
  f}{\partial y}(x, y) \cdot v$.

\begin{proposition}[Leibniz rule]
 \label{prop:leibniz} $\D f \comp \prodSwap^{-1} 
= \DmonadSum \comp \D_0 \D_1 f = \DmonadSum \comp \D_1 \D_0 f$
\end{proposition}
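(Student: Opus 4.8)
The plan is to prove the two claimed equalities $\D f \comp \prodSwap^{-1} = \DmonadSum \comp \D_0 \D_1 f$ and $\DmonadSum \comp \D_0 \D_1 f = \DmonadSum \comp \D_1 \D_0 f$ by unfolding the definitions of $\D_0$ and $\D_1$ and reducing everything to diagram chases already available in the excerpt. First I would expand $\D_0 \D_1 f = \D(\D_1 f) \comp \strengthL$ and $\D_1 f = \D f \comp \strengthR$, so that $\D_0 \D_1 f = \D(\D f \comp \strengthR) \comp \strengthL$. Using \ref{ax:D-chain} (functoriality of $\D$) this becomes $\D^2 f \comp \D\strengthR \comp \strengthL$. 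Symmetrically, $\D_1 \D_0 f = \D^2 f \comp \D\strengthL \comp \strengthR$.

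For the second equality, I would then invoke \Cref{prop:commutative-monad}: its commuting diagram says precisely that $\DmonadSum \comp \D\strengthL \comp \strengthR = \prodSwap^{-1} = \DmonadSum \comp \D\strengthR \comp \strengthL$. Postcomposing with $\D^2 f$ is not quite what we want, since $\DmonadSum$ sits on the \emph{outside}; instead I would use naturality of $\DmonadSum$ (which holds by \ref{ax:D-add}, and which makes $\DmonadSum$ a monad multiplication by the corollary after \Cref{prop:family-on-pairs}) to slide $\D^2 f$ past $\DmonadSum$: $\DmonadSum \comp \D^2 f = \D f \comp \DmonadSum$. Hence
\[
\DmonadSum \comp \D_0\D_1 f = \DmonadSum \comp \D^2 f \comp \D\strengthR \comp \strengthL = \D f \comp \DmonadSum \comp \D\strengthR \comp \strengthL = \D f \comp \prodSwap^{-1},
\]
and the identical computation with the roles of $0$ and $1$ swapped gives $\DmonadSum \comp \D_1\D_0 f = \D f \comp \prodSwap^{-1}$ as well. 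This simultaneously establishes both claimed equalities, since the common value is $\D f \comp \prodSwap^{-1}$.

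The only genuinely delicate point is getting the direction of the naturality square for $\DmonadSum$ right and making sure the summability side-conditions implicit in the partial sums are met — but these are exactly the conditions packaged into \Cref{prop:commutative-monad} and the monad structure of $(\D,\Dinj_0,\DmonadSum)$, so no new verification is needed. I expect the main obstacle to be purely bookkeeping: tracking which occurrence of $\DmonadSum$ or $\prodSwap^{-1}$ lives over which product of objects, since $\strengthL$ and $\strengthR$ land in $\D(X_0 \with X_1)$ from different sides and the two composites $\D\strengthR \comp \strengthL$ and $\D\strengthL \comp \strengthR$ have been computed explicitly in the proof of \Cref{prop:commutative-monad-more}. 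Everything else is a direct substitution of \ref{ax:D-chain}, naturality of $\DmonadSum$, and \Cref{prop:commutative-monad}.
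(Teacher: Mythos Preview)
Your proposal is correct and follows essentially the same route as the paper: unfold $\D_0\D_1 f$ via the definitions and \ref{ax:D-chain} into $\D^2 f \comp \D\strengthR \comp \strengthL$, use naturality of $\DmonadSum$ from \ref{ax:D-add} to slide $\D^2 f$ to $\D f$, and then apply \Cref{prop:commutative-monad} to identify $\DmonadSum \comp \D\strengthR \comp \strengthL$ with $\prodSwap^{-1}$ (and symmetrically for the other order). The paper's proof is line-for-line the same computation.
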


\begin{proof}
  Let us prove that
  $\D f \comp \prodSwap^{-1} = \DmonadSum \comp \D_0 \D_1 f$.
  \begin{align*}
    \DmonadSum \comp \D_0 \D_1 f 
    &= \DmonadSum \comp \D (\D f \comp \strengthR) \comp \strengthL  \quad \text{by definition} \\
    &=  \DmonadSum \comp \D^2 f \comp \D \strengthR \comp \strengthL \quad
                                                                       \text{by \ref{ax:D-chain}} \\
    &= \D f \comp \DmonadSum \comp \D \strengthR \comp \strengthL \quad
                                                                    \text{by \ref{ax:D-add}}  \\
    &= \D f \comp \prodSwap^{-1} \quad \text{by \Cref{prop:commutative-monad}} \tag*{\qedhere}
  \end{align*}
  The proof of
  $\D f \comp \prodSwap^{-1} = \DmonadSum \comp \D_1 \D_0 f$ is
  similar.
\end{proof}

\begin{proposition}[Schwarz rule]
$\D_0 \D_1 f = \Dswap \comp \D_1 \D_0 f$
\end{proposition}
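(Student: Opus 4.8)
The plan is to unfold both iterated partial derivatives with the Chain Rule, move $\Dswap$ across $\D^2 f$ using its naturality, and close the argument with the compatibility square of \Cref{prop:commutative-monad-more}.

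First I would expand the two sides. Since $\D_1 g = \D g \comp \strengthR$ and $\D_0 g = \D g \comp \strengthL$ for a morphism $g$ out of a binary product, applying \ref{ax:D-chain} (functoriality of $\D$) to $g = \D_1 f$ gives $\D_0 \D_1 f = \D^2 f \comp \D \strengthR \comp \strengthL$, and symmetrically $\D_1 \D_0 f = \D^2 f \comp \D \strengthL \comp \strengthR$. One has to be a little careful with the objects at which the strengths live: the outer strength forming $\D_0(\D_1 f)$ is the $\strengthL$ attached to the product $X_0 \with \D X_1$, and the outer strength forming $\D_1(\D_0 f)$ is the $\strengthR$ attached to the product $\D X_0 \with X_1$. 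This is precisely the configuration of the two composites $\D \strengthR \comp \strengthL$ and $\D \strengthL \comp \strengthR$ that appear in \Cref{prop:commutative-monad-more}.

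Next I would invoke axiom \ref{ax:D-schwarz}, that is, the naturality of $\Dswap$, in the form $\Dswap \comp \D^2 f = \D^2 f \comp \Dswap$ (the naturality square of $\Dswap$ at $f$), obtaining
\[
  \Dswap \comp \D_1 \D_0 f = \D^2 f \comp \Dswap \comp \D \strengthL \comp \strengthR\,.
\]
It then remains to check the purely structural identity $\Dswap \comp \D \strengthL \comp \strengthR = \D \strengthR \comp \strengthL$. This is immediate from \Cref{prop:commutative-monad-more}, which states $\Dswap \comp \D \strengthR \comp \strengthL = \D \strengthL \comp \strengthR$: composing this equation on the left with $\Dswap$ and using that $\Dswap$ is involutive ($\Dswap \comp \Dswap = \id$, which follows from the action of $\Dswap$ on witnesses given in \Cref{prop:family-on-pairs}) yields exactly the required equation. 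Substituting back gives $\Dswap \comp \D_1 \D_0 f = \D^2 f \comp \D \strengthR \comp \strengthL = \D_0 \D_1 f$.

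I do not expect a genuine obstacle: the two nontrivial ingredients, namely the Schwarz axiom \ref{ax:D-schwarz} and the commuting square of \Cref{prop:commutative-monad-more}, are already available, so the remaining work is only the bookkeeping of which product each strength refers to and the elementary observation that $\Dswap$ squares to the identity. The mildest subtlety is ensuring the instances of $\Dswap$ used in the naturality step and in \Cref{prop:commutative-monad-more} are taken at the matching objects, but this is routine.
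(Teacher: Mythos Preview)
Your proposal is correct and follows essentially the same approach as the paper: expand the iterated partial derivatives via \ref{ax:D-chain}, use the naturality of $\Dswap$ from \ref{ax:D-schwarz} to push it past $\D^2 f$, and conclude with \Cref{prop:commutative-monad-more}. The paper's own proof is just the remark that it is ``very similar to that of \Cref{prop:leibniz}, except that it uses the naturality of $\Dswap$ instead of the naturality of $\DmonadSum$'', and your write-up is exactly that computation spelled out, including the small extra step of using the involutivity of $\Dswap$ to flip the orientation of the square in \Cref{prop:commutative-monad-more}.
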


\begin{proof}
  Very similar to that of \Cref{prop:leibniz}, except that it uses the
  naturality of $\Dswap$ of \ref{ax:D-schwarz} instead of the
  naturality of $\DmonadSum$.
\end{proof}

\subsection{Generalization to arbitrary finite products}

\begin{notation} Recall that the existence of arbitrary finite products is equivalent
to the existence of a binary product and a terminal object. In order to stay consistent
with the current notations, we write the finite products starting from $0$: 
$X_0 \with \cdots \with X_n$. We allow empty products, with the convention
that taking $n = -1$ yields a product $X_0 \with \cdots \with X_{-1} := \top$.
\end{notation}

The constructions above can be extended to arbitrary finite products.
On can indeed define
a (symmetric monoidal) natural transformation
$\prodSwap[n] \in \category(\D (X_0 \with \cdots \with X_n), 
\D X_0 \with \cdots \with \D X_{n})$
inductively by 
\((\prodSwap[-1])_X \defEq \prodFinal_{\D \top} \in \category(\D \top, \top)\),
\((\prodSwap[0])_X \defEq \id_{\D X} \in \category(\D X, \D X)\) and
\(\prodSwap[n+1] \defEq \prodSwap \comp
\prodPair{\prodSwap[n]}{\id_{\D X_{n+1}}}\).
By associativity of the cartesian product, this definition does not
depend on the actual parenthesizing of $X_0 \with \cdots \with X_n$.

\begin{notation}
  Let $X_0, Y_0, \ldots, X_n, Y_n\in\Obj\category$,
  $i \in \interval{0}{n}$ and
  $f_k \in \category(X_k, Y_k)$ for each $k\neq i$.
  Let $g \in \category(X_i, Y_i)$.
  Define $\singleApp{i}{f}{g} \defEq
  f_0 \with \cdots \with f_{i-1} \with g \with f_{i+1} \with \cdots
  \with f_n$ 
  in which we use $f_i$ everywhere except at position $i$ where we use $g$.
\end{notation}

Similarly to the binary case, one can then define a strength
$\strength{i} \in \category(X_0 \with \cdots \with \D X_i \with \cdots
\with X_n, \D (X_0 \with \cdots \with X_n))$ as
\[
  \strength{i}
  \defEq (\prodSwap[n])^{-1} \comp \singleApp{i}{(\Dinj_0)}{\id_{\D X_i}}\,.
\]
\begin{proposition}
  $\prodSwap[n]$ is an isomorphism and
  $(\prodSwap[n])^{-1} = \Dpair{\Dproj_0 \with \cdots \with
    \Dproj_0}{\Dproj_1 \with \cdots \with \Dproj_1}$.
  Hence,
  \label{prop:strength-n}
  $\strength{i} =
  \Dpair{\singleApp{i}{\id}{\Dproj_0}}{\singleApp{i}{0}{\Dproj_1}}$.
\end{proposition}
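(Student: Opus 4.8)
The plan is to prove both statements simultaneously by induction on $n \geq -1$, the binary case $\prodSwap$ being already settled by \Cref{prop:prodSwap-inverse} (which applies since in a CCDC the summability structure is compatible with the cartesian product, cf.\ \Cref{cor:summability-cartesian-compat}).

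First I would dispatch the two base cases. For $n = -1$, the hom-set $\category(\top, \top)$ is a singleton, so $\id_\top = 0^{\top,\top}$ and hence $\id_\top \summable \id_\top$ by \Cref{prop:zero-additive}; thus $\Dpair{\id_\top}{\id_\top}$ is defined, and it is a two-sided inverse of $\prodSwap[-1] = \prodFinal_{\D\top}$ — one composite is forced by $\category(\top,\top)$ being a singleton, the other equals $\id_{\D\top}$ by joint monicity of the $\Dproj_i$, since $\Dproj_i \comp \Dpair{\id_\top}{\id_\top} \comp \prodFinal_{\D\top}$ and $\Dproj_i \comp \id_{\D\top}$ both live in the singleton $\category(\D\top, \top)$. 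For $n = 0$ one has $\prodSwap[0] = \id_{\D X}$ and $\Dpair{\Dproj_0}{\Dproj_1} = \id$ by \Cref{prop:proj-sum}, so there is nothing to do.

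For the inductive step, I would first check that $\witness_{n+1} := \Dpair{\Dproj_0 \with \cdots \with \Dproj_0}{\Dproj_1 \with \cdots \with \Dproj_1}$ is well defined: $\Dproj_0 \summable \Dproj_1$ on each factor, and iterating \Cref{prop:prodSwap-inverse} (plus left compatibility, \Cref{prop:sum-left-compatible}) gives $\Dproj_0 \with \cdots \with \Dproj_0 \summable \Dproj_1 \with \cdots \with \Dproj_1$. Since $\prodSwap[n+1]$ is built from $\prodSwap$ and $\prodSwap[n]$ by composition and products with identities, and both of these are isomorphisms (the former by \Cref{prop:prodSwap-inverse}, the latter by the induction hypothesis), $\prodSwap[n+1]$ is an isomorphism. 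To see that its inverse is $\witness_{n+1}$, I would derive, by induction from the defining equation of $\prodSwap[n+1]$ and the binary identity $\Dproj_i \comp \prodProj_j \comp \prodSwap = \prodProj_j \comp \Dproj_i$, the $n$-ary naturality identity $\Dproj_i \comp \prodProj_k \comp \prodSwap[n+1] = \prodProj_k \comp \Dproj_i$ for all $i \in \{0,1\}$ and all projections $\prodProj_k$ of the ambient product; combined with $\Dproj_i \comp \witness_{n+1} = \Dproj_i \with \cdots \with \Dproj_i$, this forces $\prodSwap[n+1] \comp \witness_{n+1} = \id$ and $\witness_{n+1} \comp \prodSwap[n+1] = \id$ by joint monicity of the composite projections, which closes the induction.

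For the formula for $\strength{i}$, I would substitute $(\prodSwap[n])^{-1} = \Dpair{\Dproj_0 \with \cdots \with \Dproj_0}{\Dproj_1 \with \cdots \with \Dproj_1}$ into $\strength{i} = (\prodSwap[n])^{-1} \comp \singleApp{i}{(\Dinj_0)}{\id_{\D X_i}}$, pull the composition into the witness pairing using \Cref{prop:sum-left-compatible}, and then simplify each component with functoriality of $\with$ and the identities $\Dproj_0 \comp \Dinj_0 = \id$, $\Dproj_1 \comp \Dinj_0 = 0$ and $\Dproj_j \comp \id_{\D X_i} = \Dproj_j$; the first component collapses to $\singleApp{i}{\id}{\Dproj_0}$ and the second to $\singleApp{i}{0}{\Dproj_1}$, giving exactly $\strength{i} = \Dpair{\singleApp{i}{\id}{\Dproj_0}}{\singleApp{i}{0}{\Dproj_1}}$. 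The main obstacle is organizational rather than conceptual: setting up the $n$-ary products, their projections, and the inductive definition of $\prodSwap[n]$ precisely enough to prove the $n$-ary naturality identity above; everything after that is a routine joint-monicity computation analogous to the binary case already in the paper.
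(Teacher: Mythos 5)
Your proof is correct and follows essentially the same route as the paper, which simply unfolds the inductive definition of $\prodSwap[n]$, invokes \Cref{prop:prodSwap-inverse} for the binary case, and then derives the formula for $\strength{i}$ exactly as in \Cref{prop:strength}. The additional detail you supply (the two base cases, the summability of the candidate witness, and the joint-monicity bookkeeping for identifying the inverse) is precisely the routine verification the paper leaves implicit.
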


\begin{proof} 
  The equation on $\prodSwap[n]$ is obtained by unfolding the
  inductive definition and using~\Cref{prop:prodSwap-inverse}.
  The equations on the $\strength{i}$'s follow from this, as
  in~\Cref{prop:strength}.
\end{proof}

\begin{definition}
  For any $f \in \category(X_0 \with \cdots \with X_n, Y)$ one can
  define the \emph{i-th partial derivative of $f$} as
  $\D_i f \defEq \D f \comp \strength{i} \in \category(X_0 \with
  \cdots \with \D X_i \with \cdots \with X_n, \D Y)$.
\end{definition}

\begin{proposition} \label{prop:partial-derivative-Dproj0-n}
  $\Dproj_0 \comp \D_i f = f \comp \singleApp{i}{\id}{\Dproj_0}$.
\end{proposition}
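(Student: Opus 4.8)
The plan is to unfold the definition of $\D_i f$ and then apply two facts already available: the defining equation of a pre-differential structure, $\Dproj_0 \comp \D f = f \comp \Dproj_0$, and the explicit description of the strength $\strength{i}$ given in~\Cref{prop:strength-n}. This is the exact analogue of~\Cref{prop:partial-derivative-Dproj0}, now carried out for an arbitrary finite product.

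Concretely, first I would write
\[
  \Dproj_0 \comp \D_i f = \Dproj_0 \comp \D f \comp \strength{i}
  = f \comp \Dproj_0 \comp \strength{i}\,,
\]
where the second equality is the naturality-style equation $\Dproj_0 \comp \D f = f \comp \Dproj_0$ built into the definition of a pre-differential structure (\Cref{def:differential-struct}). Next I would invoke~\Cref{prop:strength-n}, which gives
$\strength{i} = \Dpair{\singleApp{i}{\id}{\Dproj_0}}{\singleApp{i}{0}{\Dproj_1}}$, so that by the defining property of the witness $\Dpair{\_}{\_}$ we have $\Dproj_0 \comp \strength{i} = \singleApp{i}{\id}{\Dproj_0}$. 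Substituting this in yields $\Dproj_0 \comp \D_i f = f \comp \singleApp{i}{\id}{\Dproj_0}$, as desired.

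There is no real obstacle here: the statement is a one-line computation once~\Cref{prop:strength-n} is in place, and the only thing to be careful about is that the equation $\Dproj_0 \comp \D f = f \comp \Dproj_0$ is precisely the hypothesis imposed on pre-differential structures, so it may be applied even though $\D$ has not (yet) been assumed to be a functor. If one prefers to avoid citing~\Cref{prop:strength-n}, the same computation goes through directly from $\strength{i} \defEq (\prodSwap[n])^{-1} \comp \singleApp{i}{(\Dinj_0)}{\id_{\D X_i}}$ together with the explicit form $(\prodSwap[n])^{-1} = \Dpair{\Dproj_0 \with \cdots \with \Dproj_0}{\Dproj_1 \with \cdots \with \Dproj_1}$ and $\Dproj_0 \comp \Dinj_0 = \id$, but routing through~\Cref{prop:strength-n} is cleaner.
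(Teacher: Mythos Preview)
Your proof is correct and follows exactly the same route as the paper, which simply says ``Same as \Cref{prop:partial-derivative-Dproj0}'': unfold $\D_i f = \D f \comp \strength{i}$, use $\Dproj_0 \comp \D f = f \comp \Dproj_0$, and read off $\Dproj_0 \comp \strength{i} = \singleApp{i}{\id}{\Dproj_0}$ from \Cref{prop:strength-n}.
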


\begin{proof}
  Same as \Cref{prop:partial-derivative-Dproj0}.
\end{proof}

\begin{definition}
  \label{def:natural-trans-iterate}
  For any $X\in\Obj\category$ and $n \geq 0$, we can define
  $\DmonadSum_X^k \in \category(\D^{n+1} X, \D X)$
  as the composition of \(k\) copies of \(\DmonadSum\):
  $\DmonadSum_X^0 = \id_{\D X}$ and
  $\DmonadSum_X^{k+1} = \DmonadSum_X^k \comp \DmonadSum_{\D^k X}$.
  We define similarly $\Dproj_i^k \in \category(\D^k X, X)$.
\end{definition}

Note that
$\DmonadSum^{k} = \Dpair{\Dproj_0^{k+1}}{\sum_{j = 0}^{k} \Dproj_0^{j}
  \comp \Dproj_1 \comp \Dproj_0^{k-j}}$.
In other words, the right component of $\DmonadSum^k$ sums over all of
the possible combinations of $k$ left projections and one right
projection.
One can prove a generalization of \Cref{prop:commutative-monad} for
$n \geq 1$,
\[
  (\prodSwap[n])^{-1}
  = \DmonadSum^{n} \comp \D^n \strength{\alpha(n)} \comp \cdots
  \comp \D \strength{\alpha(1)} \comp \D\strength{\alpha(0)}
\]
for any $\alpha$ permutation of $\llbracket 0, n \rrbracket$.
As in \Cref{prop:leibniz}, this generalizes the Leibniz Rule to the
$n$-ary case.

\begin{proposition}[Leibniz, generalized]
 \label{prop:leibniz-n} For any $n \geq 1$ and for any permutation $\alpha$ of 
 $\llbracket 0, n \rrbracket$,
 \[
   \D f \comp (\prodSwap[n])^{-1} = \DmonadSum^{n} \comp \D_{\alpha(n)} \ldots
   \D_{\alpha(0)} f\,.
 \]
\end{proposition}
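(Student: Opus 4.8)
The plan is to prove the statement by induction on $n$, with the base case $n=1$ essentially being \Cref{prop:commutative-monad} rephrased in the language of partial derivatives, and the inductive step peeling off one strength at a time using the inductive definition $\prodSwap[n+1] = \prodSwap \comp \prodPair{\prodSwap[n]}{\id_{\D X_{n+1}}}$. First I would unfold the definition $\D_i f = \D f \comp \strength{i}$ on the right-hand side, so that the claim becomes an equation purely about the natural transformation $(\prodSwap[n])^{-1}$ and the iterated strengths and multiplications, namely
\[
  (\prodSwap[n])^{-1} = \DmonadSum^{n} \comp \D^n \strength{\alpha(n)} \comp \cdots \comp \D\strength{\alpha(0)}\,,
\]
and then conclude by precomposing with $\D f$ and repeatedly applying \ref{ax:D-chain} (to pull $\D f$ through the $\D^k$'s) and \ref{ax:D-add} (to commute $\D f$ past $\DmonadSum^n$, using that $\DmonadSum$ is a natural transformation and iterating). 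So the heart of the matter is the displayed identity about $(\prodSwap[n])^{-1}$ alone.

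For that identity I would argue by joint monicity: it suffices to check that composing both sides with $\Dproj_0$ and with $\Dproj_1$ (the jointly monic pair out of $\D(X_0\with\cdots\with X_n)$ — wait, the codomain is $\D X_0 \with \cdots \with \D X_n$, so I should compose with the $\prodProj_j$ and then with $\Dproj_i$, using the joint monicity of the $\prodProj_j$ from the universal property of the product together with the joint monicity of the $\Dproj_i$). On the left, $\Dproj_i \comp \prodProj_j \comp (\prodSwap[n])^{-1} = \prodProj_j \comp (\Dproj_i \with \cdots \with \Dproj_i)$ by \Cref{prop:strength-n}. On the right one computes, using \Cref{prop:strength-n} for the explicit form $\strength{k} = \Dpair{\singleApp{k}{\id}{\Dproj_0}}{\singleApp{k}{0}{\Dproj_1}}$, \Cref{prop:family-on-pairs} to read off how $\DmonadSum$ acts on nested witnesses, and the observation recorded just before the statement that the right component of $\DmonadSum^n$ sums over all placements of a single $\Dproj_1$ among $n$ copies of $\Dproj_0$; the point is that applying $\strength{\alpha(0)}, \dots, \strength{\alpha(n)}$ in any order produces, in the $j$-th product-component, exactly the nested pair whose $\DmonadSum^n$-collapse is $\prodProj_j \comp (\Dproj_i \with \cdots \with \Dproj_i)$ for $i=0$ and the appropriate single-$\Dproj_1$ sum for $i=1$ — and here the independence from the permutation $\alpha$ is what makes the two sides agree regardless of the order. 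This reduces everything to the $n=1$ case applied componentwise, which is precisely \Cref{prop:commutative-monad} (for the first vs.\ last variable) extended along the inductive definition of $\prodSwap[n]$, i.e.\ \Cref{prop:commutative-monad-more} handles the transposition of adjacent $\D$-layers needed to move $\strength{\alpha(k)}$ past the others.

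The main obstacle I expect is bookkeeping the nested summability conditions: each intermediate expression $\D^k \strength{\alpha(k-1)} \comp \cdots \comp \D\strength{\alpha(0)}$ lands in a high iterate $\D^{k+1}(X_0\with\cdots\with X_n)$, and to even write down $\DmonadSum^n$ applied to it one must know that all the relevant binary sums appearing in the right component are well defined (\Cref{notation:sum-defined}). These are not automatic; they have to be produced the way \Cref{def:DmonadSum}, \Cref{def:Dlift} and \Cref{def:Dswap} are justified, by repeatedly invoking \ref{ax:D-witness}, the additivity of the projections, associativity of the partial sum, and left compatibility (\Cref{prop:sum-left-compatible}), together with the product-compatibility facts \Cref{cor:summability-cartesian-compat} and \Cref{prop:prodSwap-inverse}(2). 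Rather than track this by hand at each layer, I would package it: show once, by induction on $n$, that the nested witness $\Dpair{\cdots}{\cdots}$ describing $\D^n\strength{\alpha(n)}\comp\cdots\comp\D\strength{\alpha(0)}$ exists (this is where \Cref{prop:prodSwap-inverse} and the commutative-monad diagrams do the real work), and then the equality itself is the routine joint-monicity computation sketched above. The permutation-independence is then immediate since at each stage the choice of which variable to differentiate next only changes which product-component a given nested coordinate sits in, not the final $\DmonadSum^n$-collapse.
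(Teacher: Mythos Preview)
Your proposal is correct and follows exactly the paper's route: reduce to the structural identity for $(\prodSwap[n])^{-1}$ in terms of $\DmonadSum^{n}$ and the iterated strengths (the $n$-ary generalization of \Cref{prop:commutative-monad} stated just before the proposition), then unfold $\D_{\alpha(n)}\cdots\D_{\alpha(0)} f$ via \ref{ax:D-chain} and commute $\D f$ past $\DmonadSum^{n}$ using the naturality of $\DmonadSum$ from \ref{ax:D-add}, exactly as in \Cref{prop:leibniz}. One minor remark: your concern about summability bookkeeping is overstated, since every morphism in the displayed identity is a composite of already well-defined morphisms ($\strength{i}$, $\DmonadSum$, and $\D^k$ applied to them), so no new summability conditions need to be established along the way.
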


\subsection{Multilinear morphism}

We generalize to multivariate functions the notion of additivity and 
$\D$-linearity.

\begin{definition}
  A morphism $f \in \category(Y_0 \with \cdots \with Y_n, Z)$ 
  is additive in its $i^{th}$ argument (for $i \in \interval{0}{n}$) if 
  $f \comp \singleApp{i}{\id}{0} = 0$ and
  if for all $h_0, h_1 \in \category(X, Y_i)$ such that $h_0 \summable h_1$, then
  $f \comp \singleApp{i}{\id}{h_0} \summable f \comp \singleApp{i}{\id}{h_1}$
  and
  \[ f \comp \singleApp{i}{\id}{h_0} + f \comp \singleApp{i}{\id}{h_1} = 
    f \comp \singleApp{i}{\id}{h_0 + h_1} \]
\end{definition}

\begin{proposition} A morphism $f \in \category(Y_0 \with \cdots \with Y_n, Z)$ 
such that $f \comp \singleApp{i}{\id}{0} = 0$
is additive in its $i^{th}$ argument if and only if 
$f \comp \singleApp{i}{\id}{\Dproj_0} \summable f \comp \singleApp{i}{\id}{\Dproj_1}$
 with sum $f \comp \singleApp{i}{\id}{\Dsum}$.
\end{proposition}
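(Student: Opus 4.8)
The statement is exactly the $i$-th-argument analogue of \Cref{prop:additive}, so the plan is to reduce it to that proposition by composing on the right with $\singleApp{i}{\id}{\Dpair{h_0}{h_1}}$. First I would note that the map $g \mapsto \singleApp{i}{\id}{g}$ is, up to the obvious identifications, just a "$\comp$ on the right" operation: for any $g \in \category(X, Y_i)$ we have $\singleApp{i}{\id}{g} = \singleApp{i}{\id}{\Dproj_0^{\mathsf{fake}}}$-style rewriting, but more precisely $f \comp \singleApp{i}{\id}{g} = \bigl(f \comp \singleApp{i}{\id}{\prodProj_{\text{-}}}\bigr) \comp \cdots$; the clean way is to observe that $\singleApp{i}{\id}{g} = \singleApp{i}{\id}{\id} \comp (\text{substitute }g\text{ in slot }i)$ using that the $\with$-pairing is built from projections, and that $\singleApp{i}{\id}{g_0 \comp g_1} = \singleApp{i}{\id}{g_0} \comp (\id \with \cdots \with g_1 \with \cdots \with \id)$ is not quite what we want — instead the relevant identity is $\singleApp{i}{\id}{g} = (\id_{Y_0} \with \cdots \with g \with \cdots \with \id_{Y_n}) \comp \langle \prodProj_0, \ldots, \prodProj_n\rangle$-free, i.e. simply $\singleApp{i}{\id}{g} = \singleApp{i}{\id}{\id_{Y_i}}$ precomposed appropriately; the honest bookkeeping lemma I need is that if $g_0, g_1 \in \category(X, Y_i)$ are summable with witness $\Dpair{g_0}{g_1} \in \category(X, \D Y_i)$, then $\singleApp{i}{\id}{g_0}$ and $\singleApp{i}{\id}{g_1}$ are summable in $\category(X, Y_0 \with \cdots \with Y_n)$ with witness obtained by applying $\singleApp{i}{\id}{\Dpair{g_0}{g_1}}$ followed by $\strength{i}$ to reshuffle into $\D(Y_0 \with \cdots \with Y_n)$.

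Concretely, I would set $\witness_j \defEq (\prodSwap[n])^{-1} \comp \singleApp{i}{\id}{\Dpair{g_0}{g_1}}$ — wait, that lives in the wrong domain; the correct move is: given summable $g_0 \summable g_1$ in $\category(X, Y_i)$, form $\singleApp{i}{\id}{\Dpair{g_0}{g_1}} \in \category(Y_0 \with \cdots \with X \with \cdots \with Y_n, Y_0 \with \cdots \with \D Y_i \with \cdots \with Y_n)$ and postcompose with $(\prodSwap[n])^{-1}$, but $(\prodSwap[n])^{-1}$ has domain $\D Y_0 \with \cdots \with \D Y_n$, not a single slot being $\D$. So instead I use $\strength{i}$ directly: $\strength{i} \in \category(Y_0 \with \cdots \with \D Y_i \with \cdots \with Y_n, \D(Y_0 \with \cdots \with Y_n))$, and by \Cref{prop:strength-n}, $\Dproj_k \comp \strength{i} = \singleApp{i}{\id}{\Dproj_k}$ for $k = 0, 1$. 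Therefore, for summable $h_0 \summable h_1$ with witness $\Dpair{h_0}{h_1}$, the morphism $\strength{i} \comp \singleApp{i}{\id}{\Dpair{h_0}{h_1}} \comp (\text{the canonical iso collapsing the }X\text{ in slot }i) \in \category(X, \D(Y_0 \with \cdots \with Y_n))$ satisfies $\Dproj_k \comp (\text{this}) = \singleApp{i}{\id}{\Dproj_k} \comp \singleApp{i}{\id}{\Dpair{h_0}{h_1}} = \singleApp{i}{\id}{h_k}$, exhibiting summability of $\singleApp{i}{\id}{h_0}$ and $\singleApp{i}{\id}{h_1}$ with the stated sum $\singleApp{i}{\id}{\Dsum} \comp (\cdots) = \singleApp{i}{\id}{h_0 + h_1}$.

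With that bookkeeping in hand the proof is a verbatim transcription of \Cref{prop:additive}. For the forward direction: by \Cref{prop:proj-sum}, $\Dproj_0 \summable \Dproj_1$ with sum $\Dsum$, so if $f$ is additive in slot $i$ then $f \comp \singleApp{i}{\id}{\Dproj_0} \summable f \comp \singleApp{i}{\id}{\Dproj_1}$ with sum $f \comp \singleApp{i}{\id}{\Dsum}$, which is exactly the claim. For the reverse direction: assume $f \comp \singleApp{i}{\id}{\Dproj_0} \summable f \comp \singleApp{i}{\id}{\Dproj_1}$ with sum $f \comp \singleApp{i}{\id}{\Dsum}$; given $h_0 \summable h_1$ with witness $\Dpair{h_0}{h_1}$, note $h_k = \Dproj_k \comp \Dpair{h_0}{h_1}$, hence $\singleApp{i}{\id}{h_k} = \singleApp{i}{\id}{\Dproj_k} \comp \singleApp{i}{\id}{\Dpair{h_0}{h_1}}$ by functoriality of $\singleApp{i}{\id}{(-)}$ on $\with$; then by \Cref{prop:sum-left-compatible}, $f \comp \singleApp{i}{\id}{h_0} = (f \comp \singleApp{i}{\id}{\Dproj_0}) \comp \singleApp{i}{\id}{\Dpair{h_0}{h_1}}$ and likewise for $h_1$ are summable with sum $(f \comp \singleApp{i}{\id}{\Dsum}) \comp \singleApp{i}{\id}{\Dpair{h_0}{h_1}} = f \comp \singleApp{i}{\id}{\Dsum \comp \Dpair{h_0}{h_1}} = f \comp \singleApp{i}{\id}{h_0 + h_1}$. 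The only genuine obstacle is being careful that $\singleApp{i}{\id}{(-)}$ really is functorial in the slot and commutes with composition in the expected way — i.e. $\singleApp{i}{\id}{g_0 \comp g_1} = \singleApp{i}{\id}{g_0} \comp \singleApp{i}{\mathsf{wk}}{g_1}$ where the "identities" in the ambient slots compose to identities — which is immediate from $f \with f' = \prodPair{f \comp \prodProj_0}{f' \comp \prodProj_1}$ and interchange, but should be spelled out or cited; everything else is routine.
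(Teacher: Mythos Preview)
Your final paragraph is correct and is exactly the paper's proof: the key identity is the functoriality $\singleApp{i}{\id}{h_k} = \singleApp{i}{\id}{\Dproj_k} \comp \singleApp{i}{\id}{\Dpair{h_0}{h_1}}$ (just functoriality of $\with$ in each slot), after which everything follows from \Cref{prop:sum-left-compatible} precisely as in \Cref{prop:additive}. The long detour through $\strength{i}$ and $(\prodSwap[n])^{-1}$ to exhibit summability of $\singleApp{i}{\id}{h_0}$ and $\singleApp{i}{\id}{h_1}$ is unnecessary: you never need those two to be summable on their own, only that $f \comp \singleApp{i}{\id}{h_0}$ and $f \comp \singleApp{i}{\id}{h_1}$ are, and that comes directly from left compatibility applied to the hypothesis.
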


\begin{proof} 
  The proof is the same as \Cref{prop:additive}, using the fact that
  for any $k \in \{0,1\}$, 
  $f \comp \singleApp{i}{\id}{h_k} 
  = f \comp \singleApp{i}{\id}{\Dproj_k} \comp 
  \singleApp{i}{\id}{\Dpair{h_0}{h_1}}$.
\end{proof}

\begin{definition} A morphism
  $f \in \category(X_0 \with \cdots \with X_n, Y)$ is
 \emph{linear in its $i^{th}$ argument} if  it is
 additive in this argument and if
 $\Dproj_1 \comp \D_i f = f \comp \singleApp{i}{\id}{\Dproj_1}$.
\end{definition}

As in \Cref{prop:linear-equation}, \ref{ax:D-add} ensures
that the equation 
$\Dproj_1 \comp \D_i f = f \comp \singleApp{i}{\id}{\Dproj_1}$
is a sufficient condition for linearity in the $i^{th}$ argument.

\begin{proposition} \label{prop:additive-i}
Assume that $\Dproj_1 \comp \D_i f = f \comp \singleApp{i}{\id}{\Dproj_1}$.
Then $f$ is additive in its $i^{th}$ argument, hence linear in that argument.
\end{proposition}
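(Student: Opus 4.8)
The statement to prove, \Cref{prop:additive-i}, is the partial-derivative analogue of \Cref{prop:linear-equation}, and I would follow the same strategy. Since the hypothesis $\Dproj_1 \comp \D_i f = f \comp \singleApp{i}{\id}{\Dproj_1}$ is exactly the extra clause, on top of additivity in the $i$-th argument, in the definition of being linear in the $i$-th argument, it suffices to establish additivity in the $i$-th argument: that $f \comp \singleApp{i}{\id}{0} = 0$, and that $h_0 \summable h_1$ implies $f \comp \singleApp{i}{\id}{h_0} \summable f \comp \singleApp{i}{\id}{h_1}$ with sum $f \comp \singleApp{i}{\id}{h_0 + h_1}$.

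The core step is to re-express $f \comp \singleApp{i}{\id}{h}$ through the full differential coefficient $\dcoh f$, so that \ref{ax:D-add} — in the guise of \Cref{prop:derivative-additive-zero,prop:derivative-additive-sum} — becomes applicable. From \Cref{prop:partial-derivative-Dproj0-n} together with the hypothesis we have $\D_i f = \Dpair{f \comp \singleApp{i}{\id}{\Dproj_0}}{f \comp \singleApp{i}{\id}{\Dproj_1}}$, and since $\D_i f = \D f \comp \strength{i}$ by definition this gives $f \comp \singleApp{i}{\id}{\Dproj_1} = \Dproj_1 \comp \D_i f = \dcoh f \comp \strength{i}$. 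Using $\Dproj_1 \comp \Dinj_1 = \id$ and the interchange law for $\with$ (which lets one pull compositions through $\singleApp{i}{\_}{\_}$), write $\singleApp{i}{\id}{h} = \singleApp{i}{\id}{\Dproj_1} \comp \singleApp{i}{\id}{\Dinj_1 \comp h}$, whence $f \comp \singleApp{i}{\id}{h} = \dcoh f \comp \strength{i} \comp \singleApp{i}{\id}{\Dinj_1 \comp h}$. Now using the explicit formula $\strength{i} = \Dpair{\singleApp{i}{\id}{\Dproj_0}}{\singleApp{i}{0}{\Dproj_1}}$ from \Cref{prop:strength-n} and the identities $\Dproj_0 \comp \Dinj_1 = 0$, $\Dproj_1 \comp \Dinj_1 = \id$, a short computation on the two projections gives $\strength{i} \comp \singleApp{i}{\id}{\Dinj_1 \comp h} = \Dpair{\singleApp{i}{\id}{0}}{\singleApp{i}{0}{h}}$. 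Writing $x_0 := \singleApp{i}{\id}{0}$, which does not depend on $h$, this yields the key identity $f \comp \singleApp{i}{\id}{h} = \dcoh f \comp \Dpair{x_0}{\singleApp{i}{0}{h}}$.

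From there the conclusion is immediate. Taking $h = 0$ and using $\singleApp{i}{0}{0} = 0$ (a consequence of compatibility with the cartesian product, \Cref{prop:with-sum}), \Cref{prop:derivative-additive-zero} (valid since $\Dinj_0$ is natural by \ref{ax:D-add}) gives $f \comp \singleApp{i}{\id}{0} = \dcoh f \comp \Dpair{x_0}{0} = 0$. For $h_0 \summable h_1$, componentwise summability of the cartesian product gives $\singleApp{i}{0}{h_0} \summable \singleApp{i}{0}{h_1}$ with sum $\singleApp{i}{0}{h_0 + h_1}$; then \Cref{prop:derivative-additive-sum} (valid since $\DmonadSum$ is natural by \ref{ax:D-add}, together with \ref{ax:Dproj-lin} and \ref{ax:D-chain}) delivers summability of $f \comp \singleApp{i}{\id}{h_0}$ and $f \comp \singleApp{i}{\id}{h_1}$ with sum $\dcoh f \comp \Dpair{x_0}{\singleApp{i}{0}{h_0 + h_1}} = f \comp \singleApp{i}{\id}{h_0 + h_1}$. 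Hence $f$ is additive, and therefore linear, in its $i$-th argument.

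The obstacle I anticipate is notational rather than conceptual: keeping straight the behaviour of $\singleApp{i}{\_}{\_}$ under composition (identities such as $\singleApp{i}{\id}{a \comp b} = \singleApp{i}{\id}{a} \comp \singleApp{i}{\id}{b}$ and $\singleApp{i}{0}{\Dproj_1} \comp \singleApp{i}{\id}{\Dinj_1 \comp h} = \singleApp{i}{0}{h}$), and making sure the ``sums are computed componentwise'' facts are available for $n$-ary products (they follow from the binary case by induction). A tempting but dead-end alternative would be to argue directly from $\D_i f = \Dpair{f \comp \singleApp{i}{\id}{\Dproj_0}}{f \comp \singleApp{i}{\id}{\Dproj_1}}$: summability of the two components is then immediate, but identifying the sum $\Dsum \comp \D_i f$ with $f \comp \singleApp{i}{\id}{\Dsum}$ is tantamount to additivity of $f$ itself, so one genuinely has to route through $\dcoh f$ and \ref{ax:D-add} as above.
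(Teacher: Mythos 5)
Your proof is correct and follows essentially the same route as the paper's: both reduce $f \comp \singleApp{i}{\id}{h}$ to $\dcoh f \comp \Dpair{\singleApp{i}{\id}{0}}{\singleApp{i}{0}{h}}$ via the hypothesis and the explicit formula for $\strength{i}$ from \Cref{prop:strength-n} (your $\singleApp{i}{\id}{\Dinj_1\comp h}$ is the paper's $\singleApp{i}{\id}{\Dpair{0}{h}}$), and then conclude by \Cref{prop:derivative-additive-zero,prop:derivative-additive-sum} together with \Cref{prop:with-sum}. If anything, your version is slightly more careful in keeping the $\Dproj_1$ (i.e.\ $\dcoh f$ rather than $\D f$) explicit throughout.
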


\begin{proof} The equation allows rewriting $f \comp \singleApp{i}{\id}{h}$ 
  as follows.
  \begin{align*}
    f \comp \singleApp{i}{\id}{h} 
    &= f \comp \singleApp{i}{\id}{\Dproj_1} \comp \singleApp{i}{\id}{\Dpair{0}{h}} \\
    &= \D_i f \comp \singleApp{i}{\id}{\Dpair{0}{h}}
    \quad \text{by assumption} \\
    &= \D f \comp \strength{i} \comp \singleApp{i}{\id}{\Dpair{0}{h}} \\
    &= \D f \comp \Dpair{\singleApp{i}{\id}{0}}{\singleApp{i}{0}{h}}
      \quad \text{by \Cref{prop:strength-n}}
  \end{align*}

  In particular, $f \comp \singleApp{i}{\id}{0} =
  \D f \comp \Dpair{\singleApp{i}{\id}{0}}{\singleApp{i}{0}{0}}$.
  But $\singleApp{i}{0}{0} = 0$ by \Cref{prop:with-sum}. 
  So by \ref{ax:D-add} and \Cref{prop:derivative-additive-zero},
  $f \comp \singleApp{i}{\id}{0} = 0$. Similarly, if $h_0 \summable h_1$,
  \begin{align*}
    &f \comp \singleApp{i}{\id}{h_0 + h_1} \\
    &=\D f \comp \Dpair{\singleApp{i}{\id}{0}}{\singleApp{i}{0}{h_0 + h_1}} \\
    &= \D f \comp \Dpair{\singleApp{i}{\id}{0}}{\singleApp{i}{0}{h_0} +
    \singleApp{i}{0}{h_1}} 
    \quad \text{by \Cref{prop:with-sum}}\\
    &= \D f \comp \Dpair{\singleApp{i}{\id}{0}}{\singleApp{i}{0}{h_0}}
      + \D f \comp \Dpair{\singleApp{i}{\id}{0}}{\singleApp{i}{0}{h_1}} 
      \quad \text{by \ref{ax:D-add} and \Cref{prop:derivative-additive-sum}} \\
    &= f \comp \singleApp{i}{\id}{h_0}  + f \comp \singleApp{i}{\id}{h_1} \, .
  \end{align*}
\end{proof}

\begin{definition} \label{def:multilinear} A morphism
  $f \in \category(X_0 \with \cdots \with X_n, Y)$
  is \emph{multilinear} (and more precisely, \emph{$(n+1)$-linear}) if
  it is linear in all of its argument.
  Note that the \(1\)-linear morphisms are exactly the \(\D\)-linear
  ones.

\end{definition}


As a sanity check of the notion, we can use the result below together
with the Leibniz rule to show a result similar to the fact that in
differential calculus, if $\Phi$ is a bilinear map, then
$\Phi'(x, y) \cdot (u, v) = \Phi(x, v) + \Phi(u, y)$.

\begin{lemma} \label{lemma:Dproj0-partial-commute}
For any $f \in \category(X_0 \with \cdots \with X_n, Y)$ and 
$i, j \in \interval{0}{n}$ such that $i \neq j$,
\[ \D \Dproj_0 \comp \D_i \D_j f = \D_i f \comp \singleApp{j}{\id}{\Dproj_0}  \]
\end{lemma}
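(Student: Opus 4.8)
The plan is to reduce the claimed identity to a diagram chase that uses only the functoriality of $\D$ (axiom \ref{ax:D-chain}), the naturality of $\Dproj_0$, and the explicit formulas for the strengths $\strength{i}$ from \Cref{prop:strength-n}. First I would unfold both sides using the definition of partial derivatives: the left-hand side is $\D\Dproj_0 \comp \D(\D f \comp \strength{j}) \comp \strength{i}$, which by \ref{ax:D-chain} equals $\D\Dproj_0 \comp \D^2 f \comp \D\strength{j} \comp \strength{i}$; then using the $\D$-linearity of $\Dproj_0$ together with \ref{ax:D-chain} I can rewrite $\D\Dproj_0 \comp \D^2 f = \D(\Dproj_0 \comp \D f) = \D(f \comp \Dproj_0) = \D f \comp \D\Dproj_0$. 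So the left-hand side becomes $\D f \comp \D\Dproj_0 \comp \D\strength{j} \comp \strength{i} = \D f \comp \D(\Dproj_0 \comp \strength{j}) \comp \strength{i}$. By \Cref{prop:partial-derivative-Dproj0-n} (or directly from \Cref{prop:strength-n}, since $\Dproj_0 \comp \strength{j} = \singleApp{j}{\id}{\Dproj_0}$), this is $\D f \comp \D(\singleApp{j}{\id}{\Dproj_0}) \comp \strength{i}$. On the other hand, the right-hand side is $\D_i f \comp \singleApp{j}{\id}{\Dproj_0} = \D f \comp \strength{i} \comp \singleApp{j}{\id}{\Dproj_0}$.

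So the whole thing reduces to the single naturality-type equation
\[
  \D(\singleApp{j}{\id}{\Dproj_0}) \comp \strength{i}
  = \strength{i} \comp \singleApp{j}{\id}{\Dproj_0}
\]
in $\category(X_0 \with \cdots \with \D X_i \with \cdots \with X_n, \D(X_0 \with \cdots \with X_n))$, where the $X_j$-slot of the source carries $\Dproj_0 : \D X_j \to X_j$ on the right and $X_i$ carries $\D X_i$ throughout (recall $i \neq j$). This is essentially the naturality square for the strength $\strength{i}$ evaluated at the morphism $\singleApp{j}{\id}{\Dproj_0}$, which acts only on a coordinate $j \neq i$. I would verify it by joint monicity of $\Dproj_0, \Dproj_1$: precompose each side with each $\Dproj_k$ ($k = 0,1$) and use the formula $\strength{i} = \Dpair{\singleApp{i}{\id}{\Dproj_0}}{\singleApp{i}{0}{\Dproj_1}}$ from \Cref{prop:strength-n}, together with $\Dproj_k \comp \D(\singleApp{j}{\id}{\Dproj_0}) = \singleApp{j}{\id}{\Dproj_0} \comp \Dproj_k$ by naturality of $\Dproj_k$ (which holds here because $\Dproj_k$ is $\D$-linear, hence a natural transformation, by \Cref{prop:constructors-linear}). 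For $k=0$ both sides give $\singleApp{i}{\id}{\Dproj_0} \comp \singleApp{j}{\id}{\Dproj_0}$, and for $k=1$ both give $\singleApp{i}{0}{\Dproj_1} \comp \singleApp{j}{\id}{\Dproj_0}$; here one uses that $\with$ is functorial so these componentwise composites of products combine correctly, and that $i \neq j$ so the $\Dproj_0$ in slot $j$ and the distinguished slot $i$ do not interact.

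The main obstacle is purely bookkeeping: keeping track of which coordinate carries which functor ($\D X_i$ versus $X_j$ versus $X_k$) through all the $\singleApp{-}{-}{-}$ notations, and making sure the $\D$-linearity of the projections (needed for the naturality of $\Dproj_k$ against $\D$) is invoked cleanly. There is no conceptual difficulty — every ingredient is either \ref{ax:D-chain}, naturality of $\Dproj_0$, $\D$-linearity of the cartesian projections, or the explicit strength formula — but the indices must be handled carefully, in particular the hypothesis $i \neq j$ which guarantees that $\singleApp{j}{\id}{\Dproj_0}$ is the identity in the $i$-th slot and therefore commutes past the $\D$ applied to that slot.
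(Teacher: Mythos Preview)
Your proposal is correct and follows essentially the same route as the paper's proof: both reduce the claim to the identity $\D(\singleApp{j}{\id}{\Dproj_0}) \comp \strength{i} = \strength{i} \comp \singleApp{j}{\id}{\Dproj_0}$ after applying \ref{ax:D-chain} and \Cref{prop:partial-derivative-Dproj0-n}. The only difference is that the paper dispatches this last step in one line by invoking the naturality of $\strength{i}$ (together with $i\neq j$), whereas you propose a direct verification via joint monicity and the explicit formula of \Cref{prop:strength-n}; this is a cosmetic difference, not a different argument.
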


\begin{proof} This is a direct computation
\begin{align*}
\D \Dproj_0 \comp \D_i \D_j f &= \D \Dproj_0 \comp \D (\D_j f) \comp \strength{i} \\
&= \D (\Dproj_0 \comp \D_j f) \comp \strength{i} \quad \text{by \ref{ax:D-chain}} \\
&= \D (f \comp \singleApp{j}{\id}{\Dproj_0}) \comp \strength{i} \quad
\text{by \Cref{prop:partial-derivative-Dproj0-n}} \\
&= \D f \comp \D \singleApp{j}{\id}{\Dproj_0} \comp \strength{i} \quad \text{by \ref{ax:D-chain}} \\
&= \D f \comp \strength{i} \comp \singleApp{j}{\id}{\Dproj_0} \quad \text{since $\strength{i}$ natural
and $i \neq j$} \\
&= \D_i f \comp \singleApp{j}{\id}{\Dproj_0}
\end{align*}
\end{proof}

\begin{theorem}
  For any $(n+1)$-linear morphism
  $f \in \category(X_0 \with \cdots \with X_n, Y)$
\begin{align*}
  \Dproj_0 \comp \D f \comp (\prodSwap)^{-1}
  &= f \comp (\Dproj_0 \with \cdots \with \Dproj_0) \\
  \Dproj_1 \comp \D f \comp (\prodSwap)^{-1}
  &=f \comp (\Dproj_1 \with \Dproj_0 \with \cdots \with \Dproj_0) +
                                                    \cdots 
   + \: f \comp (\Dproj_0 \with \cdots \with \Dproj_0 \with \Dproj_1)
\end{align*}
\end{theorem}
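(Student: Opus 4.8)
The plan is to treat the case $n \geq 1$; the case $n = 0$ is immediate, since then $f$ is $\D$-linear, $\prodSwap[0] = \id$, and the two equations are just $\Dproj_0 \comp \D f = f \comp \Dproj_0$ and $\dcoh f = f \comp \Dproj_1$ (\Cref{rem:linear}). The first equation in general is equally direct: $\Dproj_0 \comp \D f = f \comp \Dproj_0$ holds by the definition of a pre-differential structure, and $\Dproj_0 \comp (\prodSwap[n])^{-1} = \Dproj_0 \with \cdots \with \Dproj_0$ by \Cref{prop:strength-n}, so composing gives $\Dproj_0 \comp \D f \comp (\prodSwap[n])^{-1} = f \comp (\Dproj_0 \with \cdots \with \Dproj_0)$.

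For the second equation, I would first apply the generalized Leibniz rule \Cref{prop:leibniz-n} with $\alpha$ the identity permutation, obtaining $\D f \comp (\prodSwap[n])^{-1} = \DmonadSum^{n} \comp \D_n \cdots \D_0 f$, and then compose on the left with $\Dproj_1$. Using the formula $\Dproj_1 \comp \DmonadSum^{n} = \sum_{j=0}^{n} \Dproj_0^{j} \comp \Dproj_1 \comp \Dproj_0^{n-j}$ (recorded after \Cref{def:natural-trans-iterate}) and left compatibility of the sum (\Cref{prop:sum-left-compatible}), this yields
\[
  \Dproj_1 \comp \D f \comp (\prodSwap[n])^{-1}
  = \sum_{j=0}^{n} \Dproj_0^{j} \comp \Dproj_1 \comp \Dproj_0^{n-j} \comp \D_n \cdots \D_0 f\,.
\]
It then remains to show that the $j$-th summand is exactly $f \comp \singleApp{j}{\Dproj_0}{\Dproj_1}$, the map which is $\Dproj_0$ on every factor except $\Dproj_1$ on factor $j$; summing over $j$ then reproduces the announced expression. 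To identify the summand I would use \Cref{prop:partial-derivative-Dproj0-n}, namely $\Dproj_0 \comp \D_i g = g \comp \singleApp{i}{\id}{\Dproj_0}$, repeatedly to peel projections off both ends: $\Dproj_0^{n-j} \comp \D_n \cdots \D_0 f = (\D_j \cdots \D_0 f) \comp P_j$ where $P_j$ is $\id$ on factors $0, \dots, j$ and $\Dproj_0$ on factors $j+1, \dots, n$, and $\Dproj_0^{j} \comp (\D_{j-1} \cdots \D_0 f) = f \comp Q_j$ where $Q_j$ is $\Dproj_0$ on factors $0, \dots, j-1$ and $\id$ elsewhere. The one remaining middle step, $\Dproj_1 \comp (\D_j \cdots \D_0 f) = \Dproj_1 \comp \D_j(\D_{j-1}\cdots\D_0 f)$, equals $(\D_{j-1}\cdots\D_0 f) \comp \singleApp{j}{\id}{\Dproj_1}$ provided $\D_{j-1}\cdots\D_0 f$ is linear in its $j$-th argument. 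Composing $Q_j$, $\singleApp{j}{\id}{\Dproj_1}$ and $P_j$, which act on disjoint factors, then collapses the summand to $f \comp \singleApp{j}{\Dproj_0}{\Dproj_1}$.

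The point that needs real work — and the main obstacle — is the lemma that partial differentiation preserves linearity in the remaining arguments: if $g \in \category(Z_0 \with \cdots \with Z_m, W)$ is linear in its $i$-th argument and $k \neq i$, then $\D_k g$ is again linear in its $i$-th argument. Iterating it $j$ times shows that $\D_{j-1}\cdots\D_0 f$ is linear in its $j$-th argument (the convention being that this is $f$ itself when $j=0$), all the differentiations being performed with respect to arguments strictly smaller than $j$. To prove the lemma, by \Cref{prop:additive-i} it suffices to check $\Dproj_1 \comp \D_i \D_k g = \D_k g \comp \singleApp{i}{\id}{\Dproj_1}$. Starting from the right-hand side, I would expand $\D_k g = \D g \comp \strength{k}$, push $\singleApp{i}{\id}{\Dproj_1}$ through $\strength{k}$ by naturality of the strength in its $i$-th argument (legitimate since $i \neq k$), apply \ref{ax:D-chain}, rewrite $g \comp \singleApp{i}{\id}{\Dproj_1} = \Dproj_1 \comp \D_i g$ using linearity of $g$ in argument $i$, and apply \ref{ax:D-chain} once more to reach $\D_k g \comp \singleApp{i}{\id}{\Dproj_1} = \D\Dproj_1 \comp \D_k \D_i g$. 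Since $\D\Dproj_1 = \Dpair{\Dproj_1 \comp \Dproj_0}{\Dproj_1 \comp \Dproj_1} = \Dproj_1 \comp \Dswap$ by \ref{ax:Dproj-lin} and the definition of $\Dswap$, the several-variable Schwarz rule $\D_i \D_k g = \Dswap \comp \D_k \D_i g$ (the obvious extension of the binary case, proved the same way from naturality of $\Dswap$) then turns this into $\Dproj_1 \comp \D_i \D_k g$, as desired. Apart from this lemma, the proof is just repeated use of the naturality of $\Dproj_0$ and of the strengths, together with careful bookkeeping of which factor each projection lands in.
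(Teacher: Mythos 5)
Your proof is correct, and it is worth noting that for general $n$ it is more explicit than the paper's, which writes out only the case $n=1$ and asserts that the general case "relies on the same arguments". The skeleton is the same in both: apply the generalized Leibniz rule, expand $\Dproj_1 \comp \DmonadSum^{n}$ as $\sum_{j=0}^{n} \Dproj_0^{j} \comp \Dproj_1 \comp \Dproj_0^{n-j}$, and identify each summand by peeling projections off the iterated partial derivatives. The divergence is in how the middle $\Dproj_1$ is absorbed. The paper rewrites $\Dproj_0 \comp \Dproj_1$ as $\Dproj_1 \comp \D\Dproj_0$ using the $\D$-linearity of $\Dproj_0$ and then uses \Cref{lemma:Dproj0-partial-commute} to commute the lifted projection past a partial derivative, so that $\Dproj_1$ ultimately meets a single partial derivative $\D_j f$ of the original $f$, to which the $(n+1)$-linearity of $f$ applies directly; no preservation-of-linearity statement is needed. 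You instead apply $\Dproj_1$ to the iterated partial derivative $\D_j(\D_{j-1}\cdots\D_0 f)$, which forces you to prove that partial differentiation in argument $k$ preserves linearity in the arguments $i \neq k$. Your proof of that lemma via \Cref{prop:additive-i}, the naturality of the strength, and the ($n$-ary) Schwarz rule is correct; note however that it is subsumed by \Cref{prop:partial-preserve-multilinearity}, which the paper establishes just after this theorem by a different route (\Cref{cor:Dproj-commute-partial}) that does not invoke \ref{ax:D-schwarz}. So your argument is self-contained and valid but pays two small costs the paper avoids: it relies on \ref{ax:D-schwarz} and on the (routine but unproved in the paper) $n$-ary extension of the Schwarz rule, whereas the paper's identification of the summands goes through with only \ref{ax:D-chain}, \ref{ax:Dproj-lin}, \ref{ax:D-add} and the multilinearity of $f$ itself.
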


\begin{proof}
  We will write the proof for $n=1$. The general case relies on the
  same arguments.
  The first equation is just a direct consequence of the naturality of
  $\Dproj_0$ and \Cref{prop:strength}.
  For the second equation, Leibniz (\Cref{prop:leibniz}) ensures that
  $\Dproj_1 \comp \D f \comp \prodSwap^{-1} = \Dproj_1 \comp
  \DmonadSum \comp \D_0 \D_1 f = \Dproj_1 \comp \Dproj_0 \comp \D_0
  \D_1 f + \Dproj_0 \comp \Dproj_1 \comp \D_0 \D_1 f$.
  We can compute those two summands separately.
\begin{align*}
 \Dproj_1 \comp{} \Dproj_0 \comp \D_0 \D_1 f  
&= \Dproj_1 \comp \D_1 f \comp (\Dproj_0 \with \id) 
\quad \text{by \Cref{prop:partial-derivative-Dproj0}}\\
&= f \comp (\id \with \Dproj_1) \comp (\Dproj_0 \with \Dproj_1) 
\quad  \text{by bilinarity of $f$}  \\
&= f \comp (\Dproj_0 \with \Dproj_1)
\end{align*}
\begin{align*}
\Dproj_0 \comp{} \Dproj_1 \comp \D_0 \D_1 f
&= 
\Dproj_1 \comp \D \Dproj_0 \comp \D_0 \D_1 f 
\quad \text{by linearity of $\Dproj_0$} \\
&= \Dproj_1 \comp \D_0 f \comp (\id \with \Dproj_0) 
\quad \text{by \Cref{lemma:Dproj0-partial-commute}}  \\
&= f \comp (\Dproj_1 \with \id) \comp (\id \with \Dproj_0) 
\quad \text{by bilinarity of $f$} \\
&= f \comp (\Dproj_1 \with \Dproj_0)
\end{align*}
Which concludes the proof.
\end{proof}

We can expand on the ideas of the proof
\Cref{lemma:Dproj0-partial-commute} to show the following result.
This result is crucial, as it explains how to project on a series of
partial derivatives.
%
%
\begin{proposition} \label{prop:DDproj-commute-partial}
Let $n \geq 0$, $f \in \category(X_0 \with \cdots \with X_n)$,
$d \geq 0$ and $i, i_1, \ldots, i_d \in \interval{0}{n}$. Then,
\[ \D^d \Dproj_0 \comp \D_{i_d} \ldots \D_{i_1} \D_i f =
\D_{i_d} \ldots \D_{i_1} f \comp \singleApp{i}{\id}{\D^{h_d(i)} \Dproj_0} \]
where $h_d(i) = \# \{k \in \interval{1}{d} \mid i_k = i\}$.
Furthermore, if $f$ is $(n+1)$-linear, then 
\[ \D^d \Dproj_1 \comp \D_{i_d} \ldots \D_{i_1} \D_i f =
\D_{i_d} \ldots \D_{i_1} f \comp \singleApp{i}{\id}{\D^{h_d(i)} \Dproj_1} \]
\end{proposition}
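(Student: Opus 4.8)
The plan is to prove both equalities simultaneously by induction on $d$, writing $\Dproj_\epsilon$ for either of $\Dproj_0,\Dproj_1$; in the case $\epsilon = 1$ we will use that $(n+1)$-linearity of $f$ entails, by \Cref{def:multilinear}, that $f$ is linear in each of its arguments, i.e.\ $\Dproj_1 \comp \D_i f = f \comp \singleApp{i}{\id}{\Dproj_1}$ for each $i$. For the base case $d = 0$ we have $h_0(i) = 0$ and $\D^0 \Dproj_\epsilon = \Dproj_\epsilon$, so the statement reduces to $\Dproj_\epsilon \comp \D_i f = f \comp \singleApp{i}{\id}{\Dproj_\epsilon}$: for $\epsilon = 0$ this is exactly \Cref{prop:partial-derivative-Dproj0-n}, and for $\epsilon = 1$ it is the defining equation of linearity of $f$ in its $i$-th argument.

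For the inductive step, assume the result at level $d$ and take indices $i, i_1, \dots, i_{d+1}$. Set $g = \D_{i_d} \cdots \D_{i_1} \D_i f$, so that $\D_{i_{d+1}} g = \D g \comp \strength{i_{d+1}}$ by the definition of the $i$-th partial derivative. Then, unfolding $\D^{d+1}\Dproj_\epsilon = \D(\D^d \Dproj_\epsilon)$ and using that $\D$ is a functor (\ref{ax:D-chain}),
\begin{align*}
  \D^{d+1} \Dproj_\epsilon \comp \D_{i_{d+1}} g
  &= \D(\D^d \Dproj_\epsilon \comp g) \comp \strength{i_{d+1}} \\
  &= \D\bigl(\D_{i_d} \cdots \D_{i_1} f \comp \singleApp{i}{\id}{\D^{h_d(i)} \Dproj_\epsilon}\bigr) \comp \strength{i_{d+1}}
     \quad\text{(induction hypothesis)} \\
  &= \D(\D_{i_d} \cdots \D_{i_1} f) \comp \D\singleApp{i}{\id}{\D^{h_d(i)} \Dproj_\epsilon} \comp \strength{i_{d+1}}\,.
\end{align*}
Now $\strength{i_{d+1}} = (\prodSwap[n])^{-1} \comp \singleApp{i_{d+1}}{\Dinj_0}{\id}$ is a composite of natural transformations ($\Dinj_0$ is natural by \ref{ax:D-add}, $\prodSwap[n]$ is a natural isomorphism), hence natural. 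Applying its naturality square to the tuple of morphisms whose $i$-th component is $\D^{h_d(i)}\Dproj_\epsilon$ and whose other components are identities, and recording that $h_{d+1}(i) = h_d(i) + [\,i_{d+1} = i\,]$, one gets
\[
  \D\singleApp{i}{\id}{\D^{h_d(i)} \Dproj_\epsilon} \comp \strength{i_{d+1}}
  = \strength{i_{d+1}} \comp \singleApp{i}{\id}{\D^{h_{d+1}(i)} \Dproj_\epsilon}\,.
\]
Substituting this, and then using $\D(\D_{i_d} \cdots \D_{i_1} f) \comp \strength{i_{d+1}} = \D_{i_{d+1}} \D_{i_d} \cdots \D_{i_1} f$, closes the induction.

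The only genuinely delicate step is the naturality move above: one must view $\strength{i_{d+1}}$ as a natural transformation between the functors $\vec Z \mapsto Z_0 \with \cdots \with \D Z_{i_{d+1}} \with \cdots \with Z_n$ and $\vec Z \mapsto \D(Z_0 \with \cdots \with Z_n)$, evaluated at the shifted objects $\D^{c_k}X_k$ (with $c_k$ the number of occurrences of $k$ among $i, i_1, \dots, i_d$), and then check that the source functor sends the chosen tuple of morphisms precisely to $\singleApp{i}{\id}{\D^{h_{d+1}(i)}\Dproj_\epsilon}$. The point is that when $i_{d+1} = i$ the coordinate carrying $\Dproj_\epsilon$ is also the one to which that functor applies an extra $\D$, so the exponent increments by exactly one; when $i_{d+1} \neq i$ that coordinate carries an identity and the exponent is unchanged — which is exactly the term $[\,i_{d+1} = i\,]$ in $h_{d+1}(i)$. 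All remaining manipulations are routine chain-rule bookkeeping, in the spirit of the proof of \Cref{lemma:Dproj0-partial-commute}; note that $(n+1)$-linearity of $f$ is needed only for the base case of the second equality.
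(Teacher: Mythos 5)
Your proof is correct and takes essentially the same route as the paper's: induction on $d$, with the base case given by \Cref{prop:partial-derivative-Dproj0-n} (resp.\ the defining equation of linearity in the $i$-th argument), and the inductive step via two applications of \ref{ax:D-chain} followed by the naturality of $\strength{i_{d+1}}$, splitting on whether $i_{d+1}=i$ to track the exponent $h_{d+1}(i)$.
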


\begin{proof}
  By induction on $d$.
  The case $d = 0$ is \Cref{prop:partial-derivative-Dproj0-n} for
  $\Dproj_0$, and the definition of $n$-linearity for $\Dproj_1$.
  We deal with the inductive step for $\Dproj_0$.
  The inductive step for $\Dproj_1$ is dealt with similarly.
\begin{align*}
   \D^{d+1} \Dproj_0 \comp \D_{i_{d+1}} \ldots \D_{i_1} \D_i f
  &= \D(\D^d \Dproj_0) \comp \D(\D_{i_d} \ldots \D_{i_1} \D_i f) \comp \strength{i_{d+1}}
    \quad \text{by definition}  \\
  &= \D (\D^d \Dproj_0 \comp \D_{i_d} \ldots \D_{i_1} \D_i f) \comp \strength{i_{d+1}}
    \quad \text{by \ref{ax:D-chain}} \\
  &= \D (\D_{i_d} \ldots \D_{i_1} \D_i f \comp \singleApp{i}{\id}{\D^{h_d(i)} \Dproj_0}) \comp
    \strength{i_{d+1}}
    \quad \text{by inductive hypothesis} \\
  &=  \D \D_{i_d} \ldots \D_{i_1} \D_i f \comp \D \singleApp{i}{\id}{\D^{h_d(i)} \Dproj_0} \comp
    \strength{i_{d+1}}
    \quad \text{by \ref{ax:D-chain}} 
\end{align*}
The next step is to use the naturality of $\strength{i_{d+1}}$:
\[ \D (f_0 \with \cdots \with f_n) \comp \strength{i_{d+1}} =
\singleApp{i_{d+1}}{f}{\D f_{i_{d+1}}} \]
If $i_{d+1} = i$, then 
\[ \D \singleApp{i}{\id}{\D^{h_d(i)} \Dproj_0} \comp \strength{i_{d+1}}
= \strength{i_{d+1}} \comp \singleApp{i}{\id}{\D^{h_d(i) + 1} \Dproj_0} \]
If $i_{d+1} \neq i$ then
\[ \D \singleApp{i}{\id}{\D^{h_d(i)} \Dproj_0} \comp \strength{i_{d+1}}
= \strength{i_{d+1}} \comp \singleApp{i}{\id}{\D^{h_d(i)} \Dproj_0} \]
In both case, 
\[ \D \singleApp{i}{\id}{\D^{h_d(i)} \Dproj_0} \comp \strength{i_{d+1}}
= \strength{i_{d+1}} \comp \singleApp{i}{\id}{\D^{h_{d+1}(i)} \Dproj_0}\]
 Consequently:
\begin{align*}
\D^{d+1} \Dproj_0 \comp \D_{i_{d+1}} \ldots \D_{i_1} \D_i f  
&= \D \D_{i_d} \ldots \D_{i_1} \D_i f 
\comp \strength{i_{d+1}} \comp \singleApp{i}{\id}{\D^{h_{d+1}(i)} \Dproj_0} \ \\
&=  \D_{i_{d+1}} \D_{i_d} \ldots \D_{i_1} \D_i f \comp \singleApp{i}{\id}{\D^{h_{d+1}(i)} \Dproj_0}
\end{align*}
which concludes the proof.
\end{proof}

This property instantiated in $d = 1$ gives back something similar to
\Cref{lemma:Dproj0-partial-commute}.

\begin{cor} \label{cor:Dproj-commute-partial}
If $f \in \category(X_0 \with \cdots \with X_n)$ is $(n+1)$-linear, then
for any $i, j \in \interval{0}{n}$ such that $i \neq j$ and for any $k \in \{0,1\}$,
\[ \D \Dproj_k \comp \D_i \D_j f = \D_i f \comp \singleApp{j}{\id}{\Dproj_k} \]
\[ \D \Dproj_k \comp \D_i \D_i f = \D_i f \comp \singleApp{i}{\id}{\D \Dproj_k} \]
\end{cor}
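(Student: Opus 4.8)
The plan is to obtain \Cref{cor:Dproj-commute-partial} as the case $d = 1$ of \Cref{prop:DDproj-commute-partial}, so the only work is matching up indices; there is no genuinely new argument.

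First I would write out \Cref{prop:DDproj-commute-partial} for a single iterated partial derivative, i.e. with the list consisting of one index $i_1$ and an innermost index $\ell$:
\[ \D \Dproj_k \comp \D_{i_1} \D_\ell f = \D_{i_1} f \comp \singleApp{\ell}{\id}{\D^{h_1(\ell)} \Dproj_k}, \]
where $h_1(\ell) = \#\{m \in \{1\} \mid i_m = \ell\}$, so $h_1(\ell) = 1$ when $i_1 = \ell$ and $h_1(\ell) = 0$ otherwise; the statement is unconditional for $k = 0$ and uses the $(n+1)$-linearity of $f$ for $k = 1$, which we assume throughout. For the first equation of the corollary I would specialize to $\ell = j$ and $i_1 = i$: since $i \neq j$ we get $h_1(j) = 0$, hence $\D^{h_1(j)} \Dproj_k = \Dproj_k$, and the displayed identity reads $\D \Dproj_k \comp \D_i \D_j f = \D_i f \comp \singleApp{j}{\id}{\Dproj_k}$. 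For the second equation I would specialize to $\ell = i$ and $i_1 = i$: now $h_1(i) = 1$, hence $\D^{h_1(i)} \Dproj_k = \D \Dproj_k$, and the identity reads $\D \Dproj_k \comp \D_i \D_i f = \D_i f \comp \singleApp{i}{\id}{\D \Dproj_k}$.

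Since this is a pure specialization, I do not anticipate any obstacle; the only point to watch is the bookkeeping of the counter $h_d$ in the two sub-cases ($i \neq j$ versus $i = j$), which is exactly what distinguishes the $\Dproj_k$ from the $\D \Dproj_k$ on the right-hand side. If a self-contained proof were preferred over invoking \Cref{prop:DDproj-commute-partial}, one could instead rerun the short diagram chase of \Cref{lemma:Dproj0-partial-commute} with an arbitrary $\Dproj_k$ replacing $\Dproj_0$ — using \ref{ax:D-chain} and, when $i \neq j$, the naturality of $\strength{i}$ — while in the case $i = j$ one replaces that naturality step by the identity $\D \singleApp{i}{\id}{\Dproj_k} \comp \strength{i} = \strength{i} \comp \singleApp{i}{\id}{\D \Dproj_k}$ already used in the inductive step of the proof of \Cref{prop:DDproj-commute-partial}.
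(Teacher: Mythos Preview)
Your proposal is correct and is exactly the paper's approach: the corollary is just the case $d=1$ of \Cref{prop:DDproj-commute-partial}, and your index bookkeeping (distinguishing $h_1(\ell)=0$ when $i\neq j$ from $h_1(\ell)=1$ when $i=j$) is precisely what is needed.
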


We can use this corollary to show that the partial derivative of a $(n+1)$-linear
morphism is also $(n+1)$-linear.

\begin{theorem}  \label{prop:partial-preserve-multilinearity}
If $f \in \category(X_0 \with \cdots \with X_n)$ is $(n+1)$-linear, 
then for any $i \in \interval{0}{n}$, $\D_i f$ is $(n+1)$-linear.
\end{theorem}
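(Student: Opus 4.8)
The plan is to reduce the $(n+1)$-linearity of $\D_i f$ to one equation per argument, and then to check that equation by joint monicity. Fix $i \in \interval{0}{n}$; the goal is to show that $\D_i f \in \category(X_0 \with \cdots \with \D X_i \with \cdots \with X_n, \D Y)$ is linear in each of its arguments $j \in \interval{0}{n}$. By \Cref{prop:additive-i} applied to the morphism $\D_i f$ and its $j$-th argument, it is enough to prove
\[ \Dproj_1 \comp \D_j \D_i f = \D_i f \comp \singleApp{j}{\id}{\Dproj_1} \]
for every $j$, since additivity in the $j$-th argument then comes for free. So the whole theorem reduces to establishing this family of equations.

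To prove each such equation I would use the joint monicity of $\Dproj_0$ and $\Dproj_1$ on the codomain $\D Y$, i.e.\ compose both sides on the left with $\Dproj_0$ and with $\Dproj_1$ and match. The first ingredient is \ref{ax:Dproj-lin} in the form $\Dproj_0 \comp \Dproj_1 = \Dproj_1 \comp \D \Dproj_0 = \dcoh \Dproj_0$ and $\Dproj_1 \comp \Dproj_1 = \Dproj_1 \comp \D \Dproj_1 = \dcoh \Dproj_1$, which lets me rewrite $\Dproj_k \comp \Dproj_1 \comp \D_j \D_i f$ as $\Dproj_1 \comp \D \Dproj_k \comp \D_j \D_i f$. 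The second ingredient is \Cref{prop:DDproj-commute-partial} with $d = 1$ (its $\Dproj_1$ half is available precisely because $f$ is assumed $(n+1)$-linear), which moves $\D \Dproj_k$ past the partial derivatives, giving $\D \Dproj_k \comp \D_j \D_i f = \D_j f \comp \singleApp{i}{\id}{\D^{h}\Dproj_k}$ with $h = 1$ if $j = i$ and $h = 0$ otherwise. Composing on the left with $\Dproj_1$ and invoking the linearity of $f$ in its $j$-th argument (that is, $\Dproj_1 \comp \D_j f = f \comp \singleApp{j}{\id}{\Dproj_1}$) turns the left-hand side into $f$ evaluated with $\Dproj_1$ inserted at position $j$ and $\Dproj_k$ (or $\D \Dproj_k$, when $j = i$) at position $i$. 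For the right-hand side, \Cref{prop:partial-derivative-Dproj0-n} gives $\Dproj_0 \comp \D_i f = f \comp \singleApp{i}{\id}{\Dproj_0}$ and linearity of $f$ in its $i$-th argument gives $\Dproj_1 \comp \D_i f = f \comp \singleApp{i}{\id}{\Dproj_1}$, so both $\Dproj_0 \comp \D_i f \comp \singleApp{j}{\id}{\Dproj_1}$ and $\Dproj_1 \comp \D_i f \comp \singleApp{j}{\id}{\Dproj_1}$ reduce to $f$ with $\Dproj_1$ at position $j$ and $\Dproj_0$ (resp.\ $\Dproj_1$) at position $i$. Comparing the two computations, and using $\dcoh \Dproj_0 = \Dproj_0 \comp \Dproj_1$, $\dcoh \Dproj_1 = \Dproj_1 \comp \Dproj_1$ in the case $j = i$ where the two substitutions at position $i$ must be merged, both sides agree, hence the desired equation by joint monicity.

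The argument splits cleanly into the case $j \neq i$, where $h = 0$ and the substitutions at positions $i$ and $j$ act independently so their order is irrelevant, and the case $j = i$, where $h = 1$ and the composite $\singleApp{i}{\id}{\Dproj_1} \comp \singleApp{i}{\id}{\D \Dproj_k}$ must collapse to $\singleApp{i}{\id}{\Dproj_1 \comp \D \Dproj_k} = \singleApp{i}{\id}{\dcoh \Dproj_k}$, which is exactly where \ref{ax:Dproj-lin} is used once more. I expect the only real obstacle to be the bookkeeping of the nested $\singleApp{\cdot}{\cdot}{\cdot}$ notation --- keeping track of which projection sits at which coordinate, and checking that composites of single-coordinate substitutions behave as expected --- together with invoking the $(n+1)$-linearity of $f$ at the precise spots where it is needed. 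No conceptual ingredient beyond \ref{ax:Dproj-lin}, \ref{ax:D-chain}, \Cref{prop:DDproj-commute-partial}, \Cref{prop:partial-derivative-Dproj0-n} and \Cref{prop:additive-i} is required, and the computations are of the same flavour as those already carried out in \Cref{lemma:Dproj0-partial-commute} and \Cref{prop:DDproj-commute-partial}.
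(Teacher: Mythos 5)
Your proposal is correct and follows essentially the same route as the paper's proof: reduce linearity of $\D_i f$ in each argument $j$ to the single equation $\Dproj_1 \comp \D_j \D_i f = \D_i f \comp \singleApp{j}{\id}{\Dproj_1}$ via \Cref{prop:additive-i}, then verify it by joint monicity using \ref{ax:Dproj-lin}, the $d=1$ instance of \Cref{prop:DDproj-commute-partial} (i.e.\ \Cref{cor:Dproj-commute-partial}), and the $(n+1)$-linearity of $f$, with the same $j=i$ versus $j\neq i$ case split and the same use of $\D$-linearity of $\Dproj_k$ to merge the two substitutions at position $i$. The only cosmetic difference is that you make the appeal to \Cref{prop:additive-i} explicit where the paper leaves it implicit.
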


\begin{proof} Let $j \in \interval{0}{n}$. The goal is to prove that 
$\Dproj_1 \comp \D_j \D_i f = \D_i f \comp \singleApp{j}{\id}{\Dproj_1}$. By joint monicity
of the $\Dproj_k$, it suffices to prove that 
$\Dproj_k \comp \Dproj_1 \comp \D_j \D_i f = \Dproj_k \comp \D_i f \comp 
\singleApp{j}{\id}{\Dproj_1}$ for any $k \in \{0,1\}$. 
If $i \neq j$, 
\begin{align*}
\Dproj_k \comp \Dproj_1 \comp \D_j \D_i f 
&= \Dproj_1 \comp \D \Dproj_k \comp \D_j \D_i f  \text{\quad by $\D$-linearity of $\Dproj_1$} \\
&= \Dproj_1 \comp \D_j f \comp \singleApp{i}{\id}{\Dproj_k}  \text{\quad by 
\Cref{cor:Dproj-commute-partial}} \\
&= f \comp \singleApp{j}{\id}{\Dproj_1} \comp \singleApp{i}{\id}{\Dproj_k}  \text{\quad since $f$ is 
$(n+1)$-linear}  \\
&= f \comp \singleApp{i}{\id}{\Dproj_k} \comp \singleApp{j}{\id}{\Dproj_1} \text{\quad since }
i \neq j  \\
&= \Dproj_k \comp \D_i f \comp \singleApp{j}{\id}{\Dproj_1}  \text{\quad since $f$ is $(n+1)$-linear}
\end{align*}
The case $i = j$ is very similar
\begin{align*}
\Dproj_k \comp \Dproj_1 \comp \D_i \D_i f 
&= \Dproj_1 \comp \D \Dproj_k \comp \D_i \D_i f \text{\quad by $\D$-linearity of $\Dproj_1$} \\
&= \Dproj_1 \comp \D_i f \comp \singleApp{i}{\id}{\D \Dproj_k} \text{\quad by 
\Cref{cor:Dproj-commute-partial}} \\
&= f \comp \singleApp{i}{\id}{\Dproj_1} \comp \singleApp{i}{\id}{\D \Dproj_k}  \text{\quad since $f$ is 
$(n+1)$-linear}  \\
&= f \comp \singleApp{i}{\id}{\Dproj_k} \comp \singleApp{i}{\id}{\Dproj_1} 
\text{\quad since $\Dproj_k$ is $\D$-linear}  \\
&= \Dproj_k \comp \D_i f \comp \singleApp{i}{\id}{\Dproj_1}  \text{\quad since $f$ is $(n+1)$-linear.}
\end{align*}
\end{proof}

Composition with a linear morphism preserves multilinearity.
Thus, the Leibniz rule ensures that if $f$ is multilinear then $\D f$ is also multilinear.

\begin{proposition}
\label{prop:composition-preserve-multilinearity}
 If $f \in \category(X_0 \with \cdots \with X_n, Y)$ is
$(n+1)$-linear and $h \in \category(Y, Z)$ is linear, then
$h \comp f$ is $(n+1)$-linear.
\end{proposition}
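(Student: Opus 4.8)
The plan is to reduce the claim to a single equation per argument via \Cref{prop:additive-i}. That proposition tells us that to show $h \comp f$ is linear in its $i^{th}$ argument it is enough to verify the identity $\Dproj_1 \comp \D_i(h \comp f) = (h \comp f) \comp \singleApp{i}{\id}{\Dproj_1}$; doing this for every $i \in \interval{0}{n}$ then gives that $h \comp f$ is $(n+1)$-linear by \Cref{def:multilinear}. So the whole argument is a short diagram chase, with no real obstacle — the only thing to be careful about is that the partial derivative $\D_i(-) = \D(-) \comp \strength{i}$ interacts well with composition on the left, and this is immediate since $\strength{i}$ is built only from $\Dproj_0,\Dproj_1,\Dsum,\Dinj_0$ and the cartesian structure, not from $f$ or $h$.

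First I would compute $\D_i(h \comp f)$. By \ref{ax:D-chain}, $\D(h \comp f) = \D h \comp \D f$, hence $\D_i(h \comp f) = \D(h \comp f) \comp \strength{i} = \D h \comp \D f \comp \strength{i} = \D h \comp \D_i f$. Next, since $h$ is $\D$-linear, \Cref{rem:linear} (equivalently \Cref{prop:linear-and-additive}) gives $\Dproj_1 \comp \D h = h \comp \Dproj_1$. Therefore
\[
  \Dproj_1 \comp \D_i(h \comp f)
  = \Dproj_1 \comp \D h \comp \D_i f
  = h \comp \Dproj_1 \comp \D_i f\,.
\]
Finally, because $f$ is $(n+1)$-linear it is in particular linear in its $i^{th}$ argument, so $\Dproj_1 \comp \D_i f = f \comp \singleApp{i}{\id}{\Dproj_1}$, whence $\Dproj_1 \comp \D_i(h \comp f) = h \comp f \comp \singleApp{i}{\id}{\Dproj_1}$, which is exactly the hypothesis of \Cref{prop:additive-i}.

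Applying \Cref{prop:additive-i} to $h \comp f$ for the index $i$ then shows $h \comp f$ is additive, hence linear, in its $i^{th}$ argument. Since $i \in \interval{0}{n}$ was arbitrary, $h \comp f$ is linear in all of its arguments, i.e.\ $(n+1)$-linear. As noted, there is no genuinely hard step here: the content is entirely the chain rule \ref{ax:D-chain} plus the characterisation of $\D$-linearity of $h$ through $\Dproj_1$, and the slight subtlety of checking that $\D_i$ commutes past a left composition with $h$ is settled by the chain rule together with naturality of the strengths.
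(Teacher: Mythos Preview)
Your proof is correct and follows essentially the same route as the paper: both reduce to verifying $\Dproj_1 \comp \D_i(h \comp f) = (h \comp f) \comp \singleApp{i}{\id}{\Dproj_1}$ via the chain rule, the $\D$-linearity of $h$, and the $(n+1)$-linearity of $f$. The paper's proof is just the one-line computation you wrote out, leaving the appeal to \Cref{prop:additive-i} implicit.
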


\begin{proof} This follows from a straightforward computation
$\Dproj_1 \comp \D_i (h \comp f) = \Dproj_1 \comp \D (h \comp f) \comp \strength{i}
= \Dproj_1 \comp \D h \comp \D f \comp \strength{i} = h \comp \Dproj_1 \comp \D_i f 
= h \comp f \comp \singleApp{i}{\id}{\Dproj_1}$.
\end{proof}

%
%
\begin{theorem} 
  If $f \in \category(X_0 \with \cdots \with X_n, Y)$ is
  $(n+1)$-linear, then %
  $\D f \comp (\prodSwap[n])^{-1}\in\category(\D X_0 \with \cdots
  \with \D X_n, \D Y)$ is also $(n+1)$-linear.
\end{theorem}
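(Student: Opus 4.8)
The plan is to reduce the statement to two preservation results already established for partial derivatives, using the generalized Leibniz rule as a bridge. Recall that \Cref{prop:partial-preserve-multilinearity} says $\D_i$ sends $(n+1)$-linear morphisms to $(n+1)$-linear ones, that \Cref{prop:composition-preserve-multilinearity} says post-composition with a $\D$-linear morphism preserves $(n+1)$-linearity, and that \Cref{prop:leibniz-n} expresses $\D f \comp (\prodSwap[n])^{-1}$ as an iterated partial derivative.

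Assume first $n \geq 1$. By \Cref{prop:leibniz-n} with $\alpha$ the identity permutation,
\[
  \D f \comp (\prodSwap[n])^{-1} = \DmonadSum^{n} \comp \D_{n} \ldots \D_{1} \D_{0} f\,.
\]
I would then iterate \Cref{prop:partial-preserve-multilinearity}: from $f$ being $(n+1)$-linear we get that $\D_0 f \in \category(\D X_0 \with X_1 \with \cdots \with X_n, \D Y)$ is $(n+1)$-linear, hence $\D_1 \D_0 f$ is $(n+1)$-linear, and so on, so that $g \defEq \D_{n} \ldots \D_{1} \D_{0} f \in \category(\D X_0 \with \cdots \with \D X_n, \D^{n+1} Y)$ is $(n+1)$-linear. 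Moreover $\DmonadSum$ is $\D$-linear (\Cref{prop:everyone-linear}) and $\D$-linear morphisms compose (\Cref{prop:composition-linear}), so $\DmonadSum^{n} \in \category(\D^{n+1} Y, \D Y)$ is $\D$-linear; hence \Cref{prop:composition-preserve-multilinearity} applied to $g$ and $\DmonadSum^{n}$ shows that $\DmonadSum^{n} \comp g = \D f \comp (\prodSwap[n])^{-1}$ is $(n+1)$-linear, as wanted.

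The case $n = 0$ needs a separate word since \Cref{prop:leibniz-n} is stated for $n \geq 1$ only. But then $\prodSwap[0] = \id_{\D X_0}$ and, unfolding the definition, $\strength{0} = \id_{\D X_0}$ as well, so $\D f \comp (\prodSwap[0])^{-1} = \D f = \D_0 f$, which is $1$-linear --- equivalently $\D$-linear --- directly by \Cref{prop:partial-preserve-multilinearity}.

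The only point that demands attention is the domain bookkeeping in the iteration: applying $\D_{k}$ to an $(n+1)$-linear morphism on a product $Z_0 \with \cdots \with Z_n$ produces an $(n+1)$-linear morphism on $Z_0 \with \cdots \with \D Z_k \with \cdots \with Z_n$, with codomain $\D W$ for some object $W$, so the hypotheses of \Cref{prop:partial-preserve-multilinearity} remain satisfied at each step. Since that proposition is phrased for arbitrary objects $X_0, \dots, X_n$ and an arbitrary codomain, this is immediate; there is no real obstacle, and the proof is a short assembly of the Leibniz rule with the two preservation lemmas.
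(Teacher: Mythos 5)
Your proof is correct and follows essentially the same route as the paper: apply the generalized Leibniz rule to write $\D f \comp (\prodSwap[n])^{-1}$ as $\DmonadSum^{n} \comp \D_{\alpha(n)} \cdots \D_{\alpha(0)} f$, then invoke preservation of multilinearity under partial derivatives and under post-composition with the $\D$-linear morphism $\DmonadSum^{n}$. Your extra care with the $n=0$ case and with the domain bookkeeping only fills in details the paper leaves implicit.
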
 

\begin{proof} By Leibniz (\Cref{prop:leibniz-n}),
$\D f \comp (\prodSwap[n])^{-1} = \DmonadSum^{n} \comp \D_{\alpha(n)} \ldots
 \D_{\alpha(0)} f$. But the partial derivatives preserves multilinearity by 
 \Cref{prop:partial-preserve-multilinearity} and composition by $\DmonadSum^{n}$ 
 on the left preserves multilinearity by \Cref{prop:composition-preserve-multilinearity}. 
\end{proof}

\section{Kleisli category of the exponential of a model of $\LL{}$}

\subsection{Coherent differentiation in a linear setting}

\label{sec:CD-induces-CCDC}

Let $\categoryLL$ be a symmetric monoidal closed category that is a model
of $\LL{}$, and more precisely a
Seely category in the sense of~\cite{Mellies09}. 
We write the composition of $f \in \categoryLL(X, Y)$ with
$g \in \categoryLL(Y, Z)$ as $g \compl f$ to stress the intuition
that the morphisms of \(\cL\) are linear.
The axioms of a Seely category include the existence of a cartesian
product $\with$ and a comonad $(\oc, \der, \dig)$ on $\categoryLL$,
where $\der_X \in \categoryLL(\Excl X, X)$ and
$\dig_X \in \categoryLL(\Excl X, \Excll X)$ are natural transformations.
The \emph{Kleisli category} \(\Kl\cL\) of this comonad is the category
whose objects are the objects of $\categoryLL$ and whose
hom-sets are $\kleisliExp(X, Y) = \categoryLL(\Excl X, Y)$.
Composition is defined in this category as
$g \comp f = g \compl \Excl f \compl \dig$ and the identity at
$X$ is $\der_X$, the unit of the comonad.
It is well known that $\kleisliExp$ is a cartesian (closed)
category, with the same cartesian product $\with$ as $\categoryLL$.

The goal of this section is to show that coherent differentiation 
on $\categoryLL$ as
introduced in~\cite{Ehrhard21} in the setting of $\LL{}$ gives
$\kleisliExp$ a CCDC structure. 

  \begin{theorem}
    \label{prop:cd-induces-cdc}
    Any differential structure on a summable category $\categoryLL$
    (see~\cite{Ehrhard21}) induces a CCDC structure
    on $\kleisliExp$.
\end{theorem}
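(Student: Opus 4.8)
The strategy is to transport every piece of the differential structure of $\categoryLL$ across the standard equivalence between Kleisli morphisms and linear morphisms out of the exponential. Recall from \cite{Ehrhard21} that a differential structure on the summable category $\categoryLL$ consists of a summability structure $(\S,\Sproj_0,\Sproj_1,\Ssum)$ together with a distributive law $\partial : \S\oc \Rightarrow \oc\S$ of the monad $\S$ over the comonad $\oc$, satisfying coherence axioms; this distributive law lets $\S$ be lifted to an endofunctor $\D$ on $\kleisliExp$. The first step is therefore to fix $\D$ on objects as $\S$ itself, and on a Kleisli morphism $f \in \categoryLL(\oc X, Y)$ to set $\D f \in \categoryLL(\oc\S X, \S Y)$ to be $\S f \compl \partial_X$ (the Kleisli extension of $\S$ along the distributive law). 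The projections $\Dproj_{i,X}, \Dsum_X \in \kleisliExp(\S X, X)$ are taken to be $\Sproj_{i,X}\compl\der_{\S X}$ and $\Ssum_X\compl\der_{\S X}$, i.e. the images under the canonical functor $\categoryLL\to\kleisliExp$ of the linear transformations $\Sproj_i,\Ssum$. Because $\Sproj_0,\Sproj_1$ are jointly monic in $\categoryLL$ and the functor $\categoryLL\to\kleisliExp$ sends them to a jointly monic pair (linearity plus the fact that $\oc$ preserves enough structure; this is already part of the machinery of \cite{Ehrhard21}), the $\Dproj_i$ are jointly monic, so we do have a summable pairing structure.

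Next I would check that this is a left summability structure: one must show $\Dproj_0,\Dproj_1,\Dsum$ are additive in the sense of \Cref{def:additive}. Here the key observation, already contained in \Cref{sec:comparison-ss}, is that a morphism of the form $\ell\compl\der$ with $\ell$ linear is additive in $\kleisliExp$ precisely because linear morphisms compose with sums on both sides; more concretely one uses \Cref{prop:additive} and the naturality of $\Sproj_i,\Ssum$ with respect to the summability structure of $\categoryLL$. Then the three monoid axioms \ref{ax:D-zero}, \ref{ax:D-com}, \ref{ax:D-witness} must be verified; each of these is a statement purely about $\S,\Sproj_i,\Ssum$ in $\categoryLL$ (that $\S$ is a summability structure there in the sense of \cite{Ehrhard21}), reflected into $\kleisliExp$ via the faithful, product-preserving, linearity-preserving functor. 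So at this stage $\kleisliExp$ carries a left summability structure, and in fact the $\D$-linear morphisms of $\kleisliExp$ are exactly (the Kleisli images of) the linear morphisms of $\categoryLL$ — this is the bridge I would establish early and use repeatedly.

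The bulk of the work is then verifying the six axioms of \Cref{def:CDC}. For \ref{ax:Dproj-lin} and \ref{ax:Dsum-lin} one needs $\dcoh\Dproj_i = \Dproj_i\compl\Dproj_1$ etc.; since $\D$ applied to a linear Kleisli morphism $\ell\compl\der$ is $\S\ell$ (the distributive law acts trivially on such morphisms, because $\partial$ composed with $\oc\der$ is essentially $\der$ — this is one of the coherence axioms of the distributive law), and $\S\ell$ has the form $\Spair{\ell\compl\Sproj_0}{\ell\compl\Sproj_1}$ by naturality, these equations reduce to naturality of $\Sproj_i$ and $\Ssum$ in $\categoryLL$. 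For \ref{ax:D-chain}, functoriality of $\D = $ (Kleisli lift of $\S$ along $\partial$) is exactly the statement that $\partial$ is a distributive law (it preserves identities and composition); this is where the distributive-law axioms of \cite{Ehrhard21} are consumed. For \ref{ax:D-add}, \ref{ax:D-lin}, \ref{ax:D-schwarz}, the point is that $\Dinj_0,\DmonadSum,\Dlift,\Dswap$ in $\kleisliExp$ are the Kleisli images of the corresponding linear transformations $\Sinj_0,\SmonadSum,\Slift,\Sswap$ in $\categoryLL$ (they are built from $\Sproj_i,\Ssum$ by pairing and composition, all of which commute with the functor $\categoryLL\to\kleisliExp$ on linear morphisms), and their naturality against $\D$ in $\kleisliExp$ unwinds, using the explicit description $\D f = \S f\compl\partial$, to the corresponding naturality statements in $\categoryLL$ together with compatibility squares between $\partial$ and $\Sinj_0,\SmonadSum,\Slift,\Sswap$ — and those compatibility squares are precisely the remaining coherence axioms of the coherent differential structure of \cite{Ehrhard21}. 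Finally, for the cartesian part: $\kleisliExp$ has cartesian product $\with$ inherited from $\categoryLL$, the projections $\prodProj_i$ are linear hence $\D$-linear by the bridge above, and compatibility of the summability structure with $\with$ (i.e. $\prodSwap$ an isomorphism, \Cref{cor:summability-cartesian-compat}) follows because $\S$ in $\categoryLL$ preserves the cartesian product up to the canonical iso — a fact that is either already in \cite{Ehrhard21} or follows from the Seely isomorphisms relating $\oc(X\with Y)$ to $\oc X\otimes\oc Y$. The main obstacle, and the step deserving the most care, is the bookkeeping in \ref{ax:D-chain} and \ref{ax:D-schwarz}: one must be scrupulous about how the distributive law $\partial$ threads through iterated applications of $\D$ (e.g. $\D^2 f$ involves $\partial_{\S X}$ and $\S\partial_X$), and it is exactly the hexagon/square axioms of the distributive law that make $\Dswap$ and $\DmonadSum$ natural; getting these coherence diagrams to line up with the axioms as stated in \cite{Ehrhard21} is the technical heart of the argument. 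I would organize this by first proving the lemma ``$\D(\ell\compl\der) = \S\ell$ for $\ell$ linear, and $\Dproj_i,\Dsum,\Dinj_0,\DmonadSum,\Dlift,\Dswap$ are all of this form,'' which collapses most of the six axioms to their $\categoryLL$-counterparts, and then handling \ref{ax:D-chain} and the interaction of $\D$ with composition of non-linear morphisms separately.
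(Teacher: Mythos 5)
Your proposal follows essentially the same route as the paper: transport the summability structure along the canonical functor $\categoryLL\to\kleisliExp$, lift $\S$ to $\D$ via the distributive law given by ($\devM$-chain), observe that Kleisli images of linear morphisms of $\categoryLL$ are $\D$-linear (which disposes of \ref{ax:Dproj-lin}, \ref{ax:Dsum-lin} and the compatibility with the cartesian product), and reduce each remaining naturality axiom to a compatibility square between $\devM$ and the corresponding natural transformation of $\categoryLL$, exactly as in \Cref{prop:dl-morphism}.

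There is, however, one genuine gap. You assert that the compatibility squares for $\Sinj_0$, $\SmonadSum$, $\Slift$ and $\Sswap$ against $\devM$ ``are precisely the remaining coherence axioms'' of \cite{Ehrhard21}. That is true for $\Sinj_0$ and $\SmonadSum$ (both packaged in ($\devM$-lin), yielding \ref{ax:D-add}) and for $\Sswap$ (($\devM$-Schwarz), yielding \ref{ax:D-schwarz}), but it fails for $\Slift$: the square
$\Slift_{\Excl X}\Compl\devM_X=\S \devM_X\Compl\devM_{\S X}\Compl\Excl{\Slift_X}$,
which by \Cref{prop:dl-morphism} is equivalent to the naturality of $\Dlift$ in $\kleisliExp$, i.e.\ to \ref{ax:D-lin}, is \emph{not} among the axioms of \cite{Ehrhard21}. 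The paper has to postulate it as a supplementary diagram, observing only that it makes sense in the LL setting and holds in all known LL models of coherent differentiation. As written, your verification of \ref{ax:D-lin} silently assumes an axiom that the cited source does not provide; you would need either to add this square as an explicit hypothesis or to derive it from the axioms of \cite{Ehrhard21}, which is not known to be possible. Everything else in your outline lines up with the paper's argument.
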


Let us detail first what the assumption means.
The category $\categoryLL$ is said to be
\emph{summable}~\cite{Ehrhard21} if it has a summability structure
$(\S, \Sproj_0, \Sproj_1, \Ssum)$ in the sense of \Ehrhard.
By \Cref{prop:summability-struct-equivalence}, this means that
$(\S, \Sproj_0, \Sproj_1, \Ssum)$ is a left summability structure in
the sense of \Cref{def:left-summability-struct} where every morphism
is additive and the functorial action of $\S$ is given by
$\S f \defEq \Spair{f \comp \Sproj_0}{f \comp \Sproj_1}$.
Then, we can define $\Sinj_i$, $\SmonadSum$, $\Slift$ and $\Sswap$ as
usual\footnote{Note that in~\cite{Ehrhard21}, $\SmonadSum$ is called
  $\tau$}.
The difference is that the additivity of every morphism ensures that
those families are natural transformations for the functor $\S$.
In particular, $(\S, \Sinj_0, \SmonadSum)$ is \emph{de facto} a monad.
The category $\categoryLL$ is said to be \emph{summable as a cartesian
  category} if $\prodPair{\S \prodProj_0}{\S \prodProj_1} = \prodSwap$
is an isomorphism%
\footnote{We can show that the condition required in~\cite{Ehrhard21}
  that $0 \in \categoryLL(\S \top, \top)$ is an isomorphism always
  hold, using the joint monicity of the $\Dproj_i$}. %
Because every morphism of $\categoryLL$ is additive, this corresponds
by \Cref{cor:summability-cartesian-compat} to the fact that the
cartesian product is compatible with the left summability structure
as in~\Cref{def:prod-compatible}.

It is well known that there is a faithful functor
$\kleisliCastExp : \categoryLL \arrow \kleisliExp$ which maps $X$ to
$X$ and $f \in \categoryLL(X, Y)$ to
$f \Compl \der_X \in \categoryLL(\Excl X, Y)$.
We can show that this functor induces a left summability structure
$(\D, \lin(\Sproj_0), \lin(\Sproj_1), \lin(\Ssum))$ on $\kleisliExp$
(where $\D X \defEq \S X$) compatible with the cartesian product
$\with$ of $\kleisliExp$.
The reason is that $\lin$ preserves monicity and additivity, thanks to
the well known fact that $\lin(h) \comp f = h \Compl f$.
Finally, the definition of $\lin$ ensures that
$\lin(\Spair{f_0}{f_1}) = \Dpair{\lin(f_0)}{\lin(f_1}$.
In particular, the families of morphism generated by the Left
Summability Structure
$(\D, \lin(\Sproj_0), \lin(\Sproj_1), \lin(\Ssum))$ in
\Cref{def:injections,def:DmonadSum,def:Dlift,def:Dswap} are
$\lin(\Sinj_i)$, $\lin(\SmonadSum)$, $\lin(\Slift)$ and $\lin(\Sswap)$
respectively.

Then a \emph{differential structure} on
 a summable category
$\categoryLL$ is a natural transformation
$\devM_X \in \categoryLL(\Excl \S X, \S \Excl X)$ satisfying some
equations called ($\devM$-chain), ($\devM$-local), ($\devM$-lin),
($\devM$-$\with$) and ($\devM$-Schwarz) (see~\cite{Ehrhard21}).
The first axiom, ($\devM$-chain), is a compatibility condition of
$\devM$ with regard to $\dig$ and $\der$, making $\devM$ a
\emph{distributive law} between the functor $\S$ and the comonad
$\Excl\_$.

\begin{definition}
  A \emph{distributive law} between a functor
  $F : \categoryLL \arrow \categoryLL$ and the comonad $\Excl\_$ on
  $\categoryLL$ is a natural transformation
  $\lambda^F \in \categoryLL(\Excl FX, F\Excl X)$ such that the two
  following diagrams commute.
  \begin{center}
    \begin{tikzcd}
      \Excl FX \arrow[r, "\lambda^F_X"] \arrow[rd, "\der_{FX}"']
      & F\Excl X \arrow[d, "F \der_X"] \\
      & FX
    \end{tikzcd}
    \begin{tikzcd}
      \Excl FX \arrow[d, "\dig_{FX}"'] \arrow[rr, "\lambda_X^F"]
      &
      & F\Excl X \arrow[d, "F\dig_X"] \\
      \Excll FX \arrow[r, "\Excl\lambda^F_X"']
      & \Excl F\Excl X \arrow[r, "\lambda^F_{\Excl X}"']
      & F\Excll X
    \end{tikzcd}
\end{center}
\end{definition}

A definition of distributive laws can be found in~\cite{Power02},
together with a proof of \Cref{prop:dl-extension,prop:dl-morphism} 
stated below
(corollary 5.11 of~\cite{Power02})%
\footnote{These observations are made in the more general setting of
  $2$-categories}.

\begin{proposition} \label{prop:dl-extension} %
  Let $F : \category \arrow \category$ be an endofunctor.
  There is a bijection between distributive laws
  $\lambda_F \in \categoryLL(\Excl F X, F \Excl X)$ and
  \emph{liftings}\footnote{The word ``extension'' is also used. 
  We use the term lifting in order to stick to the terminology 
  of~\cite{Power02}} $\lift{F}$ of $F$ on $\kleisliExp$.
  A lifting $\lift{F}$ of $F$ is a functor
  $\lift{F} : \kleisliExp \arrow \kleisliExp$ such that
  $\lift{F} X = F X$ and
  $\lift{F} (\kleisliCastExp(h)) = \kleisliCastExp(F h)$.
\end{proposition}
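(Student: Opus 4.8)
The plan is to exhibit the two assignments of the bijection explicitly and check that they are mutually inverse; the only conceptual ingredient is recognising which morphisms of $\kleisliExp$ carry the comonad data, after which everything is diagram chasing with the comonad equations and the naturality of $\der$ and $\dig$.

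For the direction from distributive laws to liftings, given $\lambda^F$ I would set $\lift{F} X \defEq FX$ on objects and, for a morphism $f$ of $\kleisliExp$, that is, $f \in \categoryLL(\Excl X, Y)$, put $\lift{F}(f) \defEq Ff \compl \lambda^F_X \in \categoryLL(\Excl FX, FY)$. Then I would verify that $\lift{F}$ is a functor: preservation of identities reduces, via $\lift{F}(\der_X) = F\der_X \compl \lambda^F_X$, to the first triangle of a distributive law; preservation of composition, after unfolding the Kleisli composition $g \comp f = g \compl \Excl{f} \compl \dig_X$, reduces to the second (comultiplication) square of a distributive law together with the naturality of $\lambda^F$ at $f$. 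The lifting condition $\lift{F}(\kleisliCastExp(h)) = \kleisliCastExp(Fh)$ then follows from the first triangle and the naturality of $\der$, since $\lift{F}(\kleisliCastExp(h)) = F(h \compl \der_X) \compl \lambda^F_X = Fh \compl \der_{FX} = \kleisliCastExp(Fh)$.

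For the converse, write $e_X \defEq \id_{\Excl X} \in \categoryLL(\Excl X, \Excl X)$, regarded as a morphism $X \to \Excl X$ of $\kleisliExp$, and set $\lambda^F_X \defEq \lift{F}(e_X) \in \categoryLL(\Excl FX, F\Excl X)$. The key preliminary fact is that every morphism $f$ of $\kleisliExp$ factors as $f = \kleisliCastExp(f) \comp e_X$ in $\kleisliExp$, which is immediate from the comonad equation $\der_{\Excl X} \compl \dig_X = \id_{\Excl X}$. Applying $\lift{F}$ to this factorization and using the lifting hypothesis gives the formula $\lift{F}(f) = Ff \compl \lambda^F_X$ (a one-line computation, again using the naturality of $\der$ and the comonad equation). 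With this formula available, I would recover the naturality of $\lambda^F$ and the two distributive-law axioms by applying the functor $\lift{F}$ to three elementary identities of $\kleisliExp$, each checked by unfolding Kleisli composition: the square $\kleisliCastExp(\Excl{h}) \comp e_X = e_Y \comp \kleisliCastExp(h)$ (for $h \in \categoryLL(X,Y)$) maps to the naturality square of $\lambda^F$; the identity $\kleisliCastExp(\der_X) \comp e_X = \der_X$, the identity of $X$ in $\kleisliExp$, maps to the first triangle $F\der_X \compl \lambda^F_X = \der_{FX}$; and the identity $e_{\Excl X} \comp e_X = \dig_X$, where $\dig_X$ is read as a morphism $X \to \Excll X$ of $\kleisliExp$, together with $\lift{F}(\dig_X) = F\dig_X \compl \lambda^F_X$, maps to the second square.

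Finally I would check the two round trips. Starting from a distributive law $\lambda^F$, the induced lifting sends $e_X$ to $F e_X \compl \lambda^F_X = \lambda^F_X$, so $\lambda^F$ is recovered on the nose; starting from a lifting $\lift{F}$, the formula $\lift{F}(f) = Ff \compl \lambda^F_X$ derived above (with $\lambda^F_X = \lift{F}(e_X)$) shows that the lifting induced by the recovered distributive law is $\lift{F}$ itself. I expect the main obstacle to be nothing deep but rather keeping the bookkeeping straight: at each step one must be careful whether a given $\categoryLL$-morphism is being read as a morphism of $\kleisliExp$ and which copies of $\Excl{\cdot}$ are applied where, with the identification $\dig_X = e_{\Excl X} \comp e_X$ as a Kleisli arrow $X \to \Excll X$ the point most easily mishandled. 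Once the three generating identities above are isolated, all remaining verifications collapse to the comonad laws and the naturality of $\der$ and $\dig$; this is of course an instance of the general Kleisli-lifting correspondence for comonads treated in~\cite{Power02}.
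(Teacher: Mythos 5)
Your proposal is correct and follows essentially the same route as the paper's (much terser) proof: the lifting induced by a distributive law is $\lift{F}(f)=Ff\compl\lambda^F_X$, and conversely $\lambda^F_X=\lift{F}(\id_{\Excl X})$, with the distributive-law diagrams extracted from functoriality and the lifting condition. You additionally verify the two round trips explicitly, which the paper leaves implicit; all the individual computations you outline check out.
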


\begin{proof}
  Given a distributive law
  $\lambda^F \in \categoryLL(\Excl FX, F\Excl X)$, one can define an
  extension mapping $X$ to $F X$ and $f \in \kleisliExp(X, Y)$ to
  $F (f) \Compl \lambda^F_X \in \kleisliExp(FX, FY)$. We can check that
  it is a functor using the diagrams of distributive laws, and a
  lifting
   of $F$ using the naturality of $\lambda^F$.
  Conversely, any lifting $\lift{F}$ of $F$ induces a family
  $\lambda^F_{X} = \lift{F} \id_{\Excl X} \in \kleisliExp(\Excl FX,
  F\Excl X)$.
  The two diagrams of distributive law comes from the functoriality of
  $\lift{F}$ and the naturality comes from the fact that $\lift{F}$ is
  an extension of $F$.
\end{proof}

\begin{remark} \label{rem:dl-composite} %
  Let $F, G : \categoryLL \arrow \categoryLL$ be two functors, with
  respective lifting $\lift{F}$ and $\lift{G}$ associated to the
  distributive laws $\lambda^F \in \categoryLL(\Excl FX, F\Excl X)$
  and $\lambda^G \in \categoryLL(\Excl GX, G\Excl X)$.
  Then $\lift{G} \lift{F}$ is a lifting of $G F$ and the distributive
  law associated with $\lift{G} \lift{F}$ is the following natural
  transformation:
  $\lambda^{GF}_X = G (\lambda^F_X) \Compl \lambda^G_{F X} \in
  \categoryLL(\Excl GFX, GF \Excl X)$.
\end{remark}

The result below, proved in~\cite{Power02}, is rather overlooked. While
the proof is indeed quite simple, it provides a very interesting perspective 
on the idea of extending structure to a Kleisli (or similarly to an 
Eilenberg-Moore) category.

\begin{proposition} \label{prop:dl-morphism} %
  Let $F, G : \categoryLL \arrow \categoryLL$ be two endofunctors.
  Assume that $\lift{F}$ and $\lift{G}$ are lifting of $F$ and $G$
  respectively, and let $\lambda^F$ and $\lambda^G$ be their
  respective associated distributive law.  
  Let $\alpha_X \in \categoryLL(F X, G X)$ be a natural
  transformation.
  Then
  $\kleisliCastExp(\alpha_X) \in \kleisliExp(\lift{F} X, \lift{G} X)$
  is natural if and only if the following diagram commutes.
  \begin{equation}
  \label{eq:dl-morphism}
    \begin{tikzcd}
      \Excl F X \arrow[r, "\lambda^F"] \arrow[d, "\Excl \alpha"']
      & F \Excl X \arrow[d, "\alpha"] \\
      \Excl GX \arrow[r, "\lambda^G"']
      & G\Excl X
    \end{tikzcd}
  \end{equation}
\end{proposition}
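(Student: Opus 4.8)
The plan is to unfold the naturality of $\kleisliCastExp(\alpha)$ on an arbitrary Kleisli morphism, rewrite both sides as composites in $\categoryLL$, and simplify using the comonad equations together with the naturality of $\der$ and of $\alpha$; what survives is exactly a $G$-postcomposed instance of the square~\eqref{eq:dl-morphism}, and both implications then fall out. So fix $h \in \kleisliExp(X,Y) = \categoryLL(\Excl X, Y)$. By the construction underlying \Cref{prop:dl-extension}, the liftings act on morphisms by $\lift{F}(h) = F(h) \compl \lambda^F_X$ and $\lift{G}(h) = G(h) \compl \lambda^G_X$ (with $F(h), G(h)$ the images of $h$ seen as a $\categoryLL$-morphism $\Excl X \to Y$), while $\kleisliCastExp(\alpha_Z) = \alpha_Z \compl \der_{FZ}$ and Kleisli composition is $u \comp v = u \compl \Excl{v} \compl \dig$.

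First I would reduce the left edge of the naturality square at $h$: from $\kleisliCastExp(\alpha_Y) \comp \lift{F}(h) = \alpha_Y \compl \der_{FY} \compl \Excl{(F(h) \compl \lambda^F_X)} \compl \dig_{FX}$, the naturality of $\der$ rewrites $\der_{FY} \compl \Excl{(F(h) \compl \lambda^F_X)}$ as $F(h) \compl \lambda^F_X \compl \der_{\Excl{FX}}$; then the comonad identity $\der_{\Excl{FX}} \compl \dig_{FX} = \id$ and the naturality of $\alpha$ at $h$ collapse the expression to $G(h) \compl \alpha_{\Excl X} \compl \lambda^F_X$. For the right edge, functoriality of $\Excl\_$ gives $\lift{G}(h) \comp \kleisliCastExp(\alpha_X) = G(h) \compl \lambda^G_X \compl \Excl{\alpha_X} \compl \Excl{\der_{FX}} \compl \dig_{FX}$, and the comonad identity $\Excl{\der_{FX}} \compl \dig_{FX} = \id$ reduces this to $G(h) \compl \lambda^G_X \compl \Excl{\alpha_X}$. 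Hence the naturality square of $\kleisliCastExp(\alpha)$ at $h$ commutes if and only if
\[
  G(h) \compl \bigl(\alpha_{\Excl X} \compl \lambda^F_X\bigr)
  = G(h) \compl \bigl(\lambda^G_X \compl \Excl{\alpha_X}\bigr)\,.
\]

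Finally, if~\eqref{eq:dl-morphism} commutes then $\alpha_{\Excl X} \compl \lambda^F_X = \lambda^G_X \compl \Excl{\alpha_X}$, so the displayed equation holds for every $h$ and $\kleisliCastExp(\alpha)$ is natural. Conversely, assuming $\kleisliCastExp(\alpha)$ natural, I would instantiate the square at $h \defEq \id_{\Excl X}$, viewed as an element of $\kleisliExp(X, \Excl X) = \categoryLL(\Excl X, \Excl X)$: then $G(h) = \id$, and (recalling from the proof of \Cref{prop:dl-extension} that $\lambda^F_X = \lift{F}(\id_{\Excl X})$ and $\lambda^G_X = \lift{G}(\id_{\Excl X})$) the displayed equation becomes precisely the commutation of~\eqref{eq:dl-morphism}. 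I expect the only real friction to be clerical: keeping the two compositions $\comp$ and $\compl$ apart and tracking the objects at which the various $\der$ and $\dig$ are indexed. There is no genuine mathematical obstacle, in line with the paper's remark that the proof is simple.
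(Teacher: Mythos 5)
Your computation is correct and is exactly the ``straightforward computation'' that the paper leaves implicit: unfolding Kleisli composition, cancelling $\dig$ against $\der$ via the comonad laws, using naturality of $\der$ and $\alpha$ to reduce both sides to $G(h)$ postcomposed with the two legs of the square, and recovering the converse by instantiating at $h=\id_{\Excl X}\in\kleisliExp(X,\Excl X)$ where $G(h)=\id$. No gaps.
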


\begin{proof}
  straightforward computation.
\end{proof}

In the case of differentiation, the axiom ($\devM$-chain) implies that
$\devM \in \category(\Excl \S X, \S \Excl X)$ is a distributive law
between the comonad $\oc$ and the functor $\S$.
This means that $\S$ can be lifted to an endofunctor $\D$ on
$\kleisliExp$.
Besides, there is a trivial distributive law
$\id_{\Excl X} \in \categoryLL(\Excl X,\Excl X)$ associated to the
lifting of the identity functor on $\categoryLL$ to the identity
functor on $\kleisliExp$.
Then ($\devM$-local) is an instance of~\Cref{eq:dl-morphism}
in which $F = \S$, $G = \Id$ and $\alpha = \Sproj_0$.
This means that ($\devM$-local) holds if and only if
 $\lin(\Sproj_0) \in \kleisliExp(\D X, X)$ is a natural transformation.
Thus, $(\D, \lin(\Sproj_0), \lin(\Sproj_1), \lin(\Ssum))$ is a
pre-differential structure on $\kleisliExp$ (in the sense of
\Cref{def:differential-struct}) and \ref{ax:D-chain} holds.

Moreover, since $\D$ is a lifting of $\S$, for any
$h \in \categoryLL(X, Y)$, the morphism
$\lin(h) \in \kleisliExp(X, Y)$ is $\D$-linear.
Indeed,
$\lin(\Sproj_0) \comp \D (\lin(h)) = \lin(\Sproj_0) \comp \lin(\S h) =
\lin(\Sproj_0 \Compl \S h) = \lin(h \Compl \Sproj_0) = \lin(h) \comp
\lin(\Sproj_0)$.
As a result, $\lin(\Sproj_i), \lin(\Ssum), \lin(\prodProj_i)$ are all
linear so \ref{ax:Dproj-lin}, \ref{ax:Dsum-lin} hold and the
pre-differential structure is compatible with the cartesian product.

Furthermore, ($\devM$-lin) consists of two instances of
\Cref{eq:dl-morphism}. The first one is an instance in which
$F = \S$, $G = \Id$ and $\alpha = \Sinj_0 \in \categoryLL(X, \S X)$.
The second one is an instance in which 
$F = \S^2$, $G = \S$ and
$\alpha = \SmonadSum \in \categoryLL(\S^2 X, X)$. Indeed, 
as we saw in \Cref{rem:dl-composite}, there is a distributive law
$\S (\devM_X) \Compl \devM_{\S X} \in \category(\Excl{\S^2} X, \S^2
\Excl X)$ associated to $\D^2$, the lifting of $\S^2$ to
$\kleisliExp$.
So ($\devM$-lin) holds if and only if 
$\lin(\Sinj_0) \in \kleisliExp(\D^2 X, \D X)$ and 
$\lin(\SmonadSum) \in \kleisliExp(\D^2 X, \D X)$ 
are natural transformation, that is if and only if 
\ref{ax:D-add} hold\footnote{As we saw, this gives to $\D$ the structure of a 
Monad on $\kleisliExp$. In fact, ($\devM$-chain) and ($\devM$-lin) taken together 
make $\devM$ a distributive law between the \emph{monad} $\D$ and 
the comonad $\Excl\_$. There is a striking symmetry, because it also allows 
to lift $\Excl\_$ to a comonad on $\kleisliS$ the Kleisli category of $\S$}.

Finally, ($\devM$-Schwarz) consists of an instance of 
\Cref{eq:dl-morphism} in which $F = \S^2$, $G = \S^2$ and
$\alpha = \Sswap$. So ($\devM$-Schwarz) holds if and only if 
$\lin(\Sswap) \in \kleisliExp(\D^2 X, \D^2 X)$ is natural. 
The only lacking axiom is~\ref{ax:D-lin} that corresponds to the
naturality of $\lin(\Slift)$. Thanks to~\Cref{prop:dl-morphism}, it
would hold if and only if the diagram below commutes.
\begin{equation}
\begin{tikzcd}
  \Excl \S X \arrow[d, "\Excl \Slift"'] \arrow[rr, "\devM_X"] &                                    & \S\Excl X \arrow[d, "\Slift_{\Excl X}"] \\
  \Excl{\S^2} X \arrow[r, "\devM_{\S X}"'] & \S ! \S X \arrow[r, "\S
  \devM_X"'] & \S^2 !X
\end{tikzcd}
\end{equation}
This diagram is not mentioned in~\cite{Ehrhard21} but makes perfectly
sense in the setting of coherent differentiation in \LL{} and holds in
all known \LL{} models of coherent differentiation.
The study of the consequences of this diagram is left for further
work.
This ends the proof of \cref{prop:cd-induces-cdc}.

\begin{remark} 
The only remaining axiom is ($\devM$-$\with$) that deals with the
Seely isomorphisms
$\seely{n} \in \categoryLL(! X_0 \tensor \ldots \tensor !X_n, !(X_0
\with \ldots \with X_n))$ of the Seely category $\categoryLL$.
It is possible to define in $\LL{}$ a notion of multilinearity: given any
$l \in \categoryLL(X_0 \tensor \ldots \tensor X_n, Y)$, one can define
$\mlin{l} \in \kleisliExp(X_0 \with \ldots \with X_n, Y)$ as
$\mlin{l} = l \Compl (\der \tensor \ldots \tensor \der) \Compl
(\seely{n})^{-1}$.
Then a morphism in $\kleisliExp(X_0 \with \ldots \with X_n, Y)$ is
\emph{$(n+1)$-linear} (in the sense of $\LL{}$) if it can be written as
$\mlin{h}$ for some $h$.
The axiom ($\devM$-$\with$) allows to show that any $(n+1)$-linear
morphism in the sense of \LL{} is also $(n+1)$-linear in the
sense of \Cref{def:multilinear}.
A proof of this fact can be implicitly found in Theorem 4.26
of~\cite{Ehrhard22-pcf}.
This is a crucial fact, because it shows that what really matters is
the $(n+1)$-linearity in terms of CCDC rather than the
$(n+1)$-linearity in terms of $\LL{}$.
\end{remark}

Many models of $\LL{}$ have a coherent differential structure, such as coherence
spaces, non-uniform coherence spaces and probabilistic coherence
spaces.
Thus, their Kleisli categories are all CCDCs. This provides a rich
variety of examples.
We present here the example of probabilistic coherence spaces.

\subsection{The example of probabilistic coherence spaces}

\label{sec:apcoh}

A \emph{probabilistic coherence space} (PCS)~\cite{DanosEhrhard08} is
a pair %
\(X=(\Web X,\Pcoh X)\) where \(\Web X\) is a set and %
\(\Pcoh X\subseteq\Realpto{\Web X}\) satisfies %
\(\Pcoh X=\{x\in\Realpto{\Web X}\St\forall x'\in\cP'\
\Eval{x}{x'}\defEq\sum_{a\in\Web X}x_ax'_a\leq 1\}\) for some
\(\cP'\subseteq\Realpto{\Web X}\) called a \emph{predual} of \(X\).
To avoid \(\infty\) coefficients it is also assumed that
\(\forall a\in\Web X\ 0<\sup_{x'\in\cP'}x'_a<\infty\) and then it
is easily checked that for all %
\(\forall a\in\Web X\ 0<\sup_{x\in\Pcoh X}x_a<\infty\).

A multiset of elements of a set \(I\) is a function \(m:I\to\Nat\)
such that the set \(\Supp m=\{i\in I\St m(i)\not=0\}\) is finite. The
set \(\Mfin I\) of these multisets is the free commutative monoid
generated by \(I\). We use \(\Mset{\List i1k}\) for the
\(m\in\Mfin I\) such that \(m(i)=\Card{\{j\St i_j=i\}}\), for
\(\List i1k\in I\).

Given PCSs \(X\) and \(Y\), a function \(f:\Pcoh X\to\Pcoh Y\) is
\emph{analytic}%
\footnote{There is also a purely functional characterization of these
  functions as those which are totally monotone and Scott continuous,
  see~\cite{Crubille18}} %
if there is a \emph{matrix}
\(t\in\Realpto{\Mfin{\Web X}\times\Web Y}\) such that, for all
\(x\in\Pcoh X\) and \(b\in\Web Y\), one has %
\(f(x)_b=\sum_{(m,b)\in\Mfin{\Web X}\times\Web Y}t_{m,b}x^m\) %
where \(x^m=\prod_{a\in\Web X}x_a^{m(a)}\).
Thanks to the fact that all the coefficients in \(t\) are finite, it
is not difficult to see that they can be recovered from the function
\(f\) itself by means of iterated differentiation,
see~\cite{DanosEhrhard08}.
So an analytic function has \emph{exactly one} associated matrix. 

The identity function \(\Pcoh X\to\Pcoh X\) is analytic (of matrix
\(t\) given by \(t_{m,a}=\Kronecker{m}{\Mset a}\)) and the composition
of two analytic functions is still analytic.
We use \(\ACOH\) for the category whose objects are PCSs and morphisms
are analytic functions.
For instance, if \(\Sone\) is the PCS \((\Eset\ast,\Intercc01)\) then
\(f_1,f_2:\Intercc01\to\Intercc01\) given by \(f_1(x)=1-\sqrt{1-x^2}\)
and \(f_2(x)=e^{x-1}\) are in \(\ACOH(\Sone,\Sone)\), but
\(f_3(x)=2x-x^2\) is not because of the negative coefficient.
The (pointwise) sum of two analytic functions \(\Pcoh X\to\Pcoh Y\) is
always well defined \(\Pcoh X\to\Realp^{\Web Y}\), but is not
necessarily in \(\ACOH(X,Y)\) so \(\ACOH\) is not left-additive%
\footnote{At least for this most natural addition.}.

If \(X\) is a PCS then
\(\D X
=(\{0,1\}\times\Web X,
\Pcohp{\D X}
=\{z\in\Realpto{\{0,1\}\times\Web X}
\St\Dproj_0(z)+\Dproj_1(z)\in\Pcoh X\})\),
where \(\Dproj_i(z)_a=z_{i,a}\), is a PCS.
Then \(\Dproj_0,\Dproj_1\in\ACOH(\D X,X)\) and we have also %
\(\Dsum\in\ACOH(\D X,X)\) given by
\(\Dsum(z)=\Dproj_0(z)+\Dproj_1(z)\).
In other words \(\D X\) is the PCS whose elements are the pairs
\((x,u)\in\Pcoh X^2\) such that \(x+u\in\Pcoh X\).
In that way we have equipped \(\ACOH\) with a left pre-summability
structure and the associated notion of summability is the obvious one:
\(f_0,f_1\in\ACOH(X,Y)\) are summable if their pointwise sum
\(f_0+f_1\) is in \(\ACOH(X,Y)\)
(the matrix of this sum is the sum of the matrices of \(f_0\) and
\(f_1\)).
It is easily checked that this left pre-summability structure is a
left summability structure (see~\Cref{def:left-summability-struct}).

As explained in~\Cref{sec:differential}, differentiation boils down to
extending the operation \(\D\) to morphisms in such a way that the
conditions of~\Cref{def:CDC} be satisfied.
Given \(f\in\ACOH(X,Y)\) of matrix \(t\) and \((x,u)\in\Pcohp{\D X}\)
we have %
\begin{align*}
f(x+u) &=\sum_{(m,b)\in\Mfin{\Web X}\times\Web 
Y}t_{m,b}(x+u)^m \\ 
&=\sum_{(m,b)\in\Mfin{\Web X}\times\Web Y}t_{m,b}
\sum_{p\leq m}\Binom mpx^{m-p}u^p  \\
&=f(x)+\sum_{a\in\Supp m}\Binom
m{\Mset a}x^{m-\Mset a}u_a+r(x,u) \\
&=f(x)+\sum_{a\in\Supp
  m}m(a)x^{m-\Mset a}u_a+r(x,u)
  \end{align*}
  where
\(\Binom mp=\prod_{a\in\Web X}\Binom{m(a)}{p(a)}\in\Nat\) when
\(p\leq m\) for the pointwise order.
In these expressions the remainder \(r(x,u)\) is a power series in
\(x\) and \(u\) all of whose monomials have total degree \(>1\) in \(u\)
(such as \(x_au_bu_c\) if \(a,b,c\in\Web X\)).
In particular \(\Norm{r(x,u)}\in o(\Norm u)\) where
\(\Norm x=\sup\{\Eval x{x'}\St x'\in\cP'\}\in\Intercc 01\) for any
predual of \(X\) (this norm does not depend on the choice of
\(\cP'\)).
Using~\Cref{def:differential-struct} we set
\[ \dcoh f(x,u)=\sum_{a\in\Supp m}m(a)x^{m-\Mset a}u_a.\] Since all
coefficients of \(t\) are \(\geq 0\) we have
\(f(x)+\dcoh f(x,u)\leq f(x+u)\) for the pointwise order so that %
\(\D f(x,u)=(f(x),\dcoh f(x,u))\in\Pcoh{(\D Y)}\).
In that way we have defined an analytic function
\(\D f\in\ACOH(\D X,\D Y)\) and it is easily checked that \(\ACOH\) is
a coherent differential category in the sense of~\Cref{def:CDC}.
For the two examples above we get \(\dcoh f_2(x,u)=e^{x-1}u\) and
\(\dcoh f_1(x,u)=xu/\sqrt{1-x^2}\) which seems to be undefined when
\(x=1\) but is not because then \emph{we must have} \(u=0\) and so
\(\dcoh f_1(1,0)=0\).

An analytic \(f\in\ACOH(X,Y)\) is \emph{linear} if its matrix \(t\)
satisfies that whenever \(t_{m,b}\not=0\), one has \(m=\Mset a\) for
some \(a\in\Web X\).
This notion of linearity%
\footnote{Which arises from the fact that \(\ACOH\) is the Kleisli
  category of the comonad ``\(\oc\)'' on the PCS model of LL
  of~\cite{DanosEhrhard08}.} %
coincides with both additivity~\Cref{def:additive} and
\(\D\)-linearity~\Cref{def:linear}.

The category \(\ACOH\) is cartesian, with \(\top=(\emptyset,\{0\})\)
and
\(X\with Y =(\{0\}\times\Web X\cup\{1\}\times\Web Y),
\{z\in\Realpto{\{0\}\times\Web X\cup\{1\}}\St\prodProj_0(z)\in\Pcoh
X\text{ and }\prodProj_1(z)\in\Pcoh Y\}\) which is easily seen to be a
PCS (\(\prodProj_i\) is defined exactly as \(\Dproj_i\)) such that
\(\Pcohp{X\with Y}=\Pcoh X\times\Pcoh Y\) up to a trivial
bijection.
The projections \(\prodProj_i\) are additive, and \(\prodSwap\)
(see~\Cref{sec:cartesian-summability}) is an iso: if
\(((x,u),(y,v))\in\Pcohp{\D X\with\D Y}\) then
\(((x,y),(u,v))\in\Pcohp{\D(X\with Y)}\) since
\((x,y)+(u,v)=(x+u,y+v)\) so the summability structure is compatible
with the cartesian product by~\Cref{cor:summability-cartesian-compat}.

An \(f\in\ACOH(X_0\with X_1,Y)\) is \emph{bilinear} in \(X_0,X_1\) if
it is linear (or additive) separately in both inputs, which is
equivalent to saying that its matrix \(t\) satisfies that if
\(t_{m,b}\not=0\) then \(m=\Mset{(0,a_0),(1,a_1)}\) with
\(a_i\in\Web{X_i}\) for \(i=0,1\).
Let \(\SNat=(\Nat,\{x\in\Realpto\Nat\St\sum_{n\in\Nat}x_n\leq 1\})\)
which represents the type of integers in \(\ACOH\), then the function
\(h:\ACOH(\SNat\with\SNat\with\SNat,\SNat)\) given by
\(h(u,x,y)=u_0x+(\sum_{n=1}^\infty u_n)y\) is bilinear in \(\SNat\),
\(\SNat\with\SNat\) and can be understood as an \(\mathtt{ifzero}\)
operator.
The function \(k\in\ACOH(\SNat,\SNat)\) such that \(k(x)_n=x_{n+1}\)
is linear and represents the successor operation.


\section{Link with cartesian differential categories} 

\label{sec:CDC}
 
We show in this section that CCDCs are a generalization of cartesian
differential categories~\cite{Blute09}.

\subsection{Cartesian left additive categories}

\label{sec:comparison-sum}

We rely on the presentation of~\cite{Lemay18} for left additive
categories, since this article uses a minimal set of assumptions.

\begin{definition} \label{def:left-additive} %
  A left additive category is a category such that each hom-set is a
  commutative monoid, with addition $+$ and zero $0$ commuting
  with composition on the right, that is
  $(f+g) \comp h = f \comp h + g \comp h$ and $0 \comp f = 0$.
\end{definition}

\begin{definition}
  A morphism $h$ is additive if addition is compatible with composition
  with $h$ on the left, that is
  $h \comp (f + g) = h \comp f + h \comp g$ and $h \comp 0 = 0$.
  Note that the identity is additive, and additive morphisms are
  closed under addition and composition.
\end{definition}

\begin{definition} 
  \label{def:additive-total} %
  A cartesian left additive category is a left additive category such
  that the projections are additive.
\end{definition}

Given a cartesian left additive category $\category$, one can
define a summable pairing structure
(\Cref{def:pre-presummability-structure})
$(\Dwith, \prodProj_0, \prodProj_1, \prodProj_0 + \prodProj_1)$ with
$\Dwith X = X \with X$.
Then one can check that all morphisms are summable (the witness of
$f \summable g$ is $\prodPair{f}{g}$).
Moreover the left additivity of the category ensures that the notion
of sum induced by
$(\Dwith, \prodProj_0, \prodProj_1, \prodProj_0 + \prodProj_1)$
coincides with the native structure of monoid on the hom-sets.
In particular, a morphism is additive in the sense of
\Cref{def:additive} if and only if it is additive in the sense of
\Cref{def:additive-total}.
Consequently, $\prodProj_0, \prodProj_1$ and
$\prodProj_0 + \prodProj_1$ are additive. Thus,
$(\Dwith, \prodProj_0, \prodProj_1, \prodProj_0 + \prodProj_1)$ is a
left pre-summability structure.
Finally, it is a left summability structure because 
\ref{ax:D-witness} trivially holds (everything is summable), 
and \ref{ax:D-zero}, \ref{ax:D-com} hold thanks to the fact that 
everything is summable and that $(\category(X, Y), +, 0)$ is a 
commutative monoid.

Conversely any left summability structure on $\category$ of shape
$(\Dwith, \prodProj_0, \prodProj_1, \Dsum)$ with
$\D_{\with} X = X \with X$
endows each hom-set with a commutative monoid structure and
\Cref{prop:sum-left-compatible} ensures that the category is left
additive.
Then, as above, a morphism is additive in the sense of
\Cref{def:additive} if and only if it is additive in the sense of
\Cref{def:additive-total}.
Thus $\prodProj_0, \prodProj_1$ are additive so the category is
cartesian left additive.
Moreover $\Dsum = \prodProj_0 + \prodProj_1$ by \Cref{prop:proj-sum}
so the left summability structure induced by the monoid on the hom-set
coincides with the left summability structure we started from.
We just proved \Cref{prop:summability-total} below.

\begin{theorem} \label{prop:summability-total} %
  Let $\category$ be a cartesian category.
  Define $\Dwith X = X \with X$.
  There is a bijection between the monoid structures on the hom-set
  that make $\category$ a cartesian left additive category and the
  left summability structures $(\D, \Dproj_0, \Dproj_1, \Dsum)$ on
  $\category$ such that $\D = \Dwith$, $\Dproj_0 = \prodProj_0$ and
  $\Dproj_1 = \prodProj_1$.
\end{theorem}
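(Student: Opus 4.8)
The plan is to unpack the correspondence as two mutually inverse constructions, both of which are essentially carried out in the discussion preceding the statement; the remaining work is to organize them and check that they really are inverse to each other.

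First I would go from a cartesian left additive structure to a left summability structure. Given a commutative monoid structure $(+,0)$ on each $\category(X,Y)$ with the projections additive, set $\Dwith X = X \with X$ and consider the tuple $(\Dwith, \prodProj_0, \prodProj_1, \prodProj_0 + \prodProj_1)$. This is a summable pairing structure in the sense of \Cref{def:pre-presummability-structure}, since the $\prodProj_i$ are jointly monic by the universal property of the product. The key observation is that \emph{any} $f_0, f_1 \in \category(X, Y)$ are summable, with witness $\prodPair{f_0}{f_1}$, and then left additivity gives $\Dsum \comp \prodPair{f_0}{f_1} = (\prodProj_0 + \prodProj_1) \comp \prodPair{f_0}{f_1} = f_0 + f_1$, so the induced partial sum is total and coincides with the native monoid addition. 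Consequently the two notions of additive morphism (\Cref{def:additive} and \Cref{def:additive-total}) agree, so $\prodProj_0, \prodProj_1$ and hence $\prodProj_0 + \prodProj_1$ are additive in the sense of \Cref{def:additive}, i.e. we have a left pre-summability structure. Finally \ref{ax:D-witness} is vacuous since every pair is summable, and \ref{ax:D-zero}, \ref{ax:D-com} follow from neutrality and commutativity of the total monoid, so we obtain a left summability structure of the required shape.

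Next I would go from a left summability structure $(\D, \Dproj_0, \Dproj_1, \Dsum)$ with $\D = \Dwith$, $\Dproj_0 = \prodProj_0$, $\Dproj_1 = \prodProj_1$ to a cartesian left additive structure. Again any $f_0, f_1$ are summable with witness $\prodPair{f_0}{f_1}$, so the partial commutative monoid supplied by the partial-monoid theorem is in fact total; \Cref{prop:sum-left-compatible} together with $0 \comp g = 0$ then makes $\category$ left additive. As before the two notions of additive morphism agree, and since the $\Dproj_i = \prodProj_i$ are additive (part of being a left pre-summability structure) the projections are additive, so $\category$ is cartesian left additive. Moreover \Cref{prop:proj-sum} forces $\Dsum = \Dproj_0 + \Dproj_1 = \prodProj_0 + \prodProj_1$.

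For the bijection I would then combine the two: starting from a monoid, passing to the summability structure and back recovers the monoid because the induced sum is the native one; starting from a summability structure of the prescribed shape, passing to a monoid and back recovers it because $\D$ and the $\Dproj_i$ are fixed by the shape constraint and $\Dsum$ is forced by \Cref{prop:proj-sum}. The only point requiring care — the mild obstacle here — is precisely this rigidity: that the shape constraint $\D = \Dwith$, $\Dproj_i = \prodProj_i$ together with the summability axioms leaves no freedom, forcing both totality of $+$ and the value of $\Dsum$, so that the two constructions are genuinely inverse and not merely related. Everything else is routine unwinding of the definitions already established.
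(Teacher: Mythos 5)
Your proposal is correct and follows essentially the same route as the paper: the same two constructions (native monoid $\mapsto$ total summability structure with witness $\prodPair{f_0}{f_1}$ and sum $\prodProj_0+\prodProj_1$, and conversely), the same identification of the two notions of additivity, and the same use of \Cref{prop:proj-sum} to force $\Dsum=\prodProj_0+\prodProj_1$ when checking the constructions are mutually inverse. No gaps.
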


\begin{remark}
  Any left summability structure on $\category$ of shape
  $(\Dwith, \prodProj_0, \prodProj_1, \Dsum)$ with
  $\Dwith X = X \with X$ is \emph{de facto} compatible with the
  cartesian product.
  The additivity of $\prodProj_0$ and $\prodProj_1$ is part of the
  axioms of summability, and $\prodSwap$ is an isomorphism thanks to 
  \Cref{prop:prodSwap-inverse} and the fact that everything is summable.

\end{remark}

\subsection{Cartesian differential categories}

\label{sec:comparison-diff}

We give the axioms of a cartesian differential category following the
alternative formulation of~\cite{Cockett14} for convenience.

\begin{definition} \label{def:CartesianDC} %
  A cartesian differential category is a cartesian left additive
  category $\category$ equipped with a differential combinator $\d$
  that maps each morphism $f \in \category(X, Y)$ to a morphism
  $\d{f} \in \category(X \with X, Y)$ such that
  \begin{enumerate}
  \item $\d{\prodProj_0} = \prodProj_0 \comp \prodProj_1$,
    $\d{\prodProj_1} = \prodProj_1 \comp \prodProj_1$;
  \item $\d{0} = 0$ and $\d{(f+g)} = \d{f} + \d{g}$;
  \item $\d{\id} = \Dproj_1$ and
    $\d {(g \comp f)} = \d{g} \comp \prodPair{f \comp \Dproj_0}{\d{f}}$;
  \item $\d{f} \comp \prodPair{x}{0} = 0$ and
    $\d{f} \comp \prodPair{x}{u+v} = \d{f} \comp \prodPair{x}{u} +
    \d{f} \comp \prodPair{x}{v}$;
  \item
    $\d{\d{f}} \comp \prodPair{\prodPair{x}{0}}{\prodPair{0}{u}} =
    \d{f} \comp \prodPair{x}{u}$;
  \item
    $\d{\d{f}} \comp \prodPair{\prodPair{x}{u}}{\prodPair{v}{w}} =
    \d{\d f} \comp \prodPair{\prodPair{x}{v}}{\prodPair{u}{w}}$.
  \end{enumerate}
\end{definition}

Note that the axiom $\d{\id} = \prodProj_1$ seems to be missing from
the axioms given in~\cite{Cockett14}, although it can be found in the
original formulation in~\cite{Blute09}.
There is usually another axiom, that states that
$\d{\prodPair{f}{g}} = \prodPair{\d{f}}{\d{g}}$.
But as observed in \cite{Lemay18}, this axiom is a consequence of the
linearity of the projections and of the chain rule so we discard it.

Let $\category$ be a left additive category.
As stated in \Cref{prop:summability-total}, the structure of
monoid in the hom-set arises from a
summability structure
$(\Dwith, \prodProj_0, \prodProj_1, \prodProj_0 + \prodProj_1)$
compatible with the cartesian product.
Then, there is a bijection between pre-differential structures on top
of this summability structure and differential combinators in the sense
of \Cref{def:CartesianDC}: we can define the functorial action of
$\Dwith$ from $\d$ as
$\Dwith{f} \defEq \prodPair{f \comp \prodProj_0}{\d {f}}$, and we can
define $\d$ from $\Dwith$ as $\d{f} = \prodProj_1 \comp \Dwith{f}$.

Besides, we have shown in \Cref{sec:equivalence-lemmas} that
the axioms of coherent differentiation are equivalent to some 
equational properties on $\dcoh$.
When the underlying left summability structure is
$(\Dwith, \prodProj_0, \prodProj_1, \prodProj_0 + \prodProj_1)$, those
properties turn out to be exactly the axioms of cartesian differential
categories.
The axiom (1) corresponds to \ref{ax:Dproj-lin}.
%
By \Cref{prop:Dsum-lin},~(2) corresponds to \ref{ax:Dsum-lin}.
%
By \Cref{prop:D-chain},~(3) corresponds to \ref{ax:D-chain}.
By \Cref{prop:derivative-additive-zero,prop:derivative-additive-sum},~(4) 
corresponds to \ref{ax:D-add}.
By \Cref{prop:D-lin},~(5) corresponds to \ref{ax:D-lin}.
By \Cref{prop:D-schwarz},~(6) corresponds to \ref{ax:D-schwarz}.

Finally, the differential structures on top of the left summability 
structure $(\Dwith, \prodProj_0, \prodProj_1, \Dsum)$
 are \emph{de facto} compatible with the cartesian product,
because the linearity of $\prodProj_0$ and $\prodProj_1$ is included
in~(1).
%


\begin{theorem} \label{prop:ccdc-cdc}
  The cartesian differential categories are exactly the cartesian
  coherent differential categories in which $\D X = X \with X$,
  $\Dproj_0 = \prodProj_0$, $\Dproj_1 = \prodProj_1$.
\end{theorem}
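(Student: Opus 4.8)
The plan is to assemble the chain of correspondences established in the preceding sections, matching both the underlying left summability structure and the differential operator of a CCDC of the prescribed shape, axiom by axiom, with those of a CDC.

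First I would fix $\D = \Dwith$ with $\Dwith X = X \with X$, $\Dproj_0 = \prodProj_0$ and $\Dproj_1 = \prodProj_1$. By \Cref{prop:summability-total}, under this constraint the commutative monoid structures on the hom-sets making $\category$ cartesian left additive are in bijection with the left summability structures $(\D, \Dproj_0, \Dproj_1, \Dsum)$ of this shape; then necessarily $\Dsum = \prodProj_0 + \prodProj_1$, every pair of parallel morphisms is summable with $\Dpair{f}{g} = \prodPair{f}{g}$, and the native sum coincides with the sum induced by the summability structure. On top of such a summability structure, a pre-differential structure (\Cref{def:differential-struct}) is the datum of an operator $f \mapsto \D f$ with $\Dproj_0 \comp \D f = f \comp \Dproj_0$; since $\Dproj_0 = \prodProj_0$ and $\Dproj_1 = \prodProj_1$ are jointly monic, this is equivalent to giving $\dcoh f = \prodProj_1 \comp \D f \in \category(X \with X, Y)$ with no constraint, i.e.\ exactly a differential combinator $\d f := \dcoh f$ in the raw sense of \Cref{def:CartesianDC}. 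So pre-differential structures of the prescribed shape correspond bijectively to bare differential combinators.

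It then remains to check that the six axioms of a coherent differential category translate, under $\d = \dcoh$ and $\Dpair{\cdot}{\cdot} = \prodPair{\cdot}{\cdot}$, into the six axioms of \Cref{def:CartesianDC}. I would invoke, in order: \Cref{prop:constructors-linear}, by which \ref{ax:Dproj-lin} is the $\D$-linearity of $\Dproj_0, \Dproj_1$ — which unfolds via \Cref{rem:linear} to CDC axiom (1) — and by which \ref{ax:Dsum-lin} is the $\D$-linearity of $\Dsum$ and of $0$; \Cref{prop:D-chain}, giving \ref{ax:D-chain} $\Leftrightarrow$ CDC (3); \Cref{prop:Dsum-lin} together with the previous item, giving \ref{ax:Dsum-lin} $\Leftrightarrow$ CDC (2) (the $\dcoh 0 = 0$ clause matching $\d 0 = 0$, and $\dcoh(f_0 + f_1) = \dcoh f_0 + \dcoh f_1$ matching $\d(f+g) = \d f + \d g$); \Cref{prop:derivative-additive-zero} and \Cref{prop:derivative-additive-sum}, by which — assuming \ref{ax:Dproj-lin} and \ref{ax:D-chain} — \ref{ax:D-add} (naturality of $\Dinj_0$ and of $\DmonadSum$) is equivalent to the two clauses of CDC (4); \Cref{prop:D-lin}, giving \ref{ax:D-lin} $\Leftrightarrow$ CDC (5), its hypothesis ``naturality of $\Dinj_0$'' being part of \ref{ax:D-add}, already available; and \Cref{prop:D-schwarz}, giving \ref{ax:D-schwarz} $\Leftrightarrow$ CDC (6). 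Finally, compatibility with the cartesian product is automatic: $\prodProj_0, \prodProj_1$ are additive (part of cartesian left additivity), $\prodSwap$ is an isomorphism by \Cref{prop:prodSwap-inverse} since every pair of morphisms sharing a domain is summable, hence the summability structure is compatible with $\with$ by \Cref{cor:summability-cartesian-compat}; and $\prodProj_0, \prodProj_1$ are $\D$-linear precisely because \ref{ax:Dproj-lin} holds. Combining all of this, a CDC is exactly a CCDC with $\D X = X \with X$, $\Dproj_i = \prodProj_i$, and conversely.

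There is no deep obstacle: the content has been isolated in \Cref{sec:comparison-sum} and \Cref{sec:equivalence-lemmas}, and the work is one of bookkeeping — applying the translation $\dcoh \leftrightarrow \d$, $\Dpair{\cdot}{\cdot} \leftrightarrow \prodPair{\cdot}{\cdot}$ uniformly, and discharging, in the correct order, the hypotheses (\ref{ax:Dproj-lin}, \ref{ax:D-chain}, and for \Cref{prop:D-lin} the naturality of $\Dinj_0$) needed to chain the equivalences of \Cref{sec:equivalence-lemmas}. The one point deserving care is that \Cref{def:CartesianDC} packages $\d\id = \prodProj_1$ into its axiom (3) and $\d 0 = 0$ into its axiom (2), so one must check that these land on the right halves of \Cref{prop:D-chain} and of \ref{ax:Dsum-lin} respectively.
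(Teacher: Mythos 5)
Your proposal is correct and follows essentially the same route as the paper: \Cref{prop:summability-total} to identify the left summability structure with the left-additive structure, the bijection $\d f = \dcoh f = \prodProj_1 \comp \D f$ between pre-differential structures and differential combinators, and then the axiom-by-axiom matching via \Cref{prop:constructors-linear,prop:D-chain,prop:Dsum-lin,prop:derivative-additive-zero,prop:derivative-additive-sum,prop:D-lin,prop:D-schwarz}, with the compatibility with the product being automatic. The bookkeeping about which hypotheses must already be discharged before chaining each equivalence is exactly the point the paper's own presentation relies on.
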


\begin{remark}
  In \cite{Blute09}, $h$ is said to be linear if
  $\d (h) = h \comp \prodProj_1$.
  Then \Cref{prop:linear-equation} ensures that this notion of
  linearity exactly corresponds through \Cref{prop:ccdc-cdc} to our
  notion of $\D$-linearity introduced in \Cref{def:linear}.
\end{remark}

\begin{remark} %
  \label{rk:tangent-categories} %
  Every cartesian differential category is also a tangent
  category~\cite{Cockett14}, and the tangent functor induced from $\d$
  is exactly the same functor as $\D_{\with}$.
  This makes sense, as coherent differentiation and tangent categories
  are very similar: they extend cartesian differential categories by
  generalizing addition in two different ways.
\end{remark}

\section{A first order coherent differential language}

We introduce a first order language associated to these models.  Note
that a development of a whole coherent differential PCF of which our
language can be roughly considered as a fragment can already be found
in \cite{Ehrhard22-pcf}, with a semantics based on~\cite{Ehrhard21}.
Our main contribution here is that CCDCs provide the tools for a
more principled and synthetic treatment of the semantics.
This tighter connection between syntax and semantics allows for the
development of new ideas, such as a more systematic treatment of
multilinearity.

\subsection{Terms}

\begin{definition}
  Le $\varTypes$ be a set of ground type symbols, ranged over by
  $\alpha, \beta, \ldots$
  For any $\alpha \in \varTypes$ and $h \in \N$, $\D^h \alpha$ is a
  ground type.  General types are inductively defined by
  \[
    A, B, C \defEq \D^h \alpha \mid A \with B\,.
  \]
\end{definition}

For any type $A$, we define the type $\D A$ inductively on $A$ by
$\D \D^h \alpha = \D^{h+1} \alpha$ and
$\D (A \with B) = \D A \with \D B$.

\begin{definition}
  Let $\multiLinVar, \multiLinVar[1], \ldots$ be function symbols.
  Each function symbol $\multiLinVar$ is uniquely assigned a
  \emph{function type} of the form $A_0, \ldots, A_n \arrow B$ where
  $A_i$ and $B$ are types.
  Then, $n+1$ is called the arity of $\multiLinVar$, denoted as
  $\arr(\multiLinVar)$.
\end{definition}

A function symbol $\multiLinVar$ of type $A_0, \ldots, A_n \arrow B$
will be interpreted in section \Cref{sec:semantics} as a 
$(n+1)$-linear morphisms
$\sem{\multiLinVar} \in \category(\sem{A_0} \with \cdots \with
\sem{A_n}, \sem{B})$ (recall
\Cref{def:multilinear}).
Note that the types $A_i$ can themselves be products and need not be
ground types.
For example, a $2$-linear map in $\category((A \with B) \with C, D)$
can by no means be seen as a $3$-linear map in
$\category(A \with B \with C, D)$.

\begin{definition} \label{def:functions} %
  Define \emph{functions} as
  \[
    \multiLin, \multiLin[1], \ldots \defEq \multiLinVar \mid
    \Dproj_i^{A} \mid \prodProj_i^{A, B} \mid \Dinj_i^{A} \mid
    \DmonadSum_n^{A}
  \]
  where $i \in \{0, 1\}$, $n \geq 0$, $\multiLinVar$ are function
  symbols and $A, B$ are types.
  Each function $\multiLin$ has a function type:
  $\Dproj_0^A, \Dproj_1^A$ have type $\D A \arrow A$,
  $\Dinj_0^A, \Dinj_1^A$ have type $A \arrow \D A$, the
  $\DmonadSum_n^A$ have type $\D^{n+1} A \arrow \D A$ and
  $\prodProj_0^{A,B}, \prodProj_1^{A, B}$ have types
  $A \with B \arrow A$ and $A \with B \arrow B$ respectively.
  Notice that projections have arity $1$ and not $2$.
  The type attached to the constructors $\Dproj_i$, $\prodProj_i$,
  $\Dinj_i$ and $\DmonadSum_n$ will always be kept implicit in what
  follows.
\end{definition}

\begin{remark}
  Taking $n=-1$ allows to write constants.
\end{remark}

\begin{definition}
  Let $\Var$ be a set of variable symbols. The set $\terms$ of terms
  is defined inductively as follows
  \[
    t, u, \ldots \defEq \prodPair{t_0}{t_1} \mid
    \multiLin^{\word}(t_0, \ldots, t_n) \mid x
  \] 
  where $x \in \Var$, $\multiLin$ are function symbols of arity $n+1$
  and $\word \in \interval{0}{n}^\ast $, the set of finite words%
  \footnote{Such a word represents a successive application of partial
    derivatives on the multilinear symbol $f$, more on this
    in~\Cref{sec:semantics}.} %
  of elements of \(\interval0n\).
\end{definition}

\begin{remark}
  Nothing prevents us from adding to this calculus non multilinear
  function symbols, assuming that the formal derivatives for the
  function symbols are also provided.
  We focus on multilinear functions though, due to the nature of the
  basic operations of PCF.
  A coherent differential PCF would contain a base type $\nat$, two
  function symbols $\predCons$ and $\succCons$ of type
  $\nat \arrow \nat$, a family of function symbols $\ifCons^A$ of type
  $\nat, A \with A \arrow A$ (conditional) and a family of function
  symbols $\letCons^A$ of type $\nat, (\nat \arrow A) \arrow A$
  (call-by-value on the type of integers).
  An analysis of the semantics of these symbols in coherent
  differentiation in the $\LL{}$ setting of~\cite{Ehrhard22-pcf} or in the
  example of~\Cref{sec:apcoh} indeed shows that $\predCons$ and
  $\succCons$ should be interpreted as linear morphisms, and that
  $\ifCons^A$ and $\letCons^A$ should be interpreted as $2$-linear
  morphisms.
  Using the fact that variables can be used in a non-linear way as
  well as the PCF fixpoint operator, it is then possible to write terms
  whose interpretation is not multilinear.
  For instance, \(f_1\) of~\Cref{sec:apcoh} is the semantics of a
  term, see~\cite{Ehrhard22a}.
\end{remark}

\begin{notation}
  For any word $\word$, we write $\wordLength{\word}$ for its length,
  and $\wordLetter{\word}{j}$ for the number of occurrences of the
  letter $j$.
  We will write $f$ for $f^{\emptyWord}$, where $\emptyWord$ is the
  empty word.
  Notice that when $\arr (\multiLin) = 0$, a word
  $\word \in \interval{0}{0}^\ast$ can be uniquely seen as an integer
  $d = |\word|$.
  We will then write $\depth{\multiLin}{d}$ for $\multiLin^{\word}$.
\end{notation}
We introduce the typing rules in \Cref{fig:typing}.
The systematic treatment of multilinear morphisms allows for a great
factorization of the rules.
We write $\multiLin: A_0, \ldots, A_n \arrow B$ if $f$ has type
$A_0, \ldots, A_n \arrow B$.
\begin{figure}
\begin{center}
  \begin{prooftree}
    \hypo{x:A \in \Gamma} \infer1[(Var)]{\Gamma \vdash x : A}
  \end{prooftree}\quad
  \begin{prooftree}
    \hypo{\Gamma \vdash t_0 : A} \hypo{\Gamma \vdash t_1 : B}
    \infer2[(Pair)]{\Gamma \vdash \prodPair{t_0}{t_1} : A \with B}
  \end{prooftree}
\end{center}
\begin{center}    
  \begin{prooftree}
    \hypo{\multiLin: A_0, \ldots, A_n \arrow B}
    \hypo{\zeta\in\interval 0n^\ast}
    \hypo{
      (\Gamma \vdash t_i: \D^{\wordLetter{\word}{i}} A_i)_{i=0}^n}
    \infer3[(App)]{\Gamma \vdash \multiLin^{\word}(t_0, \ldots, t_n) :
      \D^{|\word|} B}
  \end{prooftree}
\end{center}
\caption{Typing rules}
\label{fig:typing}
\end{figure}
Given any term $t$, one can define a term $\diffTerm{t}{x}$ by
induction on $t$.
The inductive steps are given in \Cref{fig:differential}. 

%

\begin{figure}
\begin{align*}
  \diffTerm{y}{x}
  &=\begin{cases}
    x & \text{ if $y=x$} \\
    \Dinj_0(y) & \text{ otherwise}
  \end{cases} \\
  \diffTerm{\prodPair{t_0}{t_1}}{x}
  &=\prodPair{\diffTerm{t_0}{x}}{\diffTerm{t_1}{x}}  \\
  \diffTerm{f^{\word}(t_0, \ldots, t_n)}{x}
  &=\DmonadSum_n(f^{\word n \cdots 1 0}
    (\diffTerm{t_0}{x}, \ldots, \diffTerm{t_n}{x}))
\end{align*}
\caption{Differential of a term}
\label{fig:differential}
\end{figure}

\begin{proposition} \label{prop:typing-differential}
If $\Gamma, x : A \vdash t : B$ then 
$\Gamma, x : \D A \vdash \diffTerm{t}{x} : \D B$ 
\end{proposition}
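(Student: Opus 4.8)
The plan is to proceed by structural induction on the term $t$ (equivalently, by induction on the derivation of $\Gamma, x : A \vdash t : B$), following the three clauses defining $\diffTerm{t}{x}$ in \Cref{fig:differential}. Throughout I will use two elementary facts about the type operation $\D$: that $\D(\D^k C) = \D^{k+1} C$ for every type $C$ and $k \in \N$, which is immediate since $\D^k$ denotes $k$-fold iteration of $\D$; and that $\D(A_0 \with A_1) = \D A_0 \with \D A_1$, which is the definition of $\D$ on a product type. I will also use one combinatorial observation: for any word $\word \in \interval{0}{n}^\ast$, the word $\word\, n\, (n-1) \cdots 1\, 0$ obtained by appending the letters $n, n-1, \ldots, 1, 0$ satisfies $\wordLetter{\word\, n \cdots 1\, 0}{i} = \wordLetter{\word}{i} + 1$ for every $i \in \interval{0}{n}$ and $|\word\, n \cdots 1\, 0| = |\word| + n + 1$, because each of the letters $0, \ldots, n$ occurs exactly once in the appended suffix.

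For the variable cases: if $t = x$, then $\Gamma, x : A \vdash x : B$ forces $B = A$ (as $x$ does not occur in $\Gamma$), we have $\diffTerm{x}{x} = x$, and $\Gamma, x : \D A \vdash x : \D A = \D B$ holds by (Var). If $t = y$ with $y \neq x$, then $y : B$ occurs in $\Gamma$, hence also in $\Gamma, x : \D A$, so $\Gamma, x : \D A \vdash y : B$ by (Var); since $\Dinj_0$ has function type $B \arrow \D B$ and arity $1$, rule (App) applied with the empty word gives $\Gamma, x : \D A \vdash \Dinj_0(y) : \D B$, which is $\diffTerm{y}{x}$. For the pair case $t = \prodPair{t_0}{t_1}$: inverting (Pair) gives $B = B_0 \with B_1$ with $\Gamma, x : A \vdash t_j : B_j$ for $j = 0, 1$; the induction hypothesis yields $\Gamma, x : \D A \vdash \diffTerm{t_j}{x} : \D B_j$, and (Pair) then gives $\Gamma, x : \D A \vdash \prodPair{\diffTerm{t_0}{x}}{\diffTerm{t_1}{x}} : \D B_0 \with \D B_1 = \D B$.

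The application case is where the index bookkeeping matters. Suppose $t = \multiLin^{\word}(t_0, \ldots, t_n)$ with $\multiLin : A_0, \ldots, A_n \arrow C$. Inverting (App) gives $B = \D^{|\word|} C$ together with $\Gamma, x : A \vdash t_i : \D^{\wordLetter{\word}{i}} A_i$ for each $i \in \interval{0}{n}$. The induction hypothesis yields $\Gamma, x : \D A \vdash \diffTerm{t_i}{x} : \D(\D^{\wordLetter{\word}{i}} A_i) = \D^{\wordLetter{\word}{i} + 1} A_i$. Setting $\word' = \word\, n\, (n-1) \cdots 1\, 0$, the combinatorial observation gives $\wordLetter{\word'}{i} = \wordLetter{\word}{i} + 1$, so the premises of (App) for $\multiLin^{\word'}$ are exactly met by these judgements, and (App) yields $\Gamma, x : \D A \vdash \multiLin^{\word'}(\diffTerm{t_0}{x}, \ldots, \diffTerm{t_n}{x}) : \D^{|\word'|} C$. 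Since $|\word'| = |\word| + n + 1$, this type equals $\D^{n+1}(\D^{|\word|} C)$, so applying $\DmonadSum_n$ (which has function type $\D^{n+1}(\D^{|\word|} C) \arrow \D(\D^{|\word|} C)$ and arity $1$) via (App) with the empty word produces the type $\D(\D^{|\word|} C) = \D^{|\word| + 1} C = \D B$. The resulting term is exactly $\DmonadSum_n(\multiLin^{\word'}(\diffTerm{t_0}{x}, \ldots, \diffTerm{t_n}{x})) = \diffTerm{\multiLin^{\word}(t_0, \ldots, t_n)}{x}$, as required.

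I do not expect a genuine obstacle: the argument is a routine structural induction, and the only point demanding care is the tracking of $\D$-exponents through rule (App) in the application case. The crux is to verify that the suffix $n\, (n-1) \cdots 1\, 0$ increments every one of the counts $\wordLetter{\cdot}{0}, \ldots, \wordLetter{\cdot}{n}$ by exactly one, so that the single extra $\D$ introduced on each argument by the recursive calls produces an $(n+1)$-fold $\D$ at the head of the application, which $\DmonadSum_n$ then collapses to the single $\D$ demanded by the target type $\D B$.
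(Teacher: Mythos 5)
Your proof is correct and follows essentially the same route as the paper's: structural induction on the typing derivation, with the variable and pair cases handled directly and the application case resting on the observations that $\wordLetter{\word\, n \cdots 1\, 0}{i} = \wordLetter{\word}{i} + 1$ and that $\DmonadSum_n$ collapses $\D^{n+1}(\D^{|\word|}C)$ to $\D^{|\word|+1}C$. No gaps.
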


\begin{proof}
  By induction on the typing derivation.  %
  \begin{itemize}
  \item If the last rule applied is \varRule{} then the first
    possibility is that $t = x$ and $\Gamma, x : A \vdash x : A$.  But
    then, $\diffTerm{x}{x} = x$ and
    $\Gamma, x : \D A \vdash x : \D A$. The second possibility is that
    $t = y$ with $y \neq x$ and $\Gamma \vdash y : B$. But then,
    $\diffTerm{y}{x} = \Dinj_0(y)$ and
    $\Gamma \vdash \Dinj_0(y) : \D B$. Thus,
    $\Gamma, x : \D A \vdash \Dinj_0(y) : \D B$ in both cases.
  \item If the last rule applied is \pairRule, then
    $t = \prodPair{t_0}{t_1}$, $t$ is of type $B_0 \with B_1$,
    $\Gamma, x : A \vdash t_0 : B_0$ and
    $\Gamma, x : A \vdash t_1 : B_1$. But
    $\diffTerm{t}{x} =
    \prodPair{\diffTerm{t_0}{x}}{\diffTerm{t_1}{x}}$.  By induction
    hypothesis 
    $\Gamma, x : \D A \vdash \diffTerm{t_0}{x} : \D B_0$ and
    $\Gamma, x : \D A \vdash \diffTerm{t_1}{x} : \D B_1$.  Thus, by
    applying \pairRule,
    $\Gamma, x : \D A \vdash
    \prodPair{\diffTerm{t_0}{x}}{\diffTerm{t_1}{x}} : \D B_0 \with \D
    B_1$.  But $\D B_0 \with \D B_1 = \D (B_0 \with B_1)$ so
    $\Gamma, x : \D A \vdash \diffTerm{\prodPair{t_0}{t_1}}{x} : \D
    (B_0 \with B_1)$.
  \item If the last rule applied is \appRule{} then
    $t = f^{\word}(t_0, \ldots, t_n)$, 
    $f$ has some type $A_0, \ldots, A_n \arrow B$, and
    $\Gamma, x : A \vdash t : \D^{\wordLength{\word}} B$.
    Besides, for any $i$, $\Gamma, x : A \vdash t_i : \D^{\wordLetter{\word}{i}} A_i$.
    By induction hypothesis,
    $\Gamma, x : \D A \vdash \diffTerm{t_i}{x} :
    \D^{\wordLetter{\word}{i} + 1} A_i$. But
    $\wordLetter{\word n \cdots 1 0}{i} = \wordLetter{\word}{i} + 1$
    so applying the \appRule{} rule gives a derivation for
    $\Gamma, x : \D A \vdash f^{\word n \cdots 1 0}(\diffTerm{t_0}{x},
    \ldots, \diffTerm{t_n}{x}) : \D^{\wordLength{\word}+n+1} B$.
    Applying the \appRule{} rule again for $f=\DmonadSum_n$ yields a derivation of 
    $\Gamma, x : \D A \vdash \DmonadSum_n(f^{\word n \cdots 1 0}(\diffTerm{t_0}{x},
    \ldots, \diffTerm{t_n}{x})) : \D^{\wordLength{\word}+1} B$, which concludes the proof.
\end{itemize}
\end{proof}

\subsection{Semantics}
\label{sec:semantics}

Let $\category$ be a CCDC.
For the sake of simplicity, we assume that
$\D (X \with Y) = \D X \with \D Y$ and $\prodSwap=\id$%
\footnote{This assumption is by no mean necessary but it simplifies
the notations and the results}.
Assume that we are given an object $\sem{\alpha}$ of \(\category\) for
any ground type symbol $\alpha$.
Then one can interpret any type as an object:
$\sem{\D^h \alpha} = \D^h \sem{\alpha}$ and
$\sem{A \with B} = \sem{A} \with \sem{B}$.
It follows by a straightforward induction that
$\sem{\D A} = \D \sem{A}$.
This interpretation extends as usual to contexts, setting
$\sem{x_0 : A_0, \ldots, x_n : A_n} = \sem{A_0} \with \cdots \with
\sem{A_n}$.
The semantics of the empty context is $\top$.

Assume that we are given a $(n+1)$-linear morphism
$\sem{\multiLinVar} \in \category(\sem{A_0} \with \cdots \with
\sem{A_n}, \sem{B})$ for any function symbol
$\multiLinVar : A_0, \ldots, A_n \arrow B$.
Then any function $\multiLin : A_0, \ldots, A_n \arrow B$ can be
interpreted as an $(n+1)$-linear morphism $\sem{f}$ by setting
$\sem{\Dproj_i} = \Dproj_i$, $\sem{\Dinj_i} = \Dinj_i$,
$\sem{\DmonadSum_n} = \DmonadSum^n$ 
(as defined in \Cref{def:natural-trans-iterate}) 
and $\sem{\prodProj_i} = \prodProj_i$.

\begin{remark}\label{rem:pair-differential} %
  Since $\prodSwap=\id$, we have
  $\D \prodProj_i = \D \prodProj_i \comp (\prodSwap)^{-1} = \D
  \prodProj_i \comp \Dpair{ \Dproj_0 \with \Dproj_0}{\Dproj_1 \with
    \Dproj_1} = \Dpair{\prodProj_i \comp (\Dproj_0 \with
    \Dproj_0)}{\prodProj_i \comp (\Dproj_1 \with \Dproj_1)} =
  \Dpair{\Dproj_0 \comp \prodProj_i}{\Dproj_1 \comp \prodProj_i} =
  \prodProj_i$.
  Notice also that
  $\prodPair{\D f_0}{\D f_1} = \D \prodPair{f_0}{f_1}$ by
  \Cref{prop:prodpair-derivative}
\end{remark}

\begin{theorem}
  For any term $t$ such that $\Gamma \vdash t : A$, we can define
  $\sem{t}_{\Gamma} \in \category(\sem{\Gamma}, \sem{A})$.
  %
  %
  %
\end{theorem}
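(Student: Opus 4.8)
The plan is to define $\sem{t}_\Gamma$ by induction on the typing derivation $\Gamma \vdash t : A$, following the three typing rules \varRule, \pairRule, \appRule{} of \Cref{fig:typing}, and to prove simultaneously by induction the auxiliary invariant that whenever the last variable of $\Gamma$ is the differentiation variable — more precisely, a \emph{substitution lemma} relating $\sem{\diffTerm{t}{x}}$ to $\D \sem{t}$ — so that the semantics is coherent with the syntactic differential operator of \Cref{fig:differential}. For the bare existence statement, only the first induction is needed; I would state and prove the substitution compatibility as a companion proposition immediately after, since it is the real content.

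First I would handle \varRule: if $t = x_i$ with $x_i : A$ the $i$-th variable of $\Gamma = x_0:A_0,\dots,x_n:A_n$, set $\sem{x_i}_\Gamma \defEq \prodProj_i \in \category(\sem{A_0}\with\cdots\with\sem{A_n}, \sem{A_i})$ (the appropriate composite of binary projections), which is $\D$-linear. For \pairRule, if $t = \prodPair{t_0}{t_1}$ with $\Gamma \vdash t_0 : A$, $\Gamma \vdash t_1 : B$, by induction we have $\sem{t_0}_\Gamma \in \category(\sem\Gamma, \sem A)$ and $\sem{t_1}_\Gamma \in \category(\sem\Gamma, \sem B)$, so set $\sem{\prodPair{t_0}{t_1}}_\Gamma \defEq \prodPair{\sem{t_0}_\Gamma}{\sem{t_1}_\Gamma}$, landing in $\category(\sem\Gamma, \sem A \with \sem B) = \category(\sem\Gamma, \sem{A\with B})$ by the cartesian structure. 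The subtle case is \appRule: if $t = \multiLin^{\word}(t_0,\dots,t_n)$ with $\multiLin : A_0,\dots,A_n \arrow B$ and $\Gamma \vdash t_i : \D^{\wordLetter{\word}{i}} A_i$, the induction gives $\sem{t_i}_\Gamma \in \category(\sem\Gamma, \D^{\wordLetter{\word}{i}}\sem{A_i})$. One then needs to interpret the "word-decorated" function symbol $\multiLin^\word$ as a morphism $\sem{\multiLin^\word} \in \category(\D^{\wordLetter{\word}{0}}\sem{A_0} \with \cdots \with \D^{\wordLetter{\word}{n}}\sem{A_n}, \D^{|\word|}\sem B)$, obtained by iterating partial derivatives: reading $\word$ letter by letter, each occurrence of letter $i$ applies $\D_i$ to the current morphism (using \Cref{prop:partial-preserve-multilinearity} to keep track of multilinearity and using the simplifying assumption $\D(X\with Y) = \D X \with \D Y$, $\prodSwap = \id$ to identify the domain correctly). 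Then $\sem{t}_\Gamma \defEq \sem{\multiLin^\word} \comp \prodPair{\sem{t_0}_\Gamma}{\dots}{\sem{t_n}_\Gamma}$, the $n{+}1$-ary pairing; one checks by a bookkeeping argument on $|\word|$ and $\wordLetter{\word}{i}$ that the composition typechecks, landing in $\category(\sem\Gamma, \D^{|\word|}\sem B) = \category(\sem\Gamma, \sem{\D^{|\word|}B})$.

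The main obstacle is making the interpretation of $\multiLin^\word$ precise and well-defined: one must show that applying the partial derivatives $\D_{\word_1}, \D_{\word_2}, \dots$ in the order dictated by the word produces a morphism of exactly the domain $\D^{\wordLetter{\word}{0}}\sem{A_0}\with\cdots\with\D^{\wordLetter{\word}{n}}\sem{A_n}$ and codomain $\D^{|\word|}\sem B$ required by the \appRule{} conclusion. This rests on \Cref{prop:strength-n} (the explicit form of $\strength{i}$), on $\D_i f \defEq \D f \comp \strength{i}$ raising the codomain by one $\D$ and inserting a $\D$ at slot $i$ of the domain, and on \Cref{prop:partial-preserve-multilinearity} guaranteeing that multilinearity — hence the applicability of the next $\D_j$ — is preserved at every step. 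A clean way to organize it is to note that by the generalized Schwarz rule the result is independent of the order of letters with the same... no: the letters of $\word$ are applied in a fixed order, so no commutation is needed; one simply tracks the multi-index. I would package this as a short preliminary lemma "for every multilinear $g \in \category(\bigwith_i \sem{A_i}, \sem B)$ and every word $\word$, $\D_{\word} g$ is well-typed of the stated shape and multilinear", proved by induction on $|\word|$ using \Cref{prop:partial-preserve-multilinearity} and the definition of $\D_i$, and then the theorem itself becomes the routine three-case induction sketched above. Everything else is straightforward diagram-free bookkeeping.
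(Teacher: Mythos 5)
Your proposal is correct and follows essentially the same induction as the paper: a projection for the (Var) case, cartesian pairing for the (Pair) case, and $\D_{\word_k}\cdots\D_{\word_1}\sem{\multiLin}$ composed with the tuple of the subterms' interpretations for the (App) case, with the same bookkeeping of $|\word|$ and $\wordLetter{\word}{i}$ to see that the composite typechecks. The one superfluous ingredient is the multilinearity invariant: $\D_i$ is defined on \emph{arbitrary} morphisms with a product-shaped domain, so \Cref{prop:partial-preserve-multilinearity} is not needed for $\D_{\word}\sem{\multiLin}$ to be well-typed (it only becomes essential later, when proving invariance of the semantics under reduction via \Cref{prop:DDproj-commute-partial}).
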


\begin{proof}
  We proceed by induction on the term. 
  \begin{itemize}

  \item 
  If $t = x$ then the last typing rule must be (Var) so that
  $\Gamma = \Gamma_0, x : A, \Gamma_1$.
  Define
  $\sem{x}_{\Gamma} = \prodProj_{|\Gamma_0|} \in
  \category(\sem{\Gamma_0} \with \sem{A} \with \sem{\Gamma_1},
  \sem{A})$.

  \item 
  If $t = \prodPair{t_0}{t_1}$ then the last typing rule must be
  (Pair), so $t$ is of type $A \with B$, $\Gamma \vdash t_0 : A$ and
  $\Gamma \vdash t_1 : B$.
  By induction, one can define
  $\sem{t_0}_{\Gamma} \in \category(\sem{\Gamma}, \sem{A})$ and
  $\sem{t_1}_{\Gamma} \in \category(\sem{\Gamma}, \sem{B})$.
  Then we define
  $\sem{\prodPair{t_0}{t_1}}_{\Gamma} =
  \prodPair{\sem{t_0}_{\Gamma}}{\sem{t_1}_{\Gamma}} \in
  \category(\sem{\Gamma}, \sem{A \with B})$.

  \item 
  If $t = f^{\word}(t_0, \ldots, t_n)$ with
  $f : A_0, \ldots, A_n \arrow B$ then the last typing rule must be
  (App).
  That is, $t$ must be of type $D^{\wordLength{\word}} B$ for some
  type $B$ and for \(i=0,\dots,n\) we have a derivation of
  $\Gamma \vdash t_i : \D^{\wordLetter{\word}{i}}A_i$.
  By inductive hypothesis, we can define
  $\sem{t_i}_{\Gamma} \in \category(\sem{\Gamma},
  \sem{D^{\wordLetter{\word}{i}} A_i})$.
  But
  $\sem{D^{\wordLetter{\word}{i}} A_i} = \D^{\wordLetter{\word}{i}}
  \sem{A_i}$ and
  $\D_{\word_k} \ldots \D_{\word_1} \sem{f} \in
  \category(\D^{\wordLetter{\word}{0}} \sem{A_0} \with \cdots \with
  \D^{\wordLetter{\word}{n}} \sem{A_n}, \D^{\wordLength{\word}}
  \sem{B})$.
  Thus, we can set
  $\sem{f^{\word_1 \cdots \word_k}(t_0, \ldots, t_n)}_{\Gamma} =
  \D_{\word_k} \ldots \D_{\word_1} \sem{f} \comp
  \prodPair{\sem{t_0}_{\Gamma}, \ldots} {\sem{t_n}_{\Gamma}}$.
  \end{itemize}
\end{proof}

\begin{notation}
  We use $\sem{x}_{\Gamma} = \prodProj_x$ for the projection on
  $\sem{\Gamma}$ to the coordinate where $x$ appears in \(\Gamma\).
\end{notation}

\begin{remark}
  In particular,
  $\sem{\depth{\Dproj_i}{d}(t)} = \D^d \Dproj_i \comp \sem{t}$,
  $\sem{\depth{\Dinj_i}{d}(t)} = \D^d \Dinj_i \comp \sem{t}$,
  $\sem{\depth{\DmonadSum_n}{d}(t)} = \D^d{\DmonadSum^n} \comp
  \sem{t}$.
  More importantly,
  $\sem{\depth{\prodProj_i}{d}(t)} = \D^d \prodProj_i \comp \sem{t} =
  \prodProj_i \comp \sem{t}$ because of our assumption that
  $\prodSwap$ is the identity.
\end{remark}

\begin{notation}
  For any word $\word = \word_1 \cdots \word_k$ in 
  $\interval{0}{n}^k$, define
  $\D_{\word} \defEq \D_{\word_k} \ldots \D_{\word_1}$. Then for any
  $f \in \category(X_0 \with \ldots \with X_n, Y)$,
  $\D_{\word} f \in \category(\D^{\wordLetter{\word}{0}} X_0 \with
  \cdots \with \D^{\wordLetter{\word}{n}} X_n, \D^{\wordLength{\word}}
  Y)$.
  Note that $\D_{\word \cdot \word[1]} = \D_{\word[1]} \D_{\word}$.
  Then, \Cref{prop:DDproj-commute-partial} can be seen as the property
  that for any $f$ $(n+1)$-linear, for any word $\word[1]$ of length
  $d$,
  $\D^d \Dproj_i \comp \D_{\word[1]} \D_j f = \D_{\word[1]} f \comp
  \singleApp{j}{\id}{\D^{\wordLetter{\word[1]}{j}} \Dproj_i}$ %
\end{notation}

The main result of this section on the calculus consists in showing
that the semantics of this syntactical derivative operation
corresponds to the derivative in the model.

\begin{theorem}
  If $\Gamma, x : A \vdash t : B$ then
  $\sem{\diffTerm{t}{x}}_{\Gamma, x : \D A} = \D_1 \sem{t}_{\Gamma, x
    : A}$ where $\sem{t}_{\Gamma, x : A}$ is seen as a morphisms of
  $\category(\sem{\Gamma} \with \sem{A}, \sem{B})$.
\end{theorem}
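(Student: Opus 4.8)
The plan is to prove the statement by structural induction on $t$ (equivalently, by induction on the derivation of $\Gamma, x : A \vdash t : B$), with one case per clause of \Cref{fig:differential}. Before starting the cases I would record the consequences of the standing simplifications $\D(X \with Y) = \D X \with \D Y$ and $\prodSwap = \id$: they force $\prodSwap[n] = \id$ for every $n$ (by the inductive definition of $\prodSwap[n]$), hence $\strengthR = \Dinj_0 \with \id_{\D X_1}$ and $\strengthL = \id_{\D X_0} \with \Dinj_0$ directly from their definitions; moreover $\D \prodProj_i = \prodProj_i$ and $\D \prodPair{s_0}{s_1} = \prodPair{\D s_0}{\D s_1}$ by \Cref{rem:pair-differential} (the latter extending to $n$-ary tupling by an obvious induction), and the generalized Leibniz identity of \Cref{prop:leibniz-n} specialises, at $\prodSwap[n] = \id$, to $\DmonadSum^n \comp \D_{\alpha(n)} \cdots \D_{\alpha(0)} g = \D g$ for every permutation $\alpha$ of $\llbracket 0, n \rrbracket$. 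I will also use that $\D$ is a functor (\ref{ax:D-chain}) and that $\Dinj_0$ and $\DmonadSum$ are natural (\ref{ax:D-add}). Throughout I abbreviate $s_i := \sem{t_i}_{\Gamma, x : A}$ and write $\D_1 h := \D h \comp \strengthR$ for the partial derivative in the last argument.

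The variable and pair cases are quick. If $t = x$ then $\sem{x}_{\Gamma, x : A} = \prodProj_1$ and $\diffTerm{x}{x} = x$; since $\D_1 \prodProj_1 = \D \prodProj_1 \comp \strengthR = \prodProj_1 \comp (\Dinj_0 \with \id) = \prodProj_1 = \sem{x}_{\Gamma, x : \D A}$, we are done. If $t = y$ with $y \neq x$, then $\sem{y}_{\Gamma, x : A}$ factors as $p_y \comp \prodProj_0$ through the projection $\prodProj_0$ onto $\sem{\Gamma}$, and $\diffTerm{y}{x} = \Dinj_0(y)$; then $\D_1(p_y \comp \prodProj_0) = \D p_y \comp \prodProj_0 \comp (\Dinj_0 \with \id) = \D p_y \comp \Dinj_0 \comp \prodProj_0 = \Dinj_0 \comp p_y \comp \prodProj_0$ by naturality of $\Dinj_0$, which is $\sem{\Dinj_0(y)}_{\Gamma, x : \D A}$. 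For $t = \prodPair{t_0}{t_1}$, using $\D \prodPair{s_0}{s_1} = \prodPair{\D s_0}{\D s_1}$ and the induction hypothesis, $\D_1 \prodPair{s_0}{s_1} = \prodPair{\D s_0}{\D s_1} \comp \strengthR = \prodPair{\D_1 s_0}{\D_1 s_1} = \prodPair{\sem{\diffTerm{t_0}{x}}_{\Gamma, x : \D A}}{\sem{\diffTerm{t_1}{x}}_{\Gamma, x : \D A}} = \sem{\diffTerm{\prodPair{t_0}{t_1}}{x}}_{\Gamma, x : \D A}$.

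The application case $t = f^{\word}(t_0, \dots, t_n)$ is the heart of the argument. Writing $g := \D_{\word} \sem{f}$, the semantics gives $\sem{t}_{\Gamma, x : A} = g \comp \langle s_0, \dots, s_n \rangle$. The key bookkeeping step is to observe that the word $\word n \cdots 1 0$ occurring in \Cref{fig:differential} is $\word$ concatenated with $n, n-1, \dots, 0$, so that (using $\D_{\word \cdot \word[1]} = \D_{\word[1]} \D_{\word}$) one has $\D_{\word n \cdots 1 0} \sem{f} = \D_0 \D_1 \cdots \D_n g$. Feeding in the induction hypothesis for the $t_i$ (i.e.\ $\sem{\diffTerm{t_i}{x}}_{\Gamma, x : \D A} = \D_1 s_i$) and $\sem{\DmonadSum_n} = \DmonadSum^n$, we get $\sem{\diffTerm{t}{x}}_{\Gamma, x : \D A} = \DmonadSum^n \comp (\D_0 \D_1 \cdots \D_n g) \comp \langle \D_1 s_0, \dots, \D_1 s_n \rangle$. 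Now I would apply \Cref{prop:leibniz-n} at the permutation $\alpha(i) = n - i$ to replace $\DmonadSum^n \comp (\D_0 \D_1 \cdots \D_n g)$ by $\D g$, and rewrite $\langle \D_1 s_0, \dots, \D_1 s_n \rangle = \langle \D s_0, \dots, \D s_n \rangle \comp \strengthR = \D \langle s_0, \dots, s_n \rangle \comp \strengthR$; the chain rule then gives $\D g \comp \D \langle s_0, \dots, s_n \rangle \comp \strengthR = \D(g \comp \langle s_0, \dots, s_n \rangle) \comp \strengthR = \D_1 \sem{t}_{\Gamma, x : A}$, as wanted. The degenerate arities need a line of care: for $n = 0$ the identity reduces to $\DmonadSum^0 \comp \D_0 g = \D g$ (trivial, since $\DmonadSum^0 = \id$ and $\strength{0} = \id$ force $\D_0 g = \D g$), and for $n = -1$ (a constant) one uses instead that $\smfProdOne \colon \D \top \to \top$ is an isomorphism with inverse $\Dinj_0$, together again with naturality of $\Dinj_0$.

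I expect the only real obstacle to be organizational: pinning down the exact correspondence between the word $\word n \cdots 1 0$ and the iterated partial-derivative operator $\D_0 \D_1 \cdots \D_n(-)$ — in particular the order in which the $\D_i$ act — and then choosing the permutation fed to the generalized Leibniz rule so that it matches that order. No individual computation is difficult, but the indices and the bookkeeping of which $\prodSwap = \id$-consequence ($\strengthR = \Dinj_0 \with \id$, $\D \prodProj_i = \prodProj_i$, $\prodSwap[n] = \id$) is used where must be kept straight; once that is done, each case collapses to a short chain of equalities.
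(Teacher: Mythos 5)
Your proposal is correct and follows essentially the same route as the paper's proof: induction on $t$, with the base cases handled via $\D\prodProj_i=\prodProj_i$ and naturality of $\Dinj_0$, and the application case resolved by identifying $\D_{\word n\cdots 10}\sem{f}$ with $\D_0\cdots\D_n(\D_\word\sem f)$, applying the generalized Leibniz rule (\Cref{prop:leibniz-n}) at $\prodSwap[n]=\id$, and concluding with the chain rule and the interchange $\prodPair{\D s_0,\ldots}{\D s_n}\comp\strengthR=\D\prodPair{s_0,\ldots}{s_n}\comp\strengthR$. The only cosmetic difference is that you simplify $\strengthR$ to $\Dinj_0\with\id$ up front where the paper computes with the explicit witness $\Dpair{\id\with\Dproj_0}{0\with\Dproj_1}$; both are equivalent under the standing assumption $\prodSwap=\id$.
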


\begin{proof}
  By induction on $t$.
  \begin{itemize}
  \item If $t = x$ then
    $\sem{t}_{\Gamma, x : A} = \prodProj_1 \in \category(\sem{\Gamma}
    \with \sem{A}, \sem{A})$.
    Then
    $\D_1 \prodProj_1 = \D \prodProj_1 \comp \strengthR = \D
    \prodProj_1 \comp \Dpair{\id \with \Dproj_0}{0 \with \Dproj_1} =
    \Dpair{\prodProj_1 \comp (\id \with \Dproj_0)}{\prodProj_1 \comp
      (0 \with \Dproj_1)} = \Dpair{\Dproj_0 \comp
      \prodProj_1}{\Dproj_1 \comp \prodProj_1} = \prodProj_1$ using
    \Cref{prop:strength} and the linearity of $\prodProj_1$.
  \item If $t = y \neq x$ then
    $\sem{t}_{\Gamma, x : A} = \sem{y}_{\Gamma} \comp \prodProj_0 =
    \prodProj_y \comp \prodProj_0 \in \category(\sem{\Gamma} \with
    \sem{A}, \sem{B})$.
    Then
    $\D_1 (\prodProj_y \comp \prodProj_0) = \D \prodProj_y \comp \D
    \prodProj_0 \comp \strengthR = \D \prodProj_y \comp \D \prodProj_0
    \comp \Dpair{\id \with \Dproj_0}{0 \with \Dproj_1} = \D
    \prodProj_y \comp \Dpair{\prodProj_0 \comp (\id \with \Dproj_0)}
    {\prodProj_0 \comp (0 \with \Dproj_1)} =\D \prodProj_y \comp
    \Dpair{\prodProj_0}{0} = \Dpair{\prodProj_y \comp \prodProj_0}{0}
    = \sem{\Dinj_0(y)} = \sem{\diffTerm{y}{x}}$.
  \item If $t = \prodPair{t_0}{t_1}$, then
    $\sem{\diffTerm{t}{x}} =
    \sem{\prodPair{\diffTerm{t_0}{x}}{\diffTerm{t_1}{x}}} =
    \prodPair{\sem{\diffTerm{t_0}{x}}}{\sem{\diffTerm{t_1}{x}}}
    $. By inductive hypothesis,
    $\sem{\diffTerm{t}{x}} = \prodPair{\D_1 \sem{t_0}}{\D_1 \sem{t_1}}$.
    But
    $\prodPair{\D_1 \sem{t_0}}{\D_1 \sem{t_1}} = \prodPair{\D
      \sem{t_0} \comp \strengthR}{\D \sem{t_1} \comp \strengthR} =
    \prodPair{\D \sem{t_0}}{\D \sem{t_1}} \comp \strengthR$.
    By \Cref{rem:pair-differential}, this is equal to
    $\D \prodPair{\sem{t_0}}{\sem{t_1}} \comp \strengthR = \D_1
    \prodPair{\sem{t_0}}{\sem{t_1}} = \D_1 \sem{t}$.
  \item If $t = f^{\word}(t_0, \ldots, t_n)$ then by definition
    $\diffTerm{t}{x} = \DmonadSum_n(f^{\word n \cdots 1 0}(
    \diffTerm{t_0}{x}, \ldots, \diffTerm{t_n}{x}))$.
    Thus,
    $\sem{\diffTerm{t}{x}} = \DmonadSum^n \comp \D_{n \cdots 1 0}
    \D_{\word} f \comp \prodPair{\sem{\diffTerm{t_0}{x}},
      \ldots}{\sem{\diffTerm{t_n}{x}}} = \DmonadSum^n \comp \D_{n
      \cdots 1 0} \D_{\word} f \comp \prodPair{\D_1 \sem{t_0},
      \ldots}{\D_1 \sem{t_n}}$ by inductive hypothesis.
    But then, the Leibniz rule (\Cref{prop:leibniz-n}) states that
    $\DmonadSum^n \comp \D_{n \cdots 1 0} \D_{\word} f = \D \D_{\word}
    f$.
    Thus,
    $\sem{\diffTerm{t}{x}} = \D \D_{\word} f \comp \prodPair{\D
      \sem{t_0} \comp \strengthR, \ldots}{\D \sem{t_n} \comp
      \strengthR} = \D \D_{\word} f \comp \prodPair{\D \sem{t_0},
      \ldots}{\D \sem{t_n}} \comp \strengthR = 
      \D (\D_{\word} f \comp \prodPair{\sem{t_0},
      \ldots}{\sem{t_n}}) \comp \strengthR =
      \D \sem{t} \comp \strengthR = \D_1 \sem{t}$.
\end{itemize}
\end{proof}

\subsection{Reduction}

We introduce in this section a set of reduction rules that deals with
the differential content of the terms.
The set of rules is more compact than the one given
in~\cite{Ehrhard22-pcf}, but covers all of the rules concerning the
fragment we are looking at.

\begin{remark} 
  We could have added a construct $t[u/x]$ for explicit substitutions,
  with the typing rule
  \begin{center}
    \begin{prooftree}
      \hypo{\Gamma, x : A \vdash t : B} \hypo{\Gamma \vdash u : A}
      \infer2{\Gamma \vdash t[u/x] : B}
    \end{prooftree}(Cut)
  \end{center}
  %
  %
  as well as reduction rules that performs the substitution steps (for
  example, $x[u/x] \red u$).
  We decided not to do so because, in a higher order
  \(\lambda\)-calculus setting, such explicit substitutions are not
  necessary.
\end{remark}

The main difference with the differential lambda-calculus
of~\cite{Ehrhard03} is the absence of sum, because we do not want a
non deterministic typing rules such as
\begin{center}
\begin{prooftree}
	\hypo{\Gamma \vdash t : A}
	\hypo{\Gamma \vdash u : A}
	\infer2{\Gamma \vdash t + u : A}
\end{prooftree}
\end{center}
But the reduction of a $\Dproj_1$ against a $\DmonadSum$ will
introduce sums.
Handling sum without the typing rule above is tricky, because of
subject reduction.
There will be no guarantee indeed that if $\Gamma \vdash t + u : A$
and $t \red t'$ then $\Gamma \vdash t' + u : A$.
For this reason, we chose a conservative approach, by keeping sums as
a formal multiset on top of the terms.

\begin{definition}
  A term multiset is a finite multiset of term.
\end{definition}
See \Cref{sec:apcoh} for the notations we use on multisets.
We define a reduction $\red$ from terms to term multisets.
The reduction rules are given in \Cref{fig:reduction-diff}.
Then we define $\redSym$ as the ``reflexive'' closure of $\red$.
That is, $t \redSym \multisetVar$ if $t \red \multisetVar$ or if
$\multisetVar = [t]$.
It allows to lifts $\red$ to a reduction from a term multiset to a
term multiset in a monadic fashion: if $t_1 \red L_1$ and for all
$i \neq 1$, $t_i \redSym L_i$, then
\[
  [t_1, \ldots, t_n] \redMultiset \sum_{i=1}^{n} L_i
\]
where $\sum$ is the multiset union, that is, the pointwise sum of the
functions $L_i : \terms \arrow \N$.


\begin{figure}[h]
\begin{align*}
\depth{\prodProj_i}{d}(\prodPair{t_0}{t_1}) \red & [t_i] \\
\depth{\Dproj_i}{d} (f^{\word j \word[1]}(t_0, \ldots, t_n)) \red & 
[f^{\word \word[1]}(t_0, \ldots, \depth{\Dproj_i}{\wordLetter{\word[1]}{j}}(t_j), \ldots, t_n)] 
 \text{\quad where $\wordLength{\word[1]} = d$} \\
\depth{\Dproj_i}{d}(\depth{\Dinj_i}{d}(t)) \red & [t] \\
\depth{\Dproj_i}{d}(\depth{\Dinj_{1-i}}{d}(t)) \red & [\,] \\
\depth{\Dproj_0}{d}(\depth{\DmonadSum_n}{d}(t)) \red & 
[(\depth{\Dproj_0}{d})^{n+1} (t) ] \\
\depth{\Dproj_1}{d}(\depth{\DmonadSum_n}{d}(t)) \red & 
\sum_{k=0}^{n} [ (\depth{\Dproj_0}{d})^k \depth{\Dproj_1}{d} (\depth{\Dproj_0}{d})^{n-k} (t) ]\,.
\end{align*}
Here, $(\depth{\Dproj_i}{d})^n$ is a notation for $n$ successive applications of 
$\depth{\Dproj_i}{d}$.
\caption{Reduction rules}
\label{fig:reduction-diff}
\end{figure}

\begin{definition} A term multiset $[t_1, \ldots, t_n ]$ of type $A$ in context $\Gamma$
is $\category$-summable if
$\sem{t_1}_{\Gamma}, \ldots, \sem{t_n}_{\Gamma}$ are summable 
(in the sense of \Cref{prop:arbitrary-sum}).
Then, we define 
$\sem{[t_1, \ldots, t_n ]}_{\Gamma} = \sem{t_1}_{\Gamma} + \cdots + \sem{t_n}_{\Gamma}$.
Note that $[\,]$ is always $\category$-summable, and $\sem{[\,]} = 0$.
\end{definition}

The main point of coherent differentiation is that the reduction
$\red$ will always introduce term multisets that are
$\category$-summable, for any model $\category$.

\begin{theorem}[Invariance of semantics under reduction]
\label{prop:semantic-invariance}
For any $\Gamma \vdash t : A$, if $t \red L$ then $L$ is $\category$-summable and 
$\sem{L}_{\Gamma} = \sem{t}_{\Gamma}$.
\end{theorem}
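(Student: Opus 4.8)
The plan is to proceed by induction on the derivation of the reduction $t \red L$, i.e.\ by case analysis on which of the seven rules of \Cref{fig:reduction-diff} is applied at the root. In each case the term $t$ has a specific shape, so its semantics $\sem{t}_\Gamma$ is a composite of the structural morphisms $\Dproj_i, \Dinj_i, \DmonadSum^n, \prodProj_i$ (and their iterated differentials $\D^d(\cdot)$) with the semantics of subterms; the task is then to verify the corresponding categorical identity and check $\category$-summability of the resulting multiset. Since the reduction rules only ever fire at the head of the term and the contexts are irrelevant, no genuine induction on term structure is needed beyond this root case analysis: the subterms are carried along untouched inside $\prodPair{\cdot}{\cdot}$ or an application of a function symbol, and left-compatibility of composition (\Cref{prop:sum-left-compatible}) lets us precompose everything by $\prodPair{\sem{t_0}_\Gamma,\dots}{\sem{t_n}_\Gamma}$ at the end.

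First I would dispatch the easy rules. The rule $\depth{\prodProj_i}{d}(\prodPair{t_0}{t_1}) \red [t_i]$: here $\sem{\depth{\prodProj_i}{d}(\prodPair{t_0}{t_1})}_\Gamma = \D^d\prodProj_i \comp \prodPair{\sem{t_0}}{\sem{t_1}} = \prodProj_i \comp \prodPair{\sem{t_0}}{\sem{t_1}} = \sem{t_i}$, using the standing assumption $\prodSwap = \id$ (hence $\D^d\prodProj_i = \prodProj_i$, see the \Cref{rem:pair-differential}), and $[t_i]$ is trivially summable. The rule $\depth{\Dproj_i}{d}(\depth{\Dinj_i}{d}(t)) \red [t]$ and its companion $\depth{\Dproj_i}{d}(\depth{\Dinj_{1-i}}{d}(t)) \red [\,]$ reduce, after stripping $\D^d$, to $\Dproj_i \comp \Dinj_i = \id$ and $\Dproj_i \comp \Dinj_{1-i} = 0$, which hold by \Cref{def:injections} and the definition of $\Dinj_i$; again summability is immediate ($[t]$ a singleton, $[\,]$ always summable with $\sem{[\,]} = 0$). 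The commutation rule $\depth{\Dproj_i}{d}(f^{\word j\word[1]}(\dots)) \red [f^{\word\word[1]}(\dots,\depth{\Dproj_i}{\wordLetter{\word[1]}{j}}(t_j),\dots)]$ is precisely the content of \Cref{prop:DDproj-commute-partial} (and \Cref{cor:Dproj-commute-partial}): it says $\D^{|\word[1]|}\Dproj_i \comp \D_{\word[1]}\D_j(\sem f) = \D_{\word[1]}(\sem f) \comp \singleApp{j}{\id}{\D^{\wordLetter{\word[1]}{j}}\Dproj_i}$, valid because $\sem f$ is $(n+1)$-linear; I would just match indices carefully (the $\D_\zeta$ notation reverses word order, so this needs attention) and note summability is automatic.

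The two interesting rules are the ones involving $\DmonadSum_n$, and these are where the real point of coherent differentiation appears. For $\depth{\Dproj_0}{d}(\depth{\DmonadSum_n}{d}(t)) \red [(\depth{\Dproj_0}{d})^{n+1}(t)]$, strip $\D^d$ and use $\Dproj_0 \comp \DmonadSum^n = \Dproj_0^{n+1}$, which follows from the formula $\DmonadSum^k = \Dpair{\Dproj_0^{k+1}}{\sum_j \Dproj_0^j \comp \Dproj_1 \comp \Dproj_0^{k-j}}$ recorded after \Cref{def:natural-trans-iterate} together with $\Dproj_0 \comp \Dpair{a}{b} = a$; summability is again trivial. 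The last rule, $\depth{\Dproj_1}{d}(\depth{\DmonadSum_n}{d}(t)) \red \sum_{k=0}^n [(\depth{\Dproj_0}{d})^k\depth{\Dproj_1}{d}(\depth{\Dproj_0}{d})^{n-k}(t)]$, is the crux: after stripping $\D^d$ it asserts $\Dproj_1 \comp \DmonadSum^n = \sum_{k=0}^n \Dproj_0^k \comp \Dproj_1 \comp \Dproj_0^{n-k}$, which is exactly the right-component formula for $\DmonadSum^n$ cited above — but here one must also argue that the family $(\Dproj_0^k \comp \Dproj_1 \comp \Dproj_0^{n-k})_{k}$ is genuinely summable in $\category$, so that precomposing with $\sem t$ yields a $\category$-summable multiset with the asserted sum. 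This summability is not a formality: it is precisely the well-definedness that makes $\DmonadSum^n$ exist in the first place, so I would extract it from the fact that $\DmonadSum^n = \Dpair{\dots}{\dots}$ is a legitimate morphism (its right component being by definition the sum $\sum_k \Dproj_0^k\Dproj_1\Dproj_0^{n-k}$, which is therefore a well-defined, hence summable, expression), then invoke \Cref{prop:sum-left-compatible} and \Cref{prop:arbitrary-sum} to transport summability along $-\comp\sem t$. This step — tracking that the summability witnessing the existence of $\DmonadSum^n$ is exactly the summability the reduction rule needs — is the main obstacle, though conceptually it is the heart of why the calculus is designed this way; everything else is bookkeeping with the $\D^d(\cdot) = \D^d(\cdot)$ identities and the $\prodSwap = \id$ simplification.
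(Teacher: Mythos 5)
Your overall strategy---a case analysis on the reduction rules of \Cref{fig:reduction-diff}, using \Cref{prop:DDproj-commute-partial} together with \Cref{prop:partial-preserve-multilinearity} for the commutation rule, the explicit pair formula for $\DmonadSum^n$ for the two $\DmonadSum$ rules, and functoriality plus $\prodSwap=\id$ for the remaining cases---is exactly the paper's proof, and all cases but the last are handled correctly. There is, however, one concretely identifiable gap in the crucial final case, $\depth{\Dproj_1}{d}(\depth{\DmonadSum_n}{d}(t))$.

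You propose to obtain the required summability by observing that the family $(\Dproj_0^k \comp \Dproj_1 \comp \Dproj_0^{n-k})_{k}$ is summable (because $\DmonadSum^n$ exists as a pair) and then ``transporting along $-\comp\sem t$'' via \Cref{prop:sum-left-compatible} and \Cref{prop:arbitrary-sum}. But the family whose summability the theorem actually needs is $\bigl((\D^d\Dproj_0)^k \comp \D^d\Dproj_1\comp(\D^d\Dproj_0)^{n-k}\comp\sem t\bigr)_k$: each summand carries an outer $\D^d$, and $\D^d(-)$ is not precomposition, so left compatibility does not reach it. ``Stripping $\D^d$'' is legitimate in the $\Dinj_i$ and $\Dproj_0$--$\DmonadSum_n$ cases because there the whole computation reduces to functoriality, $\D^d(g\comp h)=\D^d g\comp\D^d h$; here you must in addition commute $\D^d$ past a \emph{partial sum}, i.e.\ show that $\D^d\bigl(\sum_k g_k\bigr)=\sum_k \D^d g_k$ with the summability of the left-hand family implying that of the right-hand one. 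That is precisely \Cref{prop:D-sum-com} (equivalently \Cref{prop:Dsum-lin}), a consequence of \ref{ax:Dsum-lin} together with \ref{ax:Dproj-lin} and \ref{ax:D-chain}---and it is the one place in this proof where \ref{ax:Dsum-lin} is genuinely used, so omitting it loses an essential ingredient. For $d=0$ your argument is complete as written; for $d>0$ you need this extra invocation, after which \Cref{prop:sum-left-compatible} and \ref{ax:D-chain} finish the case exactly as you describe.
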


\begin{proof}
Let us consider every application of the rule $\red$. Note that when a term multiset 
has one element, it is always $\category$-summable and
$\sem{[t]} = \sem{t}$.
\begin{align*}
\sem{\depth{\prodProj_i}{d}(\prodPair{t_0}{t_1})}
&= \D^d \prodProj_i \comp \prodPair{\sem{t_0}}{\sem{t_1}} \\
&= \prodProj_i \comp \prodPair{\sem{t_0}}{\sem{t_1}} \\
&= \sem{t_i}\,.
\end{align*}

The rule below is the one where most of the differential content appears. 
Recall that $\sem{f}$ is assumed to be multilinear, for any function $f$. 
It implies that $\D_{\word} \sem{f}$ is also multilinear by 
\Cref{prop:partial-preserve-multilinearity}, so it is possible to 
apply \Cref{prop:DDproj-commute-partial} on it.
\begin{align*}
& \sem{\depth{\Dproj_i}{d} (f^{\word j \word[1]}(t_0, \ldots, t_n))} \\
&\quad= \D^d \Dproj_i \comp \D_{\word[1]} \D_j \D_{\word} \sem{f} \comp 
\prodPair{\sem{t_0}, \ldots}{\sem{t_n}} \\
&\quad=  \D_{\word[1]} \D_{\word} \sem{f} \comp 
\singleApp{j}{\id}{\D^{\wordLetter{\word[1]}{j}} \Dproj_i} \comp 
\prodPair{\sem{t_0}, \ldots}{\sem{t_n}} \quad 
\text{by \Cref{prop:DDproj-commute-partial}} \\
&\quad= \D_{\word \word[1]} \sem{f} \comp 
\prodPair{\sem{t_0}, \ldots, \D^{\wordLetter{\word[1]}{j}} \Dproj_i \comp \sem{t_j}, \ldots}{\sem{t_n}} \\
&\quad= \sem{f^{\word \word[1]}
(t_0, \ldots, \depth{\Dproj_i}{\wordLetter{\word[1]}{j}}(t_j), \ldots, t_n)}\,.
\end{align*}
The three next rules are rather standard and are consequence of the definition
of $\Dproj_i$, $\Dinj_j$ and $\DmonadSum^n$.
\begin{align*}
\sem{\depth{\Dproj_i}{d}(\depth{\Dinj_i}{d}(t))} 
&= \D^d \Dproj_i \comp \D^d \Dinj_i \comp \sem{t} \\
&= \D^d(\Dproj_i \comp \Dinj_i) \comp \sem{t} \text{\quad by \ref{ax:D-chain}} \\
&= \D^d \id \comp \sem{t} = \sem{t}  \text{\quad by \ref{ax:D-chain}}
\end{align*}
\begin{align*}
\sem{\depth{\Dproj_i}{d}(\depth{\Dinj_{1-i}}{d}(t))} 
&= \D^d \Dproj_i \comp \D^d \Dinj_{1-i} \comp \sem{t} \\
&= \D^d(\Dproj_i \comp \Dinj_{1-i}) \comp \sem{t} \text{\quad by \ref{ax:D-chain}} \\
&= \D^d 0 \comp \sem{t} = 0 \comp \sem{t} \text{\quad by \ref{ax:Dsum-lin}} \\
&= 0 = \sem{[\,]} 
\end{align*}
\begin{align*}
\sem{\depth{\Dproj_0}{d}(\depth{\DmonadSum_n}{d}(t))} 
&= \D^d \Dproj_0 \comp \D^d \DmonadSum^n \comp \sem{t} \\
&= \D^d(\Dproj_0 \comp \DmonadSum^n) \comp \sem{t} \text{\quad by \ref{ax:D-chain}} \\
&= \D^d(\Dproj_0^{n+1}) \comp \sem{t} \\
&= (\D^d \Dproj_0)^{n+1} \comp \sem{t} \text{\quad by \ref{ax:D-chain}} \\
&= \sem{(\depth{\Dproj_0}{d})^{n+1}(t)}\,.
\end{align*}
The last rule is where finite multisets of size greater than $1$ are
introduced.
Most lines in the following sequence of equations should be understood
as follows: ``the sum above is well defined, so the sum below is well
defined and they are equal''.
\begingroup \allowdisplaybreaks
\begin{align*}
\sem{\depth{\Dproj_1}{d}(\depth{\DmonadSum_n}{d}(t))} 
&= \D^d \Dproj_1 \comp \D^d \DmonadSum^n \comp \sem{t} \\
&= \D^d(\Dproj_1 \comp \DmonadSum^n) \comp \sem{t} \quad \text{by \ref{ax:D-chain}} \\
&= \left( \D^d(\sum_{k=0}^n \Dproj_0^k \comp \Dproj_1 \comp \Dproj_0^{n-k}) \right) 
\comp \sem{t} \\
&= \left( \sum_{k=0}^n \D^d (\Dproj_0^k \comp \Dproj_1 \comp \Dproj_0^{n-k}) \right) \comp \sem{t}
 \quad \text{by \ref{ax:Dsum-lin} and \Cref{prop:D-sum-com}} \\
&= \sum_{k=0}^n \D^d (\Dproj_0^k \comp \Dproj_1 \comp \Dproj_0^{n-k}) \comp \sem{t} 
\quad \text{by \Cref{prop:sum-left-compatible}} \\
&= \sum_{k=0}^n (\D^d \Dproj_0)^k \comp \D^d \Dproj_1 \comp (\D^d \Dproj_0)^{n-k} 
\comp \sem{t} \quad \text{by \ref{ax:D-chain}} \\
&= \sum_{k=0}^n \sem{(\depth{\Dproj_0}{d})^k \depth{\Dproj_1}{d} 
(\depth{\Dproj_0}{d})^{n-k}(t)]} \,.
\end{align*}
\endgroup
Thus, 
$\sum_{k=0}^n [(\depth{\Dproj_0}{d})^k \depth{\Dproj_1}{d} 
(\depth{\Dproj_0}{d})^{n-k}(t)]$ is $\category$-summable of semantics
$\sem{\depth{\Dproj_1}{d}(\depth{\DmonadSum_n}{d}(t))}$\,.
\end{proof}

\begin{cor} \label{prop:semantic-invariance-lift}
For any term multiset $\Gamma \vdash L : A$ that is $\category$-summable,
if $L \redMultiset L'$ then $L'$ is $\category$-summable and $\sem{L'}_{\Gamma} = \sem{L}_{\Gamma}$.
\end{cor}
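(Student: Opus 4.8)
The plan is to derive this corollary from \Cref{prop:semantic-invariance} together with the combinatorics of arbitrary partial sums recorded in \Cref{prop:arbitrary-sum}. Write $L = [t_1, \ldots, t_n]$, and recall that by the definition of $\redMultiset$ there is an index $i_0$ (which we may take to be $1$) with $t_{i_0} \red L_{i_0}$, while $t_i \redSym L_i$ for $i \neq i_0$, and $L' = \sum_{i=1}^n L_i$, the multiset union. First I would observe that each $L_i$ is itself $\category$-summable with $\sem{L_i}_\Gamma = \sem{t_i}_\Gamma$: for $i = i_0$ this is exactly \Cref{prop:semantic-invariance}; for $i \neq i_0$ it is either \Cref{prop:semantic-invariance} again (when $t_i \red L_i$) or immediate (when $L_i = [t_i]$, a one-element multiset being always $\category$-summable with semantics $\sem{t_i}_\Gamma$). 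A brief appeal to subject reduction, read off the rules of \Cref{fig:reduction-diff} together with \Cref{prop:typing-differential}, ensures $\Gamma \vdash L' : A$, so that $\sem{L'}_\Gamma$ is well defined.

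Next, since $L$ is assumed $\category$-summable, the family $(\sem{t_i}_\Gamma)_{i=1}^n$ is summable in $\category(\sem{\Gamma}, \sem{A})$. The remaining task is to glue the summable families underlying the $L_i$ into a single summable family whose sum is $\sem{L}_\Gamma$. Viewing $L_i$ as the family of morphisms $(\sem{u}_\Gamma)_{u \in L_i}$, each such family is summable with sum $\sem{t_i}_\Gamma$, and the outer family $(\sem{t_i}_\Gamma)_i$ of these sums is summable. By \Cref{prop:arbitrary-sum}, applied to the partition of the index set of $L' = \sum_i L_i$ into the $n$ blocks corresponding to the individual $L_i$, this yields that $L'$ is $\category$-summable and that $\sem{L'}_\Gamma = \sum_{i=1}^n \sem{L_i}_\Gamma = \sum_{i=1}^n \sem{t_i}_\Gamma = \sem{L}_\Gamma$, which is the claim.

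The only genuinely delicate point is this gluing step. \Cref{prop:arbitrary-sum} is stated as an equivalence quantified over \emph{all} partitions, whereas here only one witnessing partition is at hand, so to use it in the direction ``each block summable and the family of block-sums summable $\Rightarrow$ the whole family summable'' one should either invoke the single-partition form of that statement or, to be safe, reprove the needed implication directly. The latter is a short induction on $n$: split off one $L_i$ at a time and combine its decomposition with the rest using weak associativity of the partial sum, exactly as in the manipulations preceding \ref{ax:D-witness}. Everything else is bookkeeping; no categorical ingredient beyond \Cref{prop:semantic-invariance} and the partial commutative monoid structure of the hom-sets is required.
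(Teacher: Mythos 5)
Your proof is correct and follows essentially the same route as the paper's: apply \Cref{prop:semantic-invariance} to each reduced term to get that each $L_i$ is $\category$-summable with sum $\sem{t_i}_\Gamma$, then glue via \Cref{prop:arbitrary-sum} using the partition of $L'$ into the blocks $L_i$. The quantifier subtlety you flag is real but the paper silently uses \Cref{prop:arbitrary-sum} in exactly that single-partition direction, so your extra caution is the only difference.
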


\begin{proof} Assume that $[t_1, \ldots, t_n]$ is $\category$-summable and that
$[t_1, \ldots, t_n] \redMultiset \multisetVar$. That is,
for any $i$, $t_i \redSym [t_i^1, \ldots, t_i^{k_i}]$ and 
$\multisetVar = \sum_{i=1}^{n} [t_i^1, \ldots, t_i^{k_i}]$.
Then by \Cref{prop:semantic-invariance}, for any $i$, $\sem{t_i^1}, \ldots, \sem{t_i^{k_i}}$ are summable of sum  $\sem{t_i}$. By assumption, $\sem{t_1}, \ldots, \sem{t_n}$ are summable, 
that is, $\sum_{j=1}^{k_1} \sem{t_1^j}, \ldots, \sum_{j=1}^{k_n} \sem{t_n^j}$ are
summable. By \Cref{prop:arbitrary-sum}, it means that the family
$\sem{t_1^1}, \ldots, \sem{t_1^{k_1}}, \ldots, \sem{t_n^1}, \ldots, \sem{t_n^{k_n}}$ is summable
of sum 
\[ \sum_{i=1}^{n} \sum_{j=1}^{k_i} \sem{t_i^j} = \sum_{i=1}^n \sem{t_i} \]
Thus $\multisetVar$ is $\category$-summable and 
$ \sem{\multisetVar} = \sem{[t_1, \ldots, t_n]}$.
\end{proof}

The usage of such term multisets may seem somewhat non deterministic. 
But any multiset generated by reductions of the calculus can be
interpreted as a summable family in deterministic models such as
probabilistic coherence spaces\footnote{Probabilistic branching is by
  no mean a form of non determinism} (see Section~\ref{sec:apcoh}) or
non uniform coherence spaces.
This determinism of the models allows to prove in \cite{Ehrhard22-pcf}
a result that roughly state that whenever a closed term of \emph{type
  integer} reduces to a term multiset $C + [\intTerm]$ (where
$\intTerm$ are the usual integer variables of \PCF{}), then
$\sem{C} = 0$.
That is, only one of the branches of the reduction rule 
\[ \depth{\Dproj_1}{d}(\depth{\DmonadSum_n}{d}(t)) \red  
\sum_{k=0}^{n} [ (\depth{\Dproj_0}{d})^k \depth{\Dproj_1}{d} (\depth{\Dproj_0}{d})^{n-k} (t)] \]
produces a non empty multiset.
The proof relies on the fact that any term of type integer will be
interpreted in $\ACOH$ as a Dirac distribution $\dirac{n}$ on $\N$ or
as the zero distribution, because the calculus does not feature any
form of probabilistic branching. Thus, a term multiset of type integer
is $\ACOH$-summable if and only if there is at most one term in the
multiset whose semantic is not $0$.
In particular, $\semPcoh{\intTerm} = \dirac{\intVar}$ and
$C + [\intTerm]$ is $\ACOH$-summable (by
\cref{prop:semantic-invariance-lift}) so $\semPcoh{C} = 0$.
%
%
One can also use non-uniform coherence spaces for proving the same result
in a similar way.
This observation led to the development of a completely deterministic
Krivine Machine for a coherent differential version of PCF
in~\cite{Ehrhard22-pcf}, extending the projections path with a
writable memory structure.

\section*{Conclusion}

We have introduced and studied a general categorical framework for
coherent differentiation, a new approach to the differential calculus
which does not require the ambient category to be (left-)additive.
We have also proposed some basic syntactical constructs accounting in
a term language for these new categorical constructs.
These are the foundations for a principled and systematic approach to
the denotational semantics of functional programming languages like
(probabilistic) PCF extended with coherent differentiation.
As shown in~\cite{Ehrhard22-pcf} such an extension can perfectly
feature general recursive definitions as well as deterministic or
probabilistic behaviors, in sharp contrast with the Differential
\(\lambda\)-calculus~\cite{EhrhardRegnier02} which is inherently
non-deterministic.
Accordingly, the next step will be to specialize the present general
axiomatization to the case where the category is cartesian closed.

\section*{Acknowledgment}
We thank the reviewers for their careful reading and helpful comments.
This work was partly supported by the ANR project %
\emph{Probabilistic Programming Semantics (PPS)} ANR-19-CE48-0014.

\bibliographystyle{IEEEtran}
\bibliography{IEEEabrv, biblio.bib}

\end{document}